\documentclass[sigconf,authorversion,nonacm]{acmart}
\usepackage{graphicx} 
\usepackage[utf8]{inputenc}
\usepackage{lipsum}
\usepackage{caption}
\usepackage{subcaption}
\usepackage[english]{babel}
\usepackage[dvipsnames]{xcolor}
\usepackage{tabularx}
\usepackage{tabularray}
 
\usepackage{amsmath,amsfonts,amsthm, amssymb}
\usepackage{physics}
\usepackage{enumitem}
\UseTblrLibrary{diagbox}
\usepackage{thmtools}
\usepackage{thm-restate}
\usepackage{algorithm}
\usepackage{algpseudocode}
\usepackage{appendix}
\usepackage{multirow} 
\usepackage{subfiles} 
\usepackage[operators,sets,asymptotics, probability, advantage]{cryptocode} 
\usepackage{changepage}
\usepackage{geometry}
\usepackage{adjustbox}
\usepackage[svgnames]{xcolor}

\newtheorem{theorem}{Theorem}[section]
\newtheorem{corollary}[theorem]{Corollary}
\newtheorem{lemma}[theorem]{Lemma}

\newtheorem{remark}[theorem]{Remark}

\usepackage[capitalise]{cleveref}

\crefname{lemma}{Lemma}{Lemmas}
\Crefname{lemma}{Lemma}{Lemmas}
\crefname{claim}{Claim}{Claims}
\Crefname{claim}{Claim}{Claims}

\usepackage[table]{xcolor}
\definecolor{lightgray}{gray}{0.9}

\let\oldforall\forall
\renewcommand{\forall}{\, \oldforall \, }
\let\oldexist\exists
\renewcommand{\exists}{\: \oldexist \: }

\newcounter{protocol}
\makeatletter

\newcounter{construction}
\makeatletter

\theoremstyle{definition}
\newtheorem{definition}{Definition}[section]

\newcommand{\bfB}{\mathbf{B}}
\newcommand{\bfR}{\mathbf{R}}
\newcommand{\bfT}{\mathbf{T}}

\newcommand{\NTT}{\mathrm{NTT}}
\newcommand{\bfA}{\mathbf{A}}

\newcommand{\com}{\mathbf{com}}
\newcommand{\ch}{\mathbf{ch}}
\newcommand{\resp}{\mathbf{resp}}

\newcommand{\pk}{\mathbf{pk}}
\newcommand{\sk}{\mathbf{sk}}

\newcommand{\calP}{\mathcal{P}}
\newcommand{\calV}{\mathcal{V}}

\newcommand{\adv}{\mathcal{A}}
\newcommand{\CalRq}{\mathcal{R}_q}
\newcommand{\msis}{\mathrm{MSIS}}
\newcommand{\mlwe}{\mathrm{MLWE}}
\newcommand{\rej}{\texttt{Rej}}

\newcommand{\negll}{\mathsf{negl}}
\newcommand{\dotprod}[2]{\langle #1, #2 \rangle}
\newcommand{\simulator}[0]{\mathcal{S}_\Pi}
\newcommand{\distinguisher}[0]{\mathcal{D}_\Pi}
\newcommand{\extractor}[0]{\mathcal{E}_\Pi}

\createpseudocodeblock{pcb}{center, boxed}{}{}{}

\usepackage{xcolor}

\createprocedureblock{procb}{center, boxed}{}{}{}
\createpseudocodeblock{pcb}{center, boxed}{}{}{}

\newcommand{\name}{\mathsf{PQ-TaDL}}
\newcommand{\nameextension}{\mathsf{PQ-TaDL}_{Compact}}

\newcommand{\codecomment}[1]{\textcolor{gray}{// \text{#1}}}

\newcommand{\etl}{\mathsf{ETL}}
\newcommand{\setup}{\mathsf{Setup}}
\newcommand{\publicparam}{\mathsf{pp}}
\newcommand{\keygen}{\mathsf{KeyGen}}
\newcommand{\checkbalance}{\mathsf{CheckBalance}}
\newcommand{\extract}{\mathsf{Extract}}
\newcommand{\createtx}{\mathsf{CreateTx}}
\newcommand{\verifytx}{\mathsf{VerifyTx}}
\newcommand{\ledger}{\mathcal{L}}
\newcommand{\tx}{\mathsf{tx}}
\newcommand{\txlist}{\mathcal{TX}}
\newcommand{\asset}{\mathsf{a}}
\newcommand{\valuelist}{\mathsf{V}}
\newcommand{\assetlist}{\mathcal{AS}}
\newcommand{\valueset}{{\mathcal{V}^{\dagger}_{\ledger}}}
\newcommand{\sklist}{\mathsf{SK}}
\newcommand{\pklist}{\mathcal{{PT}}}
\newcommand{\bank}{\mathsf{{BN}}}

\newcommand{\expbalance}{\mathsf{{Balance}}}
\newcommand{\expprivacy}{\mathsf{{Privacy}}}

\newcommand{\nizksetup}{\sf{ZKSetup}}
\newcommand{\nizkprove}{\sf{ZKProve}} 
\newcommand{\nizkverify}{\sf{ZKVerify}} 
\newcommand{\nizkproof}{\pi_{\sf{ZKP}}} 
\newcommand{\crs}{\sf{crs}} 
\newcommand{\stmt}{\sf{stmt}} 
\newcommand{\wit}{\sf{wit}} 
 
\newcommand{\lang}{\mathcal{L}}
\newcommand{\relation}{R} 
\newcommand{\zksimulator}{\mathcal{SIM}} 

\newcommand{\commitgen}{{\sf CKeyGen}}
\newcommand{\commitkey}{{\sf ctk}}

\newcommand{\commit}{{\sf Com}}
\newcommand{\open}{{\sf Open}}
\newcommand{\commitment}{\mathbf{com}}
\newcommand{\hidingadv}{{\sf Hid_{}^{\adv}}(\secpar)}

\newcommand{\bindingadv}{{\sf Bind_{}^{\adv}}(\secpar)}

\newcommand{\GAME}{\mbox{G}}
\providecommand{\gamedes}[1]{\noindent\underline{\textbf{Game $\GAME_{#1}$:}}}

\newcommand{\tab}{\ \ \ \ }
\newcommand{\rangebound}{N}
\newcommand{\ntt}{\mathsf{NTT}}

\newcommand{\z}{\mathbf{z}}
\newcommand{\zb}{\mathcal{Z}_{c_i}^{\pi^B}}
\newcommand{\e}{\textbf{e}}

\title{A Practical Post-Quantum Distributed Ledger Protocol for Financial Institutions}
\author{Yeoh Wei Zhu, Naresh Goud Boddu, Yao Ma, Shaltiel Eloul, Giulio Golinelli, Yash Satsangi, \\ Rob Otter, Kaushik Chakraborty \\Global Technology Applied Research, JPMorganChase}

\begin{document}

\begin{abstract}
Traditional financial institutions face inefficiencies that can be addressed by distributed ledger technology. 
However, a primary barrier to adoption is the privacy concerns surrounding publicly available transaction data.
Existing private protocols for distributed ledger that focus on the Ring-CT model are not suitable for adoption for financial institutions. We propose a post-quantum, lattice-based transaction scheme for encrypted ledgers which better aligns with institutions' requirements for confidentiality and audit-ability. The construction leverages various zero-knowledge proof techniques, and introduces a new method for equating two commitment messages, without the capability to open one of the commitment during the re-commitment. Subsequently, we build a publicly verifiable transaction scheme that is efficient for single or multi-assets, by introducing a new compact range-proof. We then provide a security analysis of it. The techniques used and the proofs constructed could be of independent interest.
\end{abstract}

\keywords{Post-quantum, Lattice Cryptography, Transaction, Distributed Ledger, Blockchain}
\maketitle

\section{Introduction}
\label{sec:intro}
Distributed ledger technology introduces the possibility to transact without the involvement of centralized or intermediary bodies. However, it does not necessarily address the privacy and auditing needs of financial institutions.
Numerous works \cite{narula2018zkLedger,PADL} have explored the adoption of distributed ledger technology in this context of traditional financial systems. The key requirements of financial institutions are hiding the transaction value, input/output account anonymity, asset anonymity, native support for atomic exchange, and a customized auditability, as discussed in \cite{PADL}. 
The distributed ledgers described in \cite{PADL, narula2018zkLedger, chatzigiannis2021miniledger} can be interpreted as encrypted table-based distributed ledgers (ETL) that are based on the account model rather than the unspent transaction output (UTXO) model \cite{utxo1, utxo2}. 
However, these works have so far been built upon the foundation of the discrete log assumption, which is vulnerable to the threat of quantum computing \cite{Shor1995-mn}.
Meanwhile, there also exists a type of distributed ledger based on the ring confidential transaction (RingCT) scheme \cite{ringct2, ringct3}. RingCT relies on stealth addresses \cite{stealthaddress2} and ring signatures \cite{ringsignature} for anonymity, while relying on unique serial numbers to prevent double-spending. Despite the existence of lattice-based RingCT schemes \cite{matrict_2_2022, smile_tx_2021}, RingCT lacks the properties required for traditional financial institution use cases.

This work shows that rather than RingCT scheme, a transaction scheme for the encrypted table-based ledger (ETL) is a practical scheme for financial institutions in the post-quantum setting.
Let the transaction be a row entry in the ledger table and the participant's account be a column in the ledger table, the state of an account in ETL is a simple sum over the transaction entries in the account's column. Furthermore, the account balance state can be succinctly represented as shown in \cite{chatzigiannis2021miniledger} via pruning. 
On the other hand, the use of stealth addresses in RingCT and the need of scanning, complicate zero-knowledge-based auditing, as the history of transactions must be included to avoid missing any potential token with respect to the account viewing key.
Further advantages of ETL over RingCT for financial institutions are discussed in \cite{PADL}.

\subsection{Contribution}

Our main contribution in this work is the design and analysis of a lattice-based ETL-like transaction scheme. Our construction builds upon lattice-based range proofs \cite{LNP22, LNPS21}, verifiable extraction for commitment \cite{LNPS21}, and lattice-based zero-knowledge proof (ZKP) \cite{ALS20, BDLOP16}, while mixing in a new result, enables the proof of equivalence for commitments, without the knowledge of randomness used in the original commitment. 
The proof of equivalence is made possible through a combination of two techniques. First technique shows that if the committed value $v$ and $\sqrt q v$ (where $q$ is the prime modulus) are committed under two valid MLWE samples respectively which constituted the commitment key, then it is possible to show that any difference would be amplified by the $\sqrt q$ factor (or the lack-thereof) and be detectable by the norm bound check assuming the secret $s$ and random value $r$ are short. Second technique is used to overcome the deficit of weak opening that does not guarantee that the extracted random value $r^*$ is short by providing a proof that the random value used in the commitment is short through an approximate range proof. 
Subsequently, it can be conclusively shown that the committed values are equal if the verification passes.
Compared to other schemes, our construction supports multi-asset transaction, efficient zero-knowledge auditing of account state history due to a table-like structure, verifiable spendable-token through a proof of commitment extractability. Beyond showing security in the ROM model which is typically done in existing lattice-based constructions \cite{matrict_2_2022,smile_tx_2021}, we take a step further to provide analysis in the QROM model. The analysis in this work shows that the zero-knowledge proof protocols used in the constructed transaction scheme are quantum proof-of-knowledge protocols.
We further provide a concrete performance and communication analysis to showcase the practicality of our proposal. 

The contributions are summarized as follows:
\begin{itemize}
    \item $\name$ is proposed as a new lattice-based transaction scheme for the encrypted table-based ledger (ETL). The proposed transaction scheme supports multi-asset transactions, verifiable spendable-token, and efficient zero-knowledge auditing over the account state history.
    \item We construct various zero-knowledge proofs for transaction relations. Importantly, we show a new technique for asserting two commitments shares the same committed value in zero-knowledge but without the knowledge of random value of the original commitment. The technique used can be of independent interest. This is an essential proof to obtain a publicly verifiable transaction protocol in the ETL setting.
    \item We further propose $\nameextension$, a \textit{compact} multi-asset transaction scheme in the ETL setting that scales with the polynomial degree.
    \item A security analysis is provided for the proposed scheme, notably also analysis in the QROM settings.
    \item We analyze our scheme in terms of the performance and communication metrics.
\end{itemize}

\subsection{Related Works}

\textbf{Privacy-preserving Transaction Scheme.} 
Privacy preserving transaction schemes may be grouped into two categories, namely RingCT scheme \cite{ringct2,ringct3} and Encrypted Table-based Ledger (ETL) scheme \cite{PADL,chatzigiannis2021miniledger, narula2018zkLedger}. 
Transaction schemes based on RingCT such as Monero \cite{monero} offers attractive privacy features when executing transactions, mainly attributed to the use of ring signature \cite{ringsignature} and stealth address \cite{stealthaddress2}. However, the use of stealth address complicates zero-knowledge based auditing for financial institutions as to prove a statement in zero-knowledge about the account state with respect to a public key, one needs to include the entire transaction history for completeness, such that no transaction made to the wallet public key is excluded from the audit. Moreover, most of the RingCT schemes do not provide a proof that the token key (which is needed to spend the token) for the stealth address is well-formed and rely on a gentlemen agreement for the communication of the token key. 
In the realm of ETL,
zkLedger \cite{narula2018zkLedger} and MiniLedger \cite{chatzigiannis2021miniledger} offer efficient zero-knowledge account state auditing. 
More recently, the lack of support for native atomic-exchange for multi-asset transaction and the lack of support for active transfer with both the receiver and the spender approvals was addressed in \cite{PADL} for the ETL setting. While these ETL proposals drop the usage of ring signature and stealth address to solve some of the issues of completeness for zero-knowledge disclosure, this increases the computational and communication complexity of the resulting schemes. Despite this, the scheme is reasonable for deployment for financial institution's use-cases as the number of participants is a few orders of magnitude smaller than the traditional blockchain or distributed ledger. 
Despite their usefulness, these works are based on discrete-log setting which is vulnerable to the threat of quantum computing \cite{Shor1995-mn}. 

\textbf{Lattice-based Transaction Scheme.} Transitioning into a post-quantum setting, most of the existing work focuses on the instantiation of a transaction scheme for the RingCT like setting. MatRiCT \cite{matrict_2_2022}, SMILE \cite{smile_tx_2021}, and others \cite{Gao2025-se} are all examples of a lattice-based RingCT transaction scheme. Similar to the RingCT scheme based on discrete-log setting, these lattice-based RingCT suffers from inefficiency for proving statements about an account history during auditing due to the use of stealth address and its scanning requirement. 
Moreover, these schemes do not provide proof that a sent token can always be spent by the receiver. The token (spending) key is encrypted and given to receiver in an ad-hoc manner without the ledger verifying the well-formness of this token key with respect to the token. The always-spendable token property is crucial for an institutional use-case as the token held by the institution would be reflected on its account balance. 
Furthermore, none of the existing works on lattice-based transaction scheme shows security in the quantum random oracle (QROM) model \cite{qrom} and only relying on the lattice-based assumptions to argue for their post-quantum security.  

In this work, we focus on constructing a post-quantum secure ETL-based transaction scheme called $\name$ (Post-Quantum Table based Distributed Ledger). Compared to other existing lattice-based transaction schemes, $\name$ supports multi-asset transaction, efficient zero-knowledge auditing for the account state, achieving comparable performance under realistic setting, verifiable spendable-token transaction, privacy-preserving transaction, while offering security analyzed in the QROM model.

\section{Technical Overview}
In this section, we present a high-level overview of the design and construction of the transaction scheme for a post-quantum private distributed ledger. We start by describing Encrypted Table-based Ledger (ETL) that forms the backbone for building $\name$. 

\subsection{Encrypted Table-based Ledger (ETL)}
\label{sec:etl_transaction}

\begin{figure}
    \centering
    \includegraphics[width=0.49\textwidth]{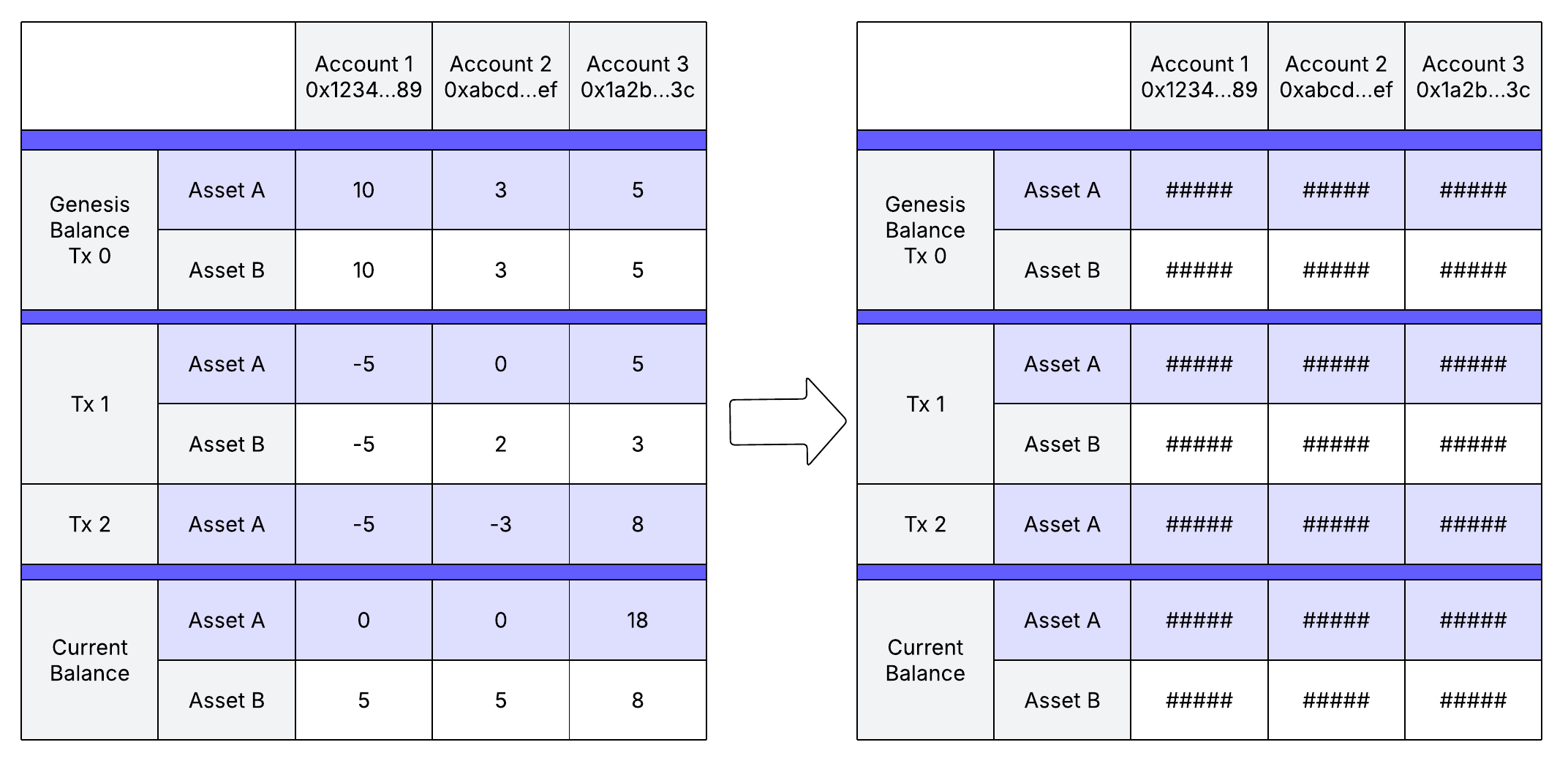} 
    \caption{Logical Representation of ETL}
    \label{fig:etl_table}
\end{figure}

To facilitate and motivate the construction later, we introduce the notion for ETL. A (multi-asset) table-based ledger illustrated in Table \ref{fig:etl_table}, is a structured approach to managing transactional data where the data is arranged into a table format. On the left of the figure is the logical representation of the ledger, while the ledger on the right of the figure is the encrypted version of the same ledger. 
The ledger consists of $n$ participants where each column represents a participant and each row in the ledger represents a transaction. Each transaction may consist of sub-rows where each sub-row corresponds to an asset transacted on the ledger.  
Let $t$ be the index identifier for a transaction, $a$ be the index identifier for an asset, and $i$ be the index identifier of a participant's account. 
We define $\pklist$ be the a list of participant, $\assetlist$ be a list of asset that can be transacted in the system and $\txlist$ be a list of transaction.

\emph{Commitments}: To hide a value $v$ that belongs to a transaction $t$, we use homomorphic lattice-based commitment schemes. A transaction $\tx_{t}$ is composed of the commitments $\com_{t,a,i}$ and the proofs $\pi_{t,a,i}$ as $\tx_{t} := \{ \com_{t,a,i}, \pi_{t,a,i}\}_{a \in \assetlist, i \in \pklist}$ where a value  $v_{t,a,i}$ is committed using a random value $\mathbf{r}_{t,a,i}$ using $\com_{t,a,i} \gets \commit(v_{t,a,i}, \mathbf{r}_{t,a,i})$.

We first describe the transaction flow in detail using notation from \cref{sec:model} and  \cref{sec:qpadl_con} before delving into the details of the various proofs. The transaction flow is illustrated in \cref{fig:PADL_transaction_flow}
\begin{figure}[h] 
    \centering
    \includegraphics[width=0.5\textwidth]{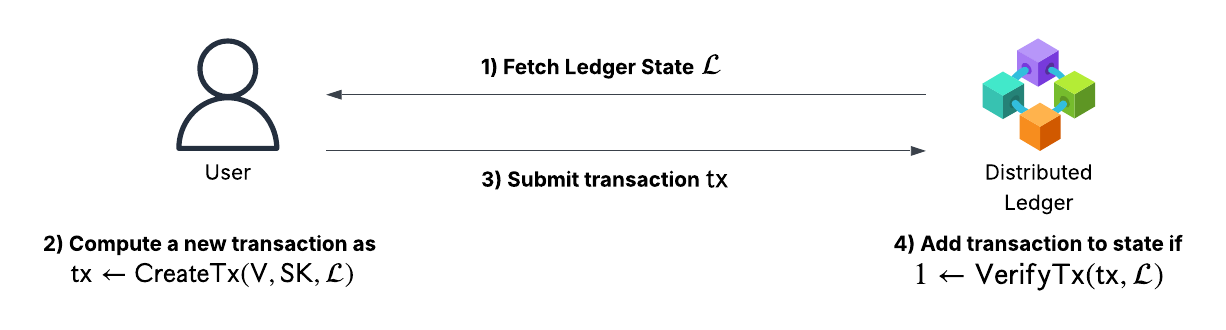}
    \caption{A high-level overview of transaction.}
    \label{fig:PADL_transaction_flow}
\end{figure}

\begin{itemize}
    \item To perform a transaction in $\name$, the sender first commits to the transaction amounts for both input and output accounts. The sender also commits null values to accounts for decoy purpose. 
    Let $\valuelist$ be the transaction values chosen by the sender, $\ledger$ be the state of the ledger, and $\sklist$ be the secret key of the input/spending accounts.
    The sender then computes the needed proofs $\pi$. 
    The transaction is now composed of both the commitments and the proofs which is generated using $\tx \gets \createtx(\valuelist, \sklist, \ledger)$.
    \item The sender then proceeds to submit the transaction $\tx$ publicly to the distributed ledger. 
    \item The distributed ledger verifies the transaction $\tx$ using $\verifytx(\allowbreak\tx,\ledger)$. If the verification returns true, the transaction is appended to the ledger $\tx \rightarrow \ledger$.
\end{itemize}

Since the ledger is encrypted to preserve the privacy, the sender of the transaction is required to prove that the new transaction is well-formed in zero-knowledge to preserve the integrity of the ledger which is checked in $\verifytx(\cdot)$. We informally introduce various relations that preserve the integrity of the ledger while deferring their formal definitions to \cref{sec:qpadl_con}. The relations enforced by the proofs are illustrated in \cref{fig:etl_property_table}. In essence, we have the following:

\begin{itemize}
    \item Proof of Balance ($\pi^{PoB}$): ensures that the values in the new transaction summed up to zero which ensures that no new asset value is created, thus conserving the total asset value in the ledger.
    \item Proof of Consistency ($\pi^{PoC}$): In PoC, the commitment is proven to satisfy the conditions that the message contains $v,\sqrt{q},v$ 

    while the random value $\mathbf{r}$ is proven to be short. 
    The $\sqrt q$ relation ensures that the secret key holder can perform direct value extraction from the commitment. The short random value $\mathbf{r}$ is later used to assist in proving commitments' values equivalence statement.
    \item Proof of Equivalence ($\pi^{PoE}$): Since the random value $\mathbf{r}$ (required for opening) used in the commitment might not be known by the address owner since it is generated by the sender, the address owner needs to first equate the original commitment with a new commitment that shares the same value using proof of equivalence such that $\relation(\com_1,\com_2)=1 \Leftrightarrow v_1=v_2$ where $ \com_1 \gets \commit(v_1,\mathbf{r}_1),\com_2 \gets \commit(v_2,\mathbf{r}_2)$. 

    In addition, this proof also checks the knowledge of secret key, thus serving as a signature to authorize a transaction for the spending account.
    \item Proof of Asset ($\pi^{PoA}$): Given a desired spending amount $v_s:=-v$ for some $v\in \mathbb{Z^+}$, PoA asserts that the transaction only goes through if the account has sufficient funds to spends such that $(\sum_{t\in\txlist} v_t) + v_s \geq 0$.
\end{itemize}

\begin{figure}
    \centering
    \includegraphics[width=0.35\textwidth]{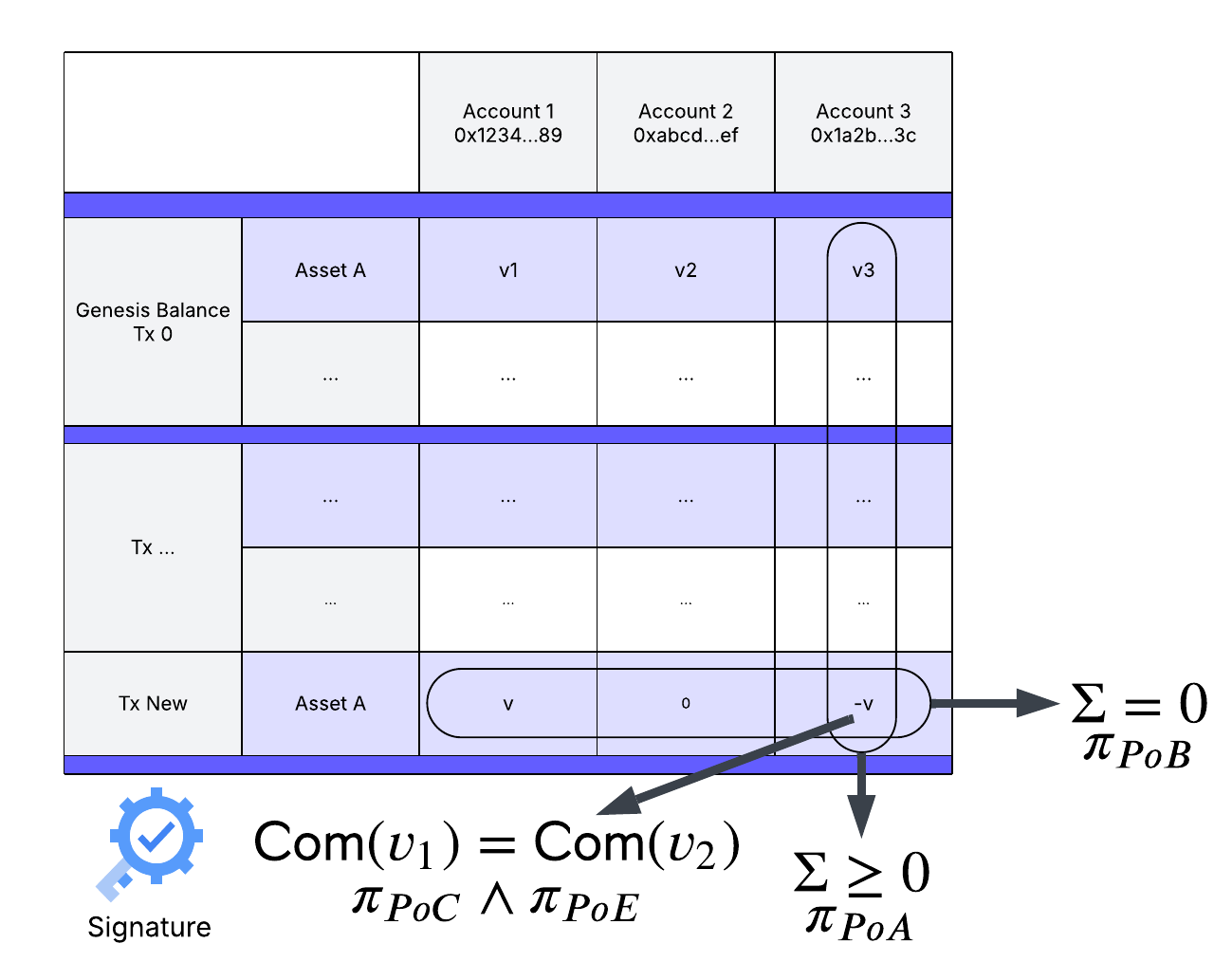} 
    \caption{ETL Proofs Illustration}
    \label{fig:etl_property_table}
\end{figure}

Intuitively, combining all the previous proofs allow the ledger to enforce that any new transaction appended to its state is correctly formed and preserves all the important integrity requirements for the ledger state. The stated requirements are the main integrity requirements for a ledger as shown in other ETL-like schemes \cite{chatzigiannis2021miniledger, PADL}. In this scheme and in contrast to previous ETL-like schemes, we add another requirement where the public key is generated with a proof of well-formness $\pi^{KW}$ during the setup. 

The transaction scheme construction can be found in \cref{sec:qpadl_con} while the techniques used for the ZKP are detailed in the next section.

\subsection{Constructing Zero-Knowledge Proofs for various Transaction-related Relations}
\label{sec:technical_overview_zkp}

In the following, we start with the commitment structure of a transaction in $\name$, while gradually introducing the zero-knowledge proofs to obtain the full scheme. This scheme utilities the polynomial ring $\mathcal{R}_q = \mathbb{Z}_q[X]/(X^d + 1)$.

\subsubsection*{\textbf{Initial Commitment Structure}}
Each participant $i$ in a transaction $t$ for an asset $a$ has an associated BDLOP commitment $\commitment_{t,a,i}$ \cite{BDLOP16} as given as $\com_{0,t,a,i} = \mathbf{Ar_{t,a,i}}, \com_{3,t,ai}= \mathbf{B}^T\mathbf{r}_{t,a,i}+v_{t,a,i}$,

where $\mathbf{A}$ and $\mathbf{B}$ are public random matrix commitment keys, $\mathbf{r}_{t,a,i}$ is a secret random value, $v_{t,a,i}$ is the committed value.

\subsubsection{Zero-Knowledge Proof of Balance}

The Proof of Balance protocol (ZKPoB) in $\name$ enables a prover (the sender) to demonstrate, in zero-knowledge, that the sum of all committed values for a given transaction and asset is zero.
This ensures that no value is created or destroyed in the transaction, i.e., the transaction is balanced, while preserving the confidentiality of individual values.

\textbf{PoB - Protocol Overview}.
Given the commitment structure above is a standard BDLOP \cite{BDLOP16} commitment scheme, the proof of balance is a simple proof of opening for BDLOP commitment combined with a linear check over the sum of commitment that results in a commitment of zero. The \textbf{aggregate commitment} for asset $a$ in transaction $t$ is $\commitment^{sum}_{t,a} = \sum_i \commitment_{t,a,i} $ with total random value $\mathbf{r}_{t,a} = \sum_i \mathbf{r}_{t,a,i}$ and total value should be $\sum_{i} v_{t,a,i} = 0$.

\subsubsection{Extracting Commitment with Proof of Consistency}
In ETL, the commitments in a transaction are computed by the sender or spender. To know the new wallet values for the account, the receiver would receive the information about the commitment. In Ring-CT schemes, such as \cite{matrict_2_2022}, it is assumed that the information (token spending key) is communicated to the receiver by the spender through an encrypted channel. However, ETL further requires that any transaction received can be used or explained by the receiver which is not possible in Ring-CT without further employing provable encryption/decryption. This is because the correct information may never be communicated. Therefore, we further amend the commitment introduced earlier to enable provable extraction/decryption for the commitment. The updated commitment is firstly described as follows:
\begin{align}
    \com_{t,a,i} =
    \begin{bmatrix}
    \com_{0,t,a,i} \\
    \com_{1,t,a,i} \\
    \com_{2,t,a,i} \\
    \com_{3,t,a,i}
    \end{bmatrix}
    =
    \begin{bmatrix}
    \mathbf{A} \\
    \pk_{1,i}^T \\
    \pk_{2,i}^T \\
    \mathbf{B}^T
    \end{bmatrix}
    \mathbf{r}_{t,a,i}
    +
    \begin{bmatrix}
    \mathbf{0} \\
    v_{t,a,i} \\
    \sqrt{q} v_{t,a,i} \\
    v_{t,a,i}
    \end{bmatrix}
    \label{eq:commitment_extractable}
\end{align}
where $\pk_{1,i}^T$ and $\pk_{2,i}^T$ are the new public keys / commitment keys introduced below.

The encryption or extraction functions similarly to \cite{LNPS21} that is based on Regev-type encryption scheme \cite{regev2024latticeslearningerrorsrandom} but for any arbitrary ring element, whereby it commits to both the message, $v_{t,a,i}$, and the message multiplied with a $\sqrt{q}$ factor, $\sqrt{q} v_{t,a,i}$. In particular, the committed message is committed under the commitment key $\pk^T =\mathbf{s}^T\mathbf{A}+\mathbf{e}^T$ for some secret key ($\mathbf{s}$, $\mathbf{e}$). By using the knowledge of secret key $\mathbf{s}$, the account owner can compute the message shifted by the error term $\mathbf{e}^T\mathbf{r}$. 
Finally, message can be recovered by eliminating the error term $\mathbf{e}^T\mathbf{r}$. The error term is computed using $\sqrt{q}$ modulo arithmetic assuming the error terms are small enough and no reduction modulo $q$ has occurred. 
Therefore, given a valid and well-formed commitment, proved using proof of consistency which relates both the committed messages with the $\sqrt {q}$ factor, the participant (account owner) can extract the committed value using algorithm from Figure \ref{proto:exctractability}.

\textbf{PoC - \textit{Partial} Protocol Overview}. A proof of $\sqrt{q}$ linear relation in the committed message can be done using a simple linear proof over the standard BDLOP proof of opening plus a linear relation check. However, we are not done yet. Looking ahead, we further require checking that the $\mathbf{r}$ used in the commitment is somewhat short which is not guaranteed by the weak (or relaxed) opening. 

In the (weak) opening of BDLOP commitment, we obtain two transcripts $(\mathbf{w},c,\mathbf{z})$ and $(\mathbf{w},c',\mathbf{z}')$ where $\mathbf{w}$ is the initial commitment, $c$ is the challenge and $\mathbf{z}$ is the masked response, then we set the extracted random value $\mathbf{r}^*$ to be $\mathbf{r}^* =  \mathbf{\bar z}/\bar c$ where $\mathbf{\bar z} = \mathbf{z}-\mathbf{z}'$ and $\bar c=c-c'$. Then, the extracted message is $\mathbf{m}^* = \com_1 - \mathbf{Br^*}$.
Note that the weak opening (see \cref{def:opening_bdlop}) only says that extracted $\mathbf{r}^*$ multiplied by the challenge $\bar c$ (i.e. $\bar c \mathbf{r}^*$) is short because $\mathbf{r}^* = \mathbf{\bar z}/\bar c$ and only $\mathbf{\bar z}$ is bounded. The equivalent opening with a different statement where $\mathbf{r}^*$ is set to be $\mathbf{\bar z}$ but the commitment is scaled by the challenge as $\bar c\com_0$ and $\bar c \com_1$ results in a message $\bar cm^*$, is not suitable to our case later where we make use of the fact that difference of two commitment of the same message result in a message of 0 but in this interpretation the difference is $(\bar c-\bar c')(m^*)$.

\subsubsection{Proof of Asset: Range Proof in the NTT }

To prove that the spender does not overspend its asset, ETL requires the spender to prove that its sum of transactions (positive and negative) together with the pending transaction is still positive. Using the additive homomorphic property of the commitment, the value $v^{sum}$ balance committed: $\commitment^{sum}_{t,a} = (\sum_i \commitment_{t,a,i}) + \com_{\widehat t}$
is proved to be positive (i.e. $[0,(q-1)/2]$ and $v^{sum} \in \mathbb{Z}$) where $\widehat t$ is the new pending transaction.
Range proof for the range $[0,2^{n})$ for some integer $n$ via bit decomposition \cite{bulletproof} can be achieved by proving the statements 
$$
\sum_{i=0}^{n-1} 2^{i}\mathsf{binary}(v)_i =v\ \wedge\ \mathsf{binary}(v)\odot(\mathsf{binary}(v)-\mathbf{1}) = 0
$$
where $\mathsf{binary}(v)$ is the binary decomposition of $v$ and $\odot$ is the hadamard product. 

\textbf{PoA - Protocol Overview}.
We describe a range proof that performs the check in the NTT domain \cite{LNPS21,LNS20, ENS20} which uses the BDLOP commitment scheme. 
The former bit-packing statement can be checked in the NTT domain using 
\begin{align*}
\ntt(v) =  \mathbf{Q}\ntt(\mathbf{v}_{bin}) = 
\begin{bmatrix}
   1 & 2 & \cdots & 2^{n-1}\\
   \vdots & \vdots & & \vdots \\
   1 & 2 & \cdots & 2^{n-1}
\end{bmatrix}\ntt(\mathbf{v}_{bin})
\end{align*}
where $\mathbf{v}_{bin}=\ntt^{-1}(\mathsf{binary}(v))$ and $\mathbf{Q}$ is a Vandermonde matrix evaluated at 2. The Vandermonde matrix also checks that the value $v$ is an integer since all the NTT slots share the same value $v$. 
We use the unstructured linear proof of \cite{ENS20} that proves linear relation on the NTT representation. The latter binary relation can be checked in the NTT domain using $\mathsf{v}_{bin}(\mathsf{v}_{bin}-1) = 0$ which ensures the message is binary. It can be proved via a standard product proof of the form $m_1(m_1-1)=0$. We use product proof from \cite{ALS20} to prove this relation. 

Note that generating proof of asset requires the knowledge of $\mathbf{r}$ of the commitment. 
However, we conveniently assume the random value $\mathbf{r}$ is known by both sender and receiver. It can be the case that $\mathbf{r}$ is never communicated to the receiver by a malicious sender since the sender computed the commitment for the receiver's column. In the next subsection, we show how we bootstrap this knowledge of $\mathbf{r}$ through a new approach.

\subsubsection{Bootstrapping the Knowledge of Random Values by Recommitment}

One of the way to verifiably communicate the random value $\mathbf{r}$ is to simply perform a verifiable encryption of $\mathbf{r}$ and prove a linear relation such that the same encrypted $\mathbf{r}$ satisfies $\com_0 = \mathbf{Ar}$ for some BDLOP commitment key matrix $\mathbf{A}$ and some commitment $\com_0$. However, this approach is not efficient because it requires the encryption to encrypt messages that consists of $\kappa *(\kappa+ \lambda+3)$ numbered $\mathcal{R}_q$ elements and provide verifiable encryption proof with the linear relation check accordingly.  Therefore, in the following approach, once the sender submit a valid commitment with a proof of opening, the receiver can always re-commit the same commitment under a different commitment, which allows the receiver to "bootstrap" the knowledge of random value $\mathbf{r}$, and subsequently, open the message efficiently. 

We hereby provide a new result on how to equate two commitment's message/value without the knowledge of random value $\mathbf{r}$ needed to open one of the commitment. Assume that both the commitments are of the form specified in \cref{eq:commitment_extractable} and proofs of consistency are provided to prove the well-formness of both the commitments. Then, the difference of the commitments is: 
\begin{align*}
\Delta \com_{0,t,a,i} &= \com_{0,t,a,i} - \com_{0,t,a,i}'= \mathbf{A}(\mathbf{r}_{t,a,i}-\mathbf{r}_{t,a,i}'),\\
\Delta \com_{1,t,a,i} &= \com_{1,t,a,i} - \com_{1,t,a,i}'= \mathbf{\pk^T_{1,i}}(\mathbf{r}_{t,a,i}-\mathbf{r}_{t,a,i}')+ (v_{t,a,i}-v_{t,a,i}'),\\
\Delta \com_{2,t,a,i} &= \com_{2,t,a,i} - \com_{2,t,a,i}' \\ &= \mathbf{\pk^T_{2,i}}(\mathbf{r}_{t,a,i}-\mathbf{r}_{t,a,i}')+ \sqrt{q}(v_{t,a,i}-v_{t,a,i}').\\
\end{align*}
Notice that we can perform the message difference opening by computing $\Delta\com_{1,t,a,i} - s^T\com_{0,t,a,i} = \dotprod{\mathbf{e}_{1,i}}{\mathbf{r}_{t,a,i} - \mathbf{r}_{t,a,i}'} + (v_{t,a,i}-v_{t,a,i}')$ and similarly $\Delta\com_{2,t,a,i} - s^T\com_{0,t,a,i} = \dotprod{\mathbf{e}_{2,i}}{\mathbf{r}_{t,a,i} - \mathbf{r}_{t,a,i}'} + \sqrt{q}(v_{t,a,i}-v_{t,a,i}')$. If the decryption is performed using a masked opening and linear relation check, further assuming $\mathbf{e}$ and $\mathbf{r}$ is short, then any small difference in $v$ and $v'$ would be amplified by $\sqrt{q}$ factor, otherwise any large difference would be detected without the $\sqrt{q}$ factor. Therefore, the value is conditioned to be the same only if the norm of the difference is small for both of the message difference opening. The formal result is stated in \cref{lemma:comp_equil}. Note that the adversary is bounded to a short $\mathbf{s,e}$, 
given a public key $\pk$, otherwise we can extract a MSIS solution in the reduction because of proof of equivalence and proof of key well-formness relations. 

\textbf{PoE - Protocol Overview}. 
We use the approximate range proof technique described in \cite{LNP22} to check for the norm bound requirement of commitment difference mentioned in the previous subsection and then perform a simple linear check that asserts that the secret key used is the correct one.  

The approximate range proof uses ABDLOP commitment scheme as part of the commit-and-prove ZKP scheme. This augments BDLOP commitment scheme with Ajtai \cite{Ajtai96} commitment, so that additional commitment of message with short norm but large dimension is "free".
Let $\mathbf{s_1}$ and $\mathbf{m}$ be two messages where $\mathbf{s_1}$ is a message vector with short norm and $\mathbf{m}$ is a message vector with arbitrary norm. The approximate range proof proves that given two matrices $\mathbf{D}_\mathbf{s_1}$ and $\mathbf{D_m}$, the norm of the computed message with respect to the relation is bounded, $\norm{\mathbf{D}_\mathbf{s_1}\mathbf{s_1}+\mathbf{D}_\mathbf{m}\mathbf{m}+ \mathbf{u} }\leq B$, for some bound $B$ and some public vector $\mathbf{u}$. Let $\mathbf{s} := \mathbf{D}_\mathbf{s_1}\mathbf{s_1}+\mathbf{D}_\mathbf{m}\mathbf{m}+ \mathbf{u}$, then by \cref{lemma:infty_bound}, checking that $\mathbf{z} := \mathbf{y}+R\mathbf{s}$ is short implies that $\mathbf{s}$ is short. Further, let $\mathbf{e_i}$ be a polynomial vector such that all its coefficient vector consists of all zeroes but one 1 in the $i$-th position and $\mathbf{r}_i$ be a polynomial vector such that its coefficient vector is the $i$-th row of $R$. To prove the well-formness of $\mathbf{z}$, we need to check that for all $i$, we have $\langle \mathbf{e}_i, \mathbf{z} \rangle = \langle \mathbf{e}_i, \mathbf{y} \rangle + \langle \mathbf{r}_i, \mathbf{D}_\mathbf{s_1}\mathbf{s_1}+\mathbf{D}_\mathbf{m}\mathbf{m}+ \mathbf{u}  \rangle$.

Using property of $\langle \mathbf{r}_i, \mathbf{D}_{\mathbf{s_1}}\mathbf{s_1}\rangle = \langle \sigma_{-1}(\mathbf{D}_{\mathbf{s_1}})^T\mathbf{r}_i, \mathbf{s_1}\rangle$ which follows from \cref{lemma:constant_coeff} and applying \cref{lemma:constant_coeff} itself, we just need to check the constant coefficient of the following polynomial is zero: 
$$
\begin{bmatrix}
    \sigma_{-1}(\mathbf{r}_i)^T \mathbf{D_{s_1}}  & \sigma_{-1}(\mathbf{r}_i)^T\mathbf{D}_{\mathbf{m}} & \sigma_{-1}(\mathbf{e}_i)^T 
\end{bmatrix}
\begin{bmatrix}
   \mathbf{s_1} \\ \mathbf{m} \\ \mathbf{y}
\end{bmatrix}
+ \langle \mathbf{r}_i, \mathbf{u} \rangle - \langle \mathbf{e}_i, \mathbf{z} \rangle.
$$

\textbf{Updated PoC and Proof of Key Well-formness.} We reuse the framework described in PoE to prove that $\mathbf{r}$ is short in the commitment while checking the same linear relation described in proof of consistency. 
During setup, we also check that $\mathbf{s}$ and $\mathbf{e}$ are short in the public key. These proofs allows us to then invoke \cref{lemma:comp_equil} to claim equivalent of values between two commitments even though proof of opening is given in two separate proofs.

\subsection{Compact Multi-Asset Transaction}
We further optimize/extend the transaction such that multi-assert transaction (up to size $d$) can be compactly represented by single ring element committed in a transaction. Without loss of generality\footnote{For  $|\assetlist|>d$, we can increase the degree or simply reuse the 3D structure described previously.}, we consider now a 2D table whereby the asset is condensed into the coefficient of the polynomial. Instead of using integer coefficient to encode the asset value, we encode asset value $v_1,\cdots,v_d\in \mathbb{Z^+}$ for asset $a_1,\cdots,a_d \in \assetlist$ using a polynomial ring element message $m \in \mathcal{R}_q$ of degree $d$. Notice that proof of balance still require that all asset value sum to zero. In addition, we only make use of the additive homomorphic property to perform coefficient-wise message value summation which does not affect other asset value in other coefficient slots. We now only require that proof of asset checks that each of the coefficient slot encodes a positive integer. A naive approach would repeatedly apply the described range proof strategy for each slot (with appropriate masking and an updated Q) to enforce the positive range check. This incurs linear increase in proof size for $\pi^A$. 

We hereby describe an approach to verify the encoded asset value in each coefficients more efficiently where the proof size only increase due to the need to commit to the binary coefficients and the size for the rest of transcript stay relatively constant. 
Let the asset values be encoded as $v=(v_1,\cdots,v_d) \in R_q$ and let $v_i < 2^\beta $. We can decompose each values into $v_{bin_i} := bin(v_i)$ and $v_{bin} := (v_{bin_1},\cdots,v_{bin_d})$ where $\langle v_{bin_i}, pow(2^\beta-1)\rangle = v_i$ and $pow(x):= \sum_{i=0}^{\lfloor \log_2 x \rfloor} 2^i \cdot X^i \in R_q$. Then, we need to check the following linear equation that enforces $v_i < 2^\beta< q$ if the coefficients are binary:
$$\langle v_{bin_i}, pow(2^\beta-1)\rangle - v_i = 0,$$
and the following quadratic equation that enforces $v_{bin}$ is composed of binary coefficients if there is no overflow:
$$\langle v_{bin},v_{bin} -\mathbf{1}\rangle = 0.$$
Finally, we perform check on $\norm{v_{bin}}$ by using approximate range proof where $\norm{v_{bin}} < \psi \cdot \sqrt{\beta d}$. Then to ensure no overflow, we pick $q$ such that $\abs{\langle v_{bin},v_{bin} -\mathbf{1}\rangle} \leq \abs{\sum_{i=1}^{\beta d}a_i(a_i-1)} \leq \psi^2\beta d+\psi \beta d < q$. The approximate range proof check the same relation as before but we use \cref{lemma:2_bound_rwy} to bound the $\ell_2$ norm instead.
We can further compress the binary representation of the value so that fewer ring element is used for commitments if $\beta < d$ as used in \cref{proto:pi_a2} by shifting $pow(2^\beta-1)$ appropriately such that they act on the correct coefficients.
The three relations  can be combined using random linear combination and be checked using a single quadratic relation check for well-formness and the inner product relation can be checked by verifying that the  constant coefficient vanishes (\cref{lemma:constant_coeff}). 

Alternatively, we can try to check using approximate range proof that all the coefficients of the polynomial message are bounded in their infinity norm. However, this bound naturally capture negative value for the coefficients. To overcome the limitation, 
we propose a new approach to verify that all assets value encoded in a ring element are simultaneously positive in one-shot. 
By subtracting $B^*$ from the committed message for some value $B^*$, we (left) shifted the message value. We show that by shifting at a carefully chosen bound $B^*$, we can ensure that the value must be within some $B$ can now be checked using just an infinity norm check. We formally capture the property in Lemma \ref{lemma:range_proof_plus}.


\subsection{Collapsing the Protocol: Security in QROM}

The collapsing property asserts that, for any quantum adversary, it is computationally infeasible to distinguish whether a superposition over valid response has been measured (collapsed) or not. In other words, measuring the register is ``as destructive'' as in the classical setting, even for quantum adversaries. This property is strictly stronger than classical binding or collision-resistance, and is necessary for quantum rewinding techniques in the QROM. We consider the following variants:
\begin{itemize}
    \item \textbf{Weakly Collapsing:} In some works (e.g., Liu–Zhandry \cite{LZ19}), a \emph{weakly collapsing} property is considered, where the measured experiment's acceptance probability is only required to be a non-negligible fraction of the unmeasured experiment's, up to negligible additive loss.
    \item \textbf{Collapsing for Sigma Protocols:} For sigma protocols, the definition is applied to the response (masked witness) register after the commitment and challenge are fixed, and the set $\mathcal{W}$ is the set of valid responses for $(x, a, c)$.
\end{itemize}

\paragraph{Relation to Quantum Proofs of Knowledge.}
The collapsing property is a key technical tool for enabling quantum rewinding and extraction in the QROM, as shown in~\cite{Unr16b,Unr17,LZ19}. It is used to prove that certain sigma protocols and commitment schemes admit quantum proofs of knowledge, and is a necessary assumption for the security of Fiat–Shamir-type transformations in the quantum setting. In our case, we show that the ZKP sigma protocols used in the scheme are weakly collapsing. Then, by applying the other known results, we show that our scheme is NIZKPoK in the QROM setting.

\section{Preliminaries}
\label{sec:preliminaries}

Let $q$ be an odd prime and we use $\sqrt q $ to denote $\lfloor \sqrt q \rceil$ for brevity. $\mathbb{Z}_q$ denotes the set of integers modulo $q$. We use [n] to denote the set $\{1,\cdots,n\}$. We write $x \sample \mathcal{S}$ when $x$ is sampled uniformly at random from the finite set $\mathcal{S}$, and similarly $x \sample \mathbf{D}$ when $x$ is sampled according to the distribution $\mathbf{D}$. For $r \in \mathbb{Z}$, we define $r \bmod^{\pm} q$ to be the unique element in the interval $[-(q-1)/2, (q-1)/2]$ that is congruent to $r$ modulo $q$. 

Let $d$ be a power of 2, denote $\mathcal{R}$ and $\mathcal{R}_q$ to be rings $\mathcal{R} = \mathbb{Z}[X]/(X^d + 1)$ and $\mathcal{R}_q = \mathbb{Z}_q[X]/(X^d + 1)$.
Suppose $q \equiv 2l + 1 \  (\ \bmod\  4l) $ for some $l \in \mathbb{N}$ then by [\cite{LS18}, Theorem 2.3], the polynomial $X^d + 1$ factors into $l$ prime ideals and let $\zeta \in \mathbb{Z}_q$ be a primitive $2l$-th root of unity, that is 
$
    X^d + 1 = \prod_{j=0}^{l-1} (X^{\frac{d}{l}} - \zeta^{2j+1}) = \prod_{i \in \mathbb{Z}_{2l}^{\cross}} (X^{\frac{d}{l}} - \zeta^i) \quad (\bmod \ q).
 $

 The ring $\mathcal{R}_q$ has a group of automorphisms $Aut(\mathcal{R}_q)$ that is isomorphic to $\mathbb{Z}_{2d}^{\cross}$, 
$i \mapsto \sigma_i: \mathbb{Z}_{2d}^{\cross} \to Aut(\mathcal{R}_q),$
where $\sigma_i$ is defined by $\sigma_i(X) = X^i$.
As described in \cite{ENS20}, for $i \in Z_{2d}^{\cross}$ it holds that, 
    $\sigma_i(X^{\frac{d}{l}} - \zeta) = (X^{\frac{id}{l}} - \zeta) = (X^{\frac{d}{l}} - \zeta^{-i}), $
since the roots of $(X^{\frac{id}{l}} - \zeta)$ are also roots of $(X^{\frac{d}{l}} - \zeta^{-i})$. In addition, we will make use of the following lemma.

\begin{lemma}[{\cite{ENS20}}]
\label{lemma:NTT_as_original}
Let $p = p_0 + p_1 X + \cdots + p_{d-1} X^{d-1}$. Then, $
    \frac{1}{l} \sum_{i=0}^{l-1} \NTT(p)_i = \sum_{i=0}^{d/l-1} p_i X^i.$
\end{lemma}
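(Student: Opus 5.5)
We prove \cref{lemma:NTT_as_original} directly from the Chinese Remainder Theorem decomposition recalled above. First we fix the meaning of $\NTT$. By the factorization
$X^d+1=\prod_{j=0}^{l-1}(X^{d/l}-\zeta^{2j+1}) \pmod q$, the CRT gives $\mathcal{R}_q\cong\prod_{j}\mathbb{Z}_q[X]/(X^{d/l}-\zeta^{2j+1})$. So $\NTT(p)_j$ is the residue $p \bmod (X^{d/l}-\zeta^{2j+1})$, a polynomial of degree $<d/l$, for $j=0,\dots,l-1$. The claimed identity is an equality of polynomials of degree $<d/l$ in $\mathbb{Z}_q[X]$. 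Here $1/l$ is the inverse of $l$ modulo $q$, which exists because $l\mid d$ is a power of $2$ and $q$ is odd.

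\textbf{Step 1: reduce each monomial block.} Write $p=\sum_{k=0}^{l-1}X^{kd/l}\,P_k(X)$, where $P_k(X)=\sum_{i=0}^{d/l-1}p_{kd/l+i}X^i$ has degree $<d/l$. Modulo $X^{d/l}-\zeta^{2j+1}$ we have $X^{d/l}\equiv\zeta^{2j+1}$. Hence
\begin{align*}
\NTT(p)_j=\sum_{k=0}^{l-1}\zeta^{(2j+1)k}P_k(X),
\end{align*}
and this is already reduced, since each $P_k$ has degree $<d/l$.

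\textbf{Step 2: average over $j$.} Summing over $j$ and exchanging the sums gives
\begin{align*}
\sum_{j=0}^{l-1}\NTT(p)_j=\sum_{k=0}^{l-1}\zeta^{k}P_k(X)\sum_{j=0}^{l-1}(\zeta^{2k})^{j}.
\end{align*}
Since $\zeta$ is a primitive $2l$-th root of unity, $\zeta^2$ is a primitive $l$-th root of unity. So the inner geometric sum equals $l$ when $k=0$. For $1\le k\le l-1$ we have $\zeta^{2k}\neq1$ and $(\zeta^{2k})^l=1$, so the sum is $\frac{\zeta^{2kl}-1}{\zeta^{2k}-1}=0$ in the field $\mathbb{Z}_q$. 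Only the $k=0$ term survives. Dividing by $l$ gives $P_0(X)=\sum_{i=0}^{d/l-1}p_iX^i$, which is the claim.

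\textbf{Main obstacle.} The algebra is routine. The real care is in the conventions. The indexing of the NTT slots must match the factorization over odd powers $\zeta^{2j+1}$; an indexing by $i\in\mathbb{Z}_{2l}^{\times}$ is the same set and gives the same sum. The primitivity of $\zeta$ must be used correctly to make the orthogonality sum vanish. These points need to be stated explicitly.
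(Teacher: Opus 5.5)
Your proof is correct. You decompose $p=\sum_{k}X^{kd/l}P_k$, substitute $X^{d/l}\equiv\zeta^{2j+1}$ in each slot, and use that $\sum_{j=0}^{l-1}\zeta^{2kj}$ equals $l$ for $k=0$ and $0$ for $1\le k\le l-1$ (since $\zeta^2$ has order exactly $l$, and $l$ is invertible modulo the odd prime $q$); together these give exactly the stated identity. The paper gives no proof of its own for this lemma, only the citation to \cite{ENS20}, and your argument is the standard direct CRT computation behind that result, so there is nothing further to compare.
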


\begin{lemma}[{Function Evaluations with Constant Coefficient}\cite{Nguyen22}]
    \label{lemma:constant_coeff}
    Let $\mathbf{x}, \mathbf{y} \in \mathbb{Z}_q^{kd}$ and define the polynomial $f = \sigma_{-1}(\mathbf{x})^T \mathbf{y} \in \mathcal{R}_q$. Then, the constant coefficient of $f$ is equal to $\langle \mathbf{x}, \mathbf{y} \rangle$.
\end{lemma}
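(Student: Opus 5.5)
We identify each vector in $\mathbb{Z}_q^{kd}$ with a vector of $k$ ring elements of $\mathcal{R}_q$ through the coefficient embedding: the $j$-th block of $d$ integers gives the coefficients of the $j$-th polynomial. Then $\sigma_{-1}$ acts componentwise, and $\sigma_{-1}(\mathbf{x})^T\mathbf{y}=\sum_{j=1}^k \sigma_{-1}(x_j)\,y_j$. Both the constant-coefficient map and $\langle\cdot,\cdot\rangle$ are additive over the $k$ blocks: $\langle \mathbf{x},\mathbf{y}\rangle=\sum_j \langle x_j,y_j\rangle$, where the right-hand side is the inner product of coefficient vectors. So the first step is to reduce to $k=1$, that is, to single polynomials $a=\sum_{i=0}^{d-1}a_iX^i$ and $b=\sum_{i=0}^{d-1}b_iX^i$.

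For $k=1$, compute $\sigma_{-1}(a)$ explicitly. Since $X^{2d}=1$ in $\mathcal{R}_q$, the element $X$ is a unit with $X^{-1}=-X^{d-1}$, and $\sigma_{-1}(X^i)=X^{-i}$. Hence
\[
\sigma_{-1}(a)\,b=\sum_{i=0}^{d-1}\sum_{j=0}^{d-1}a_i b_j X^{j-i}.
\]
Next, sort the terms by the exponent $e=j-i\in\{-(d-1),\dots,d-1\}$:
\begin{itemize}
\item If $e=0$, the term is $a_ib_i$, a constant.
\item If $0<e<d$, the term $X^e$ has no constant part.
\item If $-d<e<0$, then $X^{e}=X^{-d}X^{d+e}=-X^{d+e}$ with $0<d+e<d$, which again is a nonconstant monomial.
\end{itemize}
So only the diagonal terms $i=j$ contribute to the constant coefficient, and it equals $\sum_i a_ib_i=\langle a,b\rangle \bmod q$. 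Summing over the blocks gives the lemma.

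The only point needing care is the sign and reduction of negative powers via $X^d=-1$. In particular, we must check that no negative exponent can wrap around to exponent $0$. This holds because $|e|<d$, so $e\equiv 0$ only when $e=0$. The remaining steps are routine bookkeeping.
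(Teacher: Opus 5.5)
Your proof is correct: writing $\sigma_{-1}(a)\,b=\sum_{i,j}a_ib_jX^{j-i}$ and using $X^{e}=-X^{d+e}$ for $-d<e<0$ shows that only the diagonal terms $i=j$ contribute to the constant coefficient, and additivity over the $k$ blocks gives the general case. The paper does not prove this lemma itself but imports it from \cite{Nguyen22}, where it rests on the same direct coefficient computation, so your route is essentially the standard one.
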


We further recall norms definition, various properties of cyclotomic rings used and NTT properties in the Appendix \ref{app:additional_background}.

\subsubsection{Challenge Space}
Let $\mathcal{C} := \{-1, 0, 1\}^d \subset \mathcal{R}_q$, with $|\mathcal{C}| = 3^d$, be the challenge set of ternary polynomials with coefficients in $\{-1, 0, 1\}$. We define the probability distribution $C_{p}$ as a binomial distribution centered at $0$ with central value probability $p = 1/2$. The coefficients of a challenge $c \leftarrow \mathcal{C}$ are independently and identically distributed with $\Pr(0) = p \ \text{and} \ \Pr(1) = \Pr(-1) = \frac{1-p}{2}$.
We write $\omega$ such that $\Pr_{c \leftarrow \mathcal{C}}(\|c\|_1 \leq \omega)$ holds with overwhelming probability.
We require the challenge space for our ZKP to consist of short elements such that the difference between two elements is invertible in $\mathcal{R}_q$ with overwhelming probability. 

We use the strategy from \cite{ALS20} to show that our challenge space is invertible except with negligible probability. The invertibility of the chosen challenge space is discussed in Appendix \ref{app:additional_background}.

\subsubsection{Probability Distributions}
\label{sec:prelim_distribution}
In this paper, we sample the coefficients of random polynomials in the commitment scheme using the centered binomial distribution $\chi$ on $\{-1, 0, 1\}$, where $\pm 1$ each have probability $5/16$ and $0$ has probability $6/16$, as in~\cite{BLS19, ALS20, ENS20}. We also denote by $S_\mu$ the uniform distribution over the set $\{x \in \mathcal{R}_q \mid \|x\|_\infty \leq \mu\}$. We recall the discrete gaussian distribution and its tail bound lemma in Appendix \ref{app:additional_background}.

\subsubsection{Approximate Range Proof}
\label{sec:approx_range_proof}

In this section, we provide techniques for proving the norm bound of a polynomial vector. In~\cite{BL17}, the authors showed that for a vector $\mathbf{w}$ over $\CalRq$ and a binary matrix $\mathbf{R}$ sampled uniformly at random, if $\mathbf{v} = \mathbf{R}\mathbf{w}$, the projection with respect to $\mathbf{R}$ has small coefficients (i.e., small infinity norm), then $\mathbf{w}$ also has small coefficients with high probability. Furthermore, the projection can be generalized as $\mathbf{R}\mathbf{w} + \mathbf{y}$, where $\mathbf{y}$ is an arbitrary vector over $\CalRq$. Let $\texttt{Bin}_k$ be the distribution $\sum_{i=1}^k(a_i - b_i)$, where $a_i, b_i \leftarrow \{0,1\}$, and its variance is $k/2$ for a positive integer parameter $k$. We refer to~\cite{LNS21} and give the formal description in Lemma~\ref{lemma:infty_bound} and recall all other required lemma in Appendix \ref{app:additional_background}.

\subsection{Cryptographic Definitions}

\subsubsection{Module-SIS and Module-LWE Problems}
The security of our constructions is based on the well-established computational hardness of the Module-LWE ($\mlwe$) and Module-SIS ($\msis$) lattice problems. Here, both problems are defined over the ring $\CalRq$. We recall the $\msis$ and $\mlwe$ in the Appendix \ref{app:additional_background}.

\subsubsection{Rejection Sampling}

In the zero-knowledge proof, the prover usually compute a masked response $z=y+cr$ that depends on some secret $r$. Since $y$ does not span the whole space, rejection sampling can be used to remove the dependency of $z$ on $r$ which enables the simulatability of zero-knowledge proof. We recall the rejection sampling algorithm $\rej$ from \cite{Lyu12} in the Appendix \ref{app:additional_background}.

\subsubsection{Commit-and-Prove System, Commitment and NIZK}
An interactive proof system $\Pi = (\calP, \calV)$ for an NP relation $R \subseteq \{0,1\}^* \times \{0,1\}^*$ consists of a pair of efficient probabilistic algorithms, where $\calP$ denotes the prover and $\calV$ denotes the verifier. For any $(x, w) \in R$ with $x$ as a statement and $w$ as a witness, $\calP$ takes both $(x, w)$ as input, whereas $\calV$ takes only $x$ as input. The verifier $\calV$ outputs a bit $b \in \{0,1\}$ as the last message of the interaction. In the scope of our paper, we are interested in the relations with commit-and-prove system \cite{CLOS02}: It requires the prover to commit to witness $w$, then the prover shows the verifier certain statements about $w$: If the prover knows a witness $w$ such that $R(x, w) = 1$, then he should always be able to make the verifier output $b = 1$ (i.e., make the verifier accept - correctness). If the prover does not know such a witness, he should not be able to do so (soundness). We also require our protocols to be such that the prover does not reveal any additional information about his witness (zero-knowledge). We recall the commit-and-prove system formally in Appendix \ref{app:commit_and_prove}.

We recall the definition for standard cryptographic primitives commitment scheme $ :=(\commitgen, \commit, \open)$ and non-interactive zero-knowledge proof, NIZK $:=(\nizksetup, \nizkprove, \nizkverify)$ in the Appendix \ref{app:additional_background}.

\subsubsection{(A)BDLOP Commitment and Zero-Knowledge Proof of Opening}
We briefly recall BDLOP commitment \cite{BDLOP16} and ABDLOP commitment \cite{LNP22} in terms of weak opening and provide a slightly more complete background in Appendix \ref{app:commitment_bdlop_abdlop}. 

\begin{definition}[Weak Opening of BDLOP Commitment~\cite{ALS20}]
\label{def:opening_bdlop}
    Let $\bfA \in \CalRq^{\kappa \times (\kappa+\lambda +n)}$ and vectors $\mathbf{b}_i \in \CalRq^{(\kappa+\lambda +n)}$. 
    A weak opening for the commitment $\com$ consists of a polynomial $\bar{c}\in\mathcal{R}_q$, randomness $\mathbf{r}^\ast$ over $\mathcal{R}_q$, and message $m^\ast\in\mathcal{R}_q$ such that 
    \begin{itemize}
        \item $\|\bar{c}\|_1\leq2\omega$ and $\bar{c}$ is invertible over $\mathcal{R}_q$. 
        \item $\|\bar{c}\mathbf{r}^\ast\| \leq 2 \beta = 2\mathfrak{s}\sqrt{2(\kappa+\lambda+n)d}$. 
        \item $\bfA\mathbf{r}^\ast=\com_0$.
        \item $\mathbf{b}_i^T\mathbf{r}^\ast+m_i^\ast =\com_i$ for $i\in [n]$.
    \end{itemize}
\end{definition}

\begin{definition}[Weak Opening of ABDLOP Commitment~\cite{LNP22}]
\label{def:opening_abdlop}
      Let $\bfA_1 \in \CalRq^{\kappa \times (\mu_1)}, \bfA_2 \in \CalRq^{\kappa \times (\mu_2)}, \mathbf{b}_i \in \CalRq^{(\mu_2)}$. 
      A weak opening for the commitment $\com$ consists of a polynomial $\bar{c}\in\mathcal{R}_q$, two random elements $\mathbf{r}^\ast$ and $\mathbf{s}^\ast$ over $\mathcal{R}_q$, and message $m^\ast\in\mathcal{R}_q$ such that 
     \begin{itemize}
        \item $\|\bar{c}\|_1\leq2\omega$ and $\bar{c}$ is invertible over $\mathcal{R}_q$.
        \item $\|\bar{c}\mathbf{r}^\ast\|\leq2\mathfrak{s}_1\sqrt{2(\mu_1)d}$ and $\|\bar{c}\mathbf{s}^\ast\|\leq2\mathfrak{s}_2\sqrt{2(\mu_2)d}$.
        \item $\bfA_1\mathbf{r}^\ast+\bfA_2\mathbf{s}^\ast=\com_0$.
        \item $\mathbf{b}_i^T\mathbf{s}^\ast+m_i^\ast =\com_i$ for $i\in [n]$.
    \end{itemize}
\end{definition}

\section{A Suite of Zero-knowledge Proof Constructions for Relations used in ETL Transactions}
\label{sec:zkp_construction}
\subsection{Ledger Setup and Transaction}

The setup algorithm is formally given in \cref{fig:algorithm_qpadl_setup} and we give a high-level overview for the setup as below.
A participant $i$ starts by generating two secret keys $(\mathbf{sk}_{1,i}, \mathbf{sk}_{2, i}) := ((\mathbf{s}_{1, i}, \mathbf{e}_{1, i}), (\mathbf{s}_{2, i}, \mathbf{e}_{2, i}))$ where for all $j \in \{1,2\}, \mathbf{s_j} \gets \chi^{\kappa d}$ and $\mathbf{e_j} \gets \chi^{(\kappa+\lambda+3)d}$. Then, the public key for participant $i$ is $(\pk_{1,i},\pk_{2, i}) := (\bfA^T\mathbf{s}_{1, i}+\mathbf{e}_{1, i}, \bfA^T\mathbf{s}_{2, i}+\mathbf{e}_{2, i})$.
The parameters used are summarized in Table \ref{tab:param_detail_overview}. 
\begin{table}[h!]
\centering
\begin{adjustbox}{minipage=\linewidth,scale=0.85}
\begin{tabular}{|c|l|}
\hline
\textbf{Symbol} & \textbf{Description} \\
\hline
$d,l$ & Parameters related to degree and splitting factors \\
\hline
$\kappa$, $\lambda$ & Parameter related to the MLWE and MSIS ranks \\
\hline
$\mathbf{A}, \mathbf{B}$ & BDLOP commitment key \\
\hline
$\mathbf{A}_1,\mathbf{A}_2, \mathbf{B}_1, \mathbf{B}_c', \mathbf{B}_c''$ & Commitment Key for Proof of Consistency \\
\hline
$\mathbf{A}_3,\mathbf{A}_4, \mathbf{B}'_{eq},\mathbf{B}''_{eq}$ & Commitment key for Proof of Equivalence \\
\hline
 $\mathbf{a}^T_{\text{bin}}, \mathbf{a}'^T_{\text{bin}}, \mathbf{a}^T_{g}$ & Commitment key for Proof of Asset \\
\hline
\end{tabular}
\end{adjustbox}
\caption{Parameter Overview Table for the Ledger}
\label{tab:param_detail_overview}
\end{table}

As discussed in \cref{sec:etl_transaction}, a transaction is created by a sender and sent to the ledger to be verified before it is appended to the ledger.
To create a transaction, the sender commits the values for the entire transaction $v_1,v_2,\cdots,v_{n}$ into a commitment list $\{\commitment_{t,a,i}\}$ where $t$ is a transaction index, $a$ is an asset index, and $i$ is a participant index.
The main transaction commitment is given in Equation \ref{eq:commitment_extractable} and can be summarized as $\com_{t,a,i} := (\mathbf{A}\mathbf{r}_{t,a,i}, \pk^T_{1,i}\mathbf{r}_{t,a,i}+ v_{t,a,i}, \pk^T_{2,i}\mathbf{r}_{t,a,i}+ \sqrt qv_{t,a,i}, \mathbf{B}^T\mathbf{r}_{t,a,i}+ v_{t,a,i})$ 
where $\mathbf{r}_{t,a,i}$ is a secret low-norm vector drawn from distribution $\chi^{(k+\lambda +3)d}$ chosen by the sender for every commitment and $v_{t,a,i}$ is the value.  $\bfA,\bfB$ are sampled uniformly from $\CalRq^{\kappa \times (\kappa+\lambda +3)}$ and $\CalRq^{(\kappa+\lambda +3)}$, respectively. The details of transaction is given in the transaction scheme construction in \cref{sec:qpadl_con}.

\subsection{Proof Constructions}

This section details the construction of the four main proofs (proof of balance, proof of consistency, proof of equivalence, and proof of asset) which is discussed in \cref{sec:technical_overview_zkp}.

\subsubsection{Zero-Knowledge Proof of Balance (ZKPoB)}
Here, we demonstrate that $\name$ achieves proof of balance by utilizing the inherent linear relationship due to the additive homomorphic property of BDLOP commitment scheme  among the commitments of the participants. For each asset $a$ involved in a transaction $t$, participant $i$ either spends or receives an amount $v_{t,a,i}$ (including $v_{t,a,i}=0$). 
The protocol is given in \cref{proto:pi_b} in the Appendix and is a standard instantiation of BDLOP linear relation check with the relaxed proof of opening \cite{BDLOP16}.

\begin{theorem}
    \label{thm:pi_b}
    Protocol $\pi_{}^B$ given in \cref{proto:pi_b} is complete, (quantum) knowledge sound, and honest-verifier zero-knowledge.
\end{theorem}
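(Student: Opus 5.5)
We treat $\pi^B$ as the standard BDLOP proof of opening run in parallel for the commitments $\{\com_{t,a,i}\}_i$ of one asset $a$ in transaction $t$, together with a linear check. By homomorphism, the aggregate $\com^{sum}_{t,a}=\sum_i\com_{t,a,i}$ is a commitment to $0$ under randomness $\mathbf{r}_{t,a}=\sum_i\mathbf{r}_{t,a,i}$. The prover samples masks $\mathbf{y}$ (discrete Gaussian of width $\mathfrak{s}$), sends $\mathbf{w}=\bfA\mathbf{y}$ together with the auxiliary value $\mathbf{B}^T\mathbf{y}$ for the message row, receives $c\in\mathcal{C}$, and answers with $\mathbf{z}=\mathbf{y}+c\,\mathbf{r}_{t,a}$ after rejection sampling $\rej$. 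The verifier checks three things: $\|\mathbf{z}\|\le\beta$, $\bfA\mathbf{z}=\mathbf{w}+c\,\com^{sum}_{0}$, and $\mathbf{B}^T\mathbf{z}=\mathbf{B}^T\mathbf{y}+c\,\com^{sum}_{3}$. This last equality encodes a committed message equal to $0$.

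\textbf{Completeness.} Substituting $\mathbf{z}=\mathbf{y}+c\mathbf{r}$ into the verification equations gives equality, since $\sum_i v_{t,a,i}=0$. The norm bound holds with overwhelming probability by the Gaussian tail bound and $\|c\mathbf{r}\|\le\omega\|\mathbf{r}\|_\infty\sqrt{\cdot}$. Rejection sampling succeeds with the constant probability stated in the lemma of \cite{Lyu12}, so the protocol is repeated a constant expected number of times.

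\textbf{HVZK.} The simulator samples $c\leftarrow\mathcal{C}$ and $\mathbf{z}$ from the target Gaussian, then sets $\mathbf{w}=\bfA\mathbf{z}-c\com^{sum}_0$ and the auxiliary value accordingly. Aborts are simulated by outputting $\perp$ with the correct probability, or by replacing the aborted transcript with a commitment to random MLWE-masked values. Indistinguishability follows from the rejection sampling lemma, plus MLWE hiding of $\mathbf{w}$ in aborting transcripts, as in \cite{ALS20}.

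\textbf{Knowledge soundness.} This is the step I expect to require the most care. Classically, I would use a two-special-soundness / rewinding extractor that obtains two accepting transcripts $(\mathbf{w},c,\mathbf{z})$ and $(\mathbf{w},c',\mathbf{z}')$. Set $\bar c=c-c'$, which is invertible with overwhelming probability by the challenge-space discussion, and $\mathbf{r}^*=(\mathbf{z}-\mathbf{z}')/\bar c$. This yields a weak opening of $\com^{sum}$ in the sense of \cref{def:opening_bdlop} with message $m^*=0$; the second check gives $\mathbf{B}^T\bar c\mathbf{r}^*=\bar c\com^{sum}_3$. If the extracted message were nonzero, or if two distinct weak openings existed, we would obtain an MSIS solution of norm $\le 4\omega\beta$. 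Together with the separate openings of each $\com_{t,a,i}$, this shows the individual values sum to zero.

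For the quantum case, I would argue that the protocol is (weakly) collapsing. Given $(\mathbf{w},c)$, any two valid responses $\mathbf{z}\neq\mathbf{z}'$ satisfy $\bfA(\mathbf{z}-\mathbf{z}')=0$ with short norm, which is an MSIS solution. Hence the response register is computationally unique, and measuring it is undetectable under MSIS. I would then invoke the results of \cite{Unr16b,LZ19} that collapsing plus special soundness give a quantum proof of knowledge. The main obstacle is handling the non-strict soundness caused by the relaxed opening and the rejection aborts within Unruh's framework. For that I would use the Liu–Zhandry weak-collapsing variant, which tolerates the non-negligible knowledge error.
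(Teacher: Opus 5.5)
Your completeness, HVZK and classical extraction arguments are essentially the paper's. You use the same substitution with $\sum_i v_{t,a,i}=0$ and the same simulator, which samples $c$ and $\mathbf{z}$ and solves the verification equations for $\mathbf{w}$ and $u_{t,a}$. You also use the same extractor $\mathbf{r}^*=\bar{\mathbf{z}}/\bar c$, whose message $\com_{3,t,a}-\bfB^T\mathbf{r}^*$ is forced to $0$ by subtracting the two $\bfB^T$ checks. There is one small slip. Binding between weak openings $(\bar c,\mathbf{r}_1)$ and $(\bar c',\mathbf{r}_2)$ uses $\bar c'(\bar c\mathbf{r}_1)-\bar c(\bar c'\mathbf{r}_2)$, whose norm is up to $2\omega\cdot 2\beta+2\omega\cdot 2\beta=8\omega\beta$, as in \cref{lemma:binding_bdlop}, not $4\omega\beta$.

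The quantum part rests on a step that fails. The fact that two distinct valid responses for $(\mathbf{w},c)$ give an MSIS solution only yields collision resistance, i.e.\ computationally unique responses. It does not follow that measuring the response register is undetectable. A prover can hold a superposition over many valid $\mathbf{z}$ without being able to output any two of them classically. A distinguisher that notices the computational-basis measurement therefore never produces a collision, so there is nothing to hand to an MSIS solver. As the paper itself notes in its overview, collapsing is strictly stronger than collision resistance. Ambainis, Rosmanis and Unruh's ``pick-one'' oracle separation exhibits exactly this situation: an adversary that can produce one valid answer but not two, while rewinding-based extraction breaks. Unruh's strict-soundness extractor needs \emph{perfectly} unique responses, which fail here because $\bfA$ is wide. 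So appealing to \cite{Unr16b,LZ19} is not justified by what you have shown.

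The missing ingredient is a lossy-function argument in the style of \cite{LZ19}, and this is the route the paper takes.
\begin{enumerate}
\item Instead of measuring $\mathbf{z}$, the extractor coherently computes $f(\mathbf{z})$ on an ancilla, measures only the ancilla, and uncomputes.
\item $f$ is drawn either from an injective family, where this is equivalent to measuring $\mathbf{z}$, or from a lossy (preserving) family.
\item The lossy family's image on the bounded set of valid responses is tiny. By the Boneh--Zhandry lemma (\cref{lem:boneh-zhandry-pob-rt-a}), acceptance then drops by at most a non-negligible factor.
\item The two families must be indistinguishable under MLWE, a hiding-type assumption, not MSIS. This is what transfers the acceptance bound to the injective mode and gives weak collapsing (\cref{lem:weak-collapse-pob-rt-a}).
\end{enumerate}
The paper then combines weak collapsing with Unruh's two-projection bound $\delta\ge\varepsilon^3$ (\cref{thm:unruh-two-projection}) and the invertibility of $\bar c$. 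This gives an extraction probability of the form $\frac{\gamma(\lambda)}{p(\lambda)}\varepsilon^3-\negll(\lambda)$. Note also what the weak variant is for: it absorbs the multiplicative acceptance loss of this lossy measurement. It is not a device for tolerating the knowledge error of the relaxed opening.
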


The proof proceeds similarly to that of standard BDLOP linear proof and proof of opening and the proof is given in the \cref{app:proof_of_balance} for completeness. The quantum proof of knowledge is shown in Appendix \ref{sec:qpok-pob-collapsing}.

\subsubsection{Zero-Knowledge Proof of Consistency (ZKPoC)}
Proof of Consistency (PoC) is a "proof of well-formedness" of \cref{eq:commitment_extractable} with an approximate range proof of the binding factor $r_{t,a,i}$. We present our construction of ZKPoC in Protocol \ref{proto:pi_c}. At a high level, we follow the approach from~\cite{LNP22} to first commit to $\mathbf{r}_{t,a,i}$ and the value $v_{t,a,i}$ of \cref{eq:commitment_extractable} using ABDLOP commitment and then prove bounds on $\norm{ \mathbf{r}_{t,a,i}}_\infty$. Then, we perform the verification using linear relation check that ensure the committed messages satisfies \cref{eq:commitment_extractable}. We denote ABDLOP commitment in our proof of consistency protocol by: 
\begin{align} 
    \label{eq:poc_abdlop_com}
    f_i = \begin{bmatrix}
        f_{0,i} \\ 
        f_{1,i}
    \end{bmatrix}
    = \begin{bmatrix}
        \bfA_1\\0
    \end{bmatrix} \mathbf{r}_{t,a,i} + 
    \begin{bmatrix}
        \bfA_2\\ \bfB_1
    \end{bmatrix}\mathbf{s}_{t,a,i}+
    \begin{bmatrix}
        \mathbf{0^\kappa}\\v_{t,a,i}
    \end{bmatrix}
\end{align}
with the same variables $\mathbf{r}_{t,a,i}$ and $v_{t,a,i}$ as in Eq.\eqref{eq:commitment_extractable}, and 

\begin{itemize}
    \item $\bfA_1$, $\bfA_2$, $\bfB_1$ are public parameters, and are sampled uniformly from $\CalRq^{\kappa \times (\kappa+\lambda +3)}$, $\CalRq^{\kappa \times (\kappa+\lambda +3)}$, and $\CalRq^{1 \times (\kappa+\lambda +3)}$, respectively.
    \item $\mathbf{s}_{t,a,i}$ is a secret low-norm vector drawn from distribution $\chi^{(\kappa+\lambda +3)d}$ chosen by the sender for every commitment. 
    \item $\bfB_c'\in \CalRq^{256/d\times(\kappa+\lambda+3)}, \bfB_c''\in\CalRq^{1\times (\kappa+\lambda+3)}$ are sampled uniformly from $\CalRq^{256/d\times(\kappa+\lambda+3)}$ and $\CalRq^{1\times (\kappa+\lambda+3)}$, respectively.
    \item $\mathfrak{s}_1$, $\mathfrak{s}_2$, and $\mathfrak{s}_3$ denote the standard deviations, where $\mathfrak{s}_1=11\omega\sqrt{(\kappa+\lambda+3)d}$, $\mathfrak{s}_2=11\omega\sqrt{(\kappa+\lambda +3)d}$, and $\mathfrak{s}_3=11\sqrt{337}\ast\sqrt{(\kappa+\lambda+3)d}$.
    \item $\mathbf{R}_i\leftarrow \texttt{Bin}_1^{256\times(\kappa+\lambda+3)d}$ is a random matrix sampled by the verifier as a challenge, and let $\mathbf{r}_j\in\CalRq^{\kappa+\lambda+3}$ denote the polynomial vector so that its coefficient vector is the $j$-th row of $\mathbf{R}_i$, where $j\in[256]$. 
    \item  $\mathbf{e}_j\in\CalRq^{256/d} $ for $j \in [256]$ and $\mathbf{e}_j'\in\CalRq$ for $j \in [128]$ are the polynomial vectors such that their coefficient vectors consist of all zeroes and one 1 in the $j$-th position.
\end{itemize}

\begin{figure*}[!ht]
    \centering
    \begin{adjustbox}{minipage=\linewidth,scale=0.85}
    \pcb[codesize=\scriptsize,colspace=-3mm]{
    \underline{\textbf{Sender (Prover)}} \<\< \underline{\textbf{Participant i (Verifier)}}
    \\\text{Input:}
    \\\bfA,\bfB,\texttt{pk}_i\<\<\bfA,\bfB,\texttt{pk}_i
    \\\bfA_1,\bfA_2,\bfB_1,\bfB_c',\bfB_c''\<\<\bfA_1,\bfA_2,\bfB_1,\bfB_c',\bfB_c''
    \\\mathbf{r}_{t,a,i},\com_{t,a,i}\<\<\com_{t,a,i}
    \\[0.1\baselineskip][\hline] \\[-0.5\baselineskip]
    \mathbf{s}_{t,a,i}\xleftarrow[]{\$}\chi^{(\mathrm{k}+\lambda+3)d}, g\xleftarrow[]{\$}\CalRq: \text{ constant of } g=0\\
    f_i=
    \begin{pcmbox}\begin{bmatrix}
        f_{0,i}\\f_{1,i}
    \end{bmatrix}\end{pcmbox}=
    \begin{pcmbox}\begin{bmatrix}
        \bfA_1\\0
    \end{bmatrix}\end{pcmbox}\mathbf{r}_{t,a,i}+
    \begin{pcmbox}\begin{bmatrix}
        \bfA_2\\ \bfB_1
    \end{bmatrix}\end{pcmbox}\mathbf{s}_{t,a,i}+
    \begin{pcmbox}\begin{bmatrix}
        \mathbf{0^\kappa}\\v_{t,a,i}
    \end{bmatrix}\end{pcmbox}\\
    \mathbf{y}_{1,i}\xleftarrow[]{\$} \mathcal{D}_{\mathfrak{s}_1}^{(\kappa+\lambda +3)d},\mathbf{y}_{2,i}\xleftarrow[]{\$} \mathcal{D}_{\mathfrak{s}_2}^{(\kappa+\lambda +3)d}, \mathbf{y}_{3,i}\xleftarrow[]{\$} \mathcal{D}_{\mathfrak{s}_3}^{256}\\
    \mathbf{u}_{1,i}=\bfB_c'\mathbf{s}_{t,a,i}+\mathbf{y}_{3,i}, \mathbf{u}_{2,i}=\bfB_c''\mathbf{s}_{t,a,i}+g\\
    \< \sendmessageright*{f_i, \mathbf{u}_{1,i}, \mathbf{u}_{2,i}} \\
    \<\< \bfR_i\xleftarrow[]{\$} \texttt{Bin}_1^{256\times(\kappa+\lambda +3)d}\\
    \< \sendmessageleft*{\bfR_i} \\
    \text{compute }\mathbf{z}_{3,i}\text{, s.t. }\vec{z}_{3,i}=\vec{y}_{3,i} + \bfR_i \vec{r}_{t,a,i} \\
    \rej(\vec{z}_{3,i},\bfR_i\vec{r}_{t,a,i}, \mathfrak{s}_3)\\
    \< \sendmessageright*{\mathbf{z}_{3,i}}\\
    \<\<d_{1,j}\leftarrow Z_q, \forall j\in[256]\\
    \<\<d_{2,j}\leftarrow Z_q, \forall j\in[127]\\
    \< \sendmessageleft*{\{d_{1,j}\}_{j \in [256]},\{d_{2,j}\}_{j \in [127]}}\\
    x=\sum_{j=1}^{256} d_{1,j}\left(\begin{pcmbox}\begin{bmatrix}
        \sigma_{-1}(\mathbf{r}_j)^T\>\sigma_{-1}(\mathbf{e}_j)^T
    \end{bmatrix}\end{pcmbox}
    \begin{pcmbox}\begin{bmatrix}
        \mathbf{r}_{t,a,i}\\\mathbf{y}_{3,i}
    \end{bmatrix}\end{pcmbox}-\sigma_{-1}(\mathbf{e}_j)^T\mathbf{z}_{3,i}\right)\\
    +\sum_{j=2}^{128}d_{2,j-1}\sigma_{-1}(\mathbf{e}_j')^T v_{t,a,i}\\
    h=g+x, \mathbf{w}_{i} = \bfA_1\mathbf{y}_{1,i}+\bfA_2\mathbf{y}_{2,i}\\ 
    \bfT=
    \begin{pcmbox}\begin{bmatrix}
        \bfA \>0\>\mathbf{0}\>0\\
        \pk_{1,i}^T\>1\>\mathbf{0}\>0\\
        \pk_{2,i}^T\>\sqrt{q}\>\mathbf{0}\>0\\ 
        \bfB^T\>1\>\mathbf{0}\>0\\
        \sum_{j=1}^{256} d_{1,j}\sigma_{-1}(\mathbf{r}_j)^T\>\sum_{j=2}^{128} d_{2,j-1} \sigma_{-1}(\mathbf{e}_j')^T\>\sum_{j=1}^{256} d_{1,j}\sigma_{-1}(\mathbf{e}_j)^T\>1
    \end{bmatrix}\end{pcmbox}\\
    \mathbf{v}_{i} = \bfT\begin{pcmbox}\begin{bmatrix}
        \mathbf{y}_{1,i}\>-\bfB_1\mathbf{y}_{2,i}\>-\bfB_c'\mathbf{y}_{2,i}\>-\bfB_c''\mathbf{y}_{2,i}
    \end{bmatrix}^T\end{pcmbox}\\
    \< \sendmessageright*{h,\mathbf{w}_{i}, \mathbf{v}_{i}} \\
    \<\< c_{i} \xleftarrow[]{\$} \mathbf{\mathcal{C}} \\
    \< \sendmessageleft*{c_{i}} \\
    \mathbf{z}_{1,i}=\mathbf{y}_{1,i} + c_{i} \mathbf{r}_{t,a,i}, \mathbf{z}_{2,i}=\mathbf{y}_{2,i} + c_{i} \mathbf{s}_{t,a,i} \\
    \rej(\mathbf{z}_{1,i},c_{i} \mathbf{r}_{t,a,i} ,\mathfrak{s}_1), \rej(\mathbf{z}_{2,i},c_{i} \mathbf{s}_{t,a,i} ,\mathfrak{s}_2)\\
    \< \sendmessageright*{\mathbf{z}_{1,i},\mathbf{z}_{2,i}}\\
    \<\< \text{Runs } \mathcal{V}^C
    }
    \end{adjustbox}
    \caption{$\name$ Proof of Consistency Protocol $\pi_i^C$}
    \label{proto:pi_c}
\end{figure*}

\begin{figure}[!ht]
    \fbox{
    \begin{adjustbox}{minipage=\linewidth,scale=0.85}
    Accept if:
    \begin{enumerate}
    \renewcommand{\labelenumi}{(\alph{enumi})}
    \item \label{vrfy:pi_c_1} $\norm{\mathbf{z}_{1,i}} \stackrel{?}{\leq} \mathfrak{s}_1\sqrt{2(\kappa+\lambda+3)d}$
    \item \label{vrfy:pi_c_2} $\norm{\mathbf{z}_{2,i}} \stackrel{?}{\leq} \mathfrak{s}_2\sqrt{2(\kappa+\lambda+3)d}$
    \item \label{vrfy:pi_c_3} $\norm{\mathbf{z}_{3,i}}_{\infty} \stackrel{?}{\leq} \sqrt{2k}\mathfrak{s}_3$ (This implies, $\norm{\mathbf{r}_{t,a,i}}_\infty\stackrel{?}{\leq} 2 \sqrt{2k}\mathfrak{s}_3$ by \cref{lemma:infty_bound} where $k=128$) %
    \item \label{vrfy:pi_c_4} Constant coefficient of $h=0$
    \item \label{vrfy:pi_c_5} $\bfA_1\mathbf{z}_{1,i}+\bfA_2\mathbf{z}_{2,i}\stackrel{?}{=}\mathbf{w}_{i} + c_{i}f_{0,i}$
    \item \label{vrfy:pi_c_6} $
    \begin{bmatrix}
        \mathbf{v}_{0,i}\\
        \mathbf{v}_{1,i}\\
        \mathbf{v}_{2,i}\\
        \mathbf{v}_{3,i}\\
        \mathbf{v}_{4,i}
    \end{bmatrix}
    \stackrel{?}{=}\bfT\begin{bmatrix}
        \mathbf{z}_{1,i}\\
        c_if_{1,i}-\bfB_1\mathbf{z}_{2,i}\\
        c_i\mathbf{u}_{1,i}-\bfB_c'\mathbf{z}_{2,i}\\
        c_i\mathbf{u}_{2,i}-\bfB_c''\mathbf{z}_{2,i}
    \end{bmatrix}-c_i
    \begin{bmatrix}
        \com_{0, t,a,i} \\ 
        \com_{1,t,a,i} \\ 
        \com_{2,t,a,i} \\ 
        \com_{3,t,a,i} \\
        h+\sum_{j=1}^{256}d_{1,j}\sigma_{-1}(\mathbf{e}_j)^T\mathbf{z}_{3,i}
    \end{bmatrix}$
    \end{enumerate}
    \end{adjustbox}
    }
    \caption{$\mathcal{V}^C$ verify routine for Protocol \hyperref[proto:pi_c]{$\pi^C_i$} 
    }
    \label{proto:pi_c_verification}
\end{figure}

\begin{theorem}
    \label{thm:pi_c}
    Protocol $\pi_{i}^C$ given in \cref{proto:pi_c} is complete, (quantum) knowledge sound, and honest-verifier zero-knowledge.
\end{theorem}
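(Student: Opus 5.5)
We establish the three properties separately, following the template of \cite{LNP22} for the approximate range proof combined with the BDLOP/ABDLOP linear-relation framework of \cite{ALS20,ENS20}.

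\textbf{Completeness.} Substitute the honest $\mathbf{z}_{1,i}=\mathbf{y}_{1,i}+c_i\mathbf{r}_{t,a,i}$, $\mathbf{z}_{2,i}=\mathbf{y}_{2,i}+c_i\mathbf{s}_{t,a,i}$ into checks (e) and (f) and expand. In (f), the first four rows reduce to $\bfT$ applied to $(\mathbf{y}_{1,i},-\bfB_1\mathbf{y}_{2,i},\dots)$ because $c_i\com_{t,a,i}$ cancels $c_i$ times the corresponding rows of $\bfT$ applied to $(\mathbf{r}_{t,a,i},v_{t,a,i})$; here we use $\com_{2,t,a,i}=\pk_{2,i}^T\mathbf{r}+\sqrt q v$. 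The last row reduces after $h=g+x$ and $c_i\mathbf{u}_{1,i}-\bfB_c'\mathbf{z}_{2,i}=c_i\mathbf{y}_{3,i}-\bfB_c'\mathbf{y}_{2,i}$. For (d), note that by \cref{lemma:constant_coeff} the constant coefficient of $x$ equals $\sum_j d_{1,j}(\langle \mathbf{r}_j,\mathbf{r}_{t,a,i}\rangle+\langle\mathbf{e}_j,\mathbf{y}_{3,i}\rangle-\langle\mathbf{e}_j,\mathbf{z}_{3,i}\rangle)+\sum_j d_{2,j-1}v_j$. The first sum vanishes by the definition of $\mathbf{z}_{3,i}$, and the second vanishes since $v$ is an integer (its nonconstant coefficients are $0$). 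Since $g$ also has zero constant coefficient, (d) holds. The norm checks (a)--(c) hold with overwhelming probability by the Gaussian tail bounds, and the rejection sampling of \cite{Lyu12} aborts only with constant probability.

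\textbf{Honest-verifier zero-knowledge.} The simulator samples $\mathbf{z}_{1,i},\mathbf{z}_{2,i},\mathbf{z}_{3,i}$ from the target discrete Gaussians, which is justified by the rejection sampling lemma. It replaces $f_i,\mathbf{u}_{1,i},\mathbf{u}_{2,i}$ by uniform elements, which is justified by MLWE hiding of the ABDLOP commitment with secret $\mathbf{s}_{t,a,i}$. It samples $h$ uniformly subject to zero constant coefficient, since $g$ masks $x$. Finally, it computes $\mathbf{w}_i$ and $\mathbf{v}_i$ from the verification equations (e) and (f). A standard hybrid argument completes this part.

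\textbf{Knowledge soundness.} This is the main part. We rewind to obtain two accepting transcripts with distinct $c_i,c_i'$ and the same earlier messages. By (e), $\bar{c}=c_i-c_i'$ (invertible by the challenge-space property) yields a weak ABDLOP opening $(\mathbf{r}^*,\mathbf{s}^*,v^*)$ of $f_i$ in the sense of \cref{def:opening_abdlop}, as well as an extracted $\mathbf{y}_3^*=\mathbf{u}_{1,i}-\bfB_c'\mathbf{s}^*$. Check (f) on both transcripts then shows, via the usual argument that a nonzero quadratic in $c$ vanishes on a random challenge only with probability about $1/|\text{NTT slot}|$, that $\com_{t,a,i}=(\bfA\mathbf{r}^*,\pk_{1,i}^T\mathbf{r}^*+v^*,\pk_{2,i}^T\mathbf{r}^*+\sqrt q v^*,\bfB^T\mathbf{r}^*+v^*)$; otherwise we extract an MSIS solution from two distinct openings. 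The last row with check (d) gives that the constant coefficient of $\sum_j d_{1,j}(\langle\mathbf{r}_j,\mathbf{r}^*\rangle+\langle\mathbf{e}_j,\mathbf{y}_3^*-\mathbf{z}_{3,i}\rangle)+\sum_j d_{2,j-1}v^*_j$ is zero. Since $d_{1,j},d_{2,j}$ are uniform in $\mathbb{Z}_q$ and sampled after $\mathbf{z}_{3,i}$, every individual term vanishes except with probability $1/q$. Hence $\mathbf{z}_{3,i}=\mathbf{y}_3^*+\bfR_i\mathbf{r}^*$, and $v^*$ is an integer constant. Because $\mathbf{y}_3^*$ is fixed before $\bfR_i$, \cref{lemma:infty_bound} together with (c) gives $\norm{\mathbf{r}^*}_\infty\le 2\sqrt{2k}\mathfrak{s}_3$ except with probability $2^{-128}$.

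\textbf{Main obstacle and quantum setting.} The delicate step is the ordering argument: the binary challenge $\bfR_i$ and the $d$'s must come after the quantities they bind, while $\mathbf{r}^*$ is extracted only at the final round. We handle this by arguing that $\mathbf{r}^*$ is determined by $f_{0,i}$ under MSIS binding. For quantum knowledge soundness, we show that the protocol is weakly collapsing, using the fact that responses are determined by the transcript up to an MSIS solution. We then invoke the Liu--Zhandry / Unruh rewinding results \cite{LZ19,Unr17}, as done for \cref{thm:pi_b}.
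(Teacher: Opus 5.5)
Your completeness and honest-verifier zero-knowledge arguments follow the paper's proof, and so does your classical extraction: rewind only $c_i$, set $\mathbf{r}^*=\bar{\mathbf{z}}_{1,i}/\bar c$, $\mathbf{s}^*=\bar{\mathbf{z}}_{2,i}/\bar c$, $v^*=f_{1,i}-\bfB_1\mathbf{s}^*$, $\mathbf{y}_3^*=\mathbf{u}_{1,i}-\bfB_c'\mathbf{s}^*$, then pay $1/q$ for the random $d_{1,j},d_{2,j}$ and $2^{-256}$ for \cref{lemma:infty_bound}. You are in fact more explicit than the paper in two places: you read the opening of $\com_{t,a,i}$ off check (f), and you use ABDLOP binding to argue that $(\mathbf{r}^*,\mathbf{s}^*)$ is fixed by $f_i$ before $\bfR_i$ and the $d$'s are drawn.

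Three local fixes are needed. First, (f) is affine in $c_i$ once $\mathbf{z}_{1,i}=\mathbf{y}_1^*+c_i\mathbf{r}^*$ and $\mathbf{z}_{2,i}=\mathbf{y}_2^*+c_i\mathbf{s}^*$ are substituted. Subtracting the two transcripts and cancelling the invertible $\bar c$ therefore gives the four commitment equations exactly, with no ``nonzero quadratic'' bound and no MSIS fallback. Second, the last row of (f) yields $h=x^*+g^*$ with $g^*=\mathbf{u}_{2,i}-\bfB_c''\mathbf{s}^*$, and a cheating prover need not make the constant coefficient of $g^*$ zero. The $1/q$ step must use that this constant is fixed before the $d$'s, not that the constant coefficient of $x^*$ alone vanishes. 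Third, $\mathbf{e}'_j$ only runs over $j\le 128$ while $d\ge 256$, so you only force the coefficients of $X^1,\dots,X^{127}$ of $v^*$ to zero, not integrality of $v^*$; $\relation_{PoC}$ does not require integrality anyway.

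The genuine gap is the quantum part. You derive weak collapsing from the fact that responses are determined by the transcript up to an MSIS solution. That is only computational uniqueness (collision resistance, i.e.\ binding), and collapsing is strictly stronger: collision resistance does not imply collapsing in general (there are oracle separations). So MSIS hardness alone does not bound how much measuring the response register disturbs a quantum prover.

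The paper's route, worked out for PoB and invoked for PoC in \cref{app:quantum_poc_poe_poa_etc}, uses compatible lossy functions (\cref{def:clf-concrete_pob}). At the checkpoint before $c_i$, with $\bfR_i$ and the $d$'s as part of the fixed classical prefix, it coherently computes and measures only $f(\mathbf{z})=[E\mathbf{z}+\mathbf{z}_0]_m$ on an ancilla. In preserving mode (Gaussian rows $E$), $f$ is constant on the accepting responses with probability at least $1/e$ (\cref{lem:pres}), so \cref{lem:boneh-zhandry-pob-rt-a} shows acceptance is essentially preserved. In injective mode (uniform matrix), measuring $f$ is equivalent to measuring $\mathbf{z}$. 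The two modes are MLWE-indistinguishable, which gives $\gamma(\lambda)\ge\tau(\lambda)$.

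Even then, the extractor needs Unruh's two-projection bound (\cref{thm:unruh-two-projection}), $\delta\ge\varepsilon^3$, to get two accepting transcripts when only $c_i$ is resampled. Only after that do invertibility of $\bar c$ and 2-special soundness yield the witness. You need to replace the MSIS-uniqueness step with such a lossy-function argument for the ``(quantum)'' part of the statement to go through.
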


The proof proceeds similarly to that of a ABDLOP linear proof, proof of opening, and approximate range proof given in \cite{LNP22} and the proof is given in the \cref{app:proof_of_consistency} for completeness.  The quantum proof of knowledge is shown in Appendix \ref{app:quantum_poc_poe_poa_etc}.

\subsubsection{Extractability}
Extractability is a necessary property to achieve for the chosen commitment as per design requirements of ETL. In particular, it is used by each participant to extract the value from the received commitment from the sender. For the construction of $\name$, we leverage the decryption techniques used in \cite{LNPS21,LNP22}. In Protocol \ref{proto:exctractability}, we recap the decryption process for achieving exact extractability when the sender is honest (We refer the reader to \cref{remark:extraction} for clarification). Note that our parameter setting satisfies both $ \|(\sqrt{q}\mathbf{e}_{1,i}^T-\mathbf{e}_{2,i}^T)\mathbf{r}_{t,a,i}\|_\infty\leq\frac{q}{2}$ and $\|\mathbf{e}_{2,i}^T\mathbf{r}_{t,a,i}\|_\infty \leq \sqrt{q}$, which allow us to perform the computation.

\begin{figure}[!h]
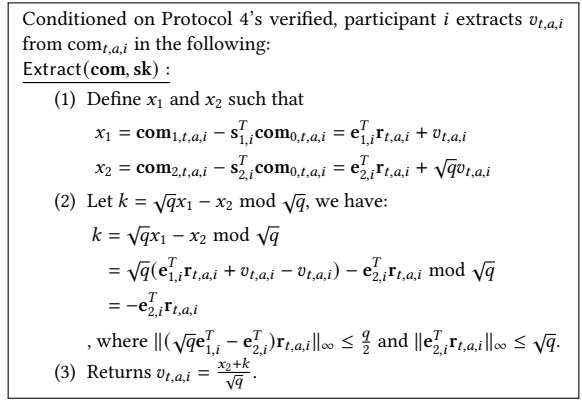

    \fbox{
    \begin{adjustbox}{minipage=\linewidth,scale=0.85}
    Conditioned on Protocol~\ref{proto:pi_c}'s verified, participant $i$ extracts $v_{t,a,i}$ from $\text{com}_{t,a,i}$ in the following: \\
    \underline{$\extract(\com, \sk):$}
    \begin{enumerate}
    \item Define $x_1$ and $x_2$ such that 
    \begin{align*}
        x_1 &= \com_{1,t,a,i} - \mathbf{s}_{1,i}^T\com_{0, t,a,i} = \mathbf{e}_{1,i}^T\mathbf{r}_{t,a,i}+v_{t,a,i}\\
        x_2 &= \com_{2,t,a,i} - \mathbf{s}_{2,i}^T\com_{0, t,a,i} = \mathbf{e}_{2,i}^T\mathbf{r}_{t,a,i}+\sqrt{q}v_{t,a,i}
    \end{align*}
    \item Let $k=\sqrt{q}x_1-x_2 \bmod{\sqrt{q}}$, we have:
    \begin{align*}
        k &  = \sqrt{q}x_1-x_2 \bmod{\sqrt{q}} \\
        & = \sqrt{q}(\mathbf{e}_{1,i}^T\mathbf{r}_{t,a,i}+v_{t,a,i}-v_{t,a,i})-\mathbf{e}_{2,i}^T\mathbf{r}_{t,a,i} \bmod{\sqrt{q}}\\
        &= -\mathbf{e}_{2,i}^T\mathbf{r}_{t,a,i}
    \end{align*}
    , where  $\|(\sqrt{q}\mathbf{e}_{1,i}^T-\mathbf{e}_{2,i}^T)\mathbf{r}_{t,a,i}\|_\infty\leq\frac{q}{2}$ and $\|\mathbf{e}_{2,i}^T\mathbf{r}_{t,a,i}\|_\infty \leq \sqrt{q}$.
    \item Returns $v_{t,a,i}=\frac{x_2+k}{\sqrt{q}}$.
    \end{enumerate}
    \end{adjustbox}
    }
    \caption{Extraction of $v_{t,a,i}$ from $\com_{t,a,i}$}
    \label{proto:exctractability}
\end{figure}
\begin{remark}\label{remark:extraction}
  We note to the reader that the commitment in \cref{eq:commitment_extractable} is decrypted conditioned on the success of Protocol~\ref{proto:pi_c}. Our Protocol~\ref{proto:pi_c} ensures that the exact relation as in  \cref{eq:commitment_extractable} holds for some short $r^*_{t,a,i}$ (with appropriate norm bounds from Protocol~\ref{proto:pi_c}) and $v^*_{t,a,i}$. 

This ensures the decryption from \cref{proto:exctractability} outputs $v^*_{t,a, i}$ exactly.
\end{remark}

\subsection{Re-commitment and Proof of Asset}

To finalize the transaction, the participant must prove that they have sufficient funds to cover the amounts involved in the transaction. To accomplish this, the participant will create a Zero-Knowledge Proof of Asset (ZKPoA). Technically, this involves computing a zero-knowledge range proof against their current balance of the asset involved in the transaction. However, the general proof of opening in the range proof requires knowledge of the secret randomness vector, which is chosen by the sender in the previous transactions and is unknown to the participants who now wishes to spend the received commitment token. Consequently, the participant must generate a re-commitment $\com_{t,a,i}^\prime$ of the same amount (which is known due to the provable extractability property), 

using the same public parameters/keys along with a ZKPoC on $\com_{t,a,i}^\prime$, as described in Protocol~\ref{proto:pi_c}. A Zero-Knowledge Proof of Equivalence (ZKPoE) is firstly presented to demonstrate that the committed amount in the re-commitment $\com_{t,a,i}^\prime$ is equal to that of the original commitment $\com_{t,a,i}$, followed by ZKPoA.

\subsubsection{Zero-Knowledge Proof of Equivalence (ZKPoE)}
Recall that the need for ZKPoE is to demonstrate that the amount committed in the participant's re-commitment is equal to the amount committed in the sender's original commitment, i.e., it allows us to verify in zero-knowledge if $v_{t,a,i} = v_{t,a,i}^\prime$ and $\sqrt{q}v_{t,a,i} = \sqrt{q}v_{t,a,i}^\prime$.

Given two commitments $\com_{t,a,i}$ and $\com'_{t,a,i}$, let $\com_{\text{diff},0} := \com_{0,t,a,i}-\com'_{0,t,a,i,}$, $\com_{\text{diff},1} := \com_{1,t,a,i}-\com'_{1,t,a,i}$, and $\com_{\text{diff},2} := \com_{2,t,a,i}-\com'_{2, t,a,i}$ 
With $\com_{t,a,i}, \com_{t,a,i}'$, conditioned on the success of proof of consistencies, we have
\begin{align}\label{eq:poe1}
    \nonumber
    \com_{\text{diff},0} &= \bfA(\mathbf{r}_{t,a,i}-\mathbf{r}_{t,a,i}')\\
    \nonumber
    \com_{\text{diff},1} &= \dotprod{\mathbf{s}_{1,i}}{\com_{\text{diff},0}} + \dotprod{\mathbf{e}_{1,i}}{\mathbf{r}_{t,a,i} - \mathbf{r}_{t,a,i}'} + (v_{t,a,i}-v_{t,a,i}')\\
    \com_{\text{diff},2} &= \dotprod{\mathbf{s}_{2,i}}{\com_{\text{diff},0}} + \dotprod{\mathbf{e}_{2,i}}{\mathbf{r}_{t,a,i} - \mathbf{r}_{t,a,i}'} + \sqrt{q}(v_{t,a,i}-v_{t,a,i}'),
\end{align}

We denote
$
    \tilde{C}_{t,a,i}=\begin{bmatrix}
        \com_{\text{diff},0}^T,0,0,0\\
        0,0,\com_{\text{diff},0}^T,0
    \end{bmatrix};
    \hspace{5mm}
        \tilde{u}_{t,a,i}=\begin{bmatrix}
        \com_{\text{diff},1}\\
        \com_{\text{diff},2}
    \end{bmatrix}
$.

To verify in zero-knowledge if $v_{t,a,i} = v_{t,a,i}'$ and $\sqrt{q}v_{t,a,i} = \sqrt{q}v_{t,a,i}'$, our construction allows the verifier to check the infinite norm of both $\dotprod{\mathbf{s}_{1,i}}{\com_{\text{diff},0}}-\com_{\text{diff},1}$ and $\dotprod{\mathbf{s}_{2,i}}{\com_{\text{diff},0}}-\com_{\text{diff},2}$. Furthermore, we give an upper bound on both of them, which is much smaller than $\sqrt{q}$, if and only if the conditions of equivalence hold. Our construction of ZKPoE is given in Protocol \ref{proto:pi_e}. We note to the reader that, since the participant holds the knowledge of secret key corresponding to $\pk_{i}$ (for participant $i$), our ZKPoE essentially is proof of knowledge of secret key satisfying~\cref{eq:poe1} with appropriate norm bounds on $\mathbf{r}_{t,a,i}$ and $\mathbf{r}'_{t,a,i}$ from proof of consistencies. Since the proof construction remains similar to that of proof of consistency, we defer the proof construction for proof of equivalence Protocol $\pi_i^{Eq}$ to the Appendix \ref{app:construction_poe}.

We now show that if $v_{t,a,i}\neq v_{t,a,i}'$, then the bound resulting in attempting to decrypt message is large, that is $\norm{ \tilde{C}_{t,a,i} m_i -\tilde{u}_{t,a,i}}_\infty \geq \sqrt{q}/4$ where $\mathbf{m}_{i} = \mathbf{s}_{1,i}||\mathbf{e}_{1,i}||\mathbf{s}_{2,i}||\mathbf{e}_{2,i}$. 

\begin{lemma}
\label{lemma:comp_equil}
For $v_{t,a,i} \in \mathcal{R}_q$ and $v_{t,a,i}' \in \mathcal{R}_q$  
associated with two valid commitments $\com_{t,a,i}$ and $\com_{t,a,i}'$, respectively, the equality $v_{t,a,i} = v_{t,a,i}'$ holds if  $\norm{ \tilde{C}_{t,a,i} m_i -\tilde{u}_{t,a,i}}_\infty < \sqrt{q}/4, \norm{\dotprod{\mathbf{e}_{1,i}}{\mathbf{r}_{t,a,i}-\mathbf{r}_{t,a,i}'}}_\infty \leq \sqrt q /4$.
\end{lemma}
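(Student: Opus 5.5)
The plan is to unpack the two rows of $\tilde{C}_{t,a,i}\mathbf{m}_i - \tilde{u}_{t,a,i}$ using \cref{eq:poe1}. Then use the hypothesis on $\dotprod{\mathbf{e}_{1,i}}{\mathbf{r}_{t,a,i}-\mathbf{r}'_{t,a,i}}$ to show that $\Delta v := v_{t,a,i}-v'_{t,a,i}$ has small coefficients. Finally, the $\sqrt q$-scaled second row forces those small coefficients to vanish. Throughout, ``valid commitments'' is read as meaning that both commitments passed \cref{proto:pi_c}. By \cref{thm:pi_c} and \cref{remark:extraction}, they therefore satisfy \cref{eq:commitment_extractable} exactly with short $\mathbf{r}_{t,a,i},\mathbf{r}'_{t,a,i}$, with $\norm{\cdot}_\infty \le 2\sqrt{2k}\mathfrak{s}_3$. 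Likewise $\mathbf{s}_{j,i},\mathbf{e}_{j,i}$ are short by key well-formedness. Write $\Delta\mathbf{r} := \mathbf{r}_{t,a,i}-\mathbf{r}'_{t,a,i}$.

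\textbf{Step 1 (unpacking).} By construction of $\tilde{C}_{t,a,i}$ and $\mathbf{m}_i=\mathbf{s}_{1,i}\|\mathbf{e}_{1,i}\|\mathbf{s}_{2,i}\|\mathbf{e}_{2,i}$, the two rows are $\dotprod{\mathbf{s}_{1,i}}{\com_{\text{diff},0}}-\com_{\text{diff},1}$ and $\dotprod{\mathbf{s}_{2,i}}{\com_{\text{diff},0}}-\com_{\text{diff},2}$. Substituting \cref{eq:poe1}, these equal $-(\dotprod{\mathbf{e}_{1,i}}{\Delta\mathbf{r}}+\Delta v)$ and $-(\dotprod{\mathbf{e}_{2,i}}{\Delta\mathbf{r}}+\sqrt q\,\Delta v)$, respectively. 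The hypothesis therefore gives both
\[
\norm{\dotprod{\mathbf{e}_{1,i}}{\Delta\mathbf{r}}+\Delta v}_\infty<\sqrt q/4
\quad\text{and}\quad
\norm{\dotprod{\mathbf{e}_{2,i}}{\Delta\mathbf{r}}+\sqrt q\,\Delta v}_\infty<\sqrt q/4 .
\]
All norms are taken on centered representatives $\bmod^{\pm} q$.

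\textbf{Step 2 (small difference).} The triangle inequality together with the assumed bound $\norm{\dotprod{\mathbf{e}_{1,i}}{\Delta\mathbf{r}}}_\infty\le\sqrt q/4$ gives $\norm{\Delta v}_\infty<\sqrt q/2$. This step is clean because every quantity involved is far below $q/2$, so no modular wrap-around occurs.

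\textbf{Step 3 (amplification).} Each coefficient $\Delta v_j$ is now an integer with $|\Delta v_j|<\sqrt q/2$. Hence $\sqrt q\,\Delta v_j$ lies in $(-q/2,q/2)$ as an integer, so the centered representative equals the true product. If $\Delta v_j\neq0$, then $|\sqrt q\,\Delta v_j|\ge\sqrt q$, and the second row bound yields
\[
|(\dotprod{\mathbf{e}_{2,i}}{\Delta\mathbf{r}})_j|\ \ge\ \sqrt q-\sqrt q/4=3\sqrt q/4 .
\]
To conclude $\Delta v=0$ it therefore suffices to show $\norm{\dotprod{\mathbf{e}_{2,i}}{\Delta\mathbf{r}}}_\infty<3\sqrt q/4$. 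The sum $\dotprod{\mathbf{e}_{2,i}}{\Delta\mathbf{r}}+\sqrt q\Delta v$ also stays below $q/2$, so the centered norm computation in Step 1 is legitimate.

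\textbf{Main obstacle.} The expected difficulty is this last bound on $\dotprod{\mathbf{e}_{2,i}}{\Delta\mathbf{r}}$, which is not a stated hypothesis. My first attempt would derive it from the norm guarantees. I would bound $\norm{\mathbf{e}_{2,i}}_\infty$ using key well-formedness, and $\norm{\Delta\mathbf{r}}_\infty\le 4\sqrt{2k}\mathfrak{s}_3$ using the two applications of \cref{vrfy:pi_c_3} with \cref{lemma:infty_bound}. The standard product bound $\norm{\dotprod{\mathbf{a}}{\mathbf{b}}}_\infty\le (\kappa+\lambda+3)d\,\norm{\mathbf{a}}_\infty\norm{\mathbf{b}}_\infty$ then controls the inner product. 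The final step is to invoke the parameter choice, which already guarantees $\norm{\mathbf{e}_{2,i}^T\mathbf{r}}_\infty\le\sqrt q$ per commitment, strengthened by a constant so that the bound is below $3\sqrt q/4$ for the difference. If this cannot be read off the existing parameter constraint, I would state it explicitly as a parameter requirement, symmetric to the $\mathbf{e}_{1,i}$ hypothesis already in the lemma.
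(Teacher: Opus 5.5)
Your proposal is correct and follows essentially the paper's route. The paper argues by contrapositive and splits on whether a nonzero coefficient of $\delta=v_{t,a,i}-v'_{t,a,i}$ is large (then the first row, with the $\mathbf{e}_{1,i}$ hypothesis, has norm at least $\sqrt q/4$) or small (then the $\sqrt q$-amplified second row does), which is your Steps~2 and~3 in reverse order.

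You are also right that a bound on $\langle\mathbf{e}_{2,i},\Delta\mathbf{r}\rangle$ is needed and is not among the stated hypotheses. For example, $\delta=1$, $\langle\mathbf{e}_{1,i},\Delta\mathbf{r}\rangle=-1$, $\langle\mathbf{e}_{2,i},\Delta\mathbf{r}\rangle=-\sqrt q$ satisfies both stated bounds with $v_{t,a,i}\neq v'_{t,a,i}$. The paper's first case silently uses $\|\langle\mathbf{e}_{2,i},\Delta\mathbf{r}\rangle\|_\infty\le\sqrt q/4$, justified only later in the Balance proof by the same kind of parameter requirement you propose, and your weaker requirement $<3\sqrt q/4$ already suffices.
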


\begin{proof}
Our proof is conducted by showing that its contraposition is correct and we consider the case where $\norm{\dotprod{\mathbf{e}_{1,i}}{\mathbf{r}_{t,a,i}-\mathbf{r}_{t,a,i}'}}_\infty \leq \sqrt q /4$. That is, if $v_{t,a,i} \neq v_{t,a,i}'$, then 
\begin{align*}
    \norm{ \tilde{C}_{t,a,i} m_i -\tilde{u}_{t,a,i}}_\infty \geq \sqrt{q}/4
\end{align*}

Note that $ \norm{\tilde{C}_{t,a,i} m_i -\tilde{u}_{t,a,i}}_\infty  $ is the maximum of 
\begin{align*}
\{ k_0&=\norm{\dotprod{\mathbf{e}_{1,i}}{\mathbf{r}_{t,a,i} - \mathbf{r}_{t,a,i}'} + (v_{t,a,i}-v_{t,a,i}')}_\infty ,
\\ 
k_1&=\norm{\dotprod{\mathbf{e}_{2,i}}{\mathbf{r}_{t,a,i} - \mathbf{r}_{t,a,i}'} + \sqrt{q}(v_{t,a,i}-v_{t,a,i}')}_\infty \}. 
\end{align*}

Let $\delta=v_{t,a,i} - v_{t,a,i}'$, we first use the triangle inequality on $k_0$ and $k_1$, and obtain:

\begin{align*}
        k_0 &= \norm{(\dotprod{\mathbf{e}_{1,i}}{\mathbf{r}_{t,a,i}-\mathbf{r}_{t,a,i}'}+\delta)}_\infty \geq \abs{\norm{\delta}_{\infty}-\norm{(\dotprod{\mathbf{e}_{1,i}}{\mathbf{r}_{t,a,i}-\mathbf{r}_{t,a,i}'})}_\infty}\\
        k_1 &= \norm{(\dotprod{\mathbf{e}_{2,i}}{\mathbf{r}_{t,a,i}-\mathbf{r}_{t,a,i}'}+\sqrt{q}\delta)}_\infty \geq \abs{\norm{\sqrt{q}\delta}_{\infty}-\norm{(\dotprod{\mathbf{e}_{2,i}}{\mathbf{r}_{t,a,i}-\mathbf{r}_{t,a,i}'})}_\infty}
    \end{align*}

\begin{itemize}
    \item  If $\delta_i \in(0,\sqrt{q}/2-1]$ or $\delta_i\in[-(\sqrt{q}/2-1),0)$ for any $i$-th coefficient $\delta_i$ of $\delta$: When the condition $k_0 \leq\sqrt{q}/4$ might still hold, if $q \geq 0$ (from $(q-1)/2 > \sqrt q(\sqrt q /2-1)$) we have 
    \begin{align*}
        k_1\geq \abs{\norm{\sqrt{q}\delta}_{\infty}-\norm{(\dotprod{\mathbf{e}_{2,i}}{\mathbf{r}_{t,a,i}-\mathbf{r}_{t,a,i}'})}_\infty} &\geq  \sqrt{q}-\sqrt{q}/4
        \\ &\geq\sqrt{q}/4,
    \end{align*}
    \item If $\delta_i \in[\sqrt{q}/2,\frac{(q-1)}{2}]$ or $\delta_i \in[-\frac{q-1}{2},-\sqrt{q}/2]$   for any coefficient $\delta_i$ of $\delta$: When the condition $k_1 \leq\sqrt{q}/4$ might still hold, we have, on the other hand,
    \begin{align*}
        k_0 \geq \abs{\norm{\delta}_{\infty}-\norm{(\dotprod{\mathbf{e}_{1,i}}{\mathbf{r}_{t,a,i}-\mathbf{r}_{t,a,i}'})}_\infty} \geq \sqrt{q}/2-\sqrt{q}/4=\sqrt{q}/4,
    \end{align*}

\end{itemize}

which concludes the proof.
\end{proof}

The formal statement of security properties is given as follows:
\begin{theorem}
    \label{thm:pi_e}
    Protocol $\pi_i^{Eq}$ is complete, (quantum) knowledge sound, and honest-verifier zero-knowledge in the classical setting.
\end{theorem}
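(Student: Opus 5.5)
We prove \cref{thm:pi_e} in the same way as \cref{thm:pi_c}. The reason is that $\pi_i^{Eq}$ has the same structure: an ABDLOP commit-and-prove protocol in the style of \cite{LNP22}. The witness is $\mathbf{m}_i=\mathbf{s}_{1,i}\|\mathbf{e}_{1,i}\|\mathbf{s}_{2,i}\|\mathbf{e}_{2,i}$, placed in the short Ajtai part. On top of it sit three checks: a linear check that $\mathbf{m}_i$ opens $\pk_{1,i},\pk_{2,i}$, i.e.\ $\pk_{j,i}=\bfA^T\mathbf{s}_{j,i}+\mathbf{e}_{j,i}$; an approximate range proof showing that $\tilde{C}_{t,a,i}\mathbf{m}_i-\tilde{u}_{t,a,i}$ has infinity norm below $\sqrt q/4$; and a constant-coefficient aggregation of the projection rows with the garbage-hiding polynomial $g$. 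The three properties are then handled in turn, each following the template of \cref{app:proof_of_consistency}.

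\textbf{Completeness.} The honest prover knows the short $(\mathbf{s}_{j,i},\mathbf{e}_{j,i})$. By \eqref{eq:poe1}, the vector $\tilde{C}_{t,a,i}\mathbf{m}_i-\tilde{u}_{t,a,i}$ equals $-(\langle\mathbf{e}_{1,i},\Delta\mathbf{r}\rangle,\langle\mathbf{e}_{2,i},\Delta\mathbf{r}\rangle)$, where $\Delta\mathbf{r}=\mathbf{r}_{t,a,i}-\mathbf{r}'_{t,a,i}$ and $\delta=0$. Its norm is bounded using the $\chi$-support of $\mathbf{e}$ together with the $\ell_\infty$ bounds on $\mathbf{r},\mathbf{r}'$ guaranteed by the two runs of $\pi^C$. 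The parameters are chosen so that this bound, inflated by the factor from \cref{lemma:infty_bound}, stays below $\sqrt q/4$. Next, the linear and constant-coefficient equations hold identically by \cref{lemma:constant_coeff}. Finally, the rejection sampling steps $\rej$ accept with the standard constant probability (about $1/e$ per step, as in \cite{Lyu12}), and the discrete Gaussian tail bound ensures the $\ell_2$ and $\ell_\infty$ norm checks pass with overwhelming probability.

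\textbf{Honest-verifier zero-knowledge.} We use the usual hybrid argument. First, the rejection-sampled responses $\mathbf{z}$ are replaced by fresh discrete Gaussians independent of the witness. Then the commitments to $\mathbf{s}_{j,i}$, the garbage terms, $\mathbf{y}_3$ and $g$ are replaced by uniform elements using MLWE hiding. Finally, $h$ is sampled uniformly among polynomials with zero constant coefficient, and $\mathbf{w},\mathbf{v}$ are recomputed from the verification equations.

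\textbf{Knowledge soundness.} This is the main work. We rewind on two accepting transcripts that share the first messages but use distinct challenges $c\neq c'$, and extract a weak opening in the sense of \cref{def:opening_abdlop}, with $\bar c=c-c'$ invertible (Appendix \ref{app:additional_background}). From the key-relation rows we obtain $\pk_{j,i}=\bfA^T\mathbf{s}^*_{j,i}+\mathbf{e}^*_{j,i}$.

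The obstacle is that $\mathbf{m}^*$ is only known to be short after multiplication by $\bar c$. I would first argue that $\mathbf{m}^*$ coincides with the short key certified by $\pi^{KW}$ at setup. Two distinct short solutions $\bar c\,\mathbf{m}^*$ and $\bar c\,\mathbf{m}^{KW}$ of $[\bfA^T\,|\,I]\,\mathbf{x}=\bar c\,\pk$ would differ by a short nonzero kernel vector of $[\bfA^T\,|\,I]$, i.e.\ an MSIS solution; the paper already hints at exactly this reduction. Hence $\mathbf{m}^*$ is the unique short key. Next we apply the approximate-range-proof soundness: if $\|\tilde{C}\mathbf{m}^*-\tilde{u}\|_\infty$ exceeded the bound, the projection $\mathbf{z}_3$ would be large except with probability $2^{-128}$ by \cref{lemma:infty_bound}. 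A further soundness loss comes from the $d_{1,j},d_{2,j}$ aggregation, contributing about $1/q$ per challenge (Schwartz--Zippel style). Together these give $\|\tilde{C}_{t,a,i}\mathbf{m}_i-\tilde{u}_{t,a,i}\|_\infty<\sqrt q/4$.

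Finally, the two PoC extractions bound $\|\langle\mathbf{e}_{1,i},\Delta\mathbf{r}\rangle\|_\infty\le\sqrt q/4$. \cref{lemma:comp_equil} then yields $v_{t,a,i}=v'_{t,a,i}$, which completes the extraction. The quantum statement is deferred to Appendix \ref{app:quantum_poc_poe_poa_etc}, mirroring \cref{thm:pi_c}.
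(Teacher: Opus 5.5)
Your completeness and HVZK arguments match the paper's, which also reuses the $\pi^C$ template. Completeness uses tail bounds for checks (a)--(c) and the remaining equations holding identically. The simulator samples the three responses from discrete Gaussians, samples $h$ with zero constant coefficient, samples $f_i,\mathbf{u}_{1,i},\mathbf{u}_{2,i}$ uniformly under MLWE, and solves for $\mathbf{w}_i,\mathbf{v}_i$. The core of your soundness argument also matches: two transcripts differing only in $c_i$, and the weak opening $\mathbf{m}^*_i=\bar{\mathbf{z}}_{1,i}/\bar c$, $\mathbf{s}^*_{t,a,i}=\bar{\mathbf{z}}_{2,i}/\bar c$ together with $\mathbf{y}^*_{3,i}$ and $g^*$. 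The losses are $1/q$ from the $d_{1,j}$ aggregation and $2^{-256}$ from \cref{lemma:infty_bound}, with MSIS for binding, giving knowledge error $\mathcal{O}(l/q^{d/l})+1/q+2^{-256}$.

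Where you depart from the paper is the extraction target, and there your argument proves a different statement from the one claimed. The paper's soundness is relative to $\relation_{PoE}$. That relation only requires $\pk_{j,i}=\bfA^T\mathbf{s}^*_{j,i}+\mathbf{e}^*_{j,i}$, the norm bounds on $\bar c\,\mathbf{m}^*$ and $\bar c\,\mathbf{s}^*$, and $\|\tilde{C}_{t,a,i}\mathbf{m}^*-\tilde{u}_{t,a,i}\|_\infty\le 2\sqrt{2k}\mathfrak{s}_3$.

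Your extra steps depend on other proofs having been accepted:
\begin{itemize}
\item identifying $\mathbf{m}^*$ with the $\pi^{KW}$ key via MSIS;
\item importing $\ell_\infty$ bounds on $\mathbf{r},\mathbf{r}'$ from two $\pi^C$ extractions;
\item applying \cref{lemma:comp_equil} to get $v=v'$.
\end{itemize}
The paper does carry out exactly these steps, but in the Balance theorem (condition 2 of Game 5), after one extractor per relation has been run.

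These steps cannot belong to the knowledge soundness of $\pi^{Eq}$. Nothing in its statement $(\pk_i,\com_{t,a,i},\com'_{t,a,i})$ forces $\mathbf{e}_{j,i}$ or $\Delta\mathbf{r}=\mathbf{r}_{t,a,i}-\mathbf{r}'_{t,a,i}$ to be short. For well-formed commitments the range-checked vector is $-(\langle\mathbf{e}_{1,i},\Delta\mathbf{r}\rangle+\delta,\ \langle\mathbf{e}_{2,i},\Delta\mathbf{r}\rangle+\sqrt q\,\delta)$. This can be small with $\delta\neq 0$ once $\mathbf{e}_{j,i}$ or $\Delta\mathbf{r}$ is long.

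Likewise, the ``obstacle'' that $\mathbf{m}^*$ is only short after multiplication by $\bar c$ is no obstacle for $\relation_{PoE}$, which is phrased in terms of $\bar c\,\mathbf{m}^*$. So stop the extraction at the $\relation_{PoE}$ witness, and move the key-uniqueness and \cref{lemma:comp_equil} steps to the composition.

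Minor points:
\begin{itemize}
\item $\pi^{Eq}$ has no $d_{2,j}$.
\item With $256$ rows, \cref{lemma:infty_bound} gives $2^{-256}$, not $2^{-128}$.
\item For completeness, the relevant constraint is that $\mathfrak{s}_3$ dominates $\|\bfR_i\mathbf{l}_{t,a,i}\|$ (via \cref{lemma:2_bound}), not a comparison with $\sqrt q/4$.
\end{itemize}
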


The proof proceeds similarly to that of a ABDLOP linear proof, proof of opening, and approximate range proof given in \cite{LNP22} and the proof is given in the \cref{app:proof_of_equivalence} for completeness. The quantum proof of knowledge is shown in \cref{app:quantum_poc_poe_poa_etc}.

\textbf{Extending Proof of Equivalence to Prove Shortness of Secret Keys.}
To ensure \cref{lemma:comp_equil} can be used, the secret error term $\mathbf{e}$ need to be short as well. Since aforementioned shortness check of $\mathbf{e}$ and $\pi^{Eq}$ share the same structure but with a different linear relation check, we slightly modify protocol used for $\pi^{Eq}$ to prove the key shortness to achieve the proof $\pi^{Kw}$. The modified protocol is given in Appendix \ref{app:construction}.

\subsubsection{Zero-Knowledge Proof of Asset (ZKPoA)}
ZKPoA is used by a participant to prove they have enough balance, over a specific asset, to conduct an associated ongoing transaction. The current balance hence depends on the on-transaction amount and on all other past transacted amounts over that same asset for that same participant.

Let $\mathbf{r}_{\text{PoA}}$ be the new random value for the re-commitment and $v_{\text{tot}}$ be the sum of the amounts $v_{t,a,i}$. Here, for the $n$-th transaction over a fixed asset $a$ and some participant $i$, we compute the current balance by homomorphically summing all positive (inbound) and negative (outbound), past and on-transaction amounts $v_{t,a,i}$ (including the current pending tx $\widehat t$) and re-commit them to the re-commitments $\com^\prime$:
$
\com_{1,\widehat t,a,i}' = \pk_{1,i}^T\mathbf{r}_{\text{PoA}} +  \sum_{t=0}^n v_{t,a,i} = \pk_1^T\mathbf{r}_{\text{PoA}}+v_{\text{tot}}
$.
Note that we will also use proof of asset as a positive range proof for transaction commitments of receiving account.

Here, we give a 5-move interactive process of ZKPoA in Protocol~\ref{proto:pi_a}. Note that ZKPoA requires the prover to generate additionally a specially crafted BDLOP commitment as:

$
    f_i = \begin{bmatrix}
    f_{0,i} \\ 
    f_{1,i} \\
\end{bmatrix}
= \begin{bmatrix}
    \mathbf{A} \\
    \mathbf{a}_\text{bin}^T \\
\end{bmatrix} \mathbf{r}_{\text{PoA}} + \begin{bmatrix}
    \mathbf{0^\kappa} \\
    v_\text{bin}\\ 
\end{bmatrix}
$
with $v_\text{bin}$ as the inverse NTT of the binary representation of $v_{\text{tot}}$ value such that:
   $ v_\text{bin}=\text{NTT}^{-1}(\texttt{binary}(v_{\text{tot}})),$
and
\begin{itemize}
    \item $\mathbf{a}_\text{bin}, \mathbf{a}_\text{bin}^\prime, \mathbf{a}_\text{g} \leftarrow \mathcal{R}_q^{(\kappa+\lambda+3)}$ are public. 
    \item $v_\text{bin}^\prime=\dotprod{\mathbf{a}_\text{bin}}{\mathbf{y}}(1-2v_\text{bin})$. 
    \item $\mathfrak{s}$ denotes the standard deviation, where $\mathfrak{s}= 11 \omega \|  \mathbf{r}_{\text{PoA}}\|$.
\end{itemize}

ZKPoA aims to show $v_{\text{tot}}$ to be a 64-bit integer $\{0,1,\dots,2^{63}\}$, i.e., $v_{\text{tot}}$ is a non-negative integer as the current balance of participant $i$. To do so, we refer to \cite{LNPS21} and show the validity via the two different relations as follows: 
\begin{enumerate}
    \item $v_\text{bin}$ is indeed a polynomial created from a binary vector:
    \begin{align}
        \label{eq:ZKPoA_1st_cond}
        v_\text{bin}(1-v_\text{bin})=0    
    \end{align}
    This equation ensures that $v_\text{bin}$'s NTT vector is only made up of either 1's or 0's, making it indeed a binary encoding of a number. In particular, we have:
    $$
    \text{NTT}(v_\text{bin}) \circ (1-\text{NTT}(v_\text{bin})) = \text{NTT}(v_\text{bin}) \circ \text{NTT}(1-v_\text{bin}) = v_\text{bin}(1-v_\text{bin})
    $$
    \item The value $v_\text{bin}$ encoded in the commitment is indeed the binary representation of $v_{\text{tot}}$:
    \begin{align}
        \label{eq:ZKPoA_2nd_cond}
        Q(\text{NTT}(v_\text{bin}))=\text{NTT}(v_{\text{tot}})
    \end{align}
    where $Q$ is a zero-padded Vandermonde matrix while $\tilde Q$ is the embedded Vandermonde matrix evaluated at $2$: 
    \begin{align*}
    \tilde Q&=\begin{bmatrix}
    1 & 2 & \hdots & 2^{63} \\
    \vdots & \vdots & \vdots & \vdots \\
    1 & 2 & \hdots & 2^{63}
    \end{bmatrix}
    \\
    Q&=\begin{bmatrix}
    \tilde Q & 0^{l-64}\\
    \end{bmatrix}
    \end{align*}
    This equation decodes $\text{NTT}(v_\text{bin})$, and checks if it is equal to $\text{NTT}(v_{\text{tot}})$.
\end{enumerate}

\begin{figure}[!ht]
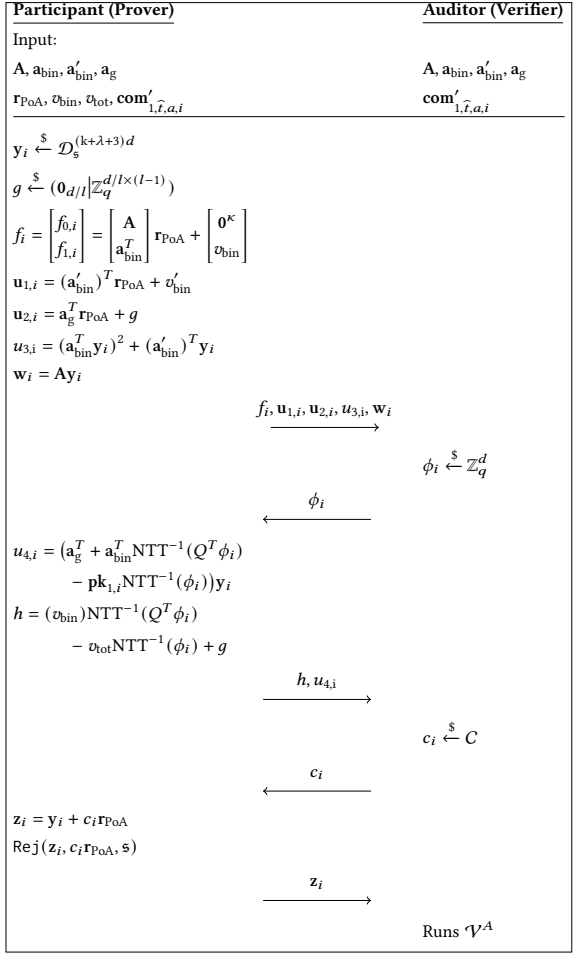
 
    \centering
    \begin{adjustbox}{minipage=\linewidth,scale=0.85}
    \pcb{
    \underline{\textbf{Participant (Prover)}} \<\< \underline{\textbf{Auditor (Verifier)}}
    \\\text{Input:}
    \\\bfA,\mathbf{a}_\text{bin}, \mathbf{a}_\text{bin}^\prime, \mathbf{a}_\text{g}\<\<\bfA,\mathbf{a}_\text{bin}, \mathbf{a}_\text{bin}^\prime, \mathbf{a}_\text{g}
    \\\mathbf{r}_{\text{PoA}}, v_\text{bin}, v_{\text{tot}}, \com_{1,\widehat t,a,i}^\prime\<\<\com_{1,\widehat t,a,i}^\prime
    \\[0.1\baselineskip][\hline] \\[-0.5\baselineskip]
    \mathbf{y}_i \xleftarrow[]{\$} \mathcal{D}_{\mathfrak{s}}^{(\mathrm{k}+\lambda+3)d}\\
    g\xleftarrow[]{\$}(\mathbf{0}_{d/l}\big|\mathbb{Z}_q^{d/l\times(l-1)}) \\
    f_i = \begin{pcmbox}\begin{bmatrix}
        f_{0,i} \\ 
        f_{1,i} \\
    \end{bmatrix}
    = \begin{bmatrix}
        \mathbf{A} \\
        \mathbf{a}_\text{bin}^T \\
    \end{bmatrix} \mathbf{r}_{\text{PoA}} + \begin{bmatrix}
        \mathbf{0^\kappa} \\
        v_\text{bin}\\ 
    \end{bmatrix}\end{pcmbox}\\
    \mathbf{u}_{1,i}= (\mathbf{a}_\text{bin}')^T\mathbf{r}_{\text{PoA}}+v_\text{bin}'\\
    \mathbf{u}_{2,i}= \mathbf{a}_\text{g}^T\mathbf{r}_{\text{PoA}}+g\\
    u_\text{3,i}=(\mathbf{a}_\text{bin}^T\mathbf{y}_i)^2+(\mathbf{a}_\text{bin}')^T\mathbf{y}_i \\
    \mathbf{w}_i = \bfA \mathbf{y}_i \\
    \< \sendmessageright*[6em]{f_i, \mathbf{u}_{1,i}, \mathbf{u}_{2,i}, u_\text{3,i}, \mathbf{w}_i} \\
    \<\< \phi_i \xleftarrow[]{\$} \mathbb{Z}_q^d \\
    \< \sendmessageleft*[6em]{\phi_i} \\ u_{4,i}=\big(\mathbf{a}_\text{g}^T+\mathbf{a}_\text{bin}^T\text{NTT}^{-1}(Q^T\phi_i) \\
    \tab\tab\tab-\pk_{1,i}\text{NTT}^{-1}(\phi_i)\big)\mathbf{y}_i \\
    h = (v_\text{bin})\text{NTT}^{-1}(Q^T\phi_i) \\
    \tab\tab\tab-v_{\text{tot}}\text{NTT}^{-1}(\phi_i) + g\\
    \< \sendmessageright*[6em]{h,u_\text{4,i}} \\
    \<\< c_i \xleftarrow[]{\$} \mathbf{\mathcal{C}} \\
    \< \sendmessageleft*[6em]{c_i} \\
    \mathbf{z}_i=\mathbf{y}_i+c_i\mathbf{r}_{\text{PoA}} \\
    \rej(\mathbf{z}_i,c_i\mathbf{r}_{\text{PoA}},\mathfrak{s}) \\
    \< \sendmessageright*[6em]{\mathbf{z}_i} \\
    \<\< \text{Runs } \mathcal{V}^A
    }
    \end{adjustbox}
    \caption{Protocol $\pi_i^A$}
    \label{proto:pi_a}
\end{figure}

\begin{figure}[!ht]
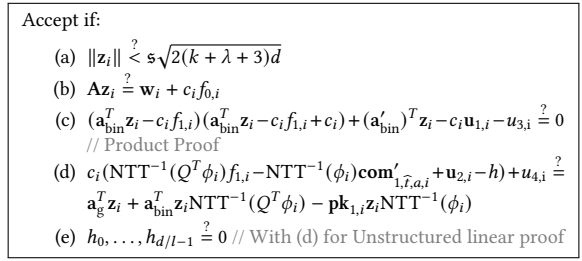

    \fbox{
    \begin{adjustbox}{minipage=\linewidth,scale=0.85}
    Accept if:
    \begin{enumerate}
        \renewcommand{\labelenumi}{(\alph{enumi})}
        \item \label{vrfy:pi_a_1} $\norm{\mathbf{z}_i} \stackrel{?}{<} \mathfrak{s}\sqrt{2(k+\lambda+3) d}$
        \item \label{vrfy:pi_a_2} $\bfA \mathbf{z}_i\stackrel{?}{=}\mathbf{w}_i+c_if_{0,i}$
        \item \label{vrfy:pi_a_3} $(\mathbf{a}_\text{bin}^T\mathbf{z}_i - c_if_{1,i})( \mathbf{a}_\text{bin}^T\mathbf{z}_i - c_if_{1,i}+c_i)+(\mathbf{a}_\text{bin}')^T\mathbf{z}_i -  c_i \mathbf{u}_{1,i}-u_\text{3,i}\stackrel{?}{=}0$
        \codecomment{Product Proof}
        \item \label{vrfy:pi_a_4} $c_i(\text{NTT}^{-1}(Q^T\phi_i)f_{1,i}-\text{NTT}^{-1}(\phi_i)\com_{1,\widehat t,a,i}^\prime+ \mathbf{u}_{2,i}-h)+u_\text{4,i}\stackrel{?}{=} \mathbf{a}_\text{g}^T\mathbf{z}_i+\mathbf{a}_\text{bin}^T\mathbf{z}_i\text{NTT}^{-1}(Q^T\phi_i)-\pk_{1,i}\mathbf{z}_i\text{NTT}^{-1}(\phi_i)$ 
        \item \label{vrfy:pi_a_5} $h_0,\dots,h_{d/l-1}\stackrel{?}{=}0$ 
        \codecomment{With (d) for Unstructured linear proof}
    \end{enumerate}
    \end{adjustbox}
    }
    \caption{$\mathcal{V}^A$ verify routine for Protocol \hyperref[proto:pi_a]{$\pi_i^A$}}
    \label{proto:pi_a_verification}
\end{figure}

\begin{theorem}
    \label{thm:pi_a}
    Protocol $\pi_i^A$ is complete, (quantum) knowledge sound, and honest-verifier zero-knowledge.
\end{theorem}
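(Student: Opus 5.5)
We prove the three properties of Theorem~\ref{thm:pi_a} in turn, following the range-proof template of \cite{LNPS21,ENS20} and the product proof of \cite{ALS20}.

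\textbf{Completeness.} We substitute the honest messages into each check of $\mathcal{V}^A$.
\begin{itemize}
\item Check~(a) follows from the Gaussian tail bound for $\mathbf{z}_i$, after conditioning on $\rej$ accepting. Since $\mathfrak{s}=11\omega\|\mathbf{r}_{\text{PoA}}\|$, the repetition rate is constant.
\item Check~(b) is immediate from $\bfA\mathbf{z}_i=\mathbf{w}_i+c_i\bfA\mathbf{r}_{\text{PoA}}$.
\item For check~(c), note that $\mathbf{a}_\text{bin}^T\mathbf{z}_i-c_if_{1,i}=\mathbf{a}_\text{bin}^T\mathbf{y}_i-c_iv_\text{bin}$. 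Expanding the product in $c_i$ gives three terms:
\begin{itemize}
\item the $c_i^2$ term is $v_\text{bin}(v_\text{bin}-1)=0$, by \eqref{eq:ZKPoA_1st_cond};
\item the $c_i$ term is $\dotprod{\mathbf{a}_\text{bin}}{\mathbf{y}_i}(1-2v_\text{bin})$, which cancels against $c_i\mathbf{u}_{1,i}$ and the corresponding part of $(\mathbf{a}_\text{bin}')^T\mathbf{z}_i$;
\item the $c_i^0$ term cancels $u_{3,i}$.
\end{itemize}
\item Check~(d) is the same linear telescoping, using $\com'_{1,\widehat t,a,i}=\pk_{1,i}^T\mathbf{r}_{\text{PoA}}+v_{\text{tot}}$.
\item For check~(e), write $h=g+\big(v_\text{bin}\NTT^{-1}(Q^T\phi_i)-v_{\text{tot}}\NTT^{-1}(\phi_i)\big)$. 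By \cref{lemma:NTT_as_original}, the first $d/l$ coefficients of the bracket equal $\frac1l\langle Q\NTT(v_\text{bin})-\NTT(v_{\text{tot}}),\phi_i\rangle$, which vanishes by \eqref{eq:ZKPoA_2nd_cond}. The same coefficients of $g$ are zero by construction.
\end{itemize}

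\textbf{Honest-verifier zero-knowledge.} We construct a simulator that proceeds as follows.
\begin{itemize}
\item It samples $\mathbf{z}_i\leftarrow\mathcal{D}_{\mathfrak{s}}$. By rejection sampling \cite{Lyu12}, this is statistically close to the real distribution.
\item It replaces $f_{1,i},\mathbf{u}_{1,i},\mathbf{u}_{2,i}$ with uniform elements. This is indistinguishable under MLWE, because $(\bfA,\mathbf{a}_\text{bin},\mathbf{a}_\text{bin}',\mathbf{a}_\text{g})\mathbf{r}_{\text{PoA}}$ is pseudorandom.
\item It samples $h$ uniformly with its first $d/l$ coefficients equal to zero. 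This is exactly the honest distribution, since $g$ masks the remaining slots.
\item It solves (b), (c) and (d) for $\mathbf{w}_i$, $u_{3,i}$ and $u_{4,i}$. These are uniquely determined by the other messages in the real protocol as well.
\end{itemize}
A hybrid argument then concludes the proof.

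\textbf{Knowledge soundness.} This is the main work, and we plan to follow the \cite{ENS20} extractor.
\begin{itemize}
\item \emph{Weak opening.} Rewinding on $c_i$ with fixed $\phi_i$ yields two accepting transcripts. From them, (b) gives a weak opening $(\bar c,\mathbf{r}^*)$ in the sense of \cref{def:opening_bdlop}. This defines extracted messages $v^*_\text{bin}=f_{1,i}-\mathbf{a}_\text{bin}^T\mathbf{r}^*$, and similarly $v^*_{\text{tot}}$ from $\com'_{1,\widehat t,a,i}$, $g^*$, and $v'^*_\text{bin}$. Binding under MSIS makes these well-defined across transcripts.
\item \emph{Binary relation.} Substituting $\mathbf{z}=\mathbf{y}^*+c\mathbf{r}^*$ into (c) turns it into a quadratic polynomial in $c$ that vanishes at the challenge. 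Its leading coefficient is $v^*_\text{bin}(v^*_\text{bin}-1)$. If this were nonzero, the polynomial would have at most two roots in each NTT slot. Since the challenge space hits such a set with only negligible probability (by the invertibility argument for $\mathcal{C}$), a prover succeeding with non-negligible probability forces $v^*_\text{bin}(1-v^*_\text{bin})=0$, i.e.\ \eqref{eq:ZKPoA_1st_cond}.
\item \emph{Linear relation.} Similarly, (d) forces $h=\NTT^{-1}(Q^T\phi_i)v^*_\text{bin}-\NTT^{-1}(\phi_i)v^*_{\text{tot}}+g^*$. Since $g^*$ is fixed before $\phi_i$, check (e) together with \cref{lemma:NTT_as_original} gives $\langle Q\NTT(v^*_\text{bin})-\NTT(v^*_{\text{tot}}),\phi_i\rangle=0$. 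If $Q\NTT(v^*_\text{bin})\neq\NTT(v^*_{\text{tot}})$, this holds for uniform $\phi_i$ with probability only $1/q$. Hence \eqref{eq:ZKPoA_2nd_cond} holds, so $v^*_{\text{tot}}\in[0,2^{64})$ is an integer.
\item \emph{Quantum extraction.} For quantum knowledge soundness, we argue that the response map is weakly collapsing. We reuse the argument of Appendix~\ref{sec:qpok-pob-collapsing}, as for $\pi^B$, and invoke \cite{LZ19,Unr17}.
\end{itemize}

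The main obstacle is the multi-round extraction for the 5-move structure. We must rewind on both $\phi_i$ and $c_i$, and control the soundness error $1/q+$ (root-counting term) while accounting for the weak-opening slack $\bar c$. This requires care that the extracted messages are independent of the challenge used.
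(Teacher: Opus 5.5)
Your proposal is correct and follows essentially the same route as the paper. The paper also proves completeness by substituting into checks (a)--(e). Its simulator likewise samples $\mathbf{z}_i,c_i,\phi_i$, a masked $h$ and MLWE-uniform $f_i,\mathbf{u}_{1,i},\mathbf{u}_{2,i}$, then solves the verification equations for $\mathbf{w}_i,u_{3,i},u_{4,i}$. Its extractor rewinds only on $c_i$ to obtain a weak opening. For the product-proof root counting and the unstructured linear-proof argument, the paper simply cites \cite{LNPS21}, Theorem~3.1, with errors $2/q^{d/l}+\mathcal{O}(\epsilon)$ and $1/q^{d/l}$. The quantum case is deferred to the PoB weak-collapsing template in both.

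Two cosmetic points. First, check (e) actually gives $\langle Q\NTT(v^*_\text{bin})-\NTT(v^*_{\text{tot}}),\phi_i\rangle = -l$ times the low-order part of $g^*$, not literally $0$; your remark that $g^*$ is fixed before $\phi_i$ is exactly what handles this. Second, as in your own extraction step, no separate rewinding on $\phi_i$ is needed, since the $\phi_i$-dependence is handled by a probability bound rather than a second fork.
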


The proof proceeds similarly to that of the BDLOP product proof from \cite{ALS20}, unstructured linear proof from \cite{ENS20}, proof of opening and the proof is given in the \cref{app:proof_of_asset} for completeness. The quantum proof of knowledge is shown in Appendix \ref{app:quantum_poc_poe_poa_etc}.

\section{A Lattice-based Transaction Scheme for Encrypted Table-based Ledger}
\label{sec:transaction_scheme}

\subsection{Formal Model for ETL-like Transaction Scheme}
\label{sec:model}
In this section we modify the model from MiniLedger, an ETL-like transaction scheme proposed in \cite{chatzigiannis2021miniledger} to support multi-asset transactions while omitting transaction pruning and auditing for simplicity. Let $\ledger$ to be the ledger state which act as a database that stores (i) a list of verified transactions $\txlist$ where $\{\tx_t\}_{t \in \txlist}$ and (ii) a list of registered participants $\pklist$. Let $\assetlist$ be the set of asset. Both of the lists are accessible using $\txlist_{\ledger}$ and $\pklist_\ledger$ respectively with respect to a ledger state $\ledger$. A row in the ledger is illustrated in Figure \ref{fig:etl_table} and is query-able using 3-dimensional array indexing with the indexing ordering of $\tx$-row, $\asset$-subsrow, and $\pk$-column. For example, 
$\ledger[\tx_t][\asset_a][\pk_i]$ will return $v_{t,a,i}$. In addition, we overload the array indexing to allow column retrieval for an associated public key $\pk_i$ and an asset $\asset_a$, for example $\ledger[\asset_a][\pk_i]$ will return the list of values associated with the public key $\pk_i$ and the asset $\asset_a$ as $\{v_{t,a,i}\}_{t\in \txlist_\ledger}$.
We assume, similar to that of \cite{matrict_2_2022}, there exist a consensus protocol that manage the state of the ledger and the exact protocol is outside the scope of the paper. The symbols used for the formal model are summarized in Table \ref{tab:symbol_table}.

\begin{table}[h!]
\centering
\begin{adjustbox}{minipage=\linewidth,scale=0.85}
\begin{tabular}{|c|l|}
\hline
\textbf{Symbol} & \textbf{Description} \\
\hline
$v$ & transaction value \\
\hline
$\valuelist$ & value list consisting of transaction amounts \\
\hline
$\valueset$ & set of valid value list  \\
\hline
$\sk$ & private key of an account \\
\hline
$\pk$ & public key of an account \\
\hline
$\sklist$ & list of private key \\
\hline
$\mathsf{PK}$ & list of public key \\
\hline
$\pklist$ & list of participants \\
\hline
$\txlist$ & list of transaction \\
\hline
$\assetlist$ & list of asset \\
\hline
$\ledger$ & ledger state \\
\hline
\end{tabular}
\end{adjustbox}
\caption{Symbol Table for the Formal Model}
\label{tab:symbol_table}
\end{table}

\begin{definition}[$\etl$]
    An encrypted table-based transaction scheme $\etl$ is a tuple of algorithms, $\etl = \{\setup,\keygen, \allowbreak \createtx,\allowbreak\verifytx, \checkbalance\}$ defined as follows: 
    \begin{itemize}
        \item{$(\publicparam, \ledger) \gets \setup(\secparam, \bank, \valuelist)$: on input a security parameter $\secparam$, a participant set $\bank$, a genesis value list $\valuelist$, outputs a public system parameter $\publicparam$ which is passed implicitly to all algorithms and an initial ledger state $\ledger$.}
        \item{$(\sk, \pk, \pi) \gets \keygen(\mathbf{A})$: on input a commitment key matrix $\mathbf{A}$, outputs a key pair consisting of a secret key $\sk$, a public key $\pk$, and a proof $\pi$.}
        \item{$\{\tx\} \gets \createtx(\valuelist, \sklist, \ledger)$}: on input a list of value $\valuelist$, a list of secret key $\sklist$, and a ledger state $\ledger$, outputs a transaction $\tx$.
        \item{$\{0,1\}\gets \verifytx(\tx, \ledger)$: on input a transaction $\tx$ and a ledger state $\ledger$, outputs a verification indicator bit.}
        \item{$v \gets \checkbalance(\sk, a, \ledger)$: on input a secret key $\sk$, an asst index $a$, outputs a balance value $v$.}
    \end{itemize}
    \label{def:etl}
\end{definition}

\subsection{Security Definition}
We modify the correctness, balance with ownership proof and privacy (k-anonymity) properties for the ETL-like transaction scheme from \cite{chatzigiannis2021miniledger} to further support multi-asset transaction.  Correctness definition is deferred to \cref{app:relation}. We consider the static corruption case. We define a valid value list set $\valueset := \{\valuelist_0,\valuelist_1, \cdots\}$ with respect to a ledger state $\ledger$ such that for any value list $\valuelist \in \valueset$, we have $\sum_{i} v_{a,i}=0$ for any asset $a$ where $v_{a,i} \in \valuelist$ and $\sum_{t \in (\txlist_\ledger \cup \valuelist)} v_{t,a,i} \geq 0$ for any asset $a$ and any participant $i$. We first describe the oracles given to an adversary $\adv$ and it is assumed implicitly that adversary has access to the ledger state $\ledger$.

\begin{itemize}
    \item $\mathcal{O}_{\tx}(\valuelist)$: on input a value list $\valuelist$, if the value list is valid $\valuelist \in \valueset$,  executes $ \tx \gets \createtx(\valuelist, \sklist, \ledger)$, records $\tx$ to $\txlist_{\mathcal{O}}$ and outputs $\tx$, otherwise outputs $\bot$.
    \item $\mathcal{O}_{\ledger}(\tx)$: on input a transaction $\valuelist$, if $\verifytx(\tx, \ledger)=1$ executes $ \tx \rightarrow \ledger$ then outputs $(1, \ledger)$, otherwise outputs $(0,\ledger)$.
\end{itemize}

\begin{definition}[Balance] It holds that every PPT adversary $\adv$ has at most negligible advantage in the following experiment, where we define the advantage as $\advantage{\expbalance}{\adv} := \prob{\expbalance_{\adv}(\secpar) = 1}$.
\begin{adjustbox}{minipage=\linewidth,scale=0.85}
\begin{minipage}{\linewidth}
    \procb{$\expbalance_{\adv}(\secpar)$}{
        (\bank, \bank_{C}, \valuelist) \gets \adv() \\
       \{v^a_{tot}\}_{a \in \assetlist} := \sum_{i} v_{0,a,i} \text{ where } v_{0,a,i} \in \valuelist \\
        \publicparam, \ledger \gets \setup(\secparam, \bank, \valuelist) \\
        \widehat\tx \gets \adv^{\mathcal{O}_{\tx}, \mathcal{O}_{\ledger}}(\publicparam, \sk_{C}), \text{ where } \sk_{C}=\{\sk_i\}_{i \in \bank_{C}}\\
        \{{\widehat v}^a_{tot}\}_{a \in \assetlist} := \sum_{t \in {(\txlist_\ledger \cup \widehat\tx),i \in \pklist_\ledger}} v_{t,a,i}\\ 
        \text{return 1 only if }\verifytx(\widehat \tx, \ledger)=1 \wedge (\\
        \phantom{\vee\ \ } 1) \exists a \in \assetlist,i \notin \pklist_{cor}, v_{\widehat t,a,i}\in \widehat \tx, \text{s.t. }v_{\widehat t,a,i} < 0 \codecomment{bypassing ownership}\\
         \vee\ 2) \exists a \in \assetlist, i \in \pklist_\ledger, \text{s.t. } v_{a,i}^{sum} = \sum_{t \in \txlist_\ledger \cup \tx^\dagger} v_{t,a,i} < 0\\
         \codecomment{spending more asset than owned}\\
         \vee\ 3) \exists a \in \assetlist, \text{s.t. }{\widehat v}^a_{tot} \neq v^a_{tot}  \codecomment{create or destroy total asset value}
    }
\end{minipage}
\end{adjustbox}
    \label{definition:exp_balance}
\end{definition}

Informally, balance with ownership proof requires that only the owner of the accounts can spend the associated assets.
In addition, balance property requires that balance is preserved such that the sender cannot spend more than its total assets and the total value in the system are preserved after transaction. These requirements are the main integrity requirements for a ledger as shown in various ETL-like schemes \cite{chatzigiannis2021miniledger, PADL}.

\begin{definition}[Privacy] It holds that every PPT adversary $\adv$ has at most negligible advantage in the following experiment, where we define the advantage as $\advantage{\expprivacy}{\adv} := |\prob{\expprivacy_{\adv}(\secpar) = 1}  - 1/2 |$.

\begin{adjustbox}{minipage=\linewidth,scale=0.85}
\begin{minipage}{\linewidth}
    \procb{$\expprivacy_{\adv}(\secpar)$}{
        (\bank, \valuelist) \gets \adv() \\
        \publicparam, \ledger \gets \setup(\secparam, \bank, \valuelist) \\
         \widehat \valuelist_0, \widehat \valuelist_1 \gets \adv^{\mathcal{O}_{\tx}, \mathcal{O}_{\ledger}}(\publicparam) \\
         \text{abort if } \widehat \valuelist_0 \notin \valueset \vee  \widehat \valuelist_1 \notin \valueset \\
         b \sample \{0,1\}, \widehat \tx_{b} \gets \createtx(\widehat \valuelist_b, \sklist,\ledger) \\
         \widehat \tx_{b} \rightarrow \ledger \\
         b' \gets \adv()\\
         \text{return 1 only if } b=b'
    }
\end{minipage}
\end{adjustbox}
    \label{definition:exp_privacy}
\end{definition}
Informally, the above experiment captures that the ledger hides both transacting parties and the associated values with $k$-anonymity.

\subsection{$\name$ Construction}
\label{sec:qpadl_con}
We construct $\name$ using zero-knowledge proofs defined in Section \ref{sec:zkp_construction} to enforce all the relations needed for the integrity of the ledger (Balance property defined in Definition \ref{definition:exp_balance} while maintaining the privacy of the transactions where the privacy is defined in Definition \ref{definition:exp_privacy}). In summary, we commit to the transaction values using BDLOP commitment scheme and prove that it is well-formed with respect to the ledger state. The construction of $\name$ as an instance of ETL (\cref{def:etl}) is given in \cref{fig:algorithm_qpadl} and setup algorithms are given in \cref{fig:algorithm_qpadl_setup}. We give an informal description of the ETL construction as follows:
\begin{enumerate}
    \item On $\createtx(\valuelist, \sklist, \ledger)$, spender commits to the values in $\valuelist$ (including decoy values of zero), generate all the necessary proofs and then return both the commitments and the proofs as a new transaction.
    \item On $\verifytx(\tx, \ledger)$, the ledger return 1 if all proofs are verified to be correct.
    \item On $\setup(\secparam, \bank, \valuelist)$, generate public keys, set the genesis transaction value as $\valuelist$, then return the public parameter and ledger.
    \item On $\keygen(\mathbf{A})$, sample $\mathbf{s}$ and $\mathbf{e}$ twice to produce public keys and generate key well-formness proof $\pi^{KW}$, then return them.
    \item On $\checkbalance(\sk_{i}, a, \ledger)$, run the $\extract(\Tilde{\com}, \sk_{i})$ from \cref{proto:exctractability} and return the value extracted.
\end{enumerate}

The transaction scheme flow is illustrated in \cref{fig:PADL_transaction_flow} and discussed in details in \cref{sec:etl_transaction}.
We apply standard transformation such as Fiat-Shamir transform \cite{fiatshamir} to the Sigma protocols to obtain (adaptive-sound \cite{Attema2023}) non-interactive zero-knowledge proofs.
To further hide the sign for (spending/receiving) accounts in the transaction, ZKP Sigma protocol OR-proof \cite{CDS94, Fischlin2020} \footnote{For the 7-moves sigma PoE protocol, the OR-Proof is run on only the challenge $c_i$ right before the last response from the proof of equivalence protocol since the verification equation can be made true as long as the last phase commitment $h,w_i,v_i$ is chosen after everything else is sampled. The protocol before the last three moves are part of the statement for the OR-proof.} 
can be used for a 1-out-of-2 proof for asserting the new commitment $\commitment'$ is either re-commitment to the sum of account proven with a proof of equivalence or a recommitment to the same commitment message $m=m'$ with simple BDLOP linear proof.
The relations enforced by the NIZK is deferred to the Appendix \ref{app:relation}.

\subsection{Security Analysis}

We provide security analysis for $\name$ under (quantum) random oracle model (ROM/QROM) in this section. $\name$ is a correct (defined in \cref{def:etl_correctness}) $\etl$ transaction scheme follows trivially by inspection. We instead focus on the balance and privacy.

\begin{theorem}[Balanced]
    $\name$ is a balanced $\etl$ transaction scheme in (Q)ROM, assuming that NIZK is zero-knowledge and (quantum) sound with validity (extractable), BDLOP and ABDLOP commitment schemes are binding, MLWE is hard, and MSIS is hard.
\end{theorem}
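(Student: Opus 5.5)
We prove the theorem by a sequence of game hops. Each hop removes one way the adversary could win $\expbalance_{\adv}$. The three winning conditions are: (1) bypassing ownership, (2) overspending, and (3) creating or destroying value. We bound each by reducing to extraction soundness of the NIZKs (Theorems \ref{thm:pi_b}, \ref{thm:pi_c}, \ref{thm:pi_e}, \ref{thm:pi_a} and the key well-formness proof $\pi^{KW}$), to binding of (A)BDLOP, to MSIS, or to MLWE.

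\textbf{Extraction and simulation games.} In the first hop, honest transactions returned by $\mathcal{O}_{\tx}$ are generated with the zero-knowledge simulator. Indistinguishability follows from zero-knowledge of the NIZK, and in the QROM we use the Fiat--Shamir simulation with a reprogrammed random oracle. Because honest proofs are now simulated, the extractor can later be run on the adversary's proofs without interference. In the second hop, for the final $\widehat\tx$ and every transaction the adversary submitted to $\mathcal{O}_{\ledger}$, we run the (quantum) knowledge extractor, via rewinding or the collapsing-based extractor of Appendix \ref{app:quantum_poc_poe_poa_etc}. This yields weak openings $(\bar c,\mathbf{r}^\ast,v^\ast)$ for every commitment, short $\mathbf{r}^\ast$ from the approximate range proof in $\pi^C$ (condition (c) and \cref{lemma:infty_bound}), and short $(\mathbf{s},\mathbf{e})$ for every public key from $\pi^{KW}$. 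If the extracted openings of the same commitment differ across different proofs (PoB, PoC, PoA), we obtain two distinct weak openings. By the standard BDLOP/ABDLOP argument, the difference $\bar c'\mathbf{r}^\ast-\bar c\mathbf{r}'^\ast$ is then a short nonzero MSIS solution for $\bfA$. So we may assume all extracted values are consistent, up to a negligible loss.

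\textbf{Handling the three winning conditions.} Condition 3 follows from $\pi^{B}$. The extracted openings satisfy $\sum_i v^\ast_{t,a,i}=0$ for every asset in every accepted transaction, so the total supply is invariant by induction over $\txlist_\ledger$. Condition 2 uses the re-commitment $\com'$ with $\pi^{Eq}$ and $\pi^A$. First, $\pi^A$ gives, via the product proof and the Vandermonde linear relation, that $v'_{tot}$ is an integer in $[0,2^{63}]$. Next, $\pi^{Eq}$ together with the two $\pi^C$ proofs lets us invoke \cref{lemma:comp_equil}. The extracted short $\mathbf{e}_{1,i},\mathbf{e}_{2,i}$ from $\pi^{KW}$ and the short $\mathbf{r},\mathbf{r}'$ give $\norm{\dotprod{\mathbf{e}_{1,i}}{\mathbf{r}-\mathbf{r}'}}_\infty\le\sqrt q/4$, and the norm check in $\pi^{Eq}$ gives the other hypothesis. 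Hence $v'_{tot}=\sum_t v_{t,a,i}$ by additive homomorphism, so the balance is nonnegative. If the key used in $\pi^{Eq}$ differs from the one extracted from $\pi^{KW}$, then $\bfA^T(\mathbf{s}-\mathbf{s}')+(\mathbf{e}-\mathbf{e}')=0$, which gives a short MSIS solution for $[\bfA^T\,|\,I]$. Condition 1 requires a negative entry for an honest column, and the scheme only accepts this with a $\pi^{Eq}$ (or the OR-branch) proof of knowledge of that account's secret key. So the extractor outputs a short $(\mathbf{s},\mathbf{e})$ with $\pk_i=\bfA^T\mathbf{s}+\mathbf{e}$ for an honest key. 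We embed an MLWE/MSIS challenge: replace the honest $\pk_i$ by uniform (MLWE hardness, with honest $\pi^{KW}$ simulated). Extracting a short preimage of a uniform $\pk_i$ then solves MSIS.

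\textbf{Main obstacle.} The delicate step is extraction. Extraction only yields weak openings $\bar c\mathbf{r}^\ast$ short, not $\mathbf{r}^\ast$ short, so everything hinges on the approximate range proof actually certifying shortness of $\mathbf{r}^\ast$ itself. We also must check that the resulting bounds, combined with $\norm{\mathbf{e}}$ from $\pi^{KW}$, still fit under $\sqrt q/4$, so that \cref{lemma:comp_equil} applies. In the QROM there is an additional complication: the adversary outputs many proofs, and extracting from all of them requires a multi-instance extractor. We would handle this by applying the extractor to one proof at a time, guessing the offending transaction, with a polynomial loss in the advantage.
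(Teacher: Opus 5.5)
Your proposal follows essentially the same route as the paper's proof: ZK simulation of honest proofs, extraction with a binding abort, an MLWE swap of honest public keys, an MSIS solution for $[\bfA^T\,|\,\mathbf{I}\,|\,\pk_{1,i}]$ obtained from the key behind an honest-column $\pi^{Eq}$ proof (the paper gets it by forking at $c_i$), and conditions~2 and~3 handled via the $\pi^{KW}$/$\pi^{Eq}$ key-binding MSIS argument, PoC shortness, \cref{lemma:comp_equil}, $\pi^A$ and $\pi^B$. Three small fixes: (i) the binding MSIS vector should be $\bar c'(\bar c\mathbf{r}^\ast)-\bar c(\bar c'\mathbf{r}'^\ast)=\bar c\bar c'(\mathbf{r}^\ast-\mathbf{r}'^\ast)$, because your $\bar c'\mathbf{r}^\ast-\bar c\mathbf{r}'^\ast$ is mapped by $\bfA$ to $(\bar c'-\bar c)\com_0$ and is not known to be short; (ii) the randomness of the summed column commitment in $\pi^{Eq}$ is only bounded by roughly $(|\txlist|+1)$ times the PoC bound, and the paper's parameter condition for \cref{lemma:comp_equil} has to absorb this factor; (iii) simulating honest proofs does not ease extraction --- it frees the reduction from the honest secret keys, which is what makes the MLWE swap possible --- and extracting from the adversary's proofs while it sees simulated ones is exactly what the extractable-NIZK assumption must cover.
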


\begin{theorem}[Privacy]
    $\name$ is a private $\etl$ transaction scheme, assuming that commitment scheme is hiding, NIZK is zero-knowledge and MLWE is hard.
\end{theorem}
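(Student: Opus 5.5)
The privacy theorem will be proved by a standard hybrid argument over the experiment $\expprivacy_{\adv}(\secpar)$ of \cref{definition:exp_privacy}. The goal is to move from the game where $\widehat\tx_b$ is generated honestly for $\widehat\valuelist_b$ to a game whose challenge transaction is independent of $b$. In that final game the advantage is exactly $0$. The oracle answers of $\mathcal{O}_{\tx}$ are produced by the challenger with honest keys, so they can be simulated in every hybrid exactly as in the real game. Only the challenge transaction is modified.

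\textbf{Hybrids.} The sequence is as follows.
\begin{enumerate}
\item In $\GAME_0$, the real experiment is run.
\item In $\GAME_1$, every NIZK proof in $\widehat\tx_b$ is replaced by a simulated proof produced by $\zksimulator$. In the (Q)ROM this means programming the random oracle. This applies to the proofs $\pi^{B}$, $\pi^{C}$, $\pi^{Eq}$, $\pi^{A}$ (each given by Theorems \ref{thm:pi_b}, \ref{thm:pi_c}, \ref{thm:pi_e}, \ref{thm:pi_a} and the Fiat--Shamir transform) and to the OR-proofs that hide the spending sign. Indistinguishability is a direct reduction to the zero-knowledge property of the NIZK, applied through one sub-hybrid per proof. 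The same step simulates the auxiliary ABDLOP and BDLOP commitments $f_i$, $\mathbf{u}_{1,i}$, $\mathbf{u}_{2,i}$ that appear inside the proofs.
\item In $\GAME_2$, each transaction commitment $\com_{\widehat t,a,i}$ and each re-commitment $\com'$ is replaced by a uniformly random element of the appropriate space. We then argue that $\GAME_2$ is independent of $b$, because every component is either uniform or simulated from public data.
\end{enumerate}

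\textbf{Main obstacle: the step $\GAME_1 \to \GAME_2$.} This step is harder than a direct appeal to hiding. The commitment of \cref{eq:commitment_extractable} uses the rows $\pk_{1,i}^T$ and $\pk_{2,i}^T$, and these are not uniform commitment keys. They are MLWE samples $\mathbf{s}^T\bfA + \mathbf{e}^T$. I will first insert an intermediate hybrid that replaces each honest participant's $\pk_{1,i}$ and $\pk_{2,i}$ by uniform vectors. This is justified by MLWE, and it is sound because the proofs $\pi^{KW}$ are already simulated and so no longer depend on $\mathbf{s}$ and $\mathbf{e}$. After this, the stacked matrix $[\bfA;\pk_{1,i}^T;\pk_{2,i}^T;\bfB^T]$ is a uniform BDLOP key. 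Hiding of the commitment scheme, which itself reduces to MLWE on $\mathbf{r}$, then makes $\bfT\mathbf{r} + (0,v,\sqrt{q}v,v)$ indistinguishable from uniform.

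Two points need care in this argument. First, the reduction must not need the secret keys of the honest accounts touched by $\widehat\valuelist_b$. Those keys are used only inside the $\pi^{Eq}$ and $\pi^{A}$ proofs, which are simulated from $\GAME_1$ onward, so the reduction can run without them. Second, both $\widehat\valuelist_0$ and $\widehat\valuelist_1$ lie in $\valueset$, so neither challenger branch aborts. The zero-valued decoys make the set of touched columns identical in both branches, which is where the $k$-anonymity comes from. A union bound over the polynomially many commitments and proofs then shows that the advantage is negligible.
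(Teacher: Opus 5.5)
Your hybrid sequence is the paper's: simulate the proofs, switch honest public keys to uniform under MLWE, then use hiding so the challenge transaction no longer depends on $b$. You replace commitments by uniform elements, whereas the paper commits to a fresh valid list $\valuelist^\dagger$; either works. However, one step fails as you set it up.

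You keep the answers of $\mathcal{O}_{\tx}$ honest in every hybrid and modify only $\widehat\tx_b$. But the MLWE hybrid is a reduction that receives $(\bfA,\pk_{1,i},\pk_{2,i})$ without $(\mathbf{s}_{j,i},\mathbf{e}_{j,i})$. It therefore cannot produce the honest $\pi^{Eq}$ proofs inside oracle transactions that spend from honest accounts. Nor can it produce the honest $\pi^{KW}$ output by $\keygen$, which your $\GAME_1$ never simulates even though you later call it ``already simulated''. The fix is to have $\GAME_1$ simulate every proof the challenger generates before keys are switched. That means all $\createtx$ calls, including those inside $\mathcal{O}_{\tx}$, and also $\pi^{KW}$. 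The paper's $\GAME_1$ is phrased for every call to $\createtx$. Your attention to $\pi^{KW}$ covers a point the paper skips.

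A related claim also needs repair: the honest secret keys are not used only inside $\pi^{Eq}$ and $\pi^{A}$. The challenger also needs the balances of honest columns, for two reasons:
\begin{itemize}
\item to form the re-commitment message $\sum_{t}\mathbf{m}_{t,a,i}$ for spending columns, which is still a real commitment during the MLWE hybrid;
\item to decide membership in $\valueset$, both for oracle queries and for $\widehat\valuelist_0,\widehat\valuelist_1$.
\end{itemize}
These columns may contain transactions the adversary appended through $\mathcal{O}_{\ledger}$. The real challenger can learn those values only via $\checkbalance$, i.e.\ with the keys. A key-less reduction must obtain them some other way. For example, you could argue from soundness or extractability of the PoB, PoE and PoA proofs that an adversary holding no keys can append only all-zero transactions, so balances are fixed by the genesis list and the oracle queries. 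That requires a soundness-type property of the NIZK that the theorem's hypotheses do not list. The paper's proof is silent on this as well, so it is a gap to close in both arguments rather than a divergence between them.
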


\textit{Proof Sketch. } The full proof is provided in Appendix \ref{app:sec_proof_analysis}. 1) For balance: First, the adversary are bounded to the messages it commit to due to the binding property. Then, we have that since the proofs are zero-knowledge and sound, the adversary have to simultaneously fulfill the condition specified in the relation which rules out winning condition 2) spending more asset than owned and 3) create/destroy total asset value. The remaining chance for the adversary to win is to forge a signature on a public key without the given the secret key, but we can extract a MSIS solution out of the adversary upon forgeries and we thereby conclude that the $\name$ is balanced. In the QROM model, the result follows from the fact that the ZKP protocols used are proven to be a quantum proof of knowledge scheme and random oracle is not used anywhere else. 2) For privacy: Since the public key is indistinguishable from random and commitment scheme is hiding, the adversary cannot distinguish the committed value. Moreover, the proofs provided in the transaction are all zero-knowledge. Therefore, the adversary gains no information from observing the commitments and proofs provided in the transaction.

\section{$\nameextension$ Extension}
We describe an extension to $\name$ that compactly encode each $d-$asset value into each coefficient of a ring element with degree $d$. 
We construct $\nameextension$ by replacing proof of asset $\pi^{PoA}$ from \cref{fig:algorithm_qpadl} to simultaneously perform a range check on the committed message value $\pi^{PoA'}$ encoded in different coefficients. The strategy is similar but we instead prove the relations compactly using random linear combinations and the check is over the coefficient domain. The construction is given in \cref{proto:pi_a2} in the Appendix. In short, we first flatten the binary representation of the encoded asset values across a larger vector and then commit to it. Then, we perform a quadratic relation check on the committed binary representation is binary using the standard binary check $\langle v_{bin}, v_{bin}-\mathbf{1}\rangle=0$ and then we check that it can be reconstructed back to the original value using a linear relation $\langle pow(\beta),  v_{bin}\rangle= v_i$ for the corresponding coefficient $i$. Lastly, we perform a approximate $\ell_2$ norm bound check to ensure that the binary check holds over integer (i,e. no overflow). With these, we conclude that $v_i < \beta$ where $v=(v_1,\cdots,v_d)$. Note that all these relations can be compactly encoded as a single quadratic relation that vanishes at the constant coefficient and therefore we used the single quadratic relation proof from \cite{LNP22}.  Since polynomial addition is coefficient-wise, proof of balance still checks that the total asset value for each asset sum to zero.
It follows easily that $\nameextension$ is a balanced and private ETL transaction scheme since the replaced proof of asset protocol $\pi^{A'}$ is a (quantum) proof of knowledge protocol that check all asset value are positive and there are no other changes made to the base scheme. 

We further propose how to perform positive range check on all asset in one-shot using any generic infinity norm range proof. Compared to the aforementioned approach, the following approach is more general.
To perform positive range check for polynomial ring coefficients using some bound $B$, it is sufficient to perform a negative shift of $B/2$ and then check its infinity norm which is captured in the following lemma.

\begin{lemma}[Shifted Positive Range Proof]
\label{lemma:range_proof_plus}
If $\norm{ r- s }_{\infty} \leq B/2$ for $r = (r_1,\cdots,r_d) \in \mathcal{R}_q$ and $s = (B/2,\cdots,B/2)\in \mathcal{R}_q$, then $\forall i,r_i \in [0, B]$ for some bound $B \in [0, (q-1)/2]$. 
\end{lemma}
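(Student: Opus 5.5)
The plan is to unfold the infinity norm coefficient-wise, using the convention fixed in the preliminaries: for an element of $\mathcal{R}_q$ the norm $\norm{\cdot}_\infty$ is taken on the centered representatives $x \bmod^{\pm} q \in [-(q-1)/2,(q-1)/2]$. The hypothesis $\norm{r-s}_\infty \leq B/2$ then says that for every index $i$ the centered representative $t_i := (r_i - B/2) \bmod^{\pm} q$ satisfies $-B/2 \leq t_i \leq B/2$. It suffices to prove the conclusion coefficient by coefficient, so fix $i$ and argue about the scalar $r_i \in \mathbb{Z}_q$.

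The key step is to lift back to integers without wraparound. We have $r_i \equiv t_i + B/2 \pmod q$. Since $t_i \in [-B/2, B/2]$, the integer $t_i + B/2$ lies in $[0, B]$. Because $B \leq (q-1)/2$, this interval sits inside $[0,(q-1)/2]$, which is contained in a single set of representatives modulo $q$. Hence the representative of $r_i$ in that window is exactly $t_i + B/2$, with no reduction modulo $q$ occurring. Interpreting "$r_i \in [0,B]$" as the statement about this representative (equivalently, $r_i \bmod^{\pm} q \in [0,B]$, since $[0,B]\subseteq[-(q-1)/2,(q-1)/2]$), we conclude $r_i \in [0,B]$ for all $i$.

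The main obstacle is the bookkeeping around $B/2$ and the modulus rather than any real estimate. If $B$ is odd, $B/2$ is not an integer, and one must either read $s$ as having coefficients $\lfloor B/2 \rceil$ or restrict to even $B$. I would state the latter (or replace $B$ by $2\lfloor B/2\rfloor$) so that the shift is an element of $\mathbb{Z}_q$ and the interval arithmetic $[-B/2,B/2]+B/2=[0,B]$ is exact. One must also check that the centered-representative convention is applied consistently. The condition $B \leq (q-1)/2$ is exactly what guarantees that the interval $[0,B]$ does not straddle the wrap point at $(q-1)/2$. That is why a shifted norm check certifies nonnegativity, whereas a plain infinity-norm check, as noted in the text before the lemma, cannot distinguish signs.
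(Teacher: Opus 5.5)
Your proof is correct, but it is organized differently from the paper's.

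\textbf{The paper's route.} The paper argues by contraposition. Writing $Q=(q-1)/2$ and taking $r_i \in [-Q,Q]$, it assumes some $r_i \notin [0,B]$ and splits into three cases: $r_i > B$, $-Q + B/2 \le r_i < 0$, and $-Q \le r_i < -Q + B/2$. Only in the last case does $r_i - s_i$ wrap around, landing in $[Q+1-B/2,\, Q+1)$. The hypothesis on $B$ enters there as $Q+1 > B$, which forces the centered value above $B/2$.

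\textbf{Your route.} You argue directly. You take the centered representative $t_i$ of $r_i - s_i$, note that $t_i + B/2 \in [0,B] \subseteq [-Q,Q]$, and conclude that this is already the centered representative of $r_i$.

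\textbf{Comparison.} Your version removes the case split. It isolates the single fact being used: the window $[0,B]$ does not cross the wrap point. This is the same condition $B \le Q$ that the paper invokes only in its hardest case. The paper's route, in exchange, makes explicit which out-of-range values are dangerous, namely negative values near $-Q$ that wrap to near $+Q$.

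\textbf{On the parity of $B$.} Your remark here is a genuine point the paper glosses over. The vector $s$ must have integer coefficients, for example by taking $B$ even or by rounding $B/2$ either way; rounding works because the norm is integer-valued. Reading $B/2$ as $B\cdot 2^{-1} \bmod q$ would, for odd $B$, give $(B+q)/2$ and make the lemma false.
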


\begin{proof}
We prove the contrapositive form, if $\exists i, r_i < 0 $ or $r_i > B$, then $\norm{ r- s }_{\infty} > B/2$. Let $Q=(q-1)/2$ for notational convenient. For arbitrary $i$ such that $-Q\leq r_i < 0 $ or $B < r_i \leq Q$, we consider the cases separately:

Case $r_i > B$: It is easy to see that $\norm{r_i-s_i}_{\infty} > B/2 $ since $r_i-s_i > B - B/2 = B/2$, then it immediately follows that $\norm{r-s}_{\infty} > B/2 $.

Case $r_i < 0$: We further subdivide the case into $-Q + B/2 \leq r_i < 0$ and $-Q \leq r_i < -Q + B/2$.
For $-Q + B/2 \leq r_i < 0$, we have $-Q + (B/2 - B/2) \leq r_i - s_i< 0- B/2 \equiv -Q \leq r_i - s_i < - B/2$, it immediately follows that $\norm{ r- s }_{\infty} > B/2$.

For the remaining case where $-Q \leq  r_i < -Q + B/2 $, we have $-Q-B/2 \leq  r_i - s_i < -Q + B/2 - B/2 $, converting to modulo q representation we have $q+(-Q-B/2) \leq r_i - s_i < q+(-Q) \equiv Q+1-B/2 \leq r_i-s_i < Q+1$. Then, we have  $r_i - s_i > B/2$ if $Q+1- B/2 > B/2$ or $Q+1 > B$. It immediately follows that $\norm{ r- s }_{\infty} > B/2$.

\end{proof}

\section{Implementation, Evaluation and Discussion}
\label{sec:implementation}
\label{sec:parameter}

\noindent\textbf{Parameters.}
We first set q to be around $2^{100}$ and the degree $d$ to be at least 256 so that the vectors in PoC and PoE proofs can be encoded using a single ring element, and then set the rest of the parameter to reach the target security level of around $2^{-128}$ estimated by \cite{lattice_estimator} given the parameters and target root Hermite factor of $\approx 1.0043$. The method for setting the parameters follows methodologies from \cite{esgin_thesis}. We use binomial distribution defined in Section \ref{sec:prelim_distribution} for error distribution and secret distribution $\chi$ for public key generation and the random values vector sampled for the commitments. We set $l$ to be $128$, $\lambda=16$ and $\kappa = 16$.

\textbf{Performance Analysis.} We implemented the proposed $\name$ scheme in Rust and benchmarked the implementation using a M3 Pro Macbook. The performance is averaged across 100 consecutive runs. The proof of key well-formness is a one-time setup cost and has a similar performance ad size footprint of that of proof of equivalence. We observe that proof of consistency and proof of equivalence is dominating the runtime cost with a proving time of $161.12$ms and $141.96$ms respectively. 
Meanwhile, proof of asset, proof of balance and proof of equivalent with known opening take $16.59$ms, $11.29$ms and $9.38$ms respectively. Commitment for value message (and re-commitment) takes around $1.08$ms. The verification is comparatively faster and only takes about around $25$ms.
Therefore, the time to generate the transaction per participant in a new transaction is around \textbf{500ms.}. For the case of $\nameextension$, the time is increased  to around $900ms$ proving time and $125ms$ verification time per participant or \textbf{3.5ms} proving time and $0.5ms$ verification time per participant per asset.

\textbf{Communication Analysis.}
Using the parameters, we calculated the communication size incurred by the proposed protocols. For the response in the ZKP, we use the discrete gaussian tail-bound lemma from \cite{Lyu12} to upper-bound the range to be within $-6\mathfrak{s}$ and $6$ with high probability. For a transaction $\tx$, the size per participant per asset can be calculated as $|\tx|=2\cdot |\com| + |\pi^{PoB}|/(|\pklist|) +2\cdot|\pi^{PoC}| + |\pi^{PoE}| + |\pi^{PoE2}| + |\pi^{PoA}|$ where commitments for commit-and-prove proof are included in the proof $\pi$. The total transaction size per participant per asset is around \textbf{1078 KB}. In the compact mode, $\nameextension$ total transaction size per participant is around 1562 KB which is amortized to around \textbf{6.1 KB} per participant per asset.

\textbf{Discussion.} We defer the discussion on the comparison of lattice-based ETL and Ring-CT to the Appendix and provide the summary in \cref{tab:compare}. We also defer the discussion on auditing, account state compression, integer commitment, and proving time optimization with Hint-MLWE to the Appendix \ref{app:discussion}. 

\begin{table}[ht]
\begin{center}
\rowcolors{1}{}{lightgray}
\begin{adjustbox}{minipage=\linewidth,scale=0.85}
\begin{tabular}{r|r|r}
  \hline
 & $\name$ (This) & MatRiCT+  \cite{matrict_2_2022} \\
  \hline
  Anonymity & $k$-anonymity & I - Ring Signature \\
  & & O - Stealth Address\\
  Spendable & Always & Honest Sender only \\
  Auditability & Customizable Audit Key & Central Audit Key \\
  Multi-Asset & Yes / Compact & Single-asset \\
  Scaling & Linear & Linear in OuputsAcc\\
   &  & Polylog in InputAnon\\
   \hline
\end{tabular}
\end{adjustbox}
\end{center}
\caption{Comparison of ETL and RingCT constructions} 
\label{tab:compare}
\end{table}

\section*{Disclaimer}
This paper was prepared for informational purposes by the Global Technology Applied Research center of JPMorganChase. This paper is not a product of the Research Department of JPMorganChase or its affiliates. Neither JPMorganChase nor any of its affiliates makes any explicit or implied representation or warranty and none of them accept any liability in connection with this position paper, including, without limitation, with respect to the completeness, accuracy, or reliability of the information contained herein and the potential legal, compliance, tax, or accounting effects thereof. This document is not intended as investment research or investment advice, or as a recommendation, offer, or solicitation for the purchase or sale of any security, financial instrument, financial product or service, or to be used in any way for evaluating the merits of participating in any transaction.

\newpage


\bibliographystyle{ACM-Reference-Format}
\bibliography{ref}

\appendix
\crefalias{section}{appendix}
\crefalias{subsection}{appendix}
\section{Deferred Discussion}
\label{app:discussion}

\textbf{Comparison of Lattice-based ETL and Ring-CT}. While Ring-CT (MatRiCT+ construction) suits the privacy-preserving transaction scheme for public blockchain transaction, we argue that ETL ($\name$ construction) is better suited for financial institution for the following reasons. 
First of all, $\name$ achieves $k$-anonymity through masking the real transaction among multiple decoy input and output accounts activities, while MatRiCT+ achieves anonymity through ring signature for input accounts and stealth address for output accounts. The stealth address complicates zero-knowledge  auditing for an account state as the statement need to be scan over the entire transaction history given a viewing key in zero-knowledge, while the ETL account state can be succinctly represented using a summed commitment. Every token received by an account is spendable in $\name$ but not in MatRiCT+ as the encryption of spending token key is not proved to be correct and is assumed to be communicated honestly. This leads to a scenario whereby the token asset would appear on the balance sheet but can never be spent by the bank. $\name$ supports customizable audit key whereby each participant commitment matrix can be augmented to open to specific custom auditor(s) while MatRiCT+ only permits a central audit key. In addition, $\name$ supports multi-asset transaction natively and compactly under $\nameextension$ which is missing from MatRiCT+ who only supports single-asset transaction by default. 

The limitation for ETL is that to achieve above, we have that the ETL transaction size grows linearly per participant and per asset: specifically, it is linear in N (input/output accounts, including decoys). Meanwhile, the MatRiCT+ proof length is logarithmic in M (input accounts), polylogarithmic in N (decoy accounts), and linear in S (output accounts). The comparison results are summarized in \cref{tab:compare}.

\textbf{Enabling Auditing.} It is straightforward to enable auditing with customizable audit key for each participant. We modify the commitment key for the participant to further contain public key of the auditor, then modify the proof of consistency to also check that the same commit value is committed under the audit public key with the same relation. This enables the auditor to decrypt the value. A participant can have more than one audit key in its commitment key and different participants can have different auditor public keys. 

\textbf{Account State Optimization.} Account state can be succinctly represented by the sum of the commitment tokens in its column. 
The participant can "compress" its column by re-committing to the sum of the commitment value. 
The compressed account state can enable efficient zero-knowledge based account state query by an auditor. 

\textbf{Commitment value is not an integer for Single-Asset.} 
This scenario cannot happen if the genesis transaction $\tx_0$ contains only integer provided by the ledger issuer. Conditioned on the aforementioned criterion, introducing non-integer message when preparing the new transaction commitments will be caught by proof of asset which checks for integer value in the message. In the event of malicious issuer (which almost does not exist in practice for general use-cases), the spender can provide a corrector message $\mathfrak{m}$ as part of the transcript that contain the negation of the non-integer coefficients such that $\mathfrak{m}+m=int(m)$ and $\com_1+\mathfrak{m}=\mathbf{Br}+m+\mathfrak{m}=\mathbf{B}r+int(m)$ for some commitment matrix $\mathbf{B}$ where $int$ mask all but the integer coefficients. $\mathfrak{m}$ can be checked to only contain non-integer coefficients by the verifier. Note that this section does not concern the mutli-asset extension case.

\textbf{Hint-MLWE for Proving Time Improvement.} 
Rejection sampling is sufficient but not necessary for achieving simulatability for zero-knowledge sigma protocol. In \cite{hintmlwe}, the authors reduced MLWE to Hint-MLWE under discrete Gaussian setting whereby the witness is sampled from discrete Gaussian. It enables simulation without the needs to restart the protocol since the ZKP response can be an instance of Hint-MLWE sample. From a rough calculation and implementation, we estimate the total proving time per participant per asset can be brought down to around $120$ms.

\section{Relations enforced by the NIZK Proof and Formal Construction}
\label{app:relation}

The description for the relations enforced by the NIZK proof used in $\name$ is as follows:

\begin{enumerate}
    \item $\relation_{PoB}$ by \hyperref[proto:pi_b]{$\pi^{B}$}: Given a list of commitments, the opening to their summed message is zero.
    \item $\relation_{PoC}$ by \hyperref[proto:pi_c]{$\pi^{C}$}: Given a commitment, the random value $r$ used in the commitment is  short and the message is of the form $v,\sqrt v, v$.
    \item $\relation_{PoE}$ by \hyperref[proto:pi_e]{$\pi^{Eq}$}: Given two commitments and a public key, when the corresponding secret key is used to decrypt the commitment message difference, the decrypted message norm with error term is short. Note that if the error term is also short, then value/message equivalence will hold by Lemma \ref{lemma:comp_equil}. 
    \item $\relation_{PoE2}$ by \cref{proto:ZKPoO_bdlop_linear}: Given two commitments, their messages are the same.
    \item $\relation_{PoA/PoA'}$ by \hyperref[proto:pi_a]{$\pi^{A}$}: Given a commitment, the message is a 64-bit integer. For the $PoA'$, it is asserted that given a commitment, all the assets' coefficients are 64-bit (positive) integers.
    \item $\relation_{PoKW}$ by \cref{app:pokw_section}:  Given a public key and a commitment key matrix $\mathbf{A}$, the prover knows a short secret and error term that constitute the public key with respect to $\mathbf{A^T}$.
\end{enumerate}

The formal description for the relations enforced by the NIZK proof used in $\name$ is as follows:

\begin{align*}  
\relation_{PoB} : \left\{
\begin{array}{c}
\bigl( (t,a,\commitkey, \{ \com_{t,a,i} \}_{i \in \pklist}), ( \bar c, \mathbf{r^*},  \mathbf{m^*})\bigr) : \\
\quad \com_{\{0,3\},t,a}^{sum} = \sum_{i \in \pklist}\com_{\{0,3\},t,a,i}\\
\wedge \ \com_{\{0,3\},t,a}^{sum} = \commit_{\commitkey}(\mathbf{m^*}, \mathbf{r^*})_{\{0,3\}}
\\\wedge \  \norm{\bar c \mathbf{r^*}} \leq 2\mathfrak{s}_{PoB}\sqrt{2(\kappa+\lambda+3)d} 
\\ \wedge \ (\cdot,m_3^*) =  \mathbf{m^*} \wedge \ m_3^* =0 
\end{array}
\right\}
\end{align*} 

\begin{align*}
\relation_{PoC} : \left\{
    \begin{array}{c}
    \bigl( (\commitkey, \widehat \commitkey, \commitment, \widehat \commitment), (\bar c,\mathbf{r^*},\mathbf{s^*},\mathbf{m^*})\bigr)  : 
\\
\widehat \commitment = \commit_{\widehat \commitkey}((\mathbf{r}^*,\mathbf{m}^*),\mathbf{s}^*)\\
\wedge \ \norm{\bar c \mathbf{r^*}} \leq 2\mathfrak{s}_{1,PoC}\sqrt{2(\kappa+\lambda+3)d}\\ 
\wedge \ \norm{\bar c \mathbf{s^*}} \leq 2\mathfrak{s}_{2,PoC}\sqrt{2(\kappa+\lambda+3)d} 
\\
\wedge \ \norm{\mathbf{r}^*}_{\infty} \leq 2\sqrt{2k}\mathfrak{s}_{3,PoC}\ \wedge \ \commitment = \commit_{\commitkey}(\mathbf{m}^*, \mathbf{r}^*)\\
\wedge \ (v^*,\sqrt{q}v^*,v^*)= \mathbf{m}^*
    \end{array}
\right\}
\end{align*}

\begin{align*}
\relation_{PoE} : \left\{
    \begin{array}{c}
    \bigl((\commitkey, \widehat \commitkey, \commitment, \commitment', \widehat \commitment, \pk), (\bar c,\mathbf{m^*},\mathbf{s^*})\bigr) : 
\\
\widehat \commitment = \commit_{\widehat \commitkey}((\mathbf{m}^*,0), \mathbf{s}^*)\\ 
\wedge \ (\mathbf{s_1}||\mathbf{e_1}||\mathbf{s_2}||\mathbf{e_2})= \mathbf{m^*}\ \wedge (\pk_1,\pk_2) =\pk\\
\wedge \ (\mathbf{A},\cdots) = \commitkey \wedge \com_{\text{diff}} = \com - \com'\\
\wedge \  \tilde{u}_{t,a,i}=\begin{bmatrix}\com_{\text{diff},1}\\\com_{\text{diff},2}\end{bmatrix}\ \wedge \      \tilde{C}_{t,a,i}=\begin{bmatrix}\com_{\text{diff},0}^T,0,0,0\\0,0,\com_{\text{diff},0}^T,0\end{bmatrix}\\
\wedge\ \norm{\bar c \mathbf{m^*}} \leq 2\mathfrak{s}_{1,PoE}\sqrt{4(2\kappa+\lambda+3)d}\\ 
\wedge\ \norm{\bar c \mathbf{s^*}} \leq 2\mathfrak{s}_{2,PoE}\sqrt{2(\kappa+\lambda+3)d}\\
\wedge\ \forall i \in \{1,2\}, \pk_i = \mathbf{A}^T\mathbf{s}_i+\mathbf{e}_i\\
\wedge \ \norm{\tilde{C} \mathbf{m} -\tilde{u}}_{\infty} \leq 2\sqrt{2k}\mathfrak{s}_{3,PoE} \\
    \end{array}
\right\}
\end{align*}

For two commitment $\com$ and $\com'$, their values are the same ($v=v'$), if the commitments simultaneously satisfy both $PoC$ and $PoE$ and that $\pk$ satisfies $PoKW$ which allows the use of \cref{lemma:comp_equil} to equate the commitment values.
\begin{align*}  
\relation_{PoA} : \left\{
\begin{array}{c}
\bigl( (\commitkey, \com), ( \bar c, \mathbf{r^*},  \mathbf{m^*})\bigr) : \\
\quad \com^\prime = \commit_{\commitkey}(\mathbf{m^*}, \mathbf{r^*}) \\
\wedge\ \norm{\bar c \mathbf{r^*}} \leq 2\mathfrak{s}_{PoA}\sqrt{2(\kappa+\lambda+3)d} \\
\wedge\ ( m_1^*,\cdots) =  \mathbf{m^*} \wedge \ m_1^* \in [0, \rangebound)
\end{array}
\right\}
\end{align*} 

\begin{align*}  
\relation_{PoA'} : \left\{
\begin{array}{c}
\bigl( (\commitkey, \com), ( \bar c, \mathbf{r^*},  \mathbf{m^*})\bigr) : \\
\quad \com^\prime = \commit_{\commitkey}(\mathbf{m^*}, \mathbf{r^*}) \\
\wedge\ (\cdots, m_3^*) = \mathbf{m}^* \wedge \ (m_{3,1}^*, \cdots,m_{3,d}^*) = m_0^*\\
\wedge \ \forall a \in \assetlist, m_{3,a}^* \in [0, \rangebound)
\end{array}
\right\}
\end{align*} 

\begin{align*}
\relation_{PoKW} : \left\{
    \begin{array}{c}
    \bigl((\commitkey, \widehat \commitkey,\widehat \commitment, \pk), (\bar c,\mathbf{m^*},\mathbf{s^*})\bigr) : 
\\
\widehat \commitment = \commit_{\widehat \commitkey}((\mathbf{m}^*,0), \mathbf{s}^*)\\ 
\wedge \ (\mathbf{s_1}||\mathbf{e_1}||\mathbf{s_2}||\mathbf{e_2})= \mathbf{m^*}\ \wedge (\pk_1,\pk_2) =\pk\\
\wedge \ (\mathbf{A},\cdots) = \commitkey\\
\wedge\ \norm{\bar c \mathbf{m^*}} \leq 2\mathfrak{s}_{1,PoKW}\sqrt{4(2\kappa+\lambda+3)d}\\ 
\wedge\ \norm{\bar c \mathbf{s^*}} \leq 2\mathfrak{s}_{2,PoKW}\sqrt{2(\kappa+\lambda+3)d}\\
\wedge\ \forall i \in \{1,2\}, \pk_i = \mathbf{A}^T\mathbf{s}_i+\mathbf{e}_i\\
\wedge\ \norm{(\mathbf{s_1}||\mathbf{e_1}||\mathbf{s_2}||\mathbf{e_2})}_{\infty} \leq 2\sqrt{2k}\mathfrak{s}_{3,PoKW}
    \end{array}
\right\}
\end{align*}

\begin{align*}
\relation_{PoE2} : \left\{
    \begin{array}{c}
    \bigl((\commitkey, \commitment, \commitment'), (\bar c,\mathbf{m^*},\mathbf{r^*}, \bar c',\mathbf{m'^*},\mathbf{r'^*})\bigr) : 
\\
\commitment = \commit_{\commitkey}(\mathbf{m}^*, \mathbf{r}^*) \wedge \commitment' = \commit_{\commitkey}(\mathbf{m'}^*, \mathbf{r'}^*)\\ 
\wedge\ \mathbf{m=m'}\\
    \end{array}
\right\}
\end{align*}

The construction of $\name$ as an instance of ETL (\cref{def:etl}) is given in \cref{fig:algorithm_qpadl} and setup algorithms are given in \cref{fig:algorithm_qpadl_setup}. We give the correctness definition for ETL as follow:

\begin{definition}[Correctness]
\label{def:etl_correctness}
The $\etl$ transaction scheme is correct if for any system parameter $(\publicparam, \ledger) \gets \setup(\secparam)$, any valid value list $\valuelist \in \valueset$, for any secret key list $\sklist$ where the keys are generated using $\keygen(\cdot)$ and $\forall i, v_i \in\valuelist \wedge v_i < 0 => \sk_i \in \sklist$, we have  
\begin{align*}
    \Pr \left[
    \begin{array}{cc}
         & \verifytx(\tx, \ledger) = 1 : \\
         & \tx \gets \createtx(\valuelist, \sklist, \ledger)
    \end{array}
    \right] \geq  1- \negl
\end{align*}
\end{definition}

\begin{figure}[h!]
    \centering

    \begin{adjustbox}{minipage=\linewidth,scale=0.85}
    \begin{pcvstack}[boxed]
        
    \begin{pchstack}
            \begin{pcvstack}
         
                \procedure[bodylinesep=0.5\pclnspace]{
               $\createtx(\valuelist, \sklist, \ledger):$
               }  {
                \text{for each } v_{\widehat t,a,i} \in \valuelist:\\
                \tab \mathbf{r}_{\widehat t, a,i},\mathbf{r}_{\widehat t,a,i}', \mathbf{s}_{\widehat t,a,i},\mathbf{s}_{\widehat t,a,i}',\mathbf{s}_{\widehat t,a,i}''  \sample (\chi^{(k+\lambda +3)d})^5 \\
                \tab \mathbf{m}_{\widehat t, a,i} := (v_{\widehat t, a,i}, \sqrt{q}v_{\widehat t, a,i},v_{\widehat t, a,i})\\ \tab \com_{\widehat t, a,i} \gets \commit(\mathbf{m}_{\widehat t,a,i},\mathbf{r}_{\widehat t,a,i})
                ;\ \widehat \commitment_{\widehat t,a,i}^{PoC} := \commit_{\commitkey_{PoC}}((\mathbf{r}_{\widehat t,a,i},\mathbf{m}_{\widehat t,a,i}),\mathbf{s}_{\widehat t,a,i})\\
                \tab \pi^{PoC}_{\widehat t,a,i} \gets  \nizkprove_{{\relation_{PoC}}}(\crs, (\commitkey_{i},\widehat \commitkey^{PoC}, \commitment_{\widehat t,a,i}^{PoC}, \widehat \commitment_{\widehat t,a,i}), \\
                \tab \tab \tab \tab \tab \tab \tab \tab\tab\tab\tab(1,\mathbf{r}_{\widehat t,a,i}, \mathbf{s}_{\widehat t,a,i},\mathbf{m}_{\widehat t,a,i}) ) \\
                \tab \text{if }\ v_{\widehat t,a,i} \geq 0: \mathbf{m}_{\widehat t,a,i}^\prime := \mathbf{m}_{\widehat t,a,i} \codecomment{Original message} \\
                \tab \text{else : }\ \mathbf{m}_{\widehat t,a,i}^\prime := \sum_{t \in \txlist \cup \widehat t} \mathbf{m}_{t,a,i} \codecomment{Sum of asset + pending spends}\\
                \tab \com'_{\widehat t,a,i} \gets \commit(\mathbf{m}_{\widehat t,a,i}^\prime, \mathbf{r}_{\widehat t,a,i}^\prime)
               ;\ \widehat \commitment_{\widehat t,a,i}^{\prime, PoC} := \commit_{\commitkey_{PoC}}((\mathbf{r}_{\widehat t,a,i}^\prime,\mathbf{m}_{\widehat t,a,i}^\prime),\mathbf{s}_{\widehat t,a,i}^\prime)\\
                \tab \pi^{\prime, PoC}_{\widehat t,a,i} \gets  \nizkprove_{{\relation_{PoC}}}(\crs, (\commitkey_{i},\widehat \commitkey^{PoC}, \commitment_{\widehat t,a,i}^\prime, \widehat \commitment_{\widehat t,a,i}^{\prime PoC}), \\
                \tab \tab \tab \tab \tab \tab \tab \tab\tab\tab\tab(1,\mathbf{r}_{\widehat t,a,i}^\prime, \mathbf{s}_{\widehat t,a,i}^\prime,\mathbf{m}_{\widehat t,a,i}^\prime) ) \\
                \tab \pi^{PoA}_{\widehat t,a,i} \gets \nizkprove_{{\relation_{PoA}}}(\crs, (\commitkey_i^{PoA}, \commitment_{\widehat t,a,i}'), (1,\mathbf{r}_{\widehat t, a,i}^{\prime},\mathbf{m}_{\widehat t,a,i}^\prime)) \\
                \tab \widehat \commitment_{\widehat t,a,i}^{PoE} := \commit_{\commitkey_{PoE}}((\sk_{i},0),\mathbf{s}_{\widehat t,a,i}^{''})\\
                \tab (\pi^{PoE}_{\widehat t,a,i},\pi^{PoE2}_{\widehat t,a,i}) \gets  \nizkprove_{(\relation_{PoE}\vee \relation_{PoE2})} \\
                \tab \tab \tab \tab \tab \tab \tab \tab \bigl((\crs, (\commitkey_{i}, \widehat \commitkey^{PoE}, \sum_{t^* \in \txlist \cup \widehat t}^{} \commitment_{t^*,a,i}, \commitment_{\widehat t,a,i}^\prime,\widehat \commitment_{\widehat t,a,i}^{PoE}),
                \\ \tab \tab \tab \tab \tab \tab \tab \tab (1,(\sk_i,0), \mathbf{s}_{\widehat t,a,i}^{''})) \\
                \tab \tab \tab \tab \tab \tab \tab \tab  \vee (\crs, (\commitkey_i, \com_{\widehat t,a,i}, \com'_{\widehat t,a,i}), (1,\mathbf{m}_{\widehat t,a,i},\mathbf{r}_{\widehat t,a,i},1,\mathbf{m}'_{\widehat t,a,i},\mathbf{r}'_{\widehat t,a,i}))\bigr)\\
                \text{for each } a \in \assetlist:\\
                \tab \pi^{PoB}_{\widehat t,a} \gets \nizkprove_{{\relation_{PoB}}}(\crs, (\widehat t, a , \commitkey, \{\commitment_{\widehat t,a,i}\}_{i \in \pklist}), (1,\sum_{i \in \pklist} \mathbf{r}_{\widehat t,a,i},0)) \\
                \commitment := \{\commitment_{\widehat t,a,i}, \commitment_{\widehat t,a,i}^\prime, \widehat \commitment_{\widehat t,a,i}^{PoC}, \widehat \commitment_{\widehat t,a,i}^{\prime PoC}, \widehat \commitment_{\widehat t,a,i}^{PoE}\}_{a \in \assetlist, i \in \pklist}\\
                \pi := (\{\pi^{PoB}_{\widehat t, a}\}_{a \in \assetlist},\{\pi^{PoE}_{\widehat t,a,i}, \pi^{PoE2}_{\widehat t,a,i},\pi^{PoC}_{\widehat t,a,i}, \pi^{\prime PoC}_{\widehat t,a,i}, \pi^{PoA}_{\widehat t,a,i}\}_{a \in \assetlist, i\in \pklist}) \\
                \pcreturn \tx := (\commitment, \pi) \\
               }
            \end{pcvstack}
    \end{pchstack}

    \begin{pcvstack}
       \procedure[bodylinesep=0.5\pclnspace]{
       $\verifytx(\tx, \ledger):$
       }  {
        (\{\commitment_{\widehat t,a,i}, \commitment_{\widehat t,a,i}^\prime, \widehat \commitment_{\widehat t,a,i}^{PoC}, \widehat \commitment_{\widehat t,a,i}^{\prime PoC}, \widehat \commitment_{\widehat t,a,i}^{PoE}\}_{a \in \assetlist, i \in \pklist},\pi) := \tx \\
       (\{\pi^{PoB}_{\widehat t, a}\}_{a \in \assetlist},\{\pi^{PoE}_{\widehat t,a,i}, \pi^{PoE2}_{\widehat t,a,i},\pi^{PoC}_{\widehat t,a,i}, \pi^{\prime PoC}_{\widehat t,a,i}, \pi^{PoA}_{\widehat t,a,i}\}_{a \in \assetlist, i\in \pklist}) := \pi \\
        b^{PoB} \gets \bigwedge_{a\in \assetlist} \nizkverify_{{\relation}_{PoB}}(\crs, (\widehat t, a , \commitkey, \{\commitment_{\widehat t,a,i}\}_{i \in \pklist}), \pi^{PoB}_{\widehat t, a}) \\
        b^{PoC} \gets \bigwedge_{a\in \assetlist, i \in \pklist} \nizkverify_{{\relation}_{PoC}}(\crs, (\commitkey_{i},\widehat \commitkey^{PoC}, \commitment_{\widehat t,a,i}, \widehat \commitment_{\widehat t,a,i}^{PoC}), \pi^{PoC}_{\widehat t,a,i}) \\
        b^{PoC'} \gets \text{Similar to above, but with the PoC' version. } \\ 
        b^{PoE} \gets \bigwedge_{a\in \assetlist, i \in \pklist} \nizkverify_{{\relation}_{PoE}}(\crs, (\commitkey_{i},\widehat \commitkey^{PoE}, \sum_{t^* \in \txlist \cup \widehat t}^{} \commitment_{t^*,a,i}, \commitment_{\widehat t,a,i}^\prime, \pk_i), \pi^{PoE}_{\widehat t,a,i}) \\
        b^{PoE2} \gets \bigwedge_{a\in \assetlist, i \in \pklist} \nizkverify_{{\relation}_{PoE2}}(\crs, (\commitkey_{i}, \commitment_{\widehat t,a,i}, \commitment_{\widehat t,a,i}^\prime), \pi_{\widehat t,a,i}^{PoE2} ) \\
        b^{PoA} \gets \bigwedge_{a\in \assetlist, i \in \pklist} \nizkverify_{{\relation}_{PoA}}(\crs, (\commitkey_{i}^{PoA}, \commitment_{\widehat t,a,i}^\prime), \pi_{\widehat t,a,i}^{PoA} ) \\
        \pcreturn b^{PoB} \wedge b^{PoC} \wedge b^{PoC'} \wedge b^{PoE} \wedge b^{PoE2} \wedge b^{PoA}\\
       }
       
    \end{pcvstack}
    
    \end{pcvstack}
    \end{adjustbox}

    \caption{$\name$ Construction}
    \label{fig:algorithm_qpadl}
\end{figure}

\begin{figure}[h!]
    \begin{adjustbox}{minipage=\linewidth,scale=0.85}
        
     \begin{pcvstack}[boxed]
               \procedure[bodylinesep=0.5\pclnspace]{
               $\setup(\secparam, \bank, \valuelist):$
               }{
                \mathbf{A} \sample \mathcal{R}_q^{\kappa \times \kappa+\lambda+3}; \mathbf{B} \sample \mathcal{R}_q^{\kappa+\lambda+3}  \\
                \mathbf{a}_{\text{bin}}, \mathbf{a}'_{\text{bin}}, \mathbf{a}_{g}  \sample  (\mathcal{R}_q^{\kappa + \lambda + 3})^3  \\
                 \text{for each participant bank i } \in \bank: \\
                 \tab (\sk_{1,i}, \sk_{2,i},\pk_{1,i}, \pk_{2,i}, \pi_{i}^{KW}) \gets \keygen(\mathbf{A})  \\
                 \tab \commitkey_i := (\mathbf{A}, \pk_{1,i}^T,\pk_{2,i}^T,\mathbf{B}^T) \\
                 \tab \commitkey_{i}^{PoA} := (\commitkey_i, \mathbf{a}^T_{\text{bin}}, \mathbf{a}'^T_{\text{bin}}, \mathbf{a}^T_{g}) \\
                 \mathbf{A}_1,\mathbf{A}_2 \sample \mathcal{R}_q^{\kappa \times (\kappa+\lambda+3)} \times \mathcal{R}_q^{\kappa \times (\kappa+\lambda+3)} \\
                 \mathbf{B}_1,\mathbf{B}'_{c},\mathbf{B}''_{c} \sample \mathcal{R}_q^{1 \times (\kappa+\lambda+3)} \times \mathcal{R}_q^{256/d \times (\kappa+\lambda+3)} \times \mathcal{R}_q^{1 \times (\kappa+\lambda+3)} \\
                 \widehat \commitkey^{PoC} :=   (\mathbf{A}_1,\mathbf{A}_2,\mathbf{B}_1,\mathbf{B}'_{c},\mathbf{B}''_{c})\\
                 \mathbf{A}_3,\mathbf{A}_4 \sample  \mathcal{R}_q^{\kappa \times 2(2\kappa+\lambda+3)} \times \mathcal{R}_q^{\kappa \times (\kappa+\lambda+3)} \\
                 \mathbf{B}'_{eq},\mathbf{B}''_{eq} \sample \mathcal{R}_q^{256/d\times(\kappa+\lambda+3)} \times \mathcal{R}_q^{1\times (\kappa+\lambda+3)} \\ 
                 \widehat \commitkey^{PoE} := (\mathbf{A}_3,\mathbf{A}_4,\mathbf{B}'_{eq},\mathbf{B}''_{eq})  \\
                 \widehat \commitkey^{KW} := \commitkey^{PoE} \\
                 \publicparam := (\{\commitkey_i\}_{i \in \bank}, \{\commitkey_i^{PoA}\}_{i \in \bank}, \widehat \commitkey^{PoC},  \widehat \commitkey^{PoE}, \widehat \commitkey^{KW})\\
                 \text{for each } v_{0,a,i} \text{ in }  \valuelist \\
                 \tab r_{0,a,i} \sample \chi^{(\kappa+\lambda+3)d}\\
                 \tab \ledger[0][a][i] := \commit_{\commitkey_i}(v_{0,a,i}, r_{0,a,i}) \\
                 \tab \ledger_{\pk}[i] := \pk_i \\
                 \pcreturn (\publicparam, \ledger)
               }

               \pcvspace

              \procedure[bodylinesep=0.5\pclnspace]{
               $\keygen(\mathbf{A}):$
               }{
                    \sk_1  := (\mathbf{s}_{1}, \mathbf{e}_{1}) \sample (\chi^{\kappa d}, \chi^{(\kappa+\lambda+3)d})\\
                    \sk_2 := (\mathbf{s}_{2}, \mathbf{e}_{2}) \sample (\chi^{\kappa d}, \chi^{(\kappa+\lambda+3)d})\\
                    \sk := (\sk_1,\sk_2) \\
                    \pk_{1}:=   \bfA^T\mathbf{s}_{1}+\mathbf{e}_{1}; \pk_{2} :=  \bfA^T\mathbf{s}_{2}+\mathbf{e}_{2} \\
                    \pk := (\pk_{1}, \pk_{2}) \\
                    \mathbf{s} \sample \chi^{(k+\lambda +3)d}\\
                    \widehat \commitment := \commit_{\commitkey_{KW}}((\sk,0),\mathbf{s})\\
                    \pi^{KW} \gets \nizkprove_{\relation_{{PoKW}}} (\crs, (\commitkey, \widehat \commitkey^{KW},\widehat \commitment, \pk), (1,\sk,\mathbf{s}))\\
                    \pcreturn (\sk, \pk, \pi^{KW}) 
               }

               \pcvspace
            \procedure[bodylinesep=0.5\pclnspace]{
               $\checkbalance(\sk_{i}, a, \ledger):$
               }  {
                \Tilde{\com} := \sum_{t \in \txlist} \ledger[t][a][i] \\
                v_{a,i} \gets \extract(\Tilde{\com}, \sk_{i}) \\
                \pcreturn v_{a,i}
               }

    \end{pcvstack}
    \end{adjustbox}

    \caption{$\name$ Setup and Check Balance, where $\extract$ is defined in \cref{proto:exctractability}}
    \label{fig:algorithm_qpadl_setup}
\end{figure}

\section{Additional Background}
\label{app:additional_background}
\subsection{Module SIS/LWE}

\begin{definition}[$\msis_{\kappa,m,\beta}$ \cite{LNPS21}]\label{def:msis} 
Given $\bfA \leftarrow \CalRq^{\kappa \times m}$, the homogeneous Module-SIS problem with parameters $\kappa, m > 0$ and $0 \leq \beta < q$ asks for $z \in \CalRq^m$ such that $\bfA z = 0$ over $\CalRq$ and $0 < \|z\| \leq \beta$. An algorithm $\adv$ is said to have advantage $\epsilon$ in solving $\msis_{\kappa,m,\beta}$ if 
\begin{align*}
\Pr \left[ 0 < \|z\| \leq \beta \land \bfA z = 0 \ \textnormal{over} \ \CalRq \mid \bfA \leftarrow \CalRq^{\kappa \times m},\ z \leftarrow \adv(\bfA) \right] \geq \epsilon.
\end{align*}
If the advantage $\epsilon$ is negligible, we say $\msis_{\kappa,m,\beta}$ is hard for $\adv$.
\end{definition}

\begin{definition}[$\mlwe_{m,\lambda,\chi}$ \cite{LNPS21}]\label{def:mlwe} 
The Module-LWE problem with parameters $\lambda, m > 0$ and a distribution $\chi$ over $\{-1, 0, 1\}$ asks one to distinguish between the following two distributions:
\begin{enumerate}
    \item $(\bfA, \bfA \mathbf{s} + \mathbf{e})$, where $\bfA \leftarrow \CalRq^{m \times \lambda}$, $\mathbf{s} \leftarrow \chi^{\lambda d}$, and $\mathbf{e} \leftarrow \chi^{m d}$;
    \item $(\bfA, \mathbf{u})$, where $\bfA \leftarrow \CalRq^{m \times \lambda}$ and $\mathbf{u} \leftarrow \CalRq^m$.
\end{enumerate}
An algorithm $\adv$ is said to have advantage $\epsilon$ in solving $\mlwe_{m,\lambda,\chi}$ if
\begin{align*}
     &\left| \Pr \left[ b = 1 \mid \bfA \leftarrow \CalRq^{m \times \lambda};\ \mathbf{s} \leftarrow \chi^{\lambda d};\ \mathbf{e} \leftarrow \chi^{m d};\ b \leftarrow \adv(\bfA, \bfA \mathbf{s} + \mathbf{e}) \right] \right. \\
    &\quad - \left. \Pr \left[ b = 1 \mid \bfA \leftarrow \CalRq^{m \times \lambda};\ \mathbf{u} \leftarrow \CalRq^m;\ b \leftarrow \adv(\bfA, \mathbf{u}) \right] \right| \geq \epsilon.
\end{align*}
If the advantage $\epsilon$ is negligible, we say $\mlwe_{m,\lambda,\chi}$ is hard for $\adv$.
\end{definition}

\begin{definition}[Dual $\mlwe_{\lambda,\kappa,\chi}$ \cite{Nguyen22}]\label{def:mlwe1} 
The (knapsack) Module-LWE problem with parameters $\kappa, \lambda > 0$ and a distribution $\chi$ over $\{-1, 0, 1\}$ asks one to distinguish between the following two distributions:
\begin{enumerate}
    \item $(\bfA, \bfA \mathbf{s})$, where $\bfA \leftarrow \CalRq^{\kappa \times (\kappa + \lambda)}$, $\mathbf{s} \leftarrow \chi^{(\kappa + \lambda) d}$;
    \item $(\bfA, \mathbf{u})$, where $\bfA \leftarrow \CalRq^{\kappa \times (\kappa + \lambda)}$ and $\mathbf{u} \leftarrow \CalRq^\kappa$.
\end{enumerate}
An algorithm $\adv$ is said to have advantage $\epsilon$ in solving the dual $\mlwe_{\lambda,\kappa,\chi}$ problem if
\begin{align*}
     &\left| \Pr \left[ b = 1 \mid \bfA \leftarrow \CalRq^{\kappa \times (\kappa + \lambda)};\ \mathbf{s} \leftarrow \chi^{(\kappa + \lambda) d};\ b \leftarrow \adv(\bfA, \bfA \mathbf{s}) \right] \right. \\
    &\quad - \left. \Pr \left[ b = 1 \mid \bfA \leftarrow \CalRq^{\kappa \times (\kappa + \lambda)};\ \mathbf{u} \leftarrow \CalRq^\kappa;\ b \leftarrow \adv(\bfA, \mathbf{u}) \right] \right| \geq \epsilon.
\end{align*}
If the advantage $\epsilon$ is negligible, we say $\mlwe_{\lambda,\kappa,\chi}$ is hard for $\adv$.
\end{definition}

\subsection{Ring and Norms}
\begin{definition}[\cite{LNPS21}]
We denote various set norms as follows:
\begin{itemize}
    \item For an element $w \in \mathbb{Z}_q$: $|w\|_\infty = |w \bmod^{\pm} q|$
    \item For a polynomial $w = w_0 + w_{1} X + \cdots + w_{d-1} X^{d-1} \in \mathcal{R}_q$:
    $$
    \|w\|_\infty = \max_j \|w_j\|_\infty, \quad \|w\|_p = \sqrt[p]{\|w_0\|_\infty^p + \cdots + \|w_{d-1}\|_\infty^p}.
    $$
    \item For $\mathbf{w} = (w_{1}, \ldots, w_k) \in \mathcal{R}_q^k$:
    $$
    \|\mathbf{w}\|_\infty = \max_j \|w_j\|_\infty, \quad \|\mathbf{w}\|_p = \sqrt[p]{\|w_{1}\|^p + \cdots + \|w_k\|^p}.
    $$
\end{itemize}
By default, we denote $\|\mathbf{w}\| := \|\mathbf{w}\|_2$.
\end{definition}

Let $\mathcal{M}_q := \{ p \in \mathbb{Z}_q[X] : \deg(p) < d/l \}$ be the $\mathbb{Z}_q$-module of polynomials of degree less than $d/l$. The Number Theoretic Transform ($\NTT$) of a polynomial $p \in \mathcal{R}_q$ is defined as follows:
\begin{align*}
&\NTT(p) :=
\begin{bmatrix} 
\hat{p}_0 \\  
\vdots \\ 
\hat{p}_{l-1} 
\end{bmatrix}
\in \mathcal{M}_q^l
\quad \text{where} \\
&\NTT(p)_j = \hat{p}_j = p \bmod \left(X^{\frac{d}{l}} - \zeta^{2j + 1}\right).
\end{align*}
Furthermore, we extend the definition of the NTT to vectors of polynomials $\mathbf{p} \in \mathcal{R}^k_q$, where the NTT operation is applied to each coefficient of $\mathbf{p}$, resulting in a vector in $\mathcal{M}_q^{kl}$.

We also define the inverse NTT operation. Namely, for a vector $\mathbf{v} \in \mathcal{M}_q^l$, $\NTT^{-1}(\mathbf{v})$ is the polynomial $p$ such that $\NTT(p) = \mathbf{v}$.

Let $\mathbf{v} = (v_0, \ldots, v_{l-1})$ and $\mathbf{w} = (w_0, \ldots, w_{l-1})$ be elements of $\mathcal{M}_q^l$. We define the component-wise product $\mathbf{v} \circ \mathbf{w}$ to be the vector $\mathbf{u} = (u_0, \ldots, u_{l-1}) \in \mathcal{M}_q^l$ such that
\begin{align*}
u_j = v_j w_j \bmod \left(X^{\frac{d}{l}} - \zeta^{2j + 1}\right)
\end{align*}
for $j \in \mathbb{Z}_{l}$.

By definition, we have the following property of the inverse NTT operation:
\begin{align*}
\NTT^{-1}(\mathbf{v}) \cdot \NTT^{-1}(\mathbf{w}) = \NTT^{-1}(\mathbf{v} \circ \mathbf{w}).
\end{align*}

\subsection{Approximate Range Proof}
We recall the necessary lemmas for approximate range proof here.
\begin{lemma}[\cite{LNS21}]
    \label{lemma:infty_bound}
    For two vectors $\mathbf{w} \in \mathbb{Z}^m_q$ and $\mathbf{y} \in \mathbb{Z}^n_q$, by choosing $\mathbf{R} \leftarrow \texttt{Bin}_1^{n \times m}$, we have
    \begin{align*}
        \Pr_{\mathbf{R} \leftarrow \texttt{Bin}_1^{n \times m}}\left[\|\mathbf{R}\mathbf{w} + \mathbf{y} \|_\infty < \frac{1}{2}\|\mathbf{w}\|_\infty\right] \leq 2^{-n}
    \end{align*}
\end{lemma}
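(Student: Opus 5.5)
The plan is to reduce to a single row and then use a one-coordinate anti-concentration argument. The rows of $\mathbf{R}$ are independent, and the $i$-th coordinate of $\mathbf{R}\mathbf{w}+\mathbf{y}$ depends only on the $i$-th row $\mathbf{r}_i$ and on $y_i$. The event $\|\mathbf{R}\mathbf{w}+\mathbf{y}\|_\infty<\frac12\|\mathbf{w}\|_\infty$ is the intersection over $i\in[n]$ of the events
$$E_i:\ \|\langle \mathbf{r}_i,\mathbf{w}\rangle+y_i\|_\infty<\tfrac12\|\mathbf{w}\|_\infty .$$
These events are independent. It therefore suffices to show $\Pr[E_i]\le 1/2$ for each fixed $i$ and an arbitrary fixed $y_i\in\mathbb{Z}_q$. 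The product then gives $2^{-n}$.

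If $\mathbf{w}=\mathbf{0}$, the event $E_i$ asks for a norm strictly below $0$, which is impossible, so assume $\mathbf{w}\neq\mathbf{0}$. Fix an index $j$ with $\|w_j\|_\infty=\|\mathbf{w}\|_\infty=:W$. Condition on all entries of $\mathbf{r}_i$ other than the $j$-th. Then $\langle\mathbf{r}_i,\mathbf{w}\rangle+y_i=x+r_{ij}w_j$ for some fixed $x\in\mathbb{Z}_q$. The entry $r_{ij}\sim\texttt{Bin}_1$ takes the value $0$ with probability $1/2$ and each of $\pm1$ with probability $1/4$. Call a value $z\in\mathbb{Z}_q$ \emph{small} if $\|z\|_\infty<W/2$.

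The key observation uses the triangle inequality for $\|\cdot\|_\infty$ on $\mathbb{Z}_q$ (i.e. for $|\cdot \bmod^{\pm} q|$), which holds in the modular setting. If both $x$ and $x+w_j$ were small, we would get
$$W=\|w_j\|_\infty=\|(x+w_j)-x\|_\infty\le \|x+w_j\|_\infty+\|x\|_\infty<W,$$
a contradiction. The same contradiction arises if both $x$ and $x-w_j$ are small. This leaves two cases:
\begin{itemize}
\item If $x$ is small, then neither $x\pm w_j$ is small, so $E_i$ happens only when $r_{ij}=0$, which has probability $1/2$.
\item If $x$ is not small, then $E_i$ requires $r_{ij}\in\{\pm1\}$, which has probability at most $1/4+1/4=1/2$.
\end{itemize}
In either case the conditional probability is at most $1/2$. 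Averaging over the conditioning gives $\Pr[E_i]\le 1/2$.

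The only delicate point I expect is making sure the triangle inequality and the difference argument are valid modulo $q$ rather than over the integers. This holds because $\|\cdot\|_\infty$ on $\mathbb{Z}_q$ is the distance to $0$ on the cycle $\mathbb{Z}_q$, which is subadditive. With that in place, independence across rows finishes the bound $2^{-n}$.
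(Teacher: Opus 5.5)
Your argument is correct. Row independence reduces the claim to a single coordinate. Then you condition on every entry except the one multiplying a largest-magnitude coordinate $w_j$, and use subadditivity of $\|\cdot\|_\infty$ on $\mathbb{Z}_q$: this shows $x$ and $x\pm w_j$ cannot both be small, so at most half the mass of $\texttt{Bin}_1$ makes that coordinate small.

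The paper gives no proof of its own here; it simply imports the lemma from \cite{LNS21}. Your proof is essentially the standard argument for this statement.
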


Furthermore, Gentry et al.~\cite{GHL22} propose an analogous result in terms of the $l_2$-norm and provide the following heuristic regarding the norm bound when setting concrete parameters:

\begin{lemma}[\cite{GHL22}]
    \label{lemma:2_bound}
    Under the heuristic substitution of $\texttt{Bin}_1$ with the normal distribution of variance $1/2$, for any $\mathbf{w} \in \mathbb{Z}^m$:
    \begin{enumerate}
        \item $\Pr_{\mathbf{R} \leftarrow \texttt{Bin}_1^{256 \times m}}\left[\|\mathbf{R}\mathbf{w}\|^2 < 30\|\mathbf{w}\|^2\right] \leq 2^{-256}$
        \item $\Pr_{\mathbf{R} \leftarrow \texttt{Bin}_1^{256 \times m}}\left[\|\mathbf{R}\mathbf{w}\|^2 > 337\|\mathbf{w}\|^2\right] \leq 2^{-128}$
    \end{enumerate}
   
\end{lemma}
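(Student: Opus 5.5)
Under the heuristic substitution, I would reduce both claims to tail bounds for a single chi-square random variable and then evaluate those tails numerically.

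\textbf{Step 1 (reduction to a chi-square variable).} Replace each entry of $\mathbf{R}$ by an independent Gaussian $\mathcal{N}(0,1/2)$. Fix $\mathbf{w}\in\mathbb{Z}^m$ with $\mathbf{w}\neq 0$; the case $\mathbf{w}=0$ is trivial for both claims. The $j$-th coordinate of $\mathbf{R}\mathbf{w}$ is $\sum_i R_{j,i} w_i$, a centred Gaussian with variance $\tfrac12\|\mathbf{w}\|^2$. Different rows of $\mathbf{R}$ are independent, so the $256$ coordinates are i.i.d. Normalising gives
\begin{align*}
\|\mathbf{R}\mathbf{w}\|^2 = \tfrac12\|\mathbf{w}\|^2\cdot X, \qquad X\sim\chi^2_{256}.
\end{align*}
Claim (1) then becomes $\Pr[X<60]\le 2^{-256}$ and claim (2) becomes $\Pr[X>674]\le 2^{-128}$. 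Neither bound depends on $m$ or on $\mathbf{w}$, which is exactly why the lemma is stated uniformly in $\mathbf{w}$.

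\textbf{Step 2 (tail estimates).} My first attempt would be the standard Chernoff bounds for $\chi^2_k$ with $k=256$:
\begin{align*}
\Pr[X\le a]&\le (a/k)^{k/2}e^{(k-a)/2} \quad (a<k),\\
\Pr[X\ge a]&\le (a/k)^{k/2}e^{(k-a)/2} \quad (a>k).
\end{align*}
Plugging in $a=60$ and $a=674$ gives exponents of roughly $-88$ and $-85$ nats respectively.

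\textbf{Main obstacle.} The Chernoff exponent is essentially tight, so it does not by itself reach the stated $2^{-256}$ and $2^{-128}$. The upper tail comes out only a few bits short. The lower-tail figure $2^{-256}$ looks unreachable even exactly, which suggests the stated constant should be checked against \cite{GHL22}. So the decisive step would be an exact evaluation of the regularized incomplete gamma functions $P(128,30)$ and $Q(128,337)$. Because $k/2=128$ is an integer, each is a finite Poisson sum: for instance $Q(128,337)=e^{-337}\sum_{j<128}337^j/j!$. These sums can be bounded rigorously by their largest term times a geometric factor, or simply computed to high precision. I would carry out this computation and compare it with the stated probabilities, adjusting the constants $30$ and $337$ if the literal bounds fail.

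\textbf{Step 3 (caveat).} Nothing here is rigorous for the actual distribution $\texttt{Bin}_1$; the lemma is explicitly heuristic. A rigorous version would need sub-Gaussian or Hanson--Wright concentration for the Rademacher-type entries, which would give worse constants.
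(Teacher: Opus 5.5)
Your reduction is correct, and it is exactly the computation behind this heuristic lemma. The paper itself gives no proof; it imports the statement from \cite{GHL22}. With $\|\mathbf{R}\mathbf{w}\|^2=\tfrac12\|\mathbf{w}\|^2X$ and $X\sim\chi^2_{256}$, the two items are $P(128,30)\le 2^{-256}$ and $Q(128,337)\le 2^{-128}$. Your Chernoff figures (about $2^{-126.5}$ and $2^{-122.8}$) are right, so Chernoff certifies neither item.

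The step you deferred settles both items, with opposite outcomes.

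\emph{Item (2) holds.} Bound the Poisson sum by its top term times a geometric series. Stirling's lower bound $127!\ge\sqrt{2\pi\cdot127}\,(127/e)^{127}$ gives $e^{-337}337^{127}/127!\le 2^{-128.98}$. Going downward, consecutive terms shrink by a factor of at most $127/337$. Hence $Q(128,337)\le 2^{-128.98}\cdot\tfrac{337}{210}\le 2^{-128.29}$. The margin is only about $0.3$ bits, which is why Chernoff falls short.

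\emph{Item (1) as printed is false, even under the heuristic.} Your suspicion is right, but ``Chernoff is essentially tight'' is not itself a disproof; you need a lower bound on the probability. A single Poisson term gives one: $P(128,30)=e^{-30}\sum_{j\ge128}30^j/j!\ge e^{-30}30^{128}/128!\ge 2^{-131.4}$, and the true value is about $2^{-131}$. This is nowhere near $2^{-256}$. The bound in the source is $2^{-128}$. The same top-term-times-geometric estimate proves that corrected version: $P(128,30)\le 2^{-131.3}\cdot\tfrac{129}{99}<2^{-130.9}$. Here too Chernoff alone would not suffice.

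The misprint does not propagate. The paper only uses item (2), in the completeness proof of $\pi^C_i$ (\cref{app:proof_of_consistency}), to bound $\|\mathbf{R}_i\mathbf{r}_{t,a,i}\|$.
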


\begin{lemma}[\cite{LNP22}]
    \label{lemma:2_bound_rwy}
    Fix m, P $\in \mathbb{N}$ and  a bound $b \leq P/41m$, and let $\vec w \in [\pm P/2]^m$ with $\norm{\vec w} \geq b$, and let $\vec y$ be an arbitrary vector in $[\pm P/2]^{m}$. Then
    $$\Pr_{R \gets \mathsf{Bin}_1^{256 \times m}}[\norm{R\vec w + \vec y \ \text{mod}\ P} < \frac{1}{2}b\sqrt{26}] < 2^{-128}$$
\end{lemma}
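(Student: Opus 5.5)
The plan is to reduce the $\ell_2$ statement to a \emph{single-row anti-concentration bound}, and then to a binomial tail bound over the $256$ independent rows of $R$. Write $\mathbf{r}_1,\dots,\mathbf{r}_{256}$ for the rows of $R \gets \mathsf{Bin}_1^{256\times m}$. Their entries are i.i.d.\ in $\{-1,0,1\}$ with probabilities $1/4,1/2,1/4$. Set $X_j := (\langle \mathbf{r}_j,\vec w\rangle + y_j) \bmod^{\pm} P$. If $\norm{R\vec w+\vec y \bmod P} < \tfrac12 b\sqrt{26}$, then $\sum_j X_j^2 < 26\, b^2/4$. Hence fewer than $26$ indices $j$ can satisfy $|X_j| \ge b/2$. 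So it suffices to show that each row is ``good'' ($|X_j|\ge b/2$) with probability at least some constant $p_0$, independently across rows. The target bound then becomes $\Pr[\mathrm{Bin}(256,p_0) \le 25] < 2^{-128}$. For $p_0 = 1/2$ this holds: the probability is at most $26\binom{256}{25}2^{-256}$, and $\log_2\binom{256}{25}\approx 118$, which gives roughly $2^{-133}$. For a somewhat smaller $p_0$ I will recompute the Chernoff bound and check that it still clears $2^{-128}$.

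The single-row bound splits into two cases according to $\norm{\vec w}_\infty$.

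\emph{Case 1: some $|w_i|\ge b$.} Condition on all entries of $\mathbf{r}$ except $r_i$, and let $x$ be the resulting partial value. Then $X$ takes one of the values $x$, $x+w_i$, $x-w_i \pmod P$. Because $b\le |w_i|\le P/2$, the point $x$ is at circular distance at least $b$ from each of $x\pm w_i$. Therefore $x$ and $x\pm w_i$ cannot both lie in the window $(-b/2,b/2)$. The bad event is thus contained either in $\{r_i=0\}$ or in $\{r_i\neq 0\}$, and each of these has probability $1/2$. This gives $\Pr[|X|<b/2]\le 1/2$.

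\emph{Case 2: all $|w_i|<b$.} Here I use $b\le P/(41m)$. It gives $|\langle\mathbf{r},\vec w\rangle|\le mb\le P/41$ over $\mathbb{Z}$, so the reduction mod $P$ acts only through the shift $y$. I then need anti-concentration of the integer sum $S=\langle \mathbf{r},\vec w\rangle$ in a window of length $b$, where $\mathrm{Var}(S)=\norm{\vec w}^2/2\ge b^2/2$.

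To remove the arbitrary shift $y$, I will compare $S$ with an independent copy $S'$. If both $S+y$ and $S'+y$ land in the window, then $|S-S'|<b$. This yields $\Pr[|X|<b/2]^2\le \Pr[|S-S'|<b]$. The symmetric sum $S-S'$ has variance $\norm{\vec w}^2 \ge b^2$. Its fourth moment is bounded by a constant times the square of its variance, since the coefficients are independent and bounded. A Paley--Zygmund argument then bounds $\Pr[|S-S'|<b]$ by a constant below $1$. Alternatively, a direct small-ball (Littlewood--Offord-type) estimate could be used for the same purpose.

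The main obstacle is Case 2: extracting an explicit constant $p_0$ that is good enough for the binomial tail to beat $2^{-128}$ with the specific numbers $256$ and $26$. The Paley--Zygmund route degrades when $\norm{\vec w}$ is close to $b$. If it gives too weak a constant, my fallback is the following. Group the small coordinates of $\vec w$ greedily into blocks whose $\ell_2$ mass is about $b$, and run a Case~1-style conditioning argument on a whole block at a time. Note that in this case the integer sum over the block never wraps around mod $P$, by the condition $b\le P/(41m)$.
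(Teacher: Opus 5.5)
The paper does not prove this lemma; it imports it from \cite{LNP22}. Your argument therefore has to stand on its own, and it has a genuine gap in Case~2.

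Several pieces are fine:
\begin{itemize}
\item Case~1 (conditioning on a coordinate with $|w_i|\ge b$) is correct.
\item The binomial estimate for $p_0=1/2$ is correct.
\item The observation that $b\le P/41m$ rules out wrap-around in Case~2 is correct.
\end{itemize}

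The gap is the counting reduction itself in Case~2. Making $\Pr[\mathrm{Bin}(256,p_0)\le 25]<2^{-128}$ true forces $p_0\gtrsim 0.48$. You would therefore need $\sup_y\Pr[|S+y|<b/2]\le 0.52$ whenever $\|\vec w\|_\infty<b\le\|\vec w\|$, and this is false.

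Here is a counterexample. Take $m=2$, $b\ge 8$, $a=\lceil b/\sqrt2\,\rceil$, $\vec w=(a,a)$ (so $a<b\le\sqrt2\,a$), and $y_j=-\lfloor a/2\rfloor$ for every $j$.
\begin{itemize}
\item Each row gives $S_j=a(r_{j1}+r_{j2})$ with $r_{j1}+r_{j2}\sim\mathsf{Bin}_2$.
\item $|X_j|<b/2$ holds exactly when $S_j\in\{0,a\}$, i.e.\ with probability $10/16$.
\item The number of rows with $|X_j|\ge b/2$ is therefore exactly $\mathrm{Bin}(256,3/8)$.
\item $\Pr[\mathrm{Bin}(256,3/8)\le 25]\approx 2^{-77}$.
\end{itemize}
So the event you reduce to is genuinely far more likely than $2^{-128}$. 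No sharper small-ball estimate can close this, whether Paley--Zygmund, Littlewood--Offord, or your block-grouping fallback. The obstruction is the true per-row probability, not the method used to bound it. Even the flat, near-Gaussian case with $y=0$ and $\|\vec w\|=b$ already gives $\Pr[|S|<b/2]\approx 2\Phi(1/\sqrt2)-1\approx 0.52$, where $\Phi$ is the standard normal CDF; that sits right at the edge.

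What the counting step discards is the size of the rows you call bad. In the example, every row has $|X_j|\ge\lfloor a/2\rfloor\approx 0.35\,b$. Hence $\|R\vec w+\vec y\|^2\gtrsim 32b^2>6.5b^2$, and the lemma's event actually has probability $0$.

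In Case~2 you must therefore control $\sum_j X_j^2$ directly, using the absence of wrap-around. Two possible routes:
\begin{itemize}
\item \emph{Exponential moments.} Use $\Pr[\sum_j X_j^2<6.5b^2]\le e^{6.5\theta}\prod_j\mathbb{E}[e^{-\theta X_j^2/b^2}]$. This needs a bound $\rho$ on $\sup_y\mathbb{E}[e^{-\theta(S+y)^2/b^2}]$, uniform over shifts and over all $\vec w$ with $\|\vec w\|_\infty<b\le\|\vec w\|$, such that $e^{6.5\theta}\rho^{256}<2^{-128}$. For $\theta=4$, any uniform $\rho<0.63$ suffices; in the example above the supremum is about $0.44$.
\item \emph{Gaussian comparison (heuristic).} At the heuristic level of Lemma~\ref{lemma:2_bound}, replace $R\vec w$ by a centered Gaussian. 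A shift can only lower the probability of a centered ball (Anderson's inequality), which reduces the question to a chi-squared lower tail with $256$ degrees of freedom.
\end{itemize}
Your Case~1 argument can stay as it is.
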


\subsection{Discrete Gaussian}
\begin{definition}
The \textit{discrete Gaussian distribution} on $\mathbb{Z}^\ell$ centered at $\mathbf{v} \in \mathbb{Z}^\ell$ with standard deviation $\mathfrak{s} > 0$ is given by
$$
D^{\ell}_{\mathbf{v},\mathfrak{s}}(\mathbf{z}) = \frac{e^{-\|\mathbf{z}-\mathbf{v}\|^2/2\mathfrak{s}^2}}{\sum_{\mathbf{x} \in \mathbb{Z}^\ell} e^{-\|\mathbf{x}\|^2/2\mathfrak{s}^2}}.
$$
When centered at $\mathbf{0} \in \mathbb{Z}^\ell$, we write $D^{\ell}_\mathfrak{s} = D^{\ell}_{0,\mathfrak{s}}$.
\end{definition}

\begin{lemma}[\cite{Ban93}, \cite{Lyu12}]
\label{lemma:probability_distribution_tail_bound}
Let $k, \ell, d > 1$. Then:
\begin{enumerate}
    \item For $z \leftarrow D_{\mathfrak{s}}$, $\Pr\left[|z| \leq k\mathfrak{s}\right] > 1 - 2e^{-k^2/2}$.
    \item For $\mathbf{z} \leftarrow D^{\ell d}_\mathfrak{s}$, $\Pr\left[\|\mathbf{z}\| \leq \mathfrak{s}\sqrt{2\ell d}\right] > 1 - 2^{-\log(e/2)\ell d/2} > 1 - 2^{-\ell d/8}$.
\end{enumerate}
\end{lemma}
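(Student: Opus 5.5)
The statement to prove is \cref{lemma:probability_distribution_tail_bound}, the two tail bounds for the discrete Gaussian. I would prove both parts with the same strategy: bound a moment generating function, or compare two Gaussian sums over a lattice, and then finish with a Markov/Chernoff step. Throughout, write $\rho_{\mathfrak{s}}(\mathbf{x}) = e^{-\|\mathbf{x}\|^2/2\mathfrak{s}^2}$ and $\rho_{\mathfrak{s}}(S)=\sum_{\mathbf{x}\in S}\rho_{\mathfrak{s}}(\mathbf{x})$. Then $D^{\ell}_{\mathfrak{s}}(\mathbf{z}) = \rho_{\mathfrak{s}}(\mathbf{z})/\rho_{\mathfrak{s}}(\mathbb{Z}^\ell)$.

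\emph{Part 1.} The plan has four steps.
\begin{itemize}
\item First I show the sub-Gaussian moment bound $\mathbb{E}_{z\leftarrow D_{\mathfrak{s}}}[e^{tz}] \le e^{t^2\mathfrak{s}^2/2}$ for all real $t$.
\item To get it, complete the square: $e^{tz}\rho_{\mathfrak{s}}(z) = e^{t^2\mathfrak{s}^2/2}\rho_{\mathfrak{s}}(z - t\mathfrak{s}^2)$. This gives $\mathbb{E}[e^{tz}] = e^{t^2\mathfrak{s}^2/2}\,\rho_{\mathfrak{s}}(\mathbb{Z}-c)/\rho_{\mathfrak{s}}(\mathbb{Z})$ with $c=t\mathfrak{s}^2$.
\item The needed inequality $\rho_{\mathfrak{s}}(\mathbb{Z}-c)\le\rho_{\mathfrak{s}}(\mathbb{Z})$ follows from Poisson summation. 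We have $\rho_{\mathfrak{s}}(\mathbb{Z}-c)=\sum_{y}\widehat{\rho_{\mathfrak{s}}}(y)e^{2\pi i c y}$, and the Fourier transform $\widehat{\rho_{\mathfrak{s}}}$ is a positive Gaussian. So the sum is at most its value at $c=0$.
\item Chernoff then gives $\Pr[z\ge k\mathfrak{s}]\le e^{-tk\mathfrak{s}}e^{t^2\mathfrak{s}^2/2}$. Choosing $t=k/\mathfrak{s}$ yields $e^{-k^2/2}$. By symmetry the same bound holds for $z\le -k\mathfrak{s}$, and a union bound gives the stated $1-2e^{-k^2/2}$.
\end{itemize}

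\emph{Part 2.} Here I follow Banaszczyk's argument with $n=\ell d$. The key claim is that for any $c\ge 1$,
\begin{equation*}
\rho_{\mathfrak{s}}\bigl(\{\mathbf{x}\in\mathbb{Z}^n : \|\mathbf{x}\|> c\mathfrak{s}\sqrt{n}\}\bigr)\le \bigl(c\,e^{(1-c^2)/2}\bigr)^n\rho_{\mathfrak{s}}(\mathbb{Z}^n).
\end{equation*}
The steps are:
\begin{itemize}
\item Start from the identity $\rho_{\mathfrak{s}}(\mathbf{x}) = \rho_{c\mathfrak{s}}(\mathbf{x})\,e^{-\|\mathbf{x}\|^2(1-1/c^2)/2\mathfrak{s}^2}$.
\item For $\|\mathbf{x}\|>c\mathfrak{s}\sqrt n$ the exponential factor is at most $e^{-(c^2-1)n/2}$. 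Hence the left side is at most $e^{-(c^2-1)n/2}\rho_{c\mathfrak{s}}(\mathbb{Z}^n)$.
\item I then need the scaling inequality $\rho_{c\mathfrak{s}}(\mathbb{Z}^n)\le c^n\rho_{\mathfrak{s}}(\mathbb{Z}^n)$. Apply Poisson summation to both sides. The dual sums are $\sum_{\mathbf{y}}(c\mathfrak{s}\sqrt{2\pi})^n e^{-2\pi^2c^2\mathfrak{s}^2\|\mathbf{y}\|^2}$ and $c^n\sum_{\mathbf{y}}(\mathfrak{s}\sqrt{2\pi})^n e^{-2\pi^2\mathfrak{s}^2\|\mathbf{y}\|^2}$. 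They compare termwise because $c\ge1$.
\end{itemize}
Setting $c=\sqrt2$, the failure probability is at most $(\sqrt2\,e^{-1/2})^n = 2^{n(\frac12 - \frac12\log e)} = 2^{-\log(e/2)\,n/2}$, which is the first claimed bound. Finally, $\log(e/2)/2\approx 0.221>1/8$ gives the weaker $2^{-\ell d/8}$ form.

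The main obstacle is the two Poisson-summation comparisons: the shift inequality in Part 1 and the scaling inequality in Part 2. These are the only non-elementary steps, and both hinge on the Fourier transform of a Gaussian being a positive Gaussian. Everything else is direct Chernoff and exponent arithmetic. If needed, I would simply cite Banaszczyk's Lemma 1.5 \cite{Ban93} and \cite{Lyu12} for these facts rather than reprove them.
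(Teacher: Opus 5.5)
Your proposal is correct and is the standard argument behind the cited sources (the paper gives no proof of its own, only citing \cite{Ban93} and \cite{Lyu12}): Part~1 is the Poisson-summation bound on the moment generating function followed by Chernoff, and Part~2 is Banaszczyk's tail-bound argument specialized to $c=\sqrt2$. The one thing to tidy is that your chains are written with $\le$ while the statement asserts strict inequalities. These follow immediately: Markov's inequality is strict for the positive, infinitely-valued variable $e^{tz}$, and in Part~2 the factor $e^{-\|\mathbf{x}\|^2(1-1/c^2)/2\mathfrak{s}^2}$ is strictly below $e^{-(c^2-1)n/2}$ on the tail set once $c>1$.
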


\subsection{Challenge Space and Invertibility}

For two independent and uniform $c, c' \in \mathcal{C}$, we require that the difference $\overline{c} = c - c'$ is invertible in $\mathcal{R}_q$ with high probability. 
To show that our definition of $\mathcal{C}$ practically supports this requirement, it suffices to compute the min-entropy of a random $c \in \mathcal{C}$ modulo each NTT prime ideal $(x^{d/l} - \zeta^j)$ and show that the probability that $c$ hits a particular NTT component is smaller than the targeted soundness error. 

\begin{lemma}
\label{lemma:invertibility_from_NTT}
Let $q$ be a prime such that the polynomial ring $\mathbb{Z}_q[X] / \langle X^d + 1 \rangle$ splits into $l$ prime ideals, and let $\zeta$ be a $2l$-th primitive root of unity. An element $p \in \mathbb{Z}_q[X] / \langle X^d + 1 \rangle$ is invertible if and only if all of its NTT coefficients are nonzero.

\begin{proof}
Assume $p$ is zero in at least one of its NTT coefficients, say $\hat{p}_j = p \bmod (X^{\frac{d}{l}} - \zeta^{2j+1})$. If $p$ is also invertible, then $p$ has a multiplicative inverse $p^{-1}$ in the ring such that:
$$
p \cdot p^{-1} \equiv 1 \pmod{X^d + 1}
$$
We can rewrite this equation in the NTT domain:
\begin{align*}
\NTT(p) \circ \NTT(p^{-1}) &\equiv \NTT(1) \\
\mathbf{p} \circ \mathbf{p}^{-1} &\equiv [1, 1, \ldots, 1]
\end{align*}
Expanding the above operation on the $j$-th coefficient:
\begin{align*}
\hat{p}_j \cdot \hat{p}^{-1}_j &= 1 \pmod{X^{\frac{d}{l}} - \zeta^{2j+1}} \\
0 \cdot \hat{p}^{-1}_j &= 1 \pmod{X^{\frac{d}{l}} - \zeta^{2j+1}}
\end{align*}
Zero is not part of the multiplicative group modulo $(X^{\frac{d}{l}} - \zeta^{2j+1})$ and hence cannot have an inverse, which contradicts the hypothesis.
\end{proof}
\end{lemma}

\begin{definition}
\label{def:collision_entropy}
Given a probability distribution $D$ over a set $\mathcal{S}$, we define the \textit{collision entropy} or \textit{Rényi entropy} of order $2$ as the function:
$$
\mathsf{H}_2(D) = -\log\left(\sum_{x \in \mathcal{S}} D(x)^2\right)
$$
$\mathsf{H}_2$ measures the "spread" or "uniformity" of the distribution $D$. Collision entropy quantifies how likely it is for $2$ independently chosen samples from the distribution $D$ to be the same. A higher value of $\mathsf{H}_2$ indicates a more uniform distribution.
\end{definition}

\begin{lemma}[\cite{ALS20}]
\label{lemma:Y_max_bound_probability_challenge}
Let the random variable $Y$ over $\mathbb{Z}_q$ be defined as above. Then for all $x \in \mathbb{Z}_q$,
\begin{equation}
\Pr(Y = x) \leq M := \frac{1}{q} + \frac{1}{q} \sum_{j \in \mathbb{Z}_q^\times} \prod_{k=0}^{l-1} \left| p + (1 - p) \cos\left(\frac{2\pi j \zeta^k}{q}\right) \right|
\end{equation}
\end{lemma}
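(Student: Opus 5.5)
The plan is to prove the bound by discrete Fourier analysis over $\mathbb{Z}_q$, as in \cite{ALS20}. I take $Y$ to be one coefficient of the challenge reduced modulo a single NTT factor $X^{d/l}-\zeta^{2j+1}$. Writing out that reduction, each such coefficient is a linear form $Y=\sum_{k=0}^{l-1} c_k \zeta^{k}$ (after reindexing the root powers). The coefficients $c_k$ are i.i.d. with $\Pr(c_k=0)=p$ and $\Pr(c_k=\pm1)=(1-p)/2$, and each comes from a distinct coefficient of $c\leftarrow\mathcal{C}$. The first step is to record this representation and the independence of the $c_k$. Independence holds because distinct challenge coefficients land in distinct terms of the reduced polynomial.

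Next I will use the orthogonality of additive characters. Let $\omega=e^{2\pi i/q}$. For every $x\in\mathbb{Z}_q$,
$$\Pr(Y=x)=\frac{1}{q}\sum_{j\in\mathbb{Z}_q}\mathbb{E}\bigl[\omega^{j(Y-x)}\bigr]=\frac{1}{q}\sum_{j\in\mathbb{Z}_q}\omega^{-jx}\,\mathbb{E}\bigl[\omega^{jY}\bigr].$$
The $j=0$ term contributes exactly $1/q$. For $j\neq 0$, independence factorises the expectation as $\mathbb{E}[\omega^{jY}]=\prod_{k=0}^{l-1}\mathbb{E}[\omega^{j\zeta^k c_k}]$. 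Each factor is computed directly: $\mathbb{E}[\omega^{j\zeta^k c_k}] = p+\frac{1-p}{2}\bigl(\omega^{j\zeta^k}+\omega^{-j\zeta^k}\bigr) = p+(1-p)\cos\bigl(2\pi j\zeta^k/q\bigr)$. This quantity is real, because the distribution of $c_k$ is symmetric.

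Finally, I apply the triangle inequality to the sum over $j\in\mathbb{Z}_q^\times$ and use $|\omega^{-jx}|=1$. This gives
$$\Pr(Y=x)\le \frac1q+\frac1q\sum_{j\in\mathbb{Z}_q^\times}\prod_{k=0}^{l-1}\Bigl|p+(1-p)\cos\bigl(\tfrac{2\pi j\zeta^k}{q}\bigr)\Bigr|=M,$$
which is the claimed bound, and it holds uniformly in $x$.

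The character-sum manipulation is routine. I expect the main obstacle to be the first step: pinning down precisely which combination of challenge coefficients forms $Y$ (the text refers to $Y$ only as "defined as above"), and confirming that the multipliers really are the powers $\zeta^k$, $k=0,\dots,l-1$, with independent coefficients. I will handle this by following the reduction of $X^{i}$ modulo $X^{d/l}-\zeta^{2j+1}$ as in \cite{ALS20}. There, $X^{a d/l+b}\equiv \zeta^{(2j+1)a}X^{b}$, so the $b$-th coefficient of the reduced polynomial is $\sum_a c_{ad/l+b}\,\zeta^{(2j+1)a}$. The substitution $j\mapsto j\cdot(\text{unit})$ in the outer sum over $\mathbb{Z}_q^\times$ absorbs the odd-power reindexing, and the final bound is unchanged.
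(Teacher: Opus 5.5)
Your proof is correct and is essentially the same characteristic-function (discrete Fourier) argument as in \cite{ALS20}, which the paper cites without giving a proof of its own. The steps you use (Fourier inversion over $\mathbb{Z}_q$, factorisation by independence, the factor $p+(1-p)\cos(2\pi j\zeta^k/q)$, and the triangle inequality) give exactly the paper's $M$. The only correction is to your final remark: a single substitution $j\mapsto jt$ cannot turn $\zeta^{uk}$ into $\zeta^{k}$ for all $k$ at once, but none is needed. For $u$ coprime to $2l$, the map $k\mapsto uk \bmod l$ permutes $\{0,\dots,l-1\}$, and $\zeta^{uk}=\pm\zeta^{uk \bmod l}$ because $\zeta^{l}=-1$. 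Since $\cos$ is even, the product over $k$ is therefore literally unchanged when $\zeta$ is replaced by $\zeta^{u}$.
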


Here, $m$ is a bound on the maximum probability for $Y$ to assume a certain value for any polynomial coefficient $x_i$ of $c \bmod (X^{d/l} - \zeta^j)$. Therefore, the probability for $c$ to assume a certain value over an NTT component is at most $m^{d/l}$.

Assuming a fully-splitting setting, i.e., $X^d + 1 = \prod_{j=1}^d (x - \zeta^j) \bmod q$, $c \bmod (X - \zeta^j)$ can be written as:
\begin{gather*}
\sum_{i=0}^{d-1} c_i (\zeta^j)^i \bmod q \\ = c_0 + \zeta^j \left( c_1 + \zeta^j \left( c_2 + \ldots + \zeta^j \left( c_{d-2} + \zeta^j c_{d-1} \right) \right) \cdots \right) \bmod q.
\end{gather*}

The distribution of $c(\zeta^j)$ for $c \leftarrow \mathcal{C}$ is equivalent to the distribution of the random variable $Y = Y_0$ in the stochastic process $(Y_d, Y_{d-1}, \ldots, Y_0)$ where:
$$
\begin{array}{cc}
     Y_d = 0, &  Y_i = c_i + \zeta^j Y_{i+1}
\end{array}
$$
for $i < d$, and $c_i$ are independent and identically distributed with distribution $C_\mathrm{p}$.
This stochastic process generates a random walk over $\mathbb{Z}_q$ whose distribution converges to the uniform distribution in time $O\left(\frac{\log q}{\mathsf{H}_2(C_\mathrm{p})}\right)$.

In a more general factorization $x^d + 1 = \prod_{i=1}^l (x^{d/l} - \zeta^{j_i})$, the value $c \bmod (x^{d/l} - \zeta^{j_i})$ is not concentrated in any particular polynomial $c'_0 + c'_1 x + \cdots + c'_{d/l-1} x^{d/l-1}$. Proving this is a simple extension of the above case, since each of the $d/l$ coefficients $c'_i x^i$ of $c \bmod (x^{d/l} - \zeta^j)$ depends only on the coefficients $c_{i d/l + j}$ for $0 \leq i < l$ (i.e., the $d/l$ coefficients are mutually independent). Thus, the distribution of $c'_i$ follows the same stochastic process as above, except it consists of $l$ steps rather than $d$.

\begin{lemma}[\cite{ALS20}]
\label{lemma:correlation_R_q_NTT_probability}
Let $x \in \mathcal{R}_q$ be a random polynomial with independently and identically distributed coefficients. Then $\mathcal{R}_q/\langle X^{d/l} - \zeta^j \rangle \cong \mathcal{R}_q/\langle X^{d/l} - \zeta^i \rangle$, and $x \bmod (X^{d/l} - \zeta^j)$ and $x \bmod (X^{d/l} - \zeta^i)$ are identically distributed for all $i, j \in \mathbb{Z}_{2l}^{\times}$.
\end{lemma}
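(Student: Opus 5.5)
The plan is to use the automorphisms $\sigma_k \in Aut(\mathcal{R}_q)$, $k \in \mathbb{Z}_{2d}^{\times}$, recalled in the preliminaries. Write $I_j := \langle X^{d/l} - \zeta^j\rangle$. An automorphism $\sigma$ with $\sigma(I_j) = I_i$ induces a ring isomorphism $\mathcal{R}_q/I_j \to \mathcal{R}_q/I_i$, given by $p \bmod I_j \mapsto \sigma(p) \bmod I_i$. This gives the first claim, that the quotients are isomorphic. It is also the map under which I will compare the two distributions in the second claim. Since $I_j$ and $I_i$ are different ideals, "identically distributed" only makes sense after fixing such an identification; I will state this explicitly.

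\textbf{Step 1: choose $k$.} The roots of $X^{d/l}-\zeta^j$ are the elements $\rho$ with $\rho^{d/l}=\zeta^j$. These are primitive $2d$-th roots of unity in a splitting field, and $\sigma_k$ acts on the factors by sending the root set of $I_j$ to the root set $\{\rho^{k'}\}$ of another factor, where $k'$ is $k$ or $k^{-1}$ depending on convention. That image factor is $X^{d/l}-\zeta^{jk'}$. So I need $k'\in\mathbb{Z}_{2d}^{\times}$ with $jk'\equiv i \pmod{2l}$. Since $i,j\in\mathbb{Z}_{2l}^{\times}$, the element $k'' = j^{-1}i$ exists in $\mathbb{Z}_{2l}^{\times}$. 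The reduction map $\mathbb{Z}_{2d}^{\times}\to\mathbb{Z}_{2l}^{\times}$ is surjective (both are $2$-power cyclotomic unit groups), so $k''$ lifts to some $k'\in\mathbb{Z}_{2d}^{\times}$. This matches the computation $\sigma_i(X^{d/l}-\zeta)=X^{d/l}-\zeta^{-i}$ already recalled from \cite{ENS20}.

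\textbf{Step 2: distribution invariance.} Next I show that $\sigma_k(x)$ has the same distribution as $x$. Since $\sigma_k(X^m)=\pm X^{km \bmod d}$ and $m\mapsto km \bmod d$ is a bijection for odd $k$, $\sigma_k$ permutes the coefficients of $x$ and flips some of their signs. With i.i.d. coefficients the permutation is harmless. The sign flips are harmless provided the coefficient distribution is symmetric about $0$. This holds for every distribution where the lemma is applied: the challenge distribution $C_p$ and $\chi$ are both symmetric.

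\textbf{Step 3: conclude.} We have $x \bmod I_i \stackrel{d}{=} \sigma_k(x)\bmod I_i = \sigma_k(x)\bmod \sigma_k(I_j)$, which is the image of $x \bmod I_j$ under the induced isomorphism. Hence the two residues are identically distributed. In particular, the maximal point probabilities coincide, which is how the lemma feeds into the bound of \cref{lemma:Y_max_bound_probability_challenge}.

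\textbf{Main obstacle.} The delicate point is Step 2. For a non-symmetric i.i.d. distribution the sign flips break invariance. I would first try to handle this by restricting to $k\equiv 1 \pmod{2d/l}$ type automorphisms that avoid wrap-around. If that fails, I would state the symmetry hypothesis explicitly, following \cite{ALS20}. A smaller obstacle is getting the direction convention for $\sigma_k$ on roots right; it only changes $k$ into $k^{-1}$ and does not affect the argument.
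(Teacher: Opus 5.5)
Your argument is correct. It is the standard automorphism argument the paper relies on: the paper gives no proof of its own, only the citation to \cite{ALS20} and the fact recalled from \cite{ENS20} that $\sigma_i$ permutes the factors $X^{d/l}-\zeta^j$, which is your Step~1.

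One minor correction. The image is $\sigma_i(\langle X^{d/l}-\zeta\rangle)=\langle X^{d/l}-\zeta^{i^{-1}}\rangle$, with $i^{-1}$ the inverse modulo $2l$, which is what your own root computation gives. The $\zeta^{-i}$ printed in the preliminaries is a typo: for $d=l=4$, the element $\sigma_3(X-\zeta)=X^3-\zeta$ vanishes at $\zeta^{3}$ but not at $\zeta^{-3}=\zeta^5$. So that formula is not a valid cross-check for your Step~1.

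The symmetry hypothesis you flag is genuinely necessary and not an artifact of your method. Take $d=l=2$, $q=5$, $\zeta=2$, and coefficients uniform on $\{0,1\}$. The residues $x_0+2x_1$ and $x_0+3x_1$ have supports $\{0,1,2,3\}\neq\{0,1,3,4\}$ in $\mathbb{Z}_5$, and $\mathbb{Z}_5$ has no nontrivial ring automorphism. So the lemma as literally stated is false.

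You should therefore drop the wrap-around-avoiding fallback. No nontrivial $\sigma_k$ avoids sign flips: for $k=1+2^a u$ with $u$ odd and $1\le a\le\log_2 d$, one has $\sigma_k(X^{d/2^a})=-X^{d/2^a}$. Instead, state symmetry of the coefficient distribution as an explicit hypothesis; the challenge distribution $C_p$ satisfies it.
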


From Lemma~\ref{lemma:invertibility_from_NTT}, an element in $\mathcal{R}_q$ is invertible if and only if all of its NTT coefficients are non-zero. That is,
$$
\Pr[c \text{ is invertible}] = \Pr[c(\zeta^{j_1}) \neq 0 \wedge c(\zeta^{j_2}) \neq 0 \wedge \cdots \wedge c(\zeta^{j_{l-1}}) \neq 0].
$$

Consider the polynomial coefficients $x_i$ for the NTT component $c \bmod (X^{d/l} - \zeta^j)$ with $c \leftarrow \mathcal{C}$:
$$
\NTT(c)_j = \hat{c}_j = c \bmod (X^{d/l} - \zeta^j) = \sum_{i=0}^{\frac{d}{l}-1} x_i X^i,
$$
where all coefficients follow the same distribution over $\mathbb{Z}_q$. Let $Y$ denote the random variable over $\mathbb{Z}_q$ that follows this distribution. Attema et al.~\cite{ALS20} provide an upper bound on the maximum probability of $Y$.

\begin{lemma}
Let $c \leftarrow \mathcal{C}$ and $c' \leftarrow \mathcal{C}$. Then the difference of two distinct polynomials $\overline{c} = c - c'$ is invertible with probability $1 - \mathcal{O}(l q^{-d/l})$.
\begin{proof}
From Lemma~\ref{lemma:invertibility_from_NTT}, a polynomial is not invertible if and only if at least one of its NTT components is zero. We first upper bound the probability that, for $c \leftarrow \mathcal{C}$ and $c' \leftarrow \mathcal{C}$, one of the fixed NTT coefficients of $\overline{c}$ is zero. This probability is the same as the probability that one of the fixed NTT coefficients of $c$ and $c'$ are equal in $\mathbb{Z}_q[X]/(x^{d/l} - \zeta^j)$ for a fixed $j$. For some NTT components $c(\zeta^j)$ and $c'(\zeta^j)$ of $c$ and $c'$, we have:
$$
\Pr[c' \bmod (x^{d/l} - \zeta^j) = c \bmod (x^{d/l} - \zeta^j)] = \Pr[\hat{c}'_j = \hat{c}_j] \leq \mathcal{O}(q^{-d/l}).
$$
By the union bound, the probability that, for $c \leftarrow \mathcal{C}$ and $c' \leftarrow \mathcal{C}$, any of the NTT coefficients of $\overline{c}$ is zero is at most $\mathcal{O}(l q^{-d/l})$. This implies the desired result.
\end{proof}
\end{lemma}

Since $q$ will be much larger than $l$ in our work, $\overline{c}$ is invertible with overwhelming probability. As a conclusion, this gives us the following corollary:

\begin{corollary}
\label{corollary:c_invertible}
The parameters $p, q, d, l$ influence the bound on the maximum probability over a specific NTT component and the convergence of their distribution to uniformity. An immediate consequence is that the polynomial $\overline{c}$ is invertible in $\mathcal{R}_q$ with overwhelming probability, as long as the parameters are chosen appropriately.
\end{corollary}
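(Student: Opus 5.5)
The plan is to obtain the corollary directly from the preceding lemma, which bounds the non-invertibility probability of $\overline{c}=c-c'$ by $\mathcal{O}(l q^{-d/l})$. The only extra work is to make explicit how the parameters $p,q,d,l$ enter that bound, and to check that the bound is negligible for the parameter regime of \cref{sec:parameter}.

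\textbf{Step 1 (reduce to one NTT slot).} By \cref{lemma:invertibility_from_NTT}, $\overline{c}$ fails to be invertible iff some NTT component $\hat{\overline{c}}_j$ vanishes. This happens iff $\hat c_j=\hat c'_j$ in $\mathbb{Z}_q[X]/(X^{d/l}-\zeta^{2j+1})$. A union bound over the $l$ slots therefore gives
\[
\Pr[\overline{c}\text{ not invertible}]\le l\cdot\max_j\Pr[\hat c_j=\hat c'_j].
\]
By \cref{lemma:correlation_R_q_NTT_probability}, all slots are identically distributed, so it suffices to treat a single fixed $j$.

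\textbf{Step 2 (collision probability in one slot).} Since $c,c'$ are independent,
\[
\Pr[\hat c_j=\hat c'_j]=\sum_x\Pr[\hat c_j=x]^2\le\max_x\Pr[\hat c_j=x].
\]
The $d/l$ coefficients of $\hat c_j$ depend on disjoint sets of coefficients of $c$, hence are mutually independent. Each coefficient is distributed as the random variable $Y$ of \cref{lemma:Y_max_bound_probability_challenge}, running the walk for $l$ steps. This yields
\[
\max_x\Pr[\hat c_j=x]\le M^{d/l},
\]
with $M$ given by the cosine-product expression, which depends on $p$ (here $p=1/2$), $q$, $l$ and $\zeta$. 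Combining with Step 1, $\Pr[\overline{c}\text{ not invertible}]\le l\,M^{d/l}$.

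\textbf{Step 3 (parameters make this negligible).} It remains to argue that $M$ is close to $1/q$. This is the main obstacle, because bounding the sum $\sum_{j\in\mathbb{Z}_q^\times}\prod_k\lvert p+(1-p)\cos(2\pi j\zeta^k/q)\rvert$ analytically is not easy. I would follow \cite{ALS20} and not attempt a closed form. The random walk $Y_i=c_i+\zeta^j Y_{i+1}$ mixes to uniform in $O(\log q/\mathsf{H}_2(C_p))$ steps, and $\mathsf{H}_2(C_{1/2})=-\log(1/4+1/16+1/16)=\log(8/3)>0$. So as long as $l\gtrsim\log q/\log(8/3)$, which holds for $q\approx2^{100}$ and $l=128$, the product terms are exponentially small and $M=q^{-1}(1+\mathsf{negl})$.

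In practice, I would also evaluate $M$ numerically for the concrete $(q,l,\zeta)$, exactly as done in \cite{ALS20}. Then $l\,M^{d/l}\approx l\,q^{-d/l}$. With $d\ge256$, $l=128$ and $q\approx2^{100}$, this is at most about $2^{7}\cdot2^{-200}$, which is negligible in the security parameter. This gives the corollary.
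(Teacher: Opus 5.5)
Your proposal is correct and takes essentially the same route as the paper. Both use the NTT criterion of \cref{lemma:invertibility_from_NTT} with a union bound over the $l$ slots, a per-slot collision bound of $M^{d/l}$ from the Fourier estimate of \cite{ALS20}, and a numerical evaluation showing $M\simeq 1/q$ (the paper reports $M=1/q+\epsilon$ with $\epsilon\sim 2^{-287}$), which gives roughly $l\,q^{-d/l}\approx 2^{-193}$; your mixing-time remark in Step~3 is only heuristic, so, as in the paper, it is the numerical evaluation of $M$ that actually carries the last step.
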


For the parameters ($p,q,d,l$) given in \cref{sec:parameter}, we follow \cite{ALS20} method to show that $\overline{c}$ is independent of $j$ and uniform with bound of $\Pr(Y = x) \le M=1/q + \epsilon$:
\begin{equation}
M := \frac{1}{q} + \frac{1}{q} \sum_{j \in \mathbb{Z}_q^\times} \prod_{k=0}^{l-1} \left| p + (1 - p) \cos\left(\frac{2\pi j \zeta^k}{q}\right) \right|
\end{equation}
where we bound the probability that a random variable $Y$ over $\mathbb{Z}_q$ equals any $x \in \mathbb{Z}_q$. We found in a random sampling computation that $\epsilon \sim 2^{-287}$ for $q\sim2^{100}$, which strongly supports that $M\simeq 1/q$. 
Furthermore, note that for $q \sim 2^{64}$, $\epsilon\sim 2^{-150}$ is found for similar sample size.
\begin{table}[h]
\centering
\begin{tabular}{|c|c|c|c|c|c|c|}
\hline
$\log_2(\frac{N}{q})$ & $-83$ &$-81$& $-79$ &$-77$\\ \hline
$\log_2{\epsilon}$ & $-293$ & $-303$& $-290$ & $-293$ \\ \hline
\end{tabular}
\caption{Large random sampling of $\mathbb{Z}_q$ shows a strong evidence that $\log_2M\sim\log_2(1/q)\sim -100$}
\end{table}

\subsection{Rejection Sampling}

\begin{lemma}[Rejection Sampling~\cite{Lyu12}]
\label{lemma:rejsampling}
Let $\mathcal{R} = \mathbb{Z}[X]/(X^d+1)$. Let $V \subset \mathcal{R}^{l}$ 
be a set of polynomials with norm at most $T$,
and $\rho : V \rightarrow [0,1]$ be a probability distribution. Let $\mathfrak{s}=11T$ and $M=3$. Now, sample $\mathbf{v}  \leftarrow \rho$ and $\mathbf{y} \leftarrow D^{\ell d}_\mathfrak{s}$ and set $\mathbf{z}=\mathbf{y}+\mathbf{v}$, and
run $b  = \rej(\mathbf{z},\mathbf{v},\mathfrak{s})$. Then, the probability that $b =0$ (i.e., accept) is at least $(1-2^{-100})/M$,
and the distribution of $(\mathbf{v},\mathbf{z})|(b=0) \approx_{(2^{-100}/M)} \rho \otimes D^{\ell d}_\mathfrak{s}.$
\end{lemma}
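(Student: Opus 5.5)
The plan follows the standard argument of \cite{Lyu12}. First I reduce the lemma to the generic rejection-sampling principle. Then I verify its single hypothesis for the Gaussian pair $f = D^{\ell d}_{\mathfrak{s}}$ (target) and $g_{\mathbf{v}} = D^{\ell d}_{\mathbf{v},\mathfrak{s}}$ (proposal). Recall that $\rej(\mathbf{z},\mathbf{v},\mathfrak{s})$ accepts, i.e.\ outputs $b=0$, with probability $\min\bigl(1, f(\mathbf{z})/(M g_{\mathbf{v}}(\mathbf{z}))\bigr)$. Since $\mathbf{y} \leftarrow D^{\ell d}_\mathfrak{s}$, the vector $\mathbf{z}=\mathbf{y}+\mathbf{v}$ is distributed exactly as $g_{\mathbf{v}}$.

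\textbf{Step 1: generic principle.} Let $\mathbf{v} \leftarrow \rho$ with $\rho$ supported on $V$. Suppose that for every $\mathbf{v}\in V$,
\[
\Pr_{\mathbf{z}\leftarrow f}\bigl[M g_{\mathbf{v}}(\mathbf{z}) \ge f(\mathbf{z})\bigr] \ge 1-\varepsilon .
\]
I will show that the accept probability is at least $(1-\varepsilon)/M$ and that $(\mathbf{v},\mathbf{z})\mid(b=0)$ is within statistical distance $\varepsilon/M$ of $\rho\otimes f$.

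For the accept probability, fix $\mathbf{v}$ and write
\[
\Pr[b=0]=\sum_{\mathbf{z}} g_{\mathbf{v}}(\mathbf{z})\min\Bigl(1,\tfrac{f(\mathbf{z})}{M g_{\mathbf{v}}(\mathbf{z})}\Bigr)\ge \tfrac{1}{M}\sum_{\mathbf{z}\in S_{\mathbf{v}}} f(\mathbf{z}) \ge \tfrac{1-\varepsilon}{M},
\]
where $S_{\mathbf{v}}=\{\mathbf{z} : M g_{\mathbf{v}}(\mathbf{z})\ge f(\mathbf{z})\}$.

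For the distance, compare the algorithm with an ideal process that outputs $(\mathbf{v},\mathbf{z})$ with $\mathbf{z}\leftarrow f$ with probability $1/M$. The two processes agree pointwise on $S_{\mathbf{v}}$. Off $S_{\mathbf{v}}$ they differ by at most $f(\mathbf{z})/M$ in mass, which totals at most $\varepsilon/M$. Summing over $\mathbf{v}$ under $\rho$ and conditioning on acceptance gives the bound in the form stated.

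\textbf{Step 2: the Gaussian ratio.} By direct computation,
\[
\frac{f(\mathbf{z})}{g_{\mathbf{v}}(\mathbf{z})}=\exp\Bigl(\frac{\|\mathbf{v}\|^2-2\langle \mathbf{z},\mathbf{v}\rangle}{2\mathfrak{s}^2}\Bigr),
\]
with the inner product taken over the coefficient embedding. It therefore suffices to bound $|\langle \mathbf{z},\mathbf{v}\rangle|$ for $\mathbf{z}\leftarrow D^{\ell d}_\mathfrak{s}$.

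\textbf{Step 3: tail bound on $\langle \mathbf{z},\mathbf{v}\rangle$.} For fixed $\mathbf{v}$, this inner product is subgaussian with parameter $\mathfrak{s}\|\mathbf{v}\|\le \mathfrak{s}T$. The one-dimensional tail bound of \cref{lemma:probability_distribution_tail_bound}, applied to the projection onto $\mathbf{v}/\|\mathbf{v}\|$ (equivalently, Banaszczyk's bound \cite{Ban93}), gives
\[
\Pr\bigl[|\langle \mathbf{z},\mathbf{v}\rangle|> k\mathfrak{s}T\bigr]\le 2e^{-k^2/2}.
\]
On the complementary event, using $\mathfrak{s}=11T$, the exponent is at most
\[
\frac{kT}{\mathfrak{s}}+\frac{T^2}{2\mathfrak{s}^2}=\frac{k}{11}+\frac{1}{242}.
\]

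\textbf{Step 4: fix the constants.} We need this exponent to be at most $\ln M=\ln 3\approx 1.0986$, and we need $2e^{-k^2/2}\le 2^{-100}$. The second condition requires $k\gtrsim 11.8$. With $k=12$ the exponent is about $1.095$, which is just below $\ln 3$. So $\varepsilon=2^{-100}$ works with $M=3$, and Step 1 then yields exactly the stated accept probability $(1-2^{-100})/M$ and statistical distance $2^{-100}/M$.

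\textbf{Main obstacle.} The delicate step is Step 3 together with Step 4: the constants are tight, since $12/11+1/242$ barely falls under $\ln 3$. Two things need care. First, the discrete Gaussian projection must satisfy the continuous-style tail bound; I would invoke Banaszczyk's lemma for the shifted lattice $\mathbb{Z}^{\ell d}$, which gives $\Pr\bigl[|\langle\mathbf{z},\mathbf{u}\rangle|>k\mathfrak{s}\bigr]\le 2e^{-k^2/2}$ for unit $\mathbf{u}$. Second, the normalising constants of $f$ and $g_{\mathbf{v}}$ must coincide, which holds because both are sums over the same lattice.
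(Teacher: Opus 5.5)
Your Steps 2--4 are the argument the paper sketches right after the rejection sampling algorithm: the density ratio, the tail bound $|\langle\mathbf{z},\mathbf{v}\rangle|<12\mathfrak{s}\|\mathbf{v}\|$ except with probability $2^{-100}$, and $\mathfrak{s}=11\|\mathbf{v}\|$ giving $M\approx 3$. Your Step 1 spells out the generic principle that the paper only cites from \cite{Lyu12}.

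One step fails as written: the last sentence of Step 1. Your comparison shows that the \emph{unconditioned} outputs, with rejection counted as an outcome, are $\varepsilon/M$-close. Conditioning on $b=0$ divides by $\Pr[b=0]\approx 1/M$, which can inflate the discrepancy by a factor of about $M$. Concretely, set $P(\mathbf{v},\mathbf{z})=\rho(\mathbf{v})\min\bigl(g_{\mathbf{v}}(\mathbf{z}),f(\mathbf{z})/M\bigr)$ and $Q=(\rho\otimes f)/M$. Then $P\le Q$, with equality on $S_{\mathbf{v}}$, so $P/\Pr[b=0]\ge MQ$ there. It follows that the conditional distance is at most $M\sum(Q-P)=1-M\Pr[b=0]\le\varepsilon$. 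For the generic principle this cannot be improved: if $g_{\mathbf{v}}$ vanishes off $S_{\mathbf{v}}$, the conditional distance is exactly $\varepsilon$. So with $\varepsilon=2^{-100}$ as chosen in Step 4, your argument gives conditional distance $2^{-100}$, not the stated $2^{-100}/M$.

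The repair is cheap. In Step 4, require $\varepsilon\le 2^{-100}/M$ instead of $\varepsilon\le 2^{-100}$. Your $k=12$ already meets this, since $2e^{-72}\approx 2^{-102.9}<2^{-100}/3\approx 2^{-101.6}$. The exponent check $12/11+1/242<\ln 3$ is unaffected, and the acceptance bound $(1-\varepsilon)/M\ge(1-2^{-100})/M$ still follows.

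A smaller point on Step 3: the paper's one-dimensional tail lemma does not literally apply, because $\langle\mathbf{z},\mathbf{v}\rangle$ is not a one-dimensional discrete Gaussian. Cite instead the inner-product bound from \cite{Lyu12}: for $\mathbf{z}\leftarrow D^{\ell d}_{\mathfrak{s}}$, $\Pr[|\langle\mathbf{z},\mathbf{v}\rangle|>r]\le 2\exp\bigl(-r^2/(2\|\mathbf{v}\|^2\mathfrak{s}^2)\bigr)$.
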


\begin{figure}[!ht]
  \centering
  \begin{minipage}[t]{0.45\textwidth} 
    \begin{algorithm}[H]
    \caption{$\rej_0(\mathbf{z},\mathbf{v},\mathfrak{s})$~\cite{Lyu12}}\label{alg:rej1}
    \begin{algorithmic}
    \State $u\leftarrow[0,1)$
    \If{$u>\frac{1}{M}\exp(\frac{-2\dotprod{\mathbf{z}}{\mathbf{v}}+\|\mathbf{v}\|^2}{2\mathfrak{s}^2})$}
        \State return $1$
    \Else
        \State return $0$
    \EndIf
    \end{algorithmic}
    \end{algorithm}
  \end{minipage}
  \hspace{0.05\textwidth} 
 
    \caption{Rejection sampling algorithm. By default, we denote $\rej(\cdot)=\rej_0(\cdot)$.} 
\end{figure}

Here, the repetition rate $M$ is chosen to be an upper bound on:
\begin{align}
    \frac{D_\mathfrak{s}^{ld}(\mathbf{z})}{D_{\mathbf{v},\mathfrak{s}}^{ld}(\mathbf{z})} &= \exp\left(\frac{-2\dotprod{\mathbf{z}}{\mathbf{v}}+\|\mathbf{v}\|^2}{2\mathfrak{s}^2}\right) \leq \exp\left(\frac{24s\|\mathbf{v}\|+\|\mathbf{v}\|^2}{2\mathfrak{s}^2}\right) = M
\end{align}
For the inequality, we used the tail bound, which says that with probability at least $1-2^{-100}$, $|\dotprod{\mathbf{z}}{\mathbf{v}}|<12\|\mathbf{v}\|\mathfrak{s}$. By setting $\mathfrak{s}=11\|\mathbf{v}\|$, we have $M\approx 3$.

\subsection{Commit-and-Prove System}
\label{app:commit_and_prove}

Let $\langle \calP(x, w), \calV(x) \rangle$ denote the interactive protocol execution, and $\eta$ be the security parameter. In our manuscript, we typically consider public-coin protocols where the interaction between $\calP$ and $\calV$ consists of a constant number of moves (a maximum of $7$ rounds). By denoting commitment, challenge and response as $\com,\ch,\resp$ respectively, we now give formal definitions of completeness, $2$-special soundness, and honest-verifier zero-knowledge for $\Pi$:

\begin{definition}[Completeness]\label{def:completeness}
    The proof system $\Pi = (\calP, \calV)$ is complete if and only if there exists a negligible function $\texttt{neg}(\cdot)$ such that for all $(x, w) \in R$, we have:
    \begin{align}
        \Pr[\langle \calP(x, w), \calV(x) \rangle = 1] \geq 1 - \texttt{neg}(\eta).
    \end{align}
\end{definition}

\begin{definition}[2-Special Soundness]\label{def:soundness}
    The proof system $\Pi = (\calP, \calV)$ is 2-special sound if and only if there exists a probabilistic polynomial-time machine $\extractor$ (the extractor) such that for any $x$ in the language of $R$, and for a pair of accepting conversations $\{(\com, \ch^i, \resp^i)\}_{i=1,2}$ with $\ch^1 \neq \ch^2$, we have:
    \begin{align*}
        \extractor(x, \{(\com, \ch^i, \resp^i)\}_{i=1,2}) \rightarrow w \quad \text{such that} \quad (x, w) \in R.
    \end{align*}
\end{definition}
Note that this can be extended to $k$-special soundness if $\extractor$ takes as input $k$ transcripts, with pairwise distinct $\ch$ and a common $\com$. 

\begin{definition}[Honest-Verifier Zero-Knowledge]\label{def:hvzk}
    The proof system $\Pi = (\calP, \calV)$ is statistical (or computational) honest-verifier zero-knowledge if and only if there exists a probabilistic polynomial-time machine $\simulator$ (the simulator) and a negligible function $\texttt{neg}(\cdot)$ such that for all $(x, w) \in R$ and for any (efficient) unbounded distinguisher $\distinguisher$, we have the following probability is negligible:
    \[ 
    \begin{array}{l}
        \Big| \Pr\ [ b = 1 \mid (\com, \ch, \resp) \leftarrow \langle \calP(x, w), \calV(x) \rangle, \ b \leftarrow \distinguisher(\\
        \com, \ch, \resp)] - \Pr\ [b = 1 \mid (\com', \ch', \resp') \leftarrow \simulator(x, \eta),\\ 
        \ b \leftarrow \distinguisher(\com', \ch', \resp')] \Big|.
    \end{array}
    \]
    That is, the distribution of the simulated transcript is statistically (or computationally) indistinguishable from the real one.
\end{definition}

Meanwhile, it is known that 2-special soundness implies knowledge soundness in the classical setting \cite{ACK21}. Here, we give a formal definition of knowledge soundness in an interactive setting in Def.~\ref{def:pok}. 

\begin{definition}[Soundness - Proofs of Knowledge]\label{def:pok}
    Let $(\calP, \calV)$ be an interactive proof system for an NP relation $\relation$. An extractor $K$ is an algorithm that has resettable black-box access to a prover $\calP$. We call a proof system $(\calP, \calV)$ a proof of knowledge with knowledge error $\epsilon$ if and only if there exists a constant $d > 0$, a polynomially bounded function $p > 0$, and a probabilistic (or quantum) polynomial-time oracle machine $K$ such that for any interactive machine $\calP^*$ and every $x$, the following holds:
    \begin{align*}
        \Pr\left[\langle \calP^*(x), \calV(x) \rangle = 1\right] = \varepsilon > \epsilon \Longrightarrow  \\ \Pr[(x, w) \in R : w \leftarrow K^{\calP^*(x)}(x)] \geq \frac{1}{p(\eta)}\left(\varepsilon - \epsilon\right)^d
    \end{align*}
    where $\varepsilon$ is the probability that $\calP^*$ convinces $\calV$ a statement holds with respect to a relation $\relation$.
    The oracle machine $K$ is known as a knowledge extractor with running time polynomial in $\eta$ and the running time of $\calP^*$, and on input $x$ outputs a witness $w$ for $x$ such that $(x, w) \in R$.
\end{definition}

\subsection{Commitment Scheme}

\begin{definition}[Commitment Scheme]
A commitment scheme consists of the following set of algorithms:
	\begin{itemize}
	\item{ $\commitgen\textnormal{(}\secparam\textnormal{)}$:}  on input a security parameter $\secparam$, this algorithm outputs a commitment key $\commitkey$.
    \item{$\commit\textnormal{(}\commitkey, m,r \textnormal{)}$:} on input a commitment key $\commitkey$, a message $m$ and a random value $r$, this algorithm outputs a commitment $\commitment$.
    \item{$\open\textnormal{(}\commitkey, \commitment, m,r \textnormal{)}$:} on input a commitment key $\commitkey$, a commitment $\commitment$, a message $m$ and a random value $r$, this algorithm outputs a bit $b\in \{0,1\}$.
	\end{itemize}
	\label{definition:com}

A non-interactive scheme $(\commitgen, \commit, \open)$ is called a commitment scheme if it satisfies the hiding and binding properties defined in \cref{def:commitment_property}.
\end{definition}

\begin{definition}
\label{def:commitment_property}
We define the hiding and biding property of a commitment scheme below:

\textbf{Hiding. } Let the advantage $\hidingadv$ of breaking the binding property be the probability defined below. A scheme satisfies hiding if for all PPT adversary $\adv$, the following probability is negligible:
\begin{align*}
\abs{ \Pr\left[
\begin{array}{cc}
    \multirow{3}{*}{$\adv(\commitment)=b:$}  
    & \commitkey \gets \commitgen(\secparam); \\
    & (m_0,m_1)\gets \adv(\commitkey);\ b\sample \bin; \\
    & \commitment \gets \commit(\commitkey, m_b, r)
\end{array}
\right] - \frac{1}{2}}
\end{align*}

\textbf{Binding. } Let the advantage $\bindingadv$ of breaking the binding property be the probability defined below. A scheme satisfies binding if for all PPT adversary $\adv$, the following probability is negligible:

\[
\Pr\left[
\begin{gathered}
m_0 \neq m_1 \\
\open(\commitkey, \commitment,m_0,r_0)=\\
\open(\commitkey, \commitment,m_1,r_1)= 1
\end{gathered}
\ \middle| \ 
\begin{gathered}
\commitkey \gets (\commitgen(\secparam)); \\
(\commitment, m_0,r_0,m_1,r_1)\gets \adv(\commitkey)
\end{gathered}
\right]
\]
\label{definition:com_properties}
\end{definition}

\subsection{Non-interactive Zero-knowledge Proof}
\subsubsection{Zero-Knowledge Proof.} A non-interactive zero-knowledge proof (NIZK) allows a prover to prove to a verifier of some statement in zero-knowledge. Fiat-Shamir transform \cite{fiatshamir} can be used to transform an interactive Sigma ZKP protocol to a NIZK proof system for the specific relation. NIZK is defined as follows:

\begin{definition}[Non-interactive Zero-Knowledge Argument System] Let $\relation$ be an NP-relation and $\lang_{\relation}$ be the language defined by $\relation$. A non-interactive zero-knowledge argument system for $\lang_{\relation}$ consists of the following algorithms:
    \begin{description}
    \item[$\nizksetup_{{\relation}}(\secpar)$:] Takes as input a security parameter $\secpar$, and outputs a common reference string $\crs$.
    \item[$\nizkprove_{{\relation}}(\crs, \stmt, \wit)$:] Takes as input a common reference string $\crs$, a statement $\stmt$ and a witness $\wit$, and  outputs either a proof $\nizkproof$ or $\bot$.
    \item[$\nizkverify_{{\relation}}(\crs, \stmt, \nizkproof)$:] Takes as input a common reference string $\crs$,  a statement $\stmt$, and a proof $\nizkproof$,  and outputs either a $0$ or $1$.
    \end{description}
	\label{definition:nizk}

A NIZK proof system properties of completeness, soundness and zero-knowledge defined below.

\end{definition}
NIZK has the following properties.

\noindent\textbf{Completeness.} For all security parameters $\secpar$, all statements $\stmt \in \lang_{\relation}$, all witness $\wit$ with $\relation(\stmt, \wit)=1$, let $\crs \gets \nizksetup(\secpar)$ and $\nizkproof \gets \nizkprove(\crs,\stmt,\wit)$, it holds that $\nizkverify(\allowbreak\crs,\allowbreak\stmt,\nizkproof)=1$ with overwhelming probability.

\noindent\textbf{(Quantum) Proof of Knowledge Soundness \cite{LZ19}} NIZK has (Q)PoK soundness if there is a (quantum) polynomial time extractor $E$, a constant $c$, and a polynomial $p(\cdot)$, and negligible functions $\kappa(\cdot)$ and $negl(\cdot)$, such that for any (quantum) prover $P'$, for any statement $x$ satisfying $\Pr[\mathcal{V}(x,\pi)=1 : \pi \gets P'(x)]\geq \kappa(\secpar)$, we have the following:
$$\Pr[(x,E^{P(x)}(x))\in \relation] \geq \frac{1}{p(\secpar)}\cdot(\Pr[V(x, \pi)=1 : \pi \gets P'(x)]-\kappa(\secpar))^c-negl(\secpar)$$
We refer to \cite{LZ19} for the complete definition.

\noindent\textbf{Zero-Knowledge} The advantage $\sf{ZK}^{\adv}(\secpar)$ of an adversary $\adv$ in  breaking the zero-knowledge property is the probability defined below. The proof system is zero-knowledge if there exists simulator algorithms ($\zksimulator_1, \zksimulator_2$) such that for all PPT adversaries $\adv$, there exists a negligible function $\negl$ such that $\sf{ZK}^{\adv}(\secpar) \leq \negl$.
\begin{align*}
\Big|&\Pr\left[\begin{array}{cc}
\multirow{4}{*}{$\adv(\nizkproof^{\ast}) = 1~:$}  & \crs \gets \nizksetup(\secpar); \\
  & (\stmt, \wit) \gets \adv(\crs); \\
  &   \relation(\stmt, \wit) = 1;   \\
   &  \nizkproof^\ast \gets \nizkprove(\crs, \\
   &  \stmt,\wit)
\end{array}
\right]   - \\ 
&\Pr\left[\begin{array}{cc}
\multirow{4}{*}{$\adv(\nizkproof^{\ast}) = 1~:$} &  (\crs, {\sf st}) \gets \zksimulator_1(\secpar);  \\
 &  (\stmt, \wit) \gets \adv(\crs); \\
  &   \relation(\stmt, \wit) = 1; \\
  &    \nizkproof^\ast \gets \zksimulator_2(\crs,\stmt,{\sf st}) 
\end{array}\right]\Big| 
\end{align*}

\subsection{Generalized Forking Lemma.} 
We recall the general forking lemma \cite{forking_lemma} below.

\begin{lemma}
\label{lemma:forking_lemma}
Let $q \geq 1$ and a set H of size $h \geq 2$. Let $\mathsf{A}$ be a randomized algorithm that on input $x,h_1,\dots,h_q$ returns a pair, the first element of which is an integer in the range $0,\dots,q$ and second element of which we refer to as a side output. Let $\mathsf{IG}$ be a randomized algorithm that we call the input generator. The accepting probability of $\mathsf{A}$, denoted as $\mathsf{acc}$ is defined as the probability that $J\geq 1$ in the experiment 
$$
x \sample \mathsf{IG}; h_1,\dots,h_q \sample H;(J,\sigma) \sample \mathsf{A}(x,h_1,\dots,h_q)
$$
The forking algorithm $\mathsf{F_A}$ associated to $\mathsf{A}$ is the randomized algorithm that takes input $x$ proceeds as follows:

\procb{\text{Algorithm} $\mathsf{F_A}(x)$}{
        \text{Pick coins } \rho \text{ for } \mathsf{A} \text{ at random }\\
        h_1,\dots,h_q \sample H \\
        (I, \sigma) \gets \mathsf{A}(x,h_1,\dots,h_q; \rho)\\
        \text{ If }\ I=0 \text{ then return } (0, \bot, \bot)\\
        h_I', \dots, h_q' \sample H  \\
        (I',\sigma') \gets \mathsf{A}(x,h_1,\dots, h_{I-1}, h'_I,\dots, h'_q; \rho)\\
        If (I=I' \text{ and } h_I \neq h'_I)\text{ then return }(1,\sigma, \sigma')\\
        \text{ Else return }(0,\bot,\bot).
}

Let 
$$
\mathsf{frk} = \Pr[b=1: x\sample \mathsf{IG}; (b, \sigma,\sigma')\sample \mathsf{F_A}(x)].
$$
Then, $\mathsf{frk} \geq \mathsf{acc} \cdot (\frac{\mathsf{acc}}{q}-\frac{1}{h})$ or alternatively $\mathsf{acc} \leq \frac{q}{h}+ \sqrt{q \cdot \mathsf{frk}}$.

\end{lemma}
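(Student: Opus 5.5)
We follow Bellare and Neven: fix the input $x$, reduce to a second-moment bound per fork index, combine the indices by Cauchy--Schwarz, and finally average over $x$ with Jensen's inequality. For fixed $x$ write $\mathsf{acc}(x)=\Pr[I\geq 1]$ and $\mathsf{frk}(x)=\Pr[b=1]$, both over $\rho$ and the $h$'s. The goal is
$$\mathsf{frk}(x)\geq \mathsf{acc}(x)\Big(\frac{\mathsf{acc}(x)}{q}-\frac{1}{h}\Big).$$
Averaging over $x\sample\mathsf{IG}$ then gives the claim, because $t\mapsto t^2/q - t/h$ is convex. By Jensen, $\mathbb{E}[\mathsf{acc}(x)^2]\geq \mathbb{E}[\mathsf{acc}(x)]^2=\mathsf{acc}^2$, so the bound passes from fixed $x$ to the average.

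For fixed $x$ we decompose by the fork index $i\in\{1,\dots,q\}$. The event $b=1$ contains the event ``$I=I'=i$ for some $i$'' minus the event $h_i=h_i'$. Hence
$$\mathsf{frk}(x)\geq \sum_{i=1}^q \Pr[I=i\wedge I'=i]-\sum_{i=1}^q\Pr[I=i\wedge h_i=h_i'].$$
The second sum is at most $\mathsf{acc}(x)/h$, since $h_i'$ is uniform and independent of the first run, so each term equals $\Pr[I=i]/h$.

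The key step is the first sum. Let $X_i(\rho,h_1,\dots,h_{i-1})=\Pr_{h_i,\dots,h_q}[I=i]$. Conditioned on the shared prefix $(\rho,h_1,\dots,h_{i-1})$, the two runs are independent and identically distributed. Therefore $\Pr[I=i\wedge I'=i]=\mathbb{E}[X_i^2]$, and by Jensen this is at least $\mathbb{E}[X_i]^2=\mathsf{acc}_i^2$, where $\mathsf{acc}_i=\Pr[I=i]$. Cauchy--Schwarz then gives $\sum_i \mathsf{acc}_i^2\geq (\sum_i\mathsf{acc}_i)^2/q=\mathsf{acc}(x)^2/q$. Combining the two sums yields the fixed-$x$ bound.

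For the alternative form, rewrite $\mathsf{frk}\geq \mathsf{acc}^2/q-\mathsf{acc}/h$ as the quadratic inequality $\mathsf{acc}^2-(q/h)\,\mathsf{acc}-q\,\mathsf{frk}\leq 0$. Solving it gives
$$\mathsf{acc}\leq \frac{q}{2h}+\sqrt{\frac{q^2}{4h^2}+q\,\mathsf{frk}}\leq \frac{q}{h}+\sqrt{q\,\mathsf{frk}},$$
using $\sqrt{a+b}\leq\sqrt a+\sqrt b$.

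The main obstacle is the conditional-independence step. One must check that rerunning $\mathsf{A}$ with the same coins $\rho$ and the same prefix $h_1,\dots,h_{I-1}$, but fresh $h_I',\dots,h_q'$, makes $I'$ distributed, given the prefix, exactly as $I$. This is what turns $\Pr[I=I'=i]$ into a square. The rest is routine inequality manipulation.
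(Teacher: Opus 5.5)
Your proof is correct and is essentially the original Bellare--Neven argument. You fix $x$, bound the collision term by $\mathsf{acc}(x)/h$, write $\Pr[I=I'=i]=\mathbb{E}[X_i^2]\ge \mathsf{acc}_i^2$, combine the indices with Cauchy--Schwarz, average over $x$ by Jensen, and solve the quadratic for the alternative form. The paper gives no proof of its own here: it only recalls the lemma from \cite{forking_lemma}, and the argument in that source is the one you reconstructed.
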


\subsection{BDLOP and ABDLOP Commitment}
\label{app:commitment_bdlop_abdlop}

We recall the BDLOP commitment scheme from~\cite{BDLOP16} to commit to a message vector $m \in \CalRq^n$ using randomness $\mathbf{r}\in\chi^{(\kappa+\lambda +n)d}$, where the module ranks $\kappa,\lambda$ and bound $\beta_{\text{msis}}$ are parametrized for MSIS and MLWE security. Meanwhile, it requires a matrix $\bfA \leftarrow \CalRq^{\kappa \times (\kappa+\lambda +n)}$
and vectors $\mathbf{b}_i \leftarrow \CalRq^{(\kappa+\lambda +n)}$ for every $i \in [n]$ as public parameters. The BDLOP commitment $\com$ is constructed as follows:

\begin{align*}
    \com_0&=\bfA\mathbf{r} \\ 
    \com_i&=\mathbf{b}_i^T\mathbf{r}+m_i ~\text{for}~i\in [n].
\end{align*}
For such a commitment scheme, the binding property relies on the hardness of $\msis_{\kappa,\kappa+\lambda+n,\beta_{\text{msis}}}$ problem, and the hiding property is established upon the hardness of dual $\mlwe_{\lambda,\kappa+n,\chi}$ problem.
We recall the weak opening and binding property of BDLOP below.

\begin{definition}[Weak Opening of BDLOP Commitment~\cite{ALS20}]
    A weak opening for the commitment $\com$ consists of a polynomial $\bar{c}\in\mathcal{R}_q$, randomness $\mathbf{r}^\ast$ over $\mathcal{R}_q$, and message $m^\ast\in\mathcal{R}_q$ such that 
    \begin{itemize}
        \item $\|\bar{c}\|_1\leq2\omega$ and $\bar{c}$ is invertible over $\mathcal{R}_q$. 
        \item $\|\bar{c}\mathbf{r}^\ast\| \leq 2 \beta = 2\mathfrak{s}\sqrt{2(\kappa+\lambda+n)d}$. 
        \item $\bfA\mathbf{r}^\ast=\com_0$.
        \item $\mathbf{b}_i^T\mathbf{r}^\ast+m_i^\ast =\com_i$ for $i\in [n]$.
    \end{itemize}
\end{definition}

\subsubsection{ABDLOP Commitment and Zero-Knowledge Proof of Opening}
While BDLOP allows commitment to any arbitrary ring element, commitment to message with short norm but large dimension is more efficient using Ajtai's commitment \cite{Ajtai96}. We recall the ABDLOP commitment scheme from \cite{LNP22, Nguyen22}. In brief, it combines BDLOP with the construction by Ajtai \cite{Ajtai96},

while using one short randomness vector $\mathbf{s}$ that binds the committed messages:

\begin{align*}
    \com_0&= \bfA_1\mathbf{r}+\bfA_2\mathbf{s}\\ 
    \com_i&=\mathbf{b}_i^T\mathbf{s}+m_i~\text{for}~i\in [n].
\end{align*}
where $\bfA_1 \leftarrow \CalRq^{\kappa \times (\mu_1)}, \bfA_2 \leftarrow \CalRq^{\kappa \times (\mu_2)}, \mathbf{b}_i \leftarrow \CalRq^{(\mu_2)}, \mathbf{r}\in\CalRq^{(\mu_1)}$ and $\mathbf{s}\in\chi^{(\mu_2)d}$. We further recall the interactive process of a zero-knowledge proof of opening for ABDLOP commitment $\com$ in Protocol \ref{proto:ZKPoO_abdlop} in the Appendix.

Similarly to BDLOP, for ABDLOP commitment scheme, the binding property is associated with the hardness of $\msis_{\kappa,\mu_1+\mu_2,\beta_{\text{msis}}}$ problem.

\begin{definition}[Weak Opening of ABDLOP Commitment~\cite{LNP22}]
     A weak opening for the commitment $\com$ consists of a polynomial $\bar{c}\in\mathcal{R}_q$, two random elements $\mathbf{r}^\ast$ and $\mathbf{s}^\ast$ over $\mathcal{R}_q$, and message $m^\ast\in\mathcal{R}_q$ such that 
     \begin{itemize}
        \item $\|\bar{c}\|_1\leq2\omega$ and $\bar{c}$ is invertible over $\mathcal{R}_q$.
        \item $\|\bar{c}\mathbf{r}^\ast\|\leq2\mathfrak{s}_1\sqrt{2(\mu_1)d}$ and $\|\bar{c}\mathbf{s}^\ast\|\leq2\mathfrak{s}_2\sqrt{2(\mu_2)d}$.
        \item $\bfA_1\mathbf{r}^\ast+\bfA_2\mathbf{s}^\ast=\com_0$.
        \item $\mathbf{b}_i^T\mathbf{s}^\ast+m_i^\ast =\com_i$ for $i\in [n]$.
    \end{itemize}
\end{definition}

In this section, we further recall proof of opening for BLDOP protocol and \colorbox{Gainsboro}{linear relation protocol (with relation set to equality, $m=m'$)} in \cref{proto:ZKPoO_bdlop_linear} from \cite{BDLOP16}.
In addition, we recall proof of opening protocol for ABDLOP commitment given in \cref{proto:ZKPoO_abdlop}. 

\begin{lemma} [Binding Property of BDLOP \cite{ALS20}]
\label{lemma:binding_bdlop}
    The BDLOP commitment scheme is binding with respect to weak openings defined in \cref{def:opening_bdlop} if $\msis_{\kappa,\kappa+\lambda+n,8\omega\beta}$ is hard. 
\end{lemma}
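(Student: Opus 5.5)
This is the standard binding argument for BDLOP with respect to weak openings, as in \cite{ALS20}. The plan is to reduce any binding violation to an $\msis_{\kappa,\kappa+\lambda+n,8\omega\beta}$ solution for the key $\bfA$.

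First I fix the setting. Suppose an adversary outputs a commitment $\com$ together with two weak openings $(\bar c,\mathbf{r}^\ast,m^\ast)$ and $(\bar c',\mathbf{r}'^\ast,m'^\ast)$ in the sense of \cref{def:opening_bdlop}, with $m^\ast\neq m'^\ast$. From $\mathbf{b}_i^T\mathbf{r}^\ast+m_i^\ast=\com_i=\mathbf{b}_i^T\mathbf{r}'^\ast+m_i'^\ast$ we see that $m^\ast\neq m'^\ast$ forces $\mathbf{r}^\ast\neq\mathbf{r}'^\ast$. From $\bfA\mathbf{r}^\ast=\com_0=\bfA\mathbf{r}'^\ast$ we get $\bfA(\mathbf{r}^\ast-\mathbf{r}'^\ast)=0$. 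This difference need not be short, since only $\bar c\mathbf{r}^\ast$ and $\bar c'\mathbf{r}'^\ast$ are bounded.

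Next I build the short vector
\[
\mathbf{z}:=\bar c'(\bar c\mathbf{r}^\ast)-\bar c(\bar c'\mathbf{r}'^\ast)=\bar c\bar c'(\mathbf{r}^\ast-\mathbf{r}'^\ast).
\]
This uses commutativity of $\mathcal{R}_q$. It satisfies $\bfA\mathbf{z}=\bar c\bar c'\bfA(\mathbf{r}^\ast-\mathbf{r}'^\ast)=0$. It is nonzero because $\bar c$ and $\bar c'$ are invertible and $\mathbf{r}^\ast\neq\mathbf{r}'^\ast$.

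To bound $\mathbf{z}$, I use the standard fact that for $a,b\in\mathcal{R}$ one has $\|ab\|\leq\|a\|_1\|b\|$, since multiplication by $X^j$ in $\mathbb{Z}[X]/(X^d+1)$ is a signed permutation of coefficients. Combining this with $\|\bar c'\|_1\leq2\omega$ and $\|\bar c\mathbf{r}^\ast\|\leq2\beta$ gives $\|\bar c'\bar c\mathbf{r}^\ast\|\leq4\omega\beta$, and the same bound holds for the second term. The triangle inequality then gives $\|\mathbf{z}\|\leq8\omega\beta$. So $\mathbf{z}$ is a valid $\msis_{\kappa,\kappa+\lambda+n,8\omega\beta}$ solution for $\bfA$, and the reduction simply forwards its MSIS instance as $\bfA$, with the $\mathbf{b}_i$ sampled honestly.

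The main point of care is the norm step. The $\ell_1$–$\ell_2$ product inequality has to be applied to representatives with centered coefficients, and we need $8\omega\beta<q$ so that no wraparound occurs. Under that assumption the bound holds both over $\mathcal{R}$ and modulo $q$. Nonzeroness of $\mathbf{z}$ relies only on the invertibility of the challenge differences, which is guaranteed by the weak-opening definition itself (see also \cref{corollary:c_invertible}).
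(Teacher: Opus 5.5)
Your proof is correct and is the standard argument of \cite{ALS20}, which the paper cites for this lemma without reproducing it: the vector $\bar c\bar c'(\mathbf{r}^\ast-\mathbf{r}'^\ast)$ lies in the kernel of $\bfA$, is nonzero because $\bar c,\bar c'$ are invertible, and has norm at most $\|\bar c'\|_1\|\bar c\mathbf{r}^\ast\|+\|\bar c\|_1\|\bar c'\mathbf{r}'^\ast\|\le 8\omega\beta$. Your no-wraparound caveat is not needed, since reducing a centered lift modulo $q$ never increases a coefficient's absolute value, so the $\ell_1$--$\ell_2$ product bound and the triangle inequality hold directly for the centered norms on $\mathcal{R}_q$ (and the MSIS definition already requires the bound to be below $q$).
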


\begin{lemma} [Binding Property of ABDLOP \cite{LNP22}]
\label{lemma:binding_abdlop}
    The ABDLOP commitment scheme is binding with respect to weak openings defined in \cref{def:opening_abdlop} if $\msis_{\kappa,\mu_1+\mu_2,8\omega\sqrt{ (\mathfrak{s}_1\sqrt{2\mu_1d})^2+ (\mathfrak{s}_2\sqrt{2\mu_2d})^2 }}$\footnote{We use the norm bound of $\norm{ca} \leq \norm{c}_1 \cdot \norm{a}$ similar to that of \cite{ALS20} rather than the $\eta$ factor bound in \cite{LNP22}.} is hard. 
\end{lemma}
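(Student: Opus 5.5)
The plan is to prove the binding property of ABDLOP (\cref{lemma:binding_abdlop}) by a direct reduction to $\msis$. Given two distinct weak openings of the same commitment, I will build a short nonzero vector $\mathbf{z}$ with $[\bfA_1\ \bfA_2]\mathbf{z}=0$. This mirrors the BDLOP argument of \cite{ALS20} behind \cref{lemma:binding_bdlop}.

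Suppose an adversary outputs a commitment $\com$ and two weak openings $(\bar c,\mathbf{r}^*,\mathbf{s}^*,m^*)$ and $(\bar c',\mathbf{r}'^*,\mathbf{s}'^*,m'^*)$ with $m^*\neq m'^*$, both satisfying \cref{def:opening_abdlop}. Set $\mathbf{z}=(\bar c'\bar c\mathbf{r}^*-\bar c\bar c'\mathbf{r}'^*,\ \bar c'\bar c\mathbf{s}^*-\bar c\bar c'\mathbf{s}'^*)$. Multiplying $\bfA_1\mathbf{r}^*+\bfA_2\mathbf{s}^*=\com_0$ by $\bar c\bar c'$, and doing the same for the second opening, then subtracting, gives $[\bfA_1\ \bfA_2]\mathbf{z}=0$.

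For the norm, I use $\|\bar c'\cdot(\bar c\mathbf{r}^*)\|\le\|\bar c'\|_1\|\bar c\mathbf{r}^*\|\le 2\omega\cdot 2\mathfrak{s}_1\sqrt{2\mu_1 d}$, as in the footnote. The same bound holds for the other three pieces, with $\mathfrak{s}_2\sqrt{2\mu_2d}$ on the $\mathbf{s}$-parts. By the triangle inequality, the $\mathbf{r}$-block has norm at most $8\omega\mathfrak{s}_1\sqrt{2\mu_1d}$ and the $\mathbf{s}$-block at most $8\omega\mathfrak{s}_2\sqrt{2\mu_2d}$. Since the two blocks occupy orthogonal coordinates, $\|\mathbf{z}\|\le 8\omega\sqrt{(\mathfrak{s}_1\sqrt{2\mu_1d})^2+(\mathfrak{s}_2\sqrt{2\mu_2d})^2}$, which matches the stated $\msis$ bound.

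The main obstacle is showing $\mathbf{z}\neq 0$. If $\mathbf{z}=0$, then the $\mathbf{s}$-block gives $\bar c\bar c'(\mathbf{s}^*-\mathbf{s}'^*)=0$. Both challenges are invertible, so $\mathbf{s}^*=\mathbf{s}'^*$. The message equations $\mathbf{b}_i^T\mathbf{s}^*+m_i^*=\com_i=\mathbf{b}_i^T\mathbf{s}'^*+m_i'^*$ then force $m^*=m'^*$, a contradiction. So $\mathbf{z}$ is nonzero. The reduction outputs $\mathbf{z}$ as an $\msis_{\kappa,\mu_1+\mu_2,\beta}$ solution, with success probability equal to the adversary's binding advantage.
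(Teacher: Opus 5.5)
Your proposal is correct and matches the argument the paper relies on. The paper gives no proof of its own: it cites \cite{LNP22}, and the footnote says the $\eta$ factor is replaced by the bound $\norm{\bar c'\bar c\mathbf{r}^*}\leq\norm{\bar c'}_1\norm{\bar c\mathbf{r}^*}$, which is exactly how you reach $8\omega\sqrt{(\mathfrak{s}_1\sqrt{2\mu_1d})^2+(\mathfrak{s}_2\sqrt{2\mu_2d})^2}$ from the two orthogonal blocks. As a side remark, your nonzero argument also forces $\mathbf{r}^*=\mathbf{r}'^*$, so the same reduction gives binding for the Ajtai-part message, which is the role $\mathbf{r}_{t,a,i}$ plays in the proof of consistency.
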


\begin{theorem}[\cite{BDLOP16,ALS20}]
   \label{coro:security_bdlop_linear}
   Protocol~\ref{proto:ZKPoO_bdlop_linear} (and \colorbox{Gainsboro}{Protocol~\ref{proto:ZKPoO_bdlop_linear}} ) is complete, 2-special sound, and honest-verifier zero-knowledge.
\end{theorem}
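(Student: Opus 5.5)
The statement is the standard BDLOP result: the proof of opening, and its linear-relation variant with relation $m=m'$, is complete, 2-special sound and honest-verifier zero-knowledge. I would follow the three-part template of \cite{BDLOP16,ALS20}, handling the opening protocol first and the linear variant second.

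\emph{Completeness.} An honest prover samples $\mathbf{y}\leftarrow D_{\mathfrak{s}}^{(\kappa+\lambda+n)d}$ and sends $\mathbf{w}=\bfA\mathbf{y}$. For the linear variant it additionally sends $v=\mathbf{b}^T\mathbf{y}-\mathbf{b}^T\mathbf{y}'$. On challenge $c$ it answers $\mathbf{z}=\mathbf{y}+c\mathbf{r}$ (and $\mathbf{z}'=\mathbf{y}'+c\mathbf{r}'$), subject to $\rej$.
\begin{itemize}
\item The algebraic checks hold identically: $\bfA\mathbf{z}=\mathbf{w}+c\com_0$, and $\mathbf{b}^T\mathbf{z}-\mathbf{b}^T\mathbf{z}'-c(\com_1-\com_1')=v$, using $m=m'$.
\item The norm check $\|\mathbf{z}\|\le\mathfrak{s}\sqrt{2(\kappa+\lambda+n)d}$ holds with overwhelming probability. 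This follows from \cref{lemma:rejsampling}, since the accepted $\mathbf{z}$ is statistically close to $D_{\mathfrak{s}}$, together with \cref{lemma:probability_distribution_tail_bound}.
\item Each attempt is accepted with probability about $1/M$, so the expected number of restarts is constant and completeness holds up to negligible error.
\end{itemize}

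\emph{HVZK.} The simulator proceeds as follows.
\begin{itemize}
\item Sample $c\leftarrow\mathcal{C}$ and $\mathbf{z},\mathbf{z}'\leftarrow D_{\mathfrak{s}}$.
\item Set $\mathbf{w}=\bfA\mathbf{z}-c\com_0$, and similarly for the primed commitment.
\item Set $v=\mathbf{b}^T(\mathbf{z}-\mathbf{z}')-c(\com_1-\com_1')$.
\end{itemize}
By \cref{lemma:rejsampling}, the real accepted response is within statistical distance $2^{-100}/M$ of $D_{\mathfrak{s}}$ and is independent of the witness. Given $(c,\mathbf{z})$, the values $\mathbf{w}$ and $v$ are determined uniquely by the verification equations in both worlds, so the transcripts are indistinguishable. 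For aborting transcripts one argues as in \cite{ALS20}: aborts can be simulated with the correct probability, or else only accepting transcripts are output in the non-interactive setting.

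\emph{2-special soundness.} Take two accepting transcripts $(\mathbf{w},c,\mathbf{z})$ and $(\mathbf{w},c',\mathbf{z}')$ with $c\neq c'$, and set $\bar c=c-c'$ and $\bar{\mathbf{z}}=\mathbf{z}-\mathbf{z}'$.
\begin{itemize}
\item We have $\|\bar c\|_1\le2\omega$, and $\bar c$ is invertible by \cref{corollary:c_invertible}. (This holds with overwhelming probability, which is the usual caveat, handled as in \cite{ALS20}.)
\item Define $\mathbf{r}^*=\bar c^{-1}\bar{\mathbf{z}}$. Then $\bfA\mathbf{r}^*=\com_0$ and $\|\bar c\mathbf{r}^*\|\le2\beta$.
\item Define $m_i^*=\com_i-\mathbf{b}_i^T\mathbf{r}^*$. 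This gives a weak opening in the sense of \cref{def:opening_bdlop}.
\end{itemize}

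For the linear variant, extract $(\mathbf{r}^*,m^*)$ and $(\mathbf{r}'^*,m'^*)$ in the same way. Subtracting the two linear verification equations gives
\[
\mathbf{b}^T(\bar{\mathbf{z}}-\bar{\mathbf{z}}')=\bar c\,(\com_1-\com_1').
\]
Dividing by $\bar c$ yields $m^*-m'^*=0$.

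The main obstacle is that the extracted opening could differ from the one a cheating prover "has in mind". As in \cite{ALS20}, I would argue that any two weak openings with distinct messages give an $\msis_{\kappa,\kappa+\lambda+n,8\omega\beta}$ solution via \cref{lemma:binding_bdlop}. Concretely, cross-multiply the two openings by $\bar c\bar c'$ to obtain a short nonzero kernel vector of $\bfA$. This makes the extracted message well defined and consistent across extractions.
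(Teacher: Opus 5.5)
Your proposal is correct and is the standard argument from \cite{BDLOP16,ALS20}. The paper does not reprove this theorem; it only cites those works, and it uses the same template for its own proof of Protocol~\ref{proto:pi_b}. Concretely: completeness follows from rejection sampling plus the Gaussian tail bound; HVZK comes from sampling $(c,\mathbf{z})$ and solving the verification equations for the first message; and extraction uses $\mathbf{r}^*=\bar c^{-1}\bar{\mathbf{z}}$, with the same caveat you flag that $\bar c$ is invertible (and $\|c\|_1\le\omega$ holds) only with overwhelming probability. One small remark: your closing MSIS/binding step is not needed for 2-special soundness itself, because subtracting the two linear verification equations already forces $m^*=m'^*$ for the extracted openings; binding only matters when the extracted message must agree with openings obtained elsewhere.
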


\begin{theorem}[\cite{LNP22}]
    \label{coro:security_abdlop}
    Protocol~\ref{proto:ZKPoO_abdlop} is complete, 2-special sound, and honest-verifier zero-knowledge.
\end{theorem}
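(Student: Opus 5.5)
The plan is to follow the standard Schnorr/Lyubashevsky template. I assume Protocol~\ref{proto:ZKPoO_abdlop} has the usual shape: the prover samples $\mathbf{y}_1 \leftarrow \mathcal{D}_{\mathfrak{s}_1}^{\mu_1 d}$ and $\mathbf{y}_2 \leftarrow \mathcal{D}_{\mathfrak{s}_2}^{\mu_2 d}$ and sends $\mathbf{w} = \bfA_1\mathbf{y}_1 + \bfA_2\mathbf{y}_2$. On a challenge $c \leftarrow \mathcal{C}$ it replies $\mathbf{z}_1 = \mathbf{y}_1 + c\mathbf{r}$ and $\mathbf{z}_2 = \mathbf{y}_2 + c\mathbf{s}$, after running $\rej$ on each. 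The verifier accepts if $\|\mathbf{z}_1\| \leq \mathfrak{s}_1\sqrt{2\mu_1 d}$, $\|\mathbf{z}_2\| \leq \mathfrak{s}_2\sqrt{2\mu_2 d}$ and $\bfA_1\mathbf{z}_1 + \bfA_2\mathbf{z}_2 = \mathbf{w} + c\,\com_0$. I would prove the three properties in the order completeness, zero-knowledge, special soundness.

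\emph{Completeness.} The linear check holds identically by linearity, since $\bfA_1(\mathbf{y}_1 + c\mathbf{r}) + \bfA_2(\mathbf{y}_2 + c\mathbf{s}) = \mathbf{w} + c(\bfA_1\mathbf{r} + \bfA_2\mathbf{s})$. Conditioned on acceptance in $\rej$, Lemma~\ref{lemma:rejsampling} says $\mathbf{z}_j$ is statistically close to $\mathcal{D}_{\mathfrak{s}_j}$, because $\mathfrak{s}_j = 11\,\|c\mathbf{r}\|$ (resp.\ $11\,\|c\mathbf{s}\|$) upper-bounded via $\|c\|_1 \leq \omega$. The Gaussian tail bound, Lemma~\ref{lemma:probability_distribution_tail_bound}(2), then gives the norm checks except with probability $2^{-\mu_j d/8}$. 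Each round aborts with probability at most about $1 - 1/M$, with $M \approx 3$ per rejection step, so the expected number of repetitions is constant.

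\emph{Honest-verifier zero-knowledge.} The simulator samples $c \leftarrow \mathcal{C}$ and $\mathbf{z}_j \leftarrow \mathcal{D}_{\mathfrak{s}_j}$. It then outputs the transcript with $\mathbf{w} := \bfA_1\mathbf{z}_1 + \bfA_2\mathbf{z}_2 - c\,\com_0$. With probability $1/M^2$ it outputs this accepting transcript, and otherwise it outputs an abort. By Lemma~\ref{lemma:rejsampling}, the real non-aborting $(\mathbf{z}_1, \mathbf{z}_2)$ are within statistical distance $2 \cdot 2^{-100}/M$ of independent Gaussians, independent of $c$ and of the witness. Since $\mathbf{w}$ is determined by $(c, \mathbf{z}_1, \mathbf{z}_2)$ through the verification equation, the two distributions are statistically close. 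If the protocol also reveals message-dependent components, for instance when $\com_i$ enters a linear check, the same argument applies because those components are likewise deterministic functions of the simulated values.

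\emph{2-special soundness.} This is the step with real content, and it is where the weak opening of Definition~\ref{def:opening_abdlop} comes from. Given accepting transcripts $(\mathbf{w}, c, \mathbf{z}_1, \mathbf{z}_2)$ and $(\mathbf{w}, c', \mathbf{z}_1', \mathbf{z}_2')$ with $c \neq c'$, subtract the two verification equations to get $\bfA_1\bar{\mathbf{z}}_1 + \bfA_2\bar{\mathbf{z}}_2 = \bar c\,\com_0$, where $\bar c = c - c'$. By the appendix lemma on challenge differences and Corollary~\ref{corollary:c_invertible}, $\bar c$ is invertible except with probability $\mathcal{O}(l q^{-d/l})$. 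The extractor therefore sets $\mathbf{r}^* = \bar{\mathbf{z}}_1/\bar c$, $\mathbf{s}^* = \bar{\mathbf{z}}_2/\bar c$ and $m_i^* = \com_i - \mathbf{b}_i^T\mathbf{s}^*$. It remains to check the conditions of Definition~\ref{def:opening_abdlop}:
\begin{itemize}
\item $\|\bar c\|_1 \leq 2\omega$ follows from the bound $\|c\|_1 \leq \omega$ on the challenge space.
\item $\|\bar c\,\mathbf{r}^*\| = \|\bar{\mathbf{z}}_1\| \leq 2\mathfrak{s}_1\sqrt{2\mu_1 d}$, and similarly for $\mathbf{s}^*$, by the triangle inequality applied to the verifier's norm checks.
\item $\bfA_1\mathbf{r}^* + \bfA_2\mathbf{s}^* = \com_0$ holds by construction.
\item $\mathbf{b}_i^T\mathbf{s}^* + m_i^* = \com_i$ holds by the definition of $m_i^*$.
\end{itemize}

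The main obstacle is to argue that the extracted object is meaningful, namely unique, and that the invertibility failure is negligible. Uniqueness comes from Lemma~\ref{lemma:binding_abdlop}: two distinct weak openings yield an $\msis$ solution of norm at most $8\omega\sqrt{(\mathfrak{s}_1\sqrt{2\mu_1 d})^2 + (\mathfrak{s}_2\sqrt{2\mu_2 d})^2}$. Knowledge soundness then follows from 2-special soundness via \cite{ACK21}.
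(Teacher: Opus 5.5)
Your proposal is correct and follows the same standard template the paper relies on via \cite{LNP22} and reproduces in its own proofs for $\pi^B$ and $\pi_i^C$. That template is: rejection sampling plus the Gaussian tail bound for completeness; a simulator that samples $c$ and Gaussian $\mathbf{z}_1,\mathbf{z}_2$ and back-solves $\mathbf{w}$ from the verification equation; and subtraction of two transcripts to obtain the weak opening $(\bar{\mathbf{z}}_1/\bar c,\bar{\mathbf{z}}_2/\bar c)$, with MSIS-based binding for uniqueness. One caveat, which the paper shares, is that with this challenge space (ternary $c$, $l=128$) invertibility of $\bar c$ and $\|\bar c\|_1\le 2\omega$ hold only with overwhelming probability over random challenges, not for every pair $c\neq c'$. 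So what you actually establish is special soundness up to a knowledge error $\mathcal{O}(l q^{-d/l})$, exactly as the paper does.
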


\begin{figure}[!ht]
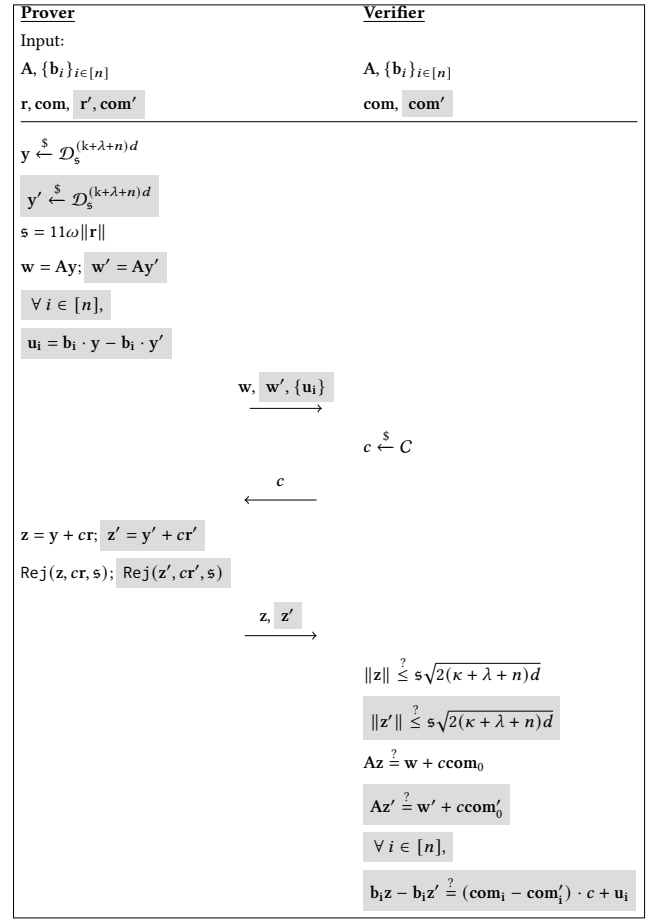

     \begin{adjustbox}{minipage=0.5\textwidth,scale=0.85}
    \pcb{
    \underline{\textbf{Prover}} \<\< \underline{\textbf{Verifier}}
    \\\text{Input:}
    \\\bfA,\{\mathbf{b}_i\}_{i\in[n]}\<\<\bfA,\{\mathbf{b}_i\}_{i\in[n]}
    \\\mathbf{r},\com, \colorbox{Gainsboro}{$\mathbf{r'}, \com'$}\<\<\com, \colorbox{Gainsboro}{$\com'$}
    \\[0.1\baselineskip][\hline] \\[-0.5\baselineskip]
    \mathbf{y}\xleftarrow[]{\$} \mathcal{D}_{\mathfrak{s}}^{(\mathrm{k}+\lambda+n)d}\\
    \colorbox{Gainsboro}{$\mathbf{y'}\xleftarrow[]{\$} \mathcal{D}_{\mathfrak{s}}^{(\mathrm{k}+\lambda+n)d}$}\\
    \mathfrak{s}=11\omega\norm{\mathbf{r}}\\ 
    \mathbf{w} = \bfA \mathbf{y}; \colorbox{Gainsboro}{$\mathbf{w'} = \bfA \mathbf{y'}$}\\ 
    \colorbox{Gainsboro}{$\forall i \in [n], $}\\
    \colorbox{Gainsboro}{$\mathbf{u_i} = \mathbf{b_i\cdot y - b_i\cdot y'}$}\\
    \< \sendmessageright*[4em]{\mathbf{w}, \colorbox{Gainsboro}{$\mathbf{w',\{u_i\}}$}} \\
    \<\< c \xleftarrow[]{\$} \mathbf{\mathcal{C}} \\
    \< \sendmessageleft*[4em]{c} \\
    \mathbf{z}=\mathbf{y} + c \mathbf{r}; \colorbox{Gainsboro}{$\mathbf{z'}=\mathbf{y'} + c \mathbf{r'}$}\\
    \rej(\mathbf{z},c\mathbf{r},\mathfrak{s}); \colorbox{Gainsboro}{$ \rej(\mathbf{z'},c\mathbf{r'},\mathfrak{s})$}\\
    \< \sendmessageright*[4em]{\mathbf{z}, \colorbox{Gainsboro}{$\mathbf{z'}$}}\\
    \<\< \norm{\mathbf{z}} \stackrel{?}{\leq} \mathfrak{s}\sqrt{2(\kappa+\lambda+n)d}\\
    \<\< \colorbox{Gainsboro}{$\norm{\mathbf{z'}} \stackrel{?}{\leq} \mathfrak{s}\sqrt{2(\kappa+\lambda+n)d}$}\\
    \<\< \bfA \mathbf{z} \stackrel{?}{=} \mathbf{w} + c\com_0\\
    \<\< \colorbox{Gainsboro}{$\bfA \mathbf{z'} \stackrel{?}{=} \mathbf{w'} + c\com_0'$} \\
    \<\< \colorbox{Gainsboro}{$\forall i \in [n],$} \\
    \<\< \colorbox{Gainsboro}{$\mathbf{b_i} \mathbf{z} - \mathbf{b_i z'} \stackrel{?}{=} (\mathbf{\com_i}-\mathbf{\com_i'}) \cdot c + \mathbf{u_i}$}
    }
    \end{adjustbox}
    \caption{Zero-Knowledge Proof of Opening for BDLOP Commitment~\cite{BDLOP16} and the addition for \colorbox{Gainsboro}{BDLOP message equality} marked by \colorbox{Gainsboro}{light gray background.}}
    \label{proto:ZKPoO_bdlop_linear}
\end{figure}

\begin{figure}[!ht]
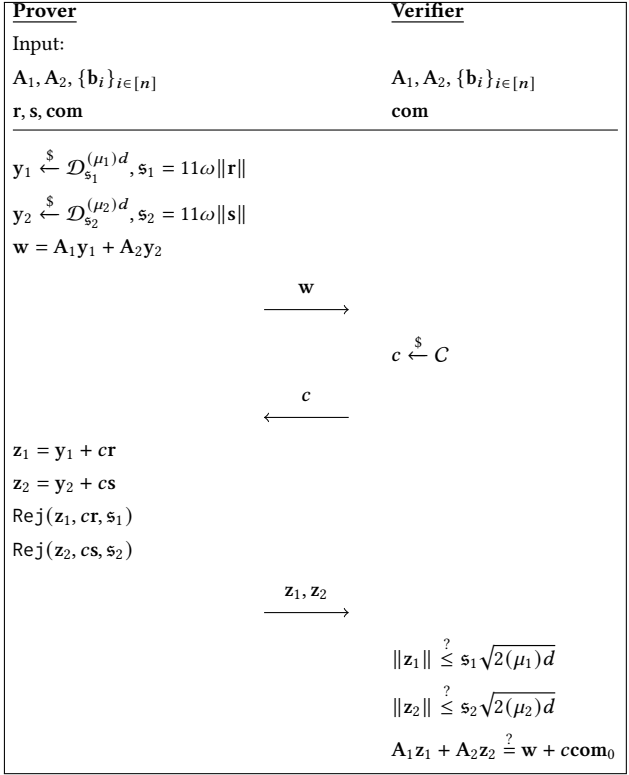

    \centering
    \begin{minipage}{0.5\textwidth}
    \pcb{
    \underline{\textbf{Prover}} \<\< \underline{\textbf{Verifier}}
    \\\text{Input:}
    \\\bfA_1, \bfA_2,\{\mathbf{b}_i\}_{i\in[n]}\<\<\bfA_1, \bfA_2,\{\mathbf{b}_i\}_{i\in[n]}
    \\\mathbf{r},\mathbf{s},\com\<\<\com
    \\[0.1\baselineskip][\hline] \\[-0.5\baselineskip]
    \mathbf{y}_1\xleftarrow[]{\$} \mathcal{D}_{\mathfrak{s}_1}^{(\mu_1)d},\mathfrak{s}_1=11\omega\norm{\mathbf{r}}\\
    \mathbf{y}_2\xleftarrow[]{\$} \mathcal{D}_{\mathfrak{s}_2}^{(\mu_2)d},\mathfrak{s}_2=11\omega\norm{\mathbf{s}}\\
    \mathbf{w} = \bfA_1 \mathbf{y}_1+\bfA_2 \mathbf{y}_2\\ 
    \< \sendmessageright*[4em]{\mathbf{w}} \\
    \<\< c \xleftarrow[]{\$} \mathbf{\mathcal{C}} \\
    \< \sendmessageleft*[4em]{c} \\
    \mathbf{z}_1=\mathbf{y}_1 + c \mathbf{r} \\
    \mathbf{z}_2=\mathbf{y}_2 + c \mathbf{s} \\
    \rej(\mathbf{z}_1,c\mathbf{r},\mathfrak{s}_1)\\
    \rej(\mathbf{z}_2,c\mathbf{s},\mathfrak{s}_2)\\
    \< \sendmessageright*[4em]{\mathbf{z}_1,\mathbf{z}_2}\\
    \<\< \norm{\mathbf{z}_1} \stackrel{?}{\leq} \mathfrak{s}_1\sqrt{2(\mu_1)d}\\
    \<\< \norm{\mathbf{z}_2} \stackrel{?}{\leq} \mathfrak{s}_2\sqrt{2(\mu_2)d}\\
    \<\< \bfA_1 \mathbf{z}_1 + \bfA_2 \mathbf{z}_2 \stackrel{?}{=} \mathbf{w} + c\com_0
    }
    \end{minipage}
    \caption{Zero-Knowledge Proof of Opening for ABDLOP Commitment~\cite{LNP22}}
    \label{proto:ZKPoO_abdlop}
\end{figure}

\subsubsection{Polynomial Evaluations with Vanishing Constant Coefficients \cite{LNP22}}
\label{sec:quad_proof}

We briefly recall the protocol for proving quadratic relations with vanishing polynomial evaluations from \cite{LNP22} needed in this paper.
Let the polynomial evaluation function be $$F_j(\mathbf{s}):= \mathbf{s}^T\mathbf{R_j}\mathbf{s} + \mathbf{r}_j^T \mathbf{s} + r_j \in R_q$$. Let $\widehat {F_j(\mathbf{s})}$ be the constant coefficient of the evaluation, we have that $\widehat {F_j(\mathbf{s})}=0$. First, we combine multiple evaluations into one using random linear combination specified by the verifier $x = \sum_{j=1}^M d_j(F_j(\mathbf{s}))$ and send the masked version of it by sampling $g \gets \{a\in R_q: \widehat a=0\}$ and set $h:=g+x$. Note that we can apply automorphism $\sigma$ onto the secret $\mathbf{s}$ as needed by the relation. For simplicity, we write without considering the extra polynomial $g$ first for the relation. A quadratic relation can again be specified as $\mathbf{R}' = \sum_{j=1}^{M} d_j \mathbf{R}_j$, $\mathbf{r}'= \sum_{j=1}^{M} d_j\mathbf{r}_j$, $r'=\sum_{j=1}^{M} d_j r_j- h$.

Now, given the updated combined equation
$$\mathbf{s}^T\mathbf{R'}\mathbf{s} + \mathbf{r}'^T \mathbf{s} + r' \in R_q$$

Let the messages  be $\mathbf{s}:= (\mathbf{s_1}||\mathbf{m})$ and the commitments be 
\begin{align*}
    \com_0&= \bfA_1\mathbf{s_1}+\bfA_2\mathbf{s_2}\\ 
    \com_i'&=\mathbf{b}_i^T\mathbf{s_2}+m_i~\text{for}~i\in [n].
\end{align*} where $\com' = \{\com_i\}_{i \in [n]}$.

Let $z_m$ be defined as the opening to the message. For example, in the ABDLOP, the masked opening is $\mathbf{z}=\mathbf{y_2}+c\mathbf{s_2}$, then the masked opening of the committed message under BDLOP portion will be $\mathbf{z_m}:= c\mathbf{com'} - \mathbf{Bz}_2= c\mathbf{m}-\mathbf{B}\mathbf{y_2}$. Let $\mathbf{z}$ be the masked opening for both the Atjai's part and BDLOP's part. 
Let define $\mathbf{y}:= [(\mathbf{y_1}) \quad (-\mathbf{By}_2)]^T$.
We define $g_1:= \mathbf{s}^T\mathbf{R}'\mathbf{y}+ \mathbf{y}^T\mathbf{R}'\mathbf{s}+\mathbf{r}'^T \mathbf{y}$ and $g_0 := \mathbf{y}^T\mathbf{R}\mathbf{y}$.

If $\sigma(c)=c$, then we have
$$\mathbf{z}^T\mathbf{R'}\mathbf{z} + c\mathbf{r}'+c^2r'= c^2(\mathbf{s}^T\mathbf{R'}\mathbf{s} + \mathbf{r}'^T \mathbf{s} + r') + cg_1+g_0,$$
and if we have that $\mathbf{s}^T\mathbf{R'}\mathbf{s} + \mathbf{r}'^T \mathbf{s} + r'=0$, then we can just check that $\mathbf{z}^T\mathbf{R'}\mathbf{z} + c\mathbf{r}'+c^2r' =cg_1+g_0$. 

First, we commit to $g_1$ ahead of time as $\mathbf{com}_{g_1} := \mathbf{B}_{g_1}\mathbf{s_2}+g_1$. We send both $\mathbf{com}_{g_1}$ and $v:= g_0 + \mathbf{B}_{g_1}\mathbf{y_2}$ as well as the usual opening commitment before receiving a challenge. The prover then sends the masked opening as usual using the received challenge. Then, the verifier can compute $f=c\com_{g_1}-\mathbf{B}_{g_1}\mathbf{z_2}=cg_1-\mathbf{B}_{g_1}\mathbf{y}_2$. The verifier now check
$$\mathbf{z}^T\mathbf{R'}\mathbf{z} + c\mathbf{r}'+c^2r'-f =^? v$$
where $f+v=cg_1+g_0$. The above holds when the relation is true ($\mathbf{s}^T\mathbf{R'}\mathbf{s} + \mathbf{r}'^T \mathbf{s} + r'=0$). Now consider that $h=g+x$, we just need to add in $g$ as part of the secret committed ($\mathbf{s}:=(\mathbf{s}_1||\mathbf{m}||g)$) and update the relation accordingly which translate to roughly one extra linear equation to check.

If the polynomial is constructed correctly as checked previously, then we can now check the constant coefficient of $h$ to confirm that it vanishes as required.
We refer to Fig.8 and Fig. 6 from \cite{LNP22} for the detailed protocol framework for proving polynomial evaluations with vanishing constant coefficients.

\section{Deferred Construction}
\label{app:construction}

\subsection{Proof of Balance Protocol}

The proof of balance protocol is given in Figure \ref{proto:pi_b}.

\begin{figure}[!ht]
    \centering
    \begin{adjustbox}{minipage=\linewidth,scale=0.85}
    \pcb{
    \underline{\textbf{Sender (Prover)}} \<\< \underline{\textbf{Participant i (Verifier)}}
    \\\text{Input:}
    \\\bfA,\bfB,\texttt{pk}_i\<\<\bfA,\bfB,\texttt{pk}_i
    \\\mathbf{r}_{t,a}=\sum_i\mathbf{r}_{t,a,i},\<\<\com_{t,a}= \sum_i\com_{t,a,i}
    \\\com_{t,a}= \sum_i\com_{t,a,i}
    \\[0.1\baselineskip][\hline] \\[-0.5\baselineskip]
    \mathbf{y}\xleftarrow[]{\$} \mathcal{D}_{\mathfrak{s}}^{(\mathrm{k}+\lambda+3)d},\\ 
    \mathfrak{s}= 11 \omega \| \mathbf{r}_{t,a} \|\\
    \mathbf{w} = \bfA \mathbf{y}\\ 
    u_{t,a}=\bfB^T\mathbf{y} \\
    \< \sendmessageright*[4em]{\mathbf{w}, u_{t,a}} \\
    \<\< c \xleftarrow[]{\$} \mathbf{\mathcal{C}} \\
    \< \sendmessageleft*[4em]{c} \\
    \mathbf{z}_{t,a}=\mathbf{y} + c\mathbf{r}_{t,a} \\
    \rej(\mathbf{z}_{t,a},c\mathbf{r}_{t,a} ,\mathfrak{s})\\
    \< \sendmessageright*[4em]{\mathbf{z}_{t,a}}\\
    \<\< \norm{\mathbf{z}_{t,a}} \stackrel{?}{\leq} \mathfrak{s}\sqrt{2(\kappa+\lambda+3)d}\\
    \<\< \bfA\mathbf{z}_{t,a} \stackrel{?}{=} \mathbf{w} + c \left(\sum_i \com_{0,t,a,i}\right)\\
    \<\< \bfB^T\mathbf{z}_{t,a}  \stackrel{?}{=} u_{t,a} + c\left(\sum_i \com_{3,t,a,i}\right)
    }
    \end{adjustbox}
    \caption{$\name$ Proof of Balance Protocol $\pi^B$}
    \label{proto:pi_b}
\end{figure}

\subsection{Proof of Equivalence Protocol}
\label{app:construction_poe}

We include an additional Ajtai commitment in our proof of equivalence protocol such that:
\begin{align} 
    \label{eq:poeq_ajtai_com}
    f_i= \bfA_3 \mathbf{m}_{i} + \bfA_4   \mathbf{s}_{t,a,i}
\end{align}
with $\mathbf{m}_{i} = \mathbf{s}_{1,i}||\mathbf{e}_{1,i}||\mathbf{s}_{2,i}||\mathbf{e}_{2,i}$ and 
\begin{itemize}
    \item $\bfA_3$, $\bfA_4$ are public parameters, and are sampled uniformly from $\CalRq^{\kappa \times 2(2\kappa+\lambda +3)}$ and $\CalRq^{\kappa \times (\kappa+\lambda +3)}$.
    \item $\mathbf{s}_{t,a,i}$ is a secret low-norm vector drawn from distribution $\chi^{(\kappa+\lambda +3)d}$ chosen by the sender (participant) for every commitment during Proof of Equivalence.
    \item $\bfB_{eq}'\in\CalRq^{256/d\times (\kappa+\lambda +3)}, \bfB_{eq}''\in\CalRq^{1\times (\kappa+\lambda +3)}$.
    \item $\mathbf{R}_i\leftarrow \texttt{Bin}_1^{256\times2d}$ is a random matrix sampled by the verifier as a challenge, and let $\mathbf{r}_j\in\CalRq^{256/d}$ denote the polynomial vector so that its coefficient vector is the $j$-th row of $\mathbf{R}_i$, where $j\in[256]$. 
    \item $e_j\in\CalRq^{256/d}$ is the polynomial vector such that its coefficient vector consists of all zeroes and one 1 in the $j$-th position.
    \item $\mathfrak{s}_1$, $\mathfrak{s}_2$, and $\mathfrak{s}_3$ denote the standard deviations, where $\mathfrak{s}_1=11\omega\sqrt{(4\kappa+2\lambda+6)d}$, $\mathfrak{s}_2=11\omega\sqrt{(\kappa+\lambda +3)d}$, and $\mathfrak{s}_3=11\sqrt{337}\ast2(\kappa+\lambda+3)d$.
\end{itemize}

First, we define
\begin{align*}
    \tilde{\bf A}=\begin{bmatrix}
        \bfA^T,\texttt{Id},0,0\\
        0,0,\bfA^T,\texttt{Id}
    \end{bmatrix};
    \hspace{5mm}
    \tilde{C}_{t,a,i}=\begin{bmatrix}
        \com_{\text{diff},0}^T,0,0,0\\
        0,0,\com_{\text{diff},0}^T,0
    \end{bmatrix};
\end{align*}

\begin{align*}
    u_i=\begin{bmatrix}
        \pk_{1,i}\\
        \pk_{2,i}
    \end{bmatrix};
    \hspace{5mm}
    \tilde{u}_{t,a,i}=\begin{bmatrix}
        \com_{\text{diff},1}\\
        \com_{\text{diff},2}
    \end{bmatrix}
\end{align*}
and give the protocol in \cref{proto:pi_e} and \cref{proto:pi_e_verification}

\begin{figure*}[!ht]
    \centering
    \begin{minipage}{\textwidth}
    \pcb[codesize=\scriptsize,colspace=-2mm]{
    \underline{\textbf{Participant $i$ (Prover)}} \<\< \underline{\textbf{Auditor (Verifier)}}
    \\\text{Input:}
    \\\bfA,\pk_i, \sk_{i} = (\mathbf{s}_{1,i},\mathbf{e}_{1,i},\mathbf{s}_{2,i},\mathbf{e}_{2,i}) \<\<\bfA,\texttt{pk}_i
    \\\bfA_3,\bfA_4,\bfB_{eq}',\bfB_{eq}'',\tilde{\bfA},\tilde{C}_{t,a,i},\tilde{u}_{t,a,i}\<\<\bfA_3,\bfA_4,\bfB_{eq}',\bfB_{eq}'',\tilde{\bfA},\tilde{C}_{t,a,i},\tilde{u}_{t,a,i}
    \\\com_{t,a,i}, \com_{t,a,i}', \mathbf{m}_{i} = \mathbf{s}_{1,i}||\mathbf{e}_{1,i}||\mathbf{s}_{2,i}||\mathbf{e}_{2,i}\<\<\com_{t,a,i}, \com_{t,a,i}'
    \\[0.1\baselineskip][\hline] \\[-0.5\baselineskip]
    \mathbf{s}_{t,a,i}\xleftarrow[]{\$}\chi^{(\kappa+\lambda +3)d}, g\xleftarrow[]{\$}\CalRq: \text{ constant of } g=0\\
    f_i= \bfA_3 \mathbf{m}_{i} + \bfA_4   \mathbf{s}_{t,a,i} \\
    \mathbf{y}_{1,i}\xleftarrow[]{\$} \mathcal{D}_{\mathfrak{s}_1}^{2(\kappa+\kappa+\lambda +3)d}, \mathbf{y}_{2,i}\xleftarrow[]{\$} \mathcal{D}_{\mathfrak{s}_2}^{(\kappa+\lambda +3)d}\\ 
    \mathbf{y}_{3,i}\xleftarrow[]{\$} \mathcal{D}_{\mathfrak{s}_3}^{256}\\
    \mathbf{u}_{1,i}=\bfB_{eq}'\mathbf{s}_{t,a,i}+\mathbf{y}_{3,i}, \mathbf{u}_{2,i}=\bfB_{eq}''\mathbf{s}_{t,a,i}+g\\
    \< \sendmessageright*{f, \mathbf{u}_{1,i}, \mathbf{u}_{2,i}} \\
    \<\< \bfR_i\xleftarrow[]{\$} \texttt{Bin}_1^{256\times2d}\\
    \< \sendmessageleft*{\bfR_i} \\
    \mathbf{l}_{t,a,i}=\tilde{C}_{t,a,i}\mathbf{m}_i-\tilde{u}_{t,a,i}\\
    \text{compute }\mathbf{z}_{3,i}\text{, s.t. }\vec{z}_{3,i}=\vec{y}_{3,i} + \bfR_i \vec{l}_{t,a,i} \\
    \rej(\vec{z}_{3,i},\bfR_i\vec{l}_{t,a,i}, \mathfrak{s}_3)\\
    \< \sendmessageright*{\mathbf{z}_{3,i}}\\
    \<\<d_{1,j}\leftarrow Z_q, \forall j\in[256]\\
    \< \sendmessageleft*{\{d_{1,j}\}_{j \in [256]}}\\
    x=\sum_{j=1}^{256} d_{1,j}(\begin{pcmbox}\begin{bmatrix}
        \sigma_{-1}(\mathbf{r}_j)^T \tilde{C}_{t,a,i}\>\sigma_{-1}(\mathbf{e}_j)^T
    \end{bmatrix}\end{pcmbox}
    \begin{pcmbox}\begin{bmatrix}
        \mathbf{m}_{i}\\\mathbf{y}_{3,i}
    \end{bmatrix}\end{pcmbox}\\
    - ( \sigma_{-1}(\mathbf{e}_j)^T\mathbf{z}_{3,i}+ \sigma_{-1}(\mathbf{r}_j)^T \tilde{u}_{t,a,i}))\\
    h=g+x,\mathbf{w}_{i} = \bfA_3\mathbf{y}_{1,i}+\bfA_4\mathbf{y}_{2,i}\\ 
    \bfT=
    \begin{pcmbox}\begin{bmatrix}
        \tilde{\bfA}\>0\>0\\
        \sum_{j=1}^{256} d_{1,j}\sigma_{-1}(\mathbf{r}_j)^T\tilde{C}_{t,a,i} \>\sum_{j=1}^{256} d_{1,j}\sigma_{-1}(\mathbf{e}_j)^T\>1
    \end{bmatrix}\end{pcmbox}\\
    \mathbf{v}_{i} = \bfT\begin{pcmbox}\begin{bmatrix}
        \mathbf{y}_{1,i}\\
        -\bfB_{eq}'\mathbf{y}_{2,i}\\
        -\bfB_{eq}''\mathbf{y}_{2,i}
    \end{bmatrix}\end{pcmbox}\\
    \< \sendmessageright*{h,\mathbf{w}_{i}, \mathbf{v}_{i}} \\
    \<\< c_{i} \xleftarrow[]{\$} \mathbf{\mathcal{C}} \\
    \< \sendmessageleft*{c_{i}} \\
    \mathbf{z}_{1,i}=\mathbf{y}_{1,i} + c_{i} \mathbf{m}_{i}, \mathbf{z}_{2,i}=\mathbf{y}_{2,i} + c_{i} \mathbf{s}_{t,a,i} \\
    \rej(\mathbf{z}_{1,i},c_{i} \mathbf{m}_{i} ,\mathfrak{s}_1), \rej(\mathbf{z}_{2,i},c_{i} \mathbf{s}_{t,a,i} ,\mathfrak{s}_2)\\
    \< \sendmessageright*{\mathbf{z}_{1,i},\mathbf{z}_{2,i}}\\
    \<\< \text{Runs } \mathcal{V}^{Eq} 
    }
    \end{minipage}
    \caption{$\name$ Proof of Equivalence Protocol $\pi_i^{Eq}$}
    \label{proto:pi_e}
\end{figure*}

\begin{figure}[!ht]
    \fbox{
    \begin{minipage}{0.45\textwidth}
    Accept if:
    \begin{enumerate}
    \renewcommand{\labelenumi}{(\alph{enumi})}
    \item \label{vrfy:pi_e_1} $\norm{\mathbf{z}_{1,i}} \stackrel{?}{\leq} \mathfrak{s}_1\sqrt{4(2\kappa+\lambda+3)d}$
    \item \label{vrfy:pi_e_2} $\norm{\mathbf{z}_{2,i}} \stackrel{?}{\leq} \mathfrak{s}_2\sqrt{2(\kappa+\lambda +3)d}$
    \item \label{vrfy:pi_e_3} $\norm{\mathbf{z}_{3,i}}_{\infty} \stackrel{?}{\leq} \sqrt{2k}\mathfrak{s}_3$ (This implies, $ \norm{\mathbf{l}_{t,a,i}}_\infty\stackrel{?}{\leq} 2\sqrt{2k}\mathfrak{s}_3$ by \cref{lemma:infty_bound} where $k=128$) 
    \item \label{vrfy:pi_e_4} Constant coefficient of $h=0$
    \item \label{vrfy:pi_e_5} $\bfA_3\mathbf{z}_{1,i}+\bfA_4\mathbf{z}_{2,i}\stackrel{?}{=}\mathbf{w}_{i} + c_{i}f_i$
    \item \label{vrfy:pi_e_6} $\mathbf{v}_i\stackrel{?}{=}\bfT\begin{bmatrix}
        \mathbf{z}_{1,i}\\
        c_i\mathbf{u}_{1,i}-\bfB_{eq}'\mathbf{z}_{2,i}\\
        c_i\mathbf{u}_{2,i}-\bfB_{eq}''\mathbf{z}_{2,i}
    \end{bmatrix}-c_i
    \begin{bmatrix}
        u_i \\ h+\sum_{j=1}^{256}d_{1,j}\sigma_{-1}(\mathbf{e}_j)^T\mathbf{z}_{3,i}+\sum_{j=1}^{256}d_{1,j}\sigma_{-1}(\mathbf{r}_j)^T\tilde{u}_{t,a,i}
    \end{bmatrix}$
    \end{enumerate}
    \end{minipage}
    }
    \caption{$\mathcal{V}^{Eq}$ verify routine for Protocol \hyperref[proto:pi_e]{$\pi^{Eq}_i$}}
    \label{proto:pi_e_verification}
\end{figure}

\subsection{Proof of Key Well-formness}
\label{app:pokw_section}
\label{proto:pi_key_verification}
\label{proto:pi_key}

Since the protocol is essentially the same with some slight modification on the linear relation when compared to proof of equivalence, we describe the  difference between them for simplicity. From \hyperref[proto:pi_e]{$\pi^{EQ}$}, $\mathbf{l}_{t,a,i}$ is just $\mathbf{m}_{t,a,i}$ because we are interested in the shortness of $\mathbf{m}_i$ rather than $\tilde{C}_{t,a,i}\mathbf{m}_i-\tilde{u}_{t,a,i}$. Consequently, we remove $\tilde{C}_{t,a,i}$ in $x$ and $\mathbf{T}$ as well as $\tilde{u}_{t,a,i}$ from $x$ and verification equation (f). The modified protocol is called $\pi^{KW}$.

\begin{theorem}
    Protocol $\pi_{}^{KW}$ is complete, knowledge sound, and honest-verifier zero-knowledge.
\end{theorem}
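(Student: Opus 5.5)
The proof of the theorem for $\pi^{KW}$ follows the structure of the proofs of \cref{thm:pi_c} and \cref{thm:pi_e}, since $\pi^{KW}$ is $\pi^{Eq}$ with $\tilde{C}_{t,a,i}$ and $\tilde{u}_{t,a,i}$ removed. I would treat the three properties in turn, with the range vector $\mathbf{l}_{t,a,i}$ replaced by $\mathbf{m}_i=\mathbf{s}_{1}||\mathbf{e}_{1}||\mathbf{s}_{2}||\mathbf{e}_{2}$ throughout.

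\textbf{Completeness.} I would check each verification equation of the modified $\mathcal{V}^{Eq}$ in turn.
\begin{enumerate}
\item The $\ell_2$ bounds on $\mathbf{z}_{1,i},\mathbf{z}_{2,i}$ follow from \cref{lemma:probability_distribution_tail_bound} together with the output distribution of \cref{lemma:rejsampling}.
\item The bound on $\norm{\mathbf{z}_{3,i}}_\infty$ follows because $\mathbf{R}_i\mathbf{m}_i$ is small when $\mathbf{m}_i$ is ternary, and $\mathbf{y}_{3,i}$ is Gaussian with parameter $\mathfrak{s}_3$.
\item The identities (e) and (f) follow by expanding $\mathbf{z}_{1,i}=\mathbf{y}_{1,i}+c_i\mathbf{m}_i$ and $\mathbf{z}_{2,i}=\mathbf{y}_{2,i}+c_i\mathbf{s}$. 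The top block of (f) reduces to $\tilde{\bfA}\mathbf{m}_i=u_i$, which is exactly $\pk_j=\bfA^T\mathbf{s}_j+\mathbf{e}_j$. The bottom block reduces to the definition of $h=g+x$.
\item The constant coefficient of $h$ vanishes because $g$ has zero constant term. By \cref{lemma:constant_coeff}, each summand of $x$ has constant coefficient $\langle \mathbf{r}_j,\mathbf{m}_i\rangle+\langle \mathbf{e}_j,\mathbf{y}_{3,i}\rangle-\langle\mathbf{e}_j,\mathbf{z}_{3,i}\rangle=0$ by the definition of $\mathbf{z}_{3,i}$.
\end{enumerate}
Finally, I would account for the rejection-sampling aborts via the constant repetition rate $M\approx 3$.

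\textbf{HVZK.} I would build the usual simulator.
\begin{enumerate}
\item Replace $f_i,\mathbf{u}_{1,i},\mathbf{u}_{2,i}$ by uniform elements. This is justified by the hiding of ABDLOP, i.e.\ dual $\mlwe$ in $\mathbf{s}$, applied jointly to the Ajtai and BDLOP parts.
\item Sample $\mathbf{z}_{3,i}$ from $\mathcal{D}_{\mathfrak{s}_3}^{256}$ and $\mathbf{z}_{1,i},\mathbf{z}_{2,i}$ from their Gaussians, using \cref{lemma:rejsampling} for statistical closeness.
\item Sample $h$ uniformly with zero constant coefficient. This is valid because $g$ masks $x$ on all other coefficients.
\item Solve (e) and (f) for $\mathbf{w}_i$ and $\mathbf{v}_i$.
\end{enumerate}
A standard hybrid argument then gives computational indistinguishability.

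\textbf{Knowledge soundness.} The protocol has three verifier challenges: $\mathbf{R}_i$, then $\{d_{1,j}\}$, then $c_i$. I would rewind only on the last challenge $c_i$ to get two accepting transcripts with $c\neq c'$. From (e), as in \cref{coro:security_abdlop}, this yields a weak ABDLOP opening $(\bar c,\mathbf{m}^*,\mathbf{s}^*)$ with $\mathbf{m}^*=\bar c^{-1}(\mathbf{z}_1-\mathbf{z}_1')$, where $\bar c$ is invertible by \cref{corollary:c_invertible}. The top rows of (f), compared across the two transcripts, then give $\tilde{\bfA}(\mathbf{z}_1-\mathbf{z}_1')=\bar c\,u_i$, so $\tilde{\bfA}\mathbf{m}^*=u_i$. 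This is exactly the linear relation $\pk_j=\bfA^T\mathbf{s}_j+\mathbf{e}_j$.

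\textbf{Main obstacle: the infinity-norm bound on $\mathbf{m}^*$.} This step uses the quadratic/linear relation technique of \cref{sec:quad_proof} from \cite{LNP22}.
\begin{enumerate}
\item If the polynomial $c_i^2(\cdot)+c_ig_1+g_0$ agrees for three challenges, the coefficient of $c^2$ forces $h$ to equal the honestly computed $g+x$ evaluated on the extracted message. Failing this yields an ABDLOP binding break, i.e.\ an $\msis$ solution by \cref{lemma:binding_abdlop}.
\item Condition (d) then says that the random combination $\sum_j d_{1,j}(\langle\mathbf{r}_j,\mathbf{m}^*\rangle+y_{3,j}-z_{3,j})$ has zero constant coefficient. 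Since the $d_{1,j}$ are uniform in $\mathbb{Z}_q$, this gives $\mathbf{z}_3=\mathbf{R}\mathbf{m}^*+\mathbf{y}_3^*$ except with probability $1/q$.
\item \cref{lemma:infty_bound} then gives $\norm{\mathbf{m}^*}_\infty\le 2\norm{\mathbf{z}_3}_\infty\le 2\sqrt{2k}\mathfrak{s}_3$, except with probability $2^{-256}$ over the choice of $\mathbf{R}_i$. Crucially, $\mathbf{y}_3^*$ is fixed by the earlier commitment $\mathbf{u}_{1,i}$ before $\mathbf{R}_i$ is sampled, so \cref{lemma:infty_bound} applies.
\end{enumerate}
The delicate part is tracking these soundness errors through the multi-round tree extraction. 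For the quantum claim I would defer to the collapsing argument, as for the other protocols in \cref{app:quantum_poc_poe_poa_etc}.
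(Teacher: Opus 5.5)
Your plan follows the paper's route. For $\pi^{KW}$ the paper simply defers to the $\pi^{Eq}$ proof: check (a)--(f) for completeness; simulate with Gaussian $\mathbf{z}$'s, uniform $f_i,\mathbf{u}_{1,i},\mathbf{u}_{2,i}$ and a zero-constant-term $h$; and for soundness rewind only on $c_i$, then pay $1/q$ for the $d_{1,j}$ and $2^{-256}$ for $\mathbf{R}_i$ via \cref{lemma:infty_bound}, with MSIS covering ABDLOP binding.

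One step of your soundness sketch is misdescribed. Step 1 of your ``main obstacle'' borrows the quadratic technique of \cref{sec:quad_proof} (three challenges, a $c^2$ coefficient, $g_1,g_0$). But $\pi^{KW}$, like $\pi^{Eq}$, proves a \emph{linear} relation: verification (f) is linear in $c_i$, there are no $g_1,g_0$ terms, and you only have two transcripts. The identity you need follows directly. Subtract the last row of (f) for $(c,\mathbf{z}_1,\mathbf{z}_2)$ and $(c',\mathbf{z}_1',\mathbf{z}_2')$, and divide by $\bar c$, to get
\[
\sum_{j} d_{1,j}\bigl(\sigma_{-1}(\mathbf{r}_j)^T\mathbf{m}^*+\sigma_{-1}(\mathbf{e}_j)^T\mathbf{y}_3^*\bigr)+g^*=h+\sum_{j} d_{1,j}\sigma_{-1}(\mathbf{e}_j)^T\mathbf{z}_{3,i},
\]
where $\mathbf{y}_3^*=\mathbf{u}_{1,i}-\bfB'_{eq}\mathbf{s}^*$ and $g^*=\mathbf{u}_{2,i}-\bfB''_{eq}\mathbf{s}^*$. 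This is exactly $h=g^*+x^*$.

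Binding is not what forces this identity. Its role is to make $(\mathbf{m}^*,\mathbf{y}_3^*,g^*)$ a function of the first message, so that they are fixed before $\mathbf{R}_i$ and the $d_{1,j}$ are sampled. That is precisely what your step 3, and the $1/q$ step, need.

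Also, a cheating prover's $g^*$ need not have zero constant term. Condition (d) therefore only yields $\sum_j d_{1,j}\alpha_j=-\widehat{g^*}$, where $\alpha_j:=\langle\mathbf{r}_j,\mathbf{m}^*\rangle+(\mathbf{y}_3^*)_j-(\mathbf{z}_{3,i})_j$. The $1/q$ bound still holds because $g^*$ is fixed before the $d_{1,j}$ are drawn.

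With these adjustments, no tree extraction over $\mathbf{R}_i$ or $\{d_{1,j}\}$ is needed, as in the paper.
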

The protocol is essentially similar to that of proof of equivalence but with modified linear relation. Therefore, the proof proceeds similarly to that of a ABDLOP linear proof, proof of opening, and approximate range proof given in \cite{LNP22}. 

\subsection{Proof of Asset for Compact Multi-Asset}

Let $\mathbf{-1} = (-1,\cdots,-1) \in R_q$ . For a polynomial ring $x:= (x_1,\cdots,x_d) \in R_q$ where $\forall i \in [d]: x_i<2^\beta$, we define $x_{bin} = bin(x)$, $pow(x):= \sum_{i=0}^{\lfloor \log_2 x \rfloor} 2^i \cdot X^i \in R_q$ and $\langle x_{bin}, pow(2^\beta-1)\rangle = x_i$. We further define a shifted version, let $1_j := (0_1,\cdots,0_{j-1},1,0_j,\cdots,0_d)$ and $pow(x, j):= \sum_{i=0}^{\lfloor \log_2 x \rfloor} 2^i \cdot X^{i+j} \in R_q$. We assume that $d = a\beta$ for some $a \in \mathbb{Z}^+$ for the simplicity of the protocol. Note that the challenge $c$ need to be stable under automorphism, $\sigma_{-1}(c)=c$ which means effectively that about half of the coefficients uniquely determine the rest. 
The proof first performs an approximate range proof check similar to PoC, PoE. Then, given $z=y+Rv_{bin}$ where $v_{bin}$ represents the flatten binary representation of the values encoded in $v$, we proceed to check the relations that z is well-formed, $v_{bin}$ is binary and $v_{bin}$ is the correct binary decomposition of $v$.
The relation is checked as one quadratic relation shuffled with random linear combinations. All these relations holds when the constant coefficient of the polynomial evaluation vanishes. Therefore, we proceed to invoke the zero-knowledge proof that checks single quadratic equation with automorphisms from \cite{LNP22} and further checks that the quadratic polynomial has vanishing constant coefficient. We also briefly recall the quadratic relation protocol in \cref{sec:quad_proof}. 

With the above, we construct proof of asset compact as shown in \cref{proto:pi_a2} and \cref{proto:pi_a2_verification}\footnote{The proof size can be potentially reduced if we commit the binary representation to the Atjai part of ABDLOP.}.

\begin{theorem}
    Protocol $\pi_{}^{A'}$ is complete, knowledge sound, and honest-verifier zero-knowledge.
\end{theorem}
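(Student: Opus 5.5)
The plan is to prove the three properties separately, treating $\pi^{A'}$ as an instance of the LNP22 framework: an ABDLOP opening, plus an approximate range proof on $\mathbf{z}=\mathbf{y}+\mathbf{R}v_{bin}$, plus a single quadratic relation with automorphisms whose constant coefficient must vanish. The proof mirrors the arguments for \cref{thm:pi_c} and \cref{thm:pi_e}.

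\textbf{Completeness.} Write out the honest prover's values and check each verification equation.
\begin{itemize}
\item The opening equation and the masked relation $\mathbf{z}^T\mathbf{R}'\mathbf{z}+c\mathbf{r}'^T\mathbf{z}+c^2r'-f=v$ hold identically (\cref{sec:quad_proof}), using $\sigma_{-1}(c)=c$.
\item The constant coefficient of $h=g+x$ vanishes. This is because each combined relation (well-formedness of $\mathbf{z}$, $\langle v_{bin},v_{bin}-\mathbf{1}\rangle=0$, and $\langle v_{bin_i},pow(2^\beta-1,j)\rangle-v_i=0$) has zero constant coefficient by \cref{lemma:constant_coeff}, and $g$ is sampled with zero constant coefficient.
\item The norm checks pass with overwhelming probability by the Gaussian tail bound \cref{lemma:probability_distribution_tail_bound}.
\item Rejection sampling aborts with probability about $1-1/M$ per coordinate block, so the expected number of repetitions is constant (\cref{lemma:rejsampling}).
\end{itemize}

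\textbf{Honest-verifier zero-knowledge.} Build the simulator in the usual hybrid order:
\begin{enumerate}
\item Replace the accepted responses $\mathbf{z}_1,\mathbf{z}_2,\mathbf{z}_3$ by fresh discrete Gaussians, justified by \cref{lemma:rejsampling}.
\item Replace the BDLOP-part commitments (the committed $v_{bin}$, $g$, $g_1$, $\mathbf{y}_3$) by uniform values, using dual MLWE hiding.
\item Sample $h$ as a uniform polynomial with zero constant coefficient. This is statistically identical because $g$ masks every other coefficient.
\item Solve the verification equations backwards for $\mathbf{w}$ and $v$.
\end{enumerate}

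\textbf{Knowledge soundness.} This is the main obstacle. The extractor rewinds on the challenge $c$ to obtain two accepting transcripts. Invertibility of $\bar c$ (\cref{corollary:c_invertible}) gives a weak ABDLOP opening $(\bar c,\mathbf{s}_1^*,\mathbf{s}_2^*,\mathbf{m}^*)$, and binding (\cref{lemma:binding_abdlop}) makes it unique under MSIS. The argument then runs in three steps.
\begin{enumerate}
\item \emph{The quadratic relation holds.} Substitute the extracted opening into the verification equation. This yields a polynomial identity in $c$ of degree two that holds for at least three distinct challenges, so we need 3-special soundness rather than 2. Its leading term is the combined relation $\mathbf{s}^T\mathbf{R}'\mathbf{s}+\mathbf{r}'^T\mathbf{s}+r'$. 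Since a degree-2 polynomial with three roots in $c$ whose pairwise differences are invertible must vanish, the relation equals $0$, so its constant coefficient, namely $\widehat h$, is zero.
\item \emph{Each sub-relation holds.} Because $d_j$ was chosen after the committed values, a Schwartz–Zippel argument over $\mathbb{Z}_q$ shows that each individual constant coefficient vanishes, except with probability $1/q$.
\item \emph{No overflow.} The approximate range proof with \cref{lemma:2_bound_rwy} gives $\|v_{bin}\|\le\psi\sqrt{\beta d}$ except with probability $2^{-128}$. Combined with the parameter choice $\psi^2\beta d+\psi\beta d<q$, the identity $\sum a_i(a_i-1)=0$ holds over $\mathbb{Z}$, which forces every $a_i\in\{0,1\}$. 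The linear relations then give $v_i=\sum_k 2^k a_{i,k}<2^\beta$.
\end{enumerate}
The bookkeeping step I expect to be delicate is ordering the rewinds and tracking the error terms: the tree of transcripts must fork on $\mathbf{R}$, on $d_j$, and on $c$. I would invoke the generalized special-soundness-implies-knowledge-soundness result \cite{ACK21} for this multi-round tree. The quantum version would then follow as in \cref{app:quantum_poc_poe_poa_etc}.
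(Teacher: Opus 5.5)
Your decomposition matches what the paper intends. Its proof is a one-line deferral to the LNP22 opening, quadratic-relation and approximate-range-proof arguments, and the content of your Steps 1--3 is right. The gap is in the rigorous bookkeeping you propose for knowledge soundness, which would fail.

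\textbf{Forking on $\mathbf{R}$ and the $d_j$ does not work.} You cannot fork the transcript tree on $\mathbf{R}$ and on the $d_j$ and then invoke the $(k_1,\dots,k_\mu)$-special-soundness theorem of \cite{ACK21}. That theorem needs every $k_i$ to be polynomial, and neither of those rounds is $k$-special sound for polynomial $k$.
\begin{itemize}
\item For a fixed $\vec{l}$ with $\|\vec{l}\|\ge b$, \cref{lemma:2_bound_rwy} only says the bad $\mathbf{R}$ have probability mass at most $2^{-128}$ under $\texttt{Bin}_1^{256\times\beta d}$. That still allows exponentially many bad matrices, so no polynomial number of distinct $\mathbf{R}$'s certifies shortness.
\item The bad $(d_j,d'_j,d''_j)$ form an affine hyperplane of $\mathbb{Z}_q^{256+\beta+d}$, so polynomially many distinct $d$'s certify nothing.
\item Taking $k_i=1$ in those rounds does not help either. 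The protocol is not $(1,1,3)$-special sound, because one lucky $\mathbf{R}$ lets a long $\vec{l}$ through.
\end{itemize}

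\textbf{The fix: rewind only on $c$.} Your Steps 2 and 3 already tacitly assume this. Keep the whole prefix of one accepting run fixed: the first message, $\mathbf{R}$, $\mathbf{z}_2$, the $d_j$, and $h,\mathbf{w}_1,\mathbf{w}_2,\mathbf{v},\mathbf{u}_{g_1}$. Then resample only $c$ to get your three transcripts.

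The weak openings of $\mathbf{u}_0,\{\mathbf{u}_{bin_\alpha}\},\mathbf{u}_{y_2},\mathbf{u}_g$ are unique under MSIS, and these commitments precede $\mathbf{R}$ and the $d_j$. So the extracted $(\vec{l}^*,\mathbf{y}_2^*,g^*)$ is a function of the first message alone. This gives $\Pr[\text{accept}\wedge\mathbf{R}\text{ bad}]\le 2^{-128}$ and $\Pr[\text{accept}\wedge d\text{ bad}]\le 1/q$, up to the MSIS advantage, and both terms are added to the knowledge error.

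You must bound these joint events, not $\Pr[\mathbf{R}\text{ bad}]$ alone, because conditioning on acceptance biases $\mathbf{R}$. This is how the paper handles the $1/q$ and $2^{-256}$ terms for $\pi^C$ and $\pi^{Eq}$.

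\textbf{A smaller gap in Step 1.} Because $\pi^{A'}$ needs $\sigma_{-1}(c)=c$, its challenges come from the $\sigma_{-1}$-fixed subset of $\mathcal{C}$. These have coefficients with $c_{d-i}=-c_i$ and $c_{d/2}=0$, about $3^{d/2}$ elements in all. They do not come from $\mathcal{C}$ with i.i.d.\ coefficients.

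\cref{corollary:c_invertible} and the per-NTT-slot probability bound behind it were derived only for the i.i.d.\ case. So you cannot cite them for the pairwise invertibility of your three challenge differences. That estimate has to be redone for the restricted space, as LNP22 does for its own challenge set.

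Relatedly, your extractor now needs three transcripts. The two-challenge quantum template of \cref{app:quantum_poc_poe_poa_etc}, which rests on Unruh's two-projection bound, therefore does not carry over verbatim. This does not affect the classical statement you are proving.
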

The proof proceeds similarly to that of ABDLOP quadratic proof, proof of opening, and approximate range proof given in \cite{LNP22}.

\begin{figure*}[!ht]
    \centering
    \begin{minipage}{\textwidth}
    \pcb[codesize=\scriptsize,colspace=-2mm]{
    \underline{\textbf{Participant $i$ (Prover)}} \<\< \underline{\textbf{Auditor (Verifier)}}
    \\\text{Input:}
    \\\bfA, \mathbf{B}, \bfA_a,\bfB_{a}',\bfB_{a}'', \beta, \hat \beta := \{ pow(2^\beta-1, (j-1)\beta)\}_{j \in [d/\beta]} \<\<\bfA, \mathbf{B}, \bfA_a,\bfB_{a}',\bfB_{a}'', \beta, \hat \beta := \{ pow(2^\beta-1, (j-1)\beta)\}_{j \in [d/\beta]}
    \\\com := [\com_0, \dots, \com_3]^T := [\mathbf{Ar}, \dots, \mathbf{Br}+v]^T  \<\< \com
    \\ \mathbf{r}, v
    \\[0.1\baselineskip][\hline] \\[-0.5\baselineskip]
    g\xleftarrow[]{\$}\CalRq: \text{ constant of } g=0\\
    \mathbf{y}_{1}\xleftarrow[]{\$} \mathcal{D}_{\mathfrak{s}_1}^{(\kappa+\lambda +3 + \beta)d}, \mathbf{y}_{2}\xleftarrow[]{\$} \mathcal{D}_{\mathfrak{s}_2}^{256}\\
    \mathbf{y}_{3}\xleftarrow[]{\$} \mathcal{D}_{\mathfrak{s}_3}^{(\kappa+\lambda +3)d}, s \gets \chi^{(\kappa+\lambda+3+\beta)d} \\
    \mathbf{u}_{0}=\bfA_{a} \mathbf{s}, \mathbf{u}_{y_2}=\bfB_{y_2}\mathbf{s}+\mathbf{y}_{2}, \mathbf{u}_{g}=\bfB_{g}\mathbf{s}+g\\
    \forall \alpha \in [\beta], v_{bin_\alpha} = bin(v_{(\alpha-1) \cdot (d / \beta) + 1}||\cdots||v_{ (\alpha) \cdot (d / \beta)}) \\ 
    \quad\quad\quad\mathbf{u}_{bin_\alpha}=\bfB_{bin_\alpha}\mathbf{s}+v_{bin_\alpha}\\
    \< \sendmessageright*{\mathbf{u}_{0}, \mathbf{u}_{y_2}, \mathbf{u}_{g}, \{\mathbf{u}_{bin_\alpha}\}_{\alpha \in [\beta]}} \\
    \<\< \bfR\xleftarrow[]{\$} \texttt{Bin}_1^{256\times \beta d}\\
    \< \sendmessageleft*{\bfR} \\
    \vec l = \vec v_{bin_1} || \cdots || \vec v_{bin_{\beta}} \\ 
    \text{Compute } \mathbf{z}_{2}\text{, s.t. }\vec{z}_{2}=\vec{y}_{2} + \bfR \vec l; \quad \rej(\vec{z}_{2},\bfR \vec l, \mathfrak{s}_2)\\
    \< \sendmessageright*{\mathbf{z}_{2}}\\
    \<\<\{d_{j}\}, \{d_j'\}, \{d_j''\} \sample (Z_q)^{256} \times (Z_q)^\beta \times (Z_q)^d\\
    \< \sendmessageleft*{\{d_{j}\}_{j \in [256]}, \{d_{j}'\}_{j \in [\beta]}, \{d_{j}''\}_{j \in [d]}}\\
    x=\sum_{j=1}^{256} d_{j}(\begin{pcmbox}\begin{bmatrix}
        \sigma_{-1}(\mathbf{r}_j)^T \ \>\sigma_{-1}(\mathbf{e}_j)^T
    \end{bmatrix}\end{pcmbox}
    \begin{pcmbox}\begin{bmatrix}
        v_{bin_1} || \cdots || v_{bin_{\beta}} \\\mathbf{y}_{2}
    \end{bmatrix}\end{pcmbox}\\
    - ( \sigma_{-1}(\mathbf{e}_j)^T\mathbf{z}_{2}) ) + \sum_{j=1}^{\beta} d'_j (\sigma_{-1}(v_{bin_j})^T v_{bin_j} + \sigma_{-1}(\mathbf{-1})^T v_{bin_j})\\
     + \sum_{j=1}^{d} d_j''(\sigma_{-1}(\hat \beta_{j \mod (d/\beta)}))^T v_{bin_{\lfloor j.\beta / (d+1) \rfloor}} - \sigma_{-1}(1_j)^T v) \\
     \codecomment{Well-formness of ($z_2$) $\wedge$  <$v_{bin_\alpha}, v_{bin_\alpha} - \mathbf{1}$>=0 $\wedge$ $v_{bin_\alpha}$ is the correct binary representation) }\\
    h=g+x,\mathbf{w_1} = \bfA_a \mathbf{y}_{1}, \mathbf{w_2} = \bfA \mathbf{y}_{3}\\ 
    y_{{bin_{j}}} = -B_{bin_{j}}\mathbf{y}_1, y_{v} = -\mathbf{B}\mathbf{y}_3, y_{y_2} = -\mathbf{B}_{y_2}\mathbf{y}_1, y_g = -\mathbf{B}_g\mathbf{y}_1\\
    g_1 = \sum_{j=1}^{256} d_{j}(\begin{pcmbox}\begin{bmatrix}
        \sigma_{-1}(\mathbf{r}_j)^T \ \>\sigma_{-1}(\mathbf{e}_j)^T
    \end{bmatrix}\end{pcmbox}
    \begin{pcmbox}\begin{bmatrix}
        y_{bin_1} || \cdots || y_{bin_\beta} \\ y_{y_2}
    \end{bmatrix}\end{pcmbox}\\   
    + \sum_{j=1}^{\beta} d_j' (\sigma_{-1}(v_{bin_j})^T\cdot (y_{bin_j}) + \sigma_{-1}(y_{bin_j})^T \cdot v_{bin_j} + \sigma_{-1}(\mathbf{-1})^T \cdot (y_{bin_j})) \\
    + \sum_{j=1}^{d} d_j'' (\sigma_{-1}(\hat \beta_{j \mod (d/\beta)}))^T  y_{bin_{\lfloor j.\beta / (d+1) \rfloor}} - \sigma_{-1}(1_j)^T y_v) + y_g \\
    \mathbf{u}_{g_1} = \bfB_{g_1}\mathbf{s} + g_1; \quad \mathbf{v} = \sum_{j=1}^{\beta} d_j' \sigma_{-1}(\mathbf{y}_{bin_j})^T \cdot \mathbf{y}_{bin_j}  + \bfB_{g_1}\mathbf{y}_1 \\
    \< \sendmessageright*{h,\mathbf{w_1}, \mathbf{w_2}, \mathbf{v}, \mathbf{u}_{g_1}} \\
    \<\< c \xleftarrow[]{\$} \mathbf{\mathcal{C}} \\
    \< \sendmessageleft*{c} \\
    \mathbf{z}_{1}=\mathbf{y}_{1} + c \mathbf{s}, \mathbf{z}_{3}=\mathbf{y}_{3} + c \mathbf{r}; \quad \rej(\mathbf{z}_{1},c_{i} \mathbf{r} ,\mathfrak{s}_1), \rej(\mathbf{z}_{3},c_{i} \mathbf{r} ,\mathfrak{s}_3)\\ \\
    \< \sendmessageright*{\mathbf{z}_{1}, \mathbf{z}_3}\\
    \<\< \text{Runs } \mathcal{V}^{A'} 
    }
    \end{minipage}
    \caption{$\nameextension$ Proof of Asset for Compact Multi-Asset Protocol $\pi_i^{A'}$ }
    \label{proto:pi_a2}
\end{figure*}

\begin{figure}[!ht]
    \fbox{
    \begin{minipage}{0.45\textwidth}
    Compute $\forall \alpha \in [\beta]: z_{bin_\alpha} = c\mathbf{u}_{bin_{\alpha}}- \bfB_{bin_{\alpha}}\mathbf{z}_1 $ , $z_g = c\mathbf{u}_{_g}-\mathbf{B}_g\mathbf{z}_1$, $z_{y_2} = c\mathbf{u}_{y_2}- \bfB_{y_2}\mathbf{z}_1$, $z_v = c\com_{3}- \bfB\mathbf{z}_3$,$ \ f = c\mathbf{u}_{g_1}-\bfB_{g_1}\mathbf{z}_1$\\
    Accept if:
    \begin{enumerate}
    \renewcommand{\labelenumi}{(\alph{enumi})}
    \item \label{vrfy:pi_a2_1} $\norm{\mathbf{z}_{1}} \stackrel{?}{\leq} \mathfrak{s}_1\sqrt{2(\kappa+\lambda+3+\beta)d} \wedge \norm{\mathbf{z}_{3}} \stackrel{?}{\leq} \mathfrak{s}_3\sqrt{2(\kappa+\lambda+3)d}$
    \item \label{vrfy:pi_a2_3} $\norm{\mathbf{z}_{2}} \stackrel{?}{\leq} 1.64\cdot\mathfrak{s}_2\sqrt{256}$ (This implies, $ \norm{v}\stackrel{?}{\leq} \frac{2}{\sqrt{26}} (1.64) \mathfrak{s}_2 \sqrt{256}$ by \cref{lemma:2_bound_rwy}) 
    \item \label{vrfy:pi_a2_4} Constant coefficient of $h=0$
    \item \label{vrfy:pi_a2_5} $\bfA_a \mathbf{z}_1 = \mathbf{w}_1+c\cdot \mathbf{u}_0 \wedge \ \bfA\mathbf{z}_{3}\stackrel{?}{=}\mathbf{w}_2 + c \cdot \com_0$
    \item \label{vrfy:pi_a2_6} $\mathbf{v}\stackrel{?}{=}
    \sum_{j=1}^{256} d_{j}(
    \begin{bmatrix}
    \sigma_{-1}(\mathbf{r}_j)^T \quad \sigma_{-1}(\mathbf{e}_j)^T
    \end{bmatrix}\
    \begin{bmatrix}
        z_{bin_1}||\cdots||z_{bin_\beta} \\ z_{y_2}
    \end{bmatrix})\\   
    + \sum_{j=1}^{\beta} d_j'(\sigma_{-1}(z_{bin_{j}})^T z_{bin_j}+c\sigma_{-1}(\mathbf{-1})^T z_{bin_j})) \\
    + c \sum_{j=1}^{d} d_j''(\sigma_{-1}(\hat \beta_{j \mod d/\beta})^T z_{bin_{\lfloor j.\beta / (d+1) \rfloor}} - \sigma_{-1}(1_j)^T z_v) + cz_g
    \\-c^2
    \begin{bmatrix}
        h+\sum_{j=1}^{256}d_{j}\sigma_{-1}(\mathbf{e}_j)^T\mathbf{z}_{2}
    \end{bmatrix} - f$
    \end{enumerate}
    \end{minipage}
    }
    \caption{$\mathcal{V}^{A'}$ verify routine for Protocol \hyperref[proto:pi_a2]{$\pi^{A'}_i$}}
    \label{proto:pi_a2_verification}
\end{figure}

\section{Deferred Proofs}

\subsection{Proof of Balance}
\label{app:proof_of_balance}

We restate the theorem for convenient.
\begin{theorem}
    Protocol $\pi_{}^B$ is complete, knowledge sound, and honest-verifier zero-knowledge.
\end{theorem}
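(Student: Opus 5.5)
We prove the three properties of Protocol $\pi^B$ (\cref{proto:pi_b}) separately, treating it as a BDLOP proof of opening of the aggregate commitment $\com_{t,a}=\sum_i\com_{t,a,i}$ together with a linear check that the fourth slot opens to zero.

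\emph{Completeness.}
The honest prover holds $\mathbf{r}_{t,a}=\sum_i\mathbf{r}_{t,a,i}$ with $\sum_i v_{t,a,i}=0$, so $\sum_i\com_{0,t,a,i}=\bfA\mathbf{r}_{t,a}$ and $\sum_i\com_{3,t,a,i}=\bfB^T\mathbf{r}_{t,a}$. Expanding $\mathbf{z}_{t,a}=\mathbf{y}+c\mathbf{r}_{t,a}$ gives $\bfA\mathbf{z}_{t,a}=\mathbf{w}+c\sum_i\com_{0,t,a,i}$ and $\bfB^T\mathbf{z}_{t,a}=u_{t,a}+c\sum_i\com_{3,t,a,i}$ identically. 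Conditioned on $\rej$ accepting, $\mathbf{z}_{t,a}$ is statistically close to $\mathcal{D}_{\mathfrak{s}}^{(\kappa+\lambda+3)d}$ by \cref{lemma:rejsampling}. The norm check therefore passes by the tail bound \cref{lemma:probability_distribution_tail_bound}(2), except with probability $2^{-(\kappa+\lambda+3)d/8}$. The expected number of restarts is about $M\approx 3$.

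\emph{Honest-verifier zero-knowledge.}
The simulator samples $c\sample\mathcal{C}$ and $\mathbf{z}\sample\mathcal{D}_{\mathfrak{s}}^{(\kappa+\lambda+3)d}$. It outputs the transcript only with probability $1/M$, to match the real abort rate. It then sets $\mathbf{w}:=\bfA\mathbf{z}-c\sum_i\com_{0,t,a,i}$ and $u_{t,a}:=\bfB^T\mathbf{z}-c\sum_i\com_{3,t,a,i}$. In the real transcript, $(\mathbf{w},u_{t,a})$ are uniquely determined by $(c,\mathbf{z})$ through the verification equations. So indistinguishability reduces to the rejection-sampling statement of \cref{lemma:rejsampling}, with statistical distance $2^{-100}/M$. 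No computational assumption is needed, because nothing beyond the given public commitments is simulated.

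\emph{Knowledge soundness.}
This is the substantive step. We first show 2-special soundness and then invoke \cite{ACK21} for classical knowledge soundness. The quantum version is deferred to the collapsing argument of Appendix \ref{sec:qpok-pob-collapsing}.

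Given two accepting transcripts $(\mathbf{w},u,c,\mathbf{z})$ and $(\mathbf{w},u,c',\mathbf{z}')$ with $c\neq c'$, set $\bar c=c-c'$ and $\bar{\mathbf{z}}=\mathbf{z}-\mathbf{z}'$. Subtracting the verification equations gives
\[
\bfA\bar{\mathbf{z}}=\bar c\sum_i\com_{0,t,a,i},\qquad \bfB^T\bar{\mathbf{z}}=\bar c\sum_i\com_{3,t,a,i}.
\]
By \cref{corollary:c_invertible}, $\bar c$ is invertible with overwhelming probability, so we define $\mathbf{r}^*:=\bar c^{-1}\bar{\mathbf{z}}$. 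Then:
\begin{itemize}
\item $\bfA\mathbf{r}^*=\com^{sum}_{0,t,a}$;
\item $\bar c\mathbf{r}^*=\bar{\mathbf{z}}$ has norm at most $2\mathfrak{s}\sqrt{2(\kappa+\lambda+3)d}$, by the triangle inequality on the two accepted norm checks;
\item $\|\bar c\|_1\le 2\omega$;
\item the message $m_3^*:=\com^{sum}_{3,t,a}-\bfB^T\mathbf{r}^*=0$, by the second equation.
\end{itemize}
This is exactly a witness for $\relation_{PoB}$, and it is a weak opening in the sense of \cref{def:opening_bdlop}.

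\emph{Main obstacle.} The delicate point is that this weak opening opens to message $0$ in a meaningful way, so that no other opening with nonzero $m_3$ exists. We handle this by appealing to the binding of BDLOP with respect to weak openings (\cref{lemma:binding_bdlop}) under $\msis_{\kappa,\kappa+\lambda+3,8\omega\beta}$: any second weak opening to a different message yields an MSIS solution. The remaining knowledge error comes from the challenge space, about $1/|\mathcal{C}|$ plus the non-invertibility probability, both negligible.
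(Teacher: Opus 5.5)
Your proposal is correct and follows essentially the same route as the paper's proof. Completeness comes from the rejection-sampling and tail-bound lemmas plus the identity $\bfB^T\mathbf{z}_{t,a}=u_{t,a}+c\sum_i\com_{3,t,a,i}$, which uses $\sum_i v_{t,a,i}=0$. Zero-knowledge comes from sampling $(c,\mathbf{z}_{t,a})$ and solving the two verification equations for $(\mathbf{w},u_{t,a})$. Soundness comes from extracting $\mathbf{r}^*=\bar c^{-1}\bar{\mathbf{z}}_{t,a}$ from two accepting transcripts with invertible $\bar c$ and reading off $m_3^*=0$ from the subtracted $\bfB^T$ checks, with MSIS-based binding invoked as in the paper's closing remark.

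The differences are cosmetic: you make the simulator's $1/M$ abort and the passage through 2-special soundness explicit. One small correction: challenges are not uniform on $\mathcal{C}$ (coefficients are $0$ with probability $1/2$), so the collision term should be the maximal challenge probability $2^{-d}$ rather than $1/|\mathcal{C}|$. It is still negligible, so this does not affect the result.
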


\begin{proof}

\textit{Completeness:} For each transaction $t$ and asset $a$, an honest sender computes the commitment $\com_{t,a} = \sum_i\com_{t,a,i}$ and $\mathbf{r}_{t,a}=\sum_i\mathbf{r}_{t,a,i}$ and follows the behavior outlined in $\pi^B$ (see \cref{proto:pi_b}). Unless the honest sender aborts due to rejection sampling, we show that the honest verifier accepts with overwhelming probability.

Conditioned on non-aborting by the sender, from \cref{lemma:rejsampling}, the distribution of $\mathbf{z}_{t,a}$ is $\frac{2^{-k}}{3}$ close to the distribution $\mathcal{D}_{\mathfrak{s}}^{(\kappa+\lambda+3)d}$, where $\mathfrak{s} = 11 \omega \| \mathbf{r}_{t,a} \|$. Furthermore, from \cref{lemma:probability_distribution_tail_bound}, we have that $\Pr (\| \mathbf{z}_{t,a} \| \leq \mathfrak{s} \sqrt{2(\kappa+\lambda+3)d})$ is at least $1-2^{-2(\kappa+\lambda+3)d}$. Thus, we have that the first verification step passes with probability at least $1-2^{-(\kappa+\lambda+3)d/8}- \frac{2^{-k}}{3}$. One can also observe that the other two verification steps are always true for an honest sender (evaluating $\mathbf{z}_{t,a}= y+c \mathbf{r}_{t,a}$ ), conditioned on non-abortion in rejection sampling as follows:
\begin{itemize}
    \item $ \mathbf{A} \mathbf{z}_{t,a} =\mathbf{A} \mathbf{y} + c \mathbf{A} \mathbf{r}_{t,a} = \mathbf{w} + c \left(\sum_i \com_{0,t,a,i}\right) $
    \item $\mathbf{B}^T \mathbf{z}_{t,a} =\mathbf{B}^T \mathbf{y} + c \mathbf{B}^T \mathbf{r}_{t,a} = u_{t,a} + c\left(\sum_i \com_{3,t,a,i} - \sum_i v_{t,a,i}\right)$. In the honest case $\sum_i v_{t,a,i} = 0$,
    \\
    Therefore, $u_{t,a} + c\left(\sum_i \com_{3,t,a,i} - \sum_i v_{t,a,i}\right) = u_{t,a} + c\left(\sum_i \com_{3,t,a,i} \right)$.
\end{itemize}
This completes the completeness proof.

\textit{Honest-verifier zero-knowledge:}
The proof follows similarly as the proof of honest-verifier zero-knowledge for proof of consistency with minimal modifications as follows: 
\begin{itemize}
    \item The simulator begins by sampling $\mathbf{z}_{t,a}$ from the distribution $\mathcal{D}_{\mathfrak{s}}^{(\mathrm{k}+\lambda+3)d}$. This distribution is statistically close to the one that would be generated in a non-aborting honest transcript, as enforced by \cref{lemma:rejsampling}. 
    \item The \cref{lemma:rejsampling} also implies that $\mathbf{z}_{t,a}$ is independent of the term $c \mathbf{r}_{t,a}$. This independence allows the simulator to sample $c$ separately from the distribution $\mathcal{C}$ without affecting the statistical closeness of the transcript.
    \item The remaining messages in the first round of the transcript, namely $\mathbf{w}, u_{t,a}$, are determined by the verification equations in an honest transcript, i.e., 
$\mathbf{w}_{}=\bfA\mathbf{z}_{t,a}-c_{}\com_{0,t,a}$ and $u_{t,a}=\bfB^T\mathbf{z}_{t,a} - c_{}(\com_{3,t,a})$. Due to the correctness of the protocol, these messages can be computed by the simulator in such a way that the verification equations hold true.
\end{itemize}
By ensuring that the verification equations are satisfied, the simulator produces a transcript that is statistically close to an honest transcript. This means that the simulated transcript is indistinguishable from a real one to the verifier, thus maintaining zero-knowledge.

\textit{Knowledge Soundness:}  We construct an extractor $\extractor$ with rewinding black-box access to the sender specified in $\pi^C_i$, that extracts the weak opening to the commitment. The extractor $\extractor$ repeatedly runs $\calP^*$ with freshly sampled challenges until it hits an accepting transcript. Then the extractor runs $\calP^*$ with freshly sampled challenges until it hits another accepting transcript. The goal is to obtain two accepting transcripts such that $ \bar{c}=c-c_{}^\prime $ is invertible. Moreover, these transcripts
need to all contain the same prover commitments $\mathbf{w}_{}, u_{t,a}$ as in the first accepting transcript. From Corollary \ref{corollary:c_invertible}, one can note that the probability that $ \bar{c}=c-c_{}^\prime $ is non-invertible for a uniformly drawn $c_{}^\prime$ is at most $\mathcal{O}\left(\frac{l}{q^{d/l}}\right)$.

Let $(\mathbf{w}_{}, u_{t,a},  c_{}, \mathbf{z}_{})$ and $(\mathbf{w}_{}, u_{t,a},  c_{}^\prime, \mathbf{z}_{t,a}^\prime)$ be two accepting transcripts obtained by $\extractor$ while performing BDLOP weak opening, it allows a computation of $\bar{c}=c_{}-c_{}^\prime$ and $\bar{\mathbf{z}}_{t,a}=\mathbf{z}_{t,a}-\mathbf{z}_{t,a}^\prime$ such that:
\begin{align}
    \label{eq:pib_soundness1}
    \bfA\bar{\mathbf{z}}_{t,a}=\bar{c}\com_{0,t,a}.
\end{align}    
Now the $\extractor$ defines ${v}^*$ such that:
\begin{align}
    \label{eq:pib_soundness2}
    {v}^* &=  \com_{3,t,a}-\bar{c}^{-1}\bfB^T\bar{\mathbf{z}}_{t,a}.
\end{align}
From verification equations in $\pi^B$, we have:
\begin{align}
    \label{eq:pib_soundness3}
   {\bfB}^T\mathbf{z}_{t,a} &= c_{}\com_{3,t,a}  + u_{t,a}\\
    \label{eq:pib_soundness4}
    \ {\bfB}^T\mathbf{z}'_{t,a} &= c'_{}\com_{3,t,a}  + u_{t,a}.
\end{align}
By subtracting between \eqref{eq:pib_soundness3} and \eqref{eq:pib_soundness4} and substituting values from \eqref{eq:pib_soundness2}, we obtain ${v}^*=0$.

Finally, let $\mathbf{r}^* = \bar{c}^{-1}\bar{\mathbf{z}}_{t,a}$. $\extractor$ can therefore obtain $({v^*},\mathbf{r}^*)$ that satisfies 
\begin{align} 
    \label{eq:pib_soundness5}
    \begin{bmatrix}
        \bfA \\
        \bfB^T
    \end{bmatrix} \mathbf{r}^* + \begin{bmatrix}
        \mathbf{0}^n \\
        v^* \\ 
    \end{bmatrix}
    =\begin{bmatrix}
        \com_{0,t,a} \\ 
        \com_{3,t,a}
    \end{bmatrix},
\end{align}
and $\mathbf{y}^*$ similarly to the proof of \cref{thm:pi_c}. 

This concludes that a malicious sender who does not know the information of $(\sum_i{v_{t,a,i}},\mathbf{r}_{t,a})$ can only convince the verifier by either breaking the commitment by solving the $\msis_{\kappa,\kappa+\lambda+3,8\omega\mathfrak{s}\sqrt{2(\kappa+\lambda+3)d}}$ problem on $\bfA$ as shown in Lemma 4.3 in \cite{ALS20}, or by at most responding to specific challenges per commitment such that $\bar{c}$ is non-invertible. That is, with knowledge error as $\mathcal{O}\left(\frac{l}{q^{d/l}}\right)$.
\end{proof}

\subsection{Proof of Consistency}
\label{app:proof_of_consistency}

We restate the theorem for convenient.

\begin{theorem}
    Protocol $\pi_{i}^C$ given in \cref{proto:pi_c} is complete, sound, and honest-verifier zero-knowledge.
\end{theorem}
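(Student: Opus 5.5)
We prove the three properties separately, following the same template as the proof of \cref{thm:pi_b} and the ABDLOP framework of \cite{LNP22}.

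\emph{Completeness.} For an honest prover, conditioned on non-abort, \cref{lemma:rejsampling} makes $\mathbf{z}_{1,i},\mathbf{z}_{2,i},\mathbf{z}_{3,i}$ statistically close to discrete Gaussians of widths $\mathfrak{s}_1,\mathfrak{s}_2,\mathfrak{s}_3$. The norm checks (a)--(c) then pass with overwhelming probability by \cref{lemma:probability_distribution_tail_bound}, using the first item for the $\ell_\infty$ bound on $\mathbf{z}_{3,i}$. Here $\mathfrak{s}_3$ must dominate $\norm{\bfR_i\mathbf{r}_{t,a,i}}$, which \cref{lemma:2_bound} bounds by $\sqrt{337}\norm{\mathbf{r}_{t,a,i}}$.

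Check (e) is the standard linear identity $\bfA_1\mathbf{y}_{1,i}+\bfA_2\mathbf{y}_{2,i}+c_i(\bfA_1\mathbf{r}+\bfA_2\mathbf{s})$. For (f), expand $\bfT$ applied to the masked openings: $c_if_{1,i}-\bfB_1\mathbf{z}_{2,i}=c_iv-\bfB_1\mathbf{y}_{2,i}$, and similarly $c_i\mathbf{u}_{1,i}-\bfB_c'\mathbf{z}_{2,i}=c_i\mathbf{y}_{3,i}-\bfB_c'\mathbf{y}_{2,i}$. The rows against $\com_{0..3}$ then cancel by \cref{eq:commitment_extractable}, leaving $\mathbf{v}_i$. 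In the last row, the $c_i$-terms combine to $c_i(x+g)$ and cancel against $c_ih$. That $x$ has zero constant coefficient, so that (d) holds, follows from \cref{lemma:constant_coeff}. Each summand equals $\langle \mathbf{r}_j,\mathbf{r}_{t,a,i}\rangle+\langle\mathbf{e}_j,\mathbf{y}_{3,i}-\mathbf{z}_{3,i}\rangle=0$, and $\sigma_{-1}(\mathbf{e}'_j)^Tv$ has constant coefficient $v_j=0$ for $j\ge 2$, since $v$ is an integer.

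\emph{HVZK.} The simulator works backwards. It samples $\mathbf{z}_{3,i}\gets\mathcal{D}_{\mathfrak{s}_3}^{256}$ and $\mathbf{z}_{1,i},\mathbf{z}_{2,i}$ from their Gaussians, and samples the challenges $\bfR_i,d$'s,$c_i$ honestly. It replaces $f_i,\mathbf{u}_{1,i},\mathbf{u}_{2,i}$ by uniform elements; this is justified by MLWE hiding of the ABDLOP commitment, since $\mathbf{s}_{t,a,i}$ is fresh. It samples $h$ uniformly among elements with zero constant coefficient, which is legitimate because $g$ masks $x$. Finally it defines $\mathbf{w}_i,\mathbf{v}_i$ as the unique values making (e) and (f) hold. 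A hybrid argument (rejection sampling, then MLWE) gives indistinguishability.

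\emph{Knowledge soundness.} This is the main obstacle. Rewinding on $c_i$ with all earlier messages fixed yields two accepting transcripts with $\bar c$ invertible (\cref{corollary:c_invertible}). Checks (e) and (f) then give a weak ABDLOP opening $(\bar c,\mathbf{r}^*,\mathbf{s}^*)$ with $\mathbf{r}^*=\bar c^{-1}\bar{\mathbf{z}}_1$, together with extracted messages $v^*,\mathbf{y}_3^*,g^*$ from $f_{1},\mathbf{u}_1,\mathbf{u}_2$. Subtracting the two instances of (f) shows that the first four rows make $\com_{t,a,i}$ equal to \cref{eq:commitment_extractable} with $(\mathbf{r}^*,v^*)$.

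The last row must be handled more carefully. Because the two transcripts differ in $c_i$, the difference gives $h=g^*+x^*$, where $x^*$ is $x$ evaluated on the extracted witnesses. Unless binding (MSIS, \cref{lemma:binding_abdlop}) is broken, $\mathbf{z}_{3,i}=\mathbf{y}_3^*+\bfR_i\mathbf{r}^*$ is then forced coordinatewise. The argument is Schwartz--Zippel over the $d_{1,j},d_{2,j}$ drawn after $\mathbf{z}_{3,i}$, with error about $2/q$: if any constant coefficient were nonzero, the check in (d) would fail except with probability $1/q$. This also forces $v^*\in\mathbb{Z}_q$, i.e.\ an integer constant.

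Since $\mathbf{y}_3^*$ and $\mathbf{r}^*$ are fixed before $\bfR_i$ (by binding), \cref{lemma:infty_bound} turns the check $\norm{\mathbf{z}_{3,i}}_\infty\le\sqrt{2k}\mathfrak{s}_3$ into $\norm{\mathbf{r}^*}_\infty\le 2\sqrt{2k}\mathfrak{s}_3$, except with probability $2^{-256}$. This is exactly $\relation_{PoC}$.

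Because the protocol has multiple rounds ($\bfR_i$, $d$'s, $c_i$), the extraction should be framed as a tree of transcripts, or else argued as above: fix the earlier rounds, rewind only $c_i$, and absorb the earlier-round errors as knowledge error. The quantum statement is deferred to the collapsing argument in Appendix \ref{app:quantum_poc_poe_poa_etc}.
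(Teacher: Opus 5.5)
Your proposal is correct and follows essentially the same route as the paper. Completeness comes from rejection sampling, the tail bounds and a row-by-row expansion of check (f). HVZK samples the responses and challenges, replaces $f_i,\mathbf{u}_{1,i},\mathbf{u}_{2,i}$ by uniform elements under MLWE, samples $h$ with zero constant coefficient, and solves (e),(f) for $\mathbf{w}_i,\mathbf{v}_i$. Soundness rewinds only $c_i$, extracts $(\mathbf{r}^*,\mathbf{s}^*,v^*,\mathbf{y}_3^*,g^*)$ from the weak ABDLOP opening, and charges $1/q$ to the random $d$-combination and $2^{-256}$ to \cref{lemma:infty_bound}. Your explicit derivations of $h=g^*+x^*$ and of the commitment equation from the first four rows of (f) only spell out steps the paper leaves implicit.

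One minor aside concerns your claim that $v^*$ is an integer constant. The $\mathbf{e}'_j$ terms only range over $j=2,\dots,128$ while $d\ge 256$, so they do not force every non-constant coefficient of $v^*$ to vanish. That claim is therefore not fully justified, but it is also not needed for $\relation_{PoC}$.
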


\begin{proof}
    \textit{Completeness:} For each transaction $t$ and asset $a$, an honest sender computes the commitment $\com_{t,a,i}$ (for participant $i$) and follows the behavior outlined in $\pi^C_i$ (see Protocol~\ref{proto:pi_c}). The goal is to show that the honest verifier accepts with overwhelming probability, unless the honest prover aborts due to rejection sampling. First, note that $\norm{\bfR_i \vec{r}_{t,a,i}}\leq\sqrt{337(\kappa+\lambda+3)d}$ holds with probability at least $1-2^{-128}$ by \cref{lemma:2_bound}. Conditioned on non-aborting by the sender with probability
    \begin{align*}
        \left(\frac{1-2^{-k}}{M}\right)^3\geq \left(\frac{1}{M}\right)^3-\texttt{neg}(k)  
    \end{align*}
    by \cref{lemma:rejsampling}, the verification steps \ref{vrfy:pi_c_1}, \ref{vrfy:pi_c_2} and \ref{vrfy:pi_c_3} pass with overwhelming probabilities: 
    \begin{align*}
        \Pr (\| \mathbf{z}_{1,i} \| \leq \mathfrak{s}_1 \sqrt{2(\kappa+\lambda+3)d})-\frac{2^{-k}}{M} &\geq 1-2^{-(\kappa+\lambda+3)d/8}-\frac{2^{-k}}{3}\\
        \Pr (\| \mathbf{z}_{2,i} \| \leq \mathfrak{s}_2 \sqrt{2(\kappa+\lambda+3)d})-\frac{2^{-k}}{M} &\geq 1-2^{-(\kappa+\lambda+3)d/8}-\frac{2^{-k}}{3}\\
        \Pr (\| \mathbf{z}_{3,i} \|_\infty \leq \sqrt{2k}\mathfrak{s}_3)-\frac{2^{-k}}{M} &\geq 1-256\times 2e^{-k}-\frac{2^{-k}}{3},
    \end{align*}
    by \cref{lemma:probability_distribution_tail_bound} and union bound. In verification step \ref{vrfy:pi_c_4}, for $j\in[256]$ and $v_{t,a,i}\in Z_q$, we have 
    \begin{align*}
        \dotprod{\mathbf{e}_j}{\mathbf{z}_{3,i}}=\dotprod{\mathbf{e}_j}{\mathbf{y}_{3,i}}+\dotprod{\mathbf{r}_j}{\mathbf{r}_{t,a,i}}
    \end{align*}
    by construction, which is equivalent to the constant coefficient of the following polynomial being zero:
    \begin{align*}
        \sum_{j=1}^{256} d_{1,j}(\begin{bmatrix}
        \sigma_{-1}(\mathbf{r}_j)^T\>\sigma_{-1}(\mathbf{e}_j)^T
    \end{bmatrix}
    \begin{bmatrix}
        \mathbf{r}_{t,a,i}\\\mathbf{y}_{3,i}
    \end{bmatrix}-\sigma_{-1}(\mathbf{e}_j)^T\mathbf{z}_{3,i})
    \\+\sum_{j=2}^{128}d_{2,j-1}\sigma_{-1}(\mathbf{e}_j')^T v_{t,a,i}
    \end{align*}
    by \cref{lemma:constant_coeff}. Then if the constant coefficient of $g$ is zero, the constant coefficient of $h$ must be zero.
    
    The verification step \ref{vrfy:pi_c_5} is given by ABDLOP proof of opening as in Protocol~\ref{proto:ZKPoO_abdlop}. Finally, we have the last verification step \ref{vrfy:pi_c_6} for each row as in the following:
    \begin{itemize}
    \item  $\bfA \mathbf{z}_{1,i}-c_i\com_{0,t,a,i}=\bfA \mathbf{z}_{1,i}-c_i\bfA \mathbf{r}_{t,a,i} = \bfA \mathbf{y}_{1,i}=\mathbf{v}_{0,i},$
    \item $\pk^T_{1,i}\mathbf{z}_{1,i}+c_if_{1,i}-\bfB_1 \mathbf{z}_{2,i}-c_i\com_{1,t,a,i}=\pk^T_{1,i}\mathbf{z}_{1,i}+c_i(\bfB_1 \mathbf{s}_{t,a,i}+v_{t,a,i})-\bfB_1 \mathbf{z}_{2,i}-c_i(\pk^T_{1,i}\mathbf{r}_{t,a,i}+v_{t,a,i}) =\pk^T_{1,i}\mathbf{y}_{1,i}-\bfB_1 \mathbf{y}_{2,i}=\mathbf{v}_{1,i},$
    \item  $\pk^T_{2,i}\mathbf{z}_{1,i}+\sqrt{q}(c_if_{1,i}-\bfB_1 \mathbf{z}_{2,i})-c_i\com_{2,t,a,i}=\pk^T_{2,i}\mathbf{z}_{1,i}+\sqrt{q}c_i(\bfB_1 \mathbf{s}_{t,a,i}+v_{t,a,i})-\sqrt{q}\bfB_1 \mathbf{z}_{2,i}-c_i(\pk^T_{2,i}\mathbf{r}_{t,a,i}+\sqrt{q}v_{t,a,i})=\pk^T_{2,i}\mathbf{y}_{1,i}-\sqrt{q}\bfB_1 \mathbf{y}_{2,i}=\mathbf{v}_{2,i},$
    \item $\bfB^T\mathbf{z}_{1,i}+c_if_{1,i}-\bfB_1 \mathbf{z}_{2,i}-c_i\com_{3,t,a,i}=\bfB^T\mathbf{z}_{1,i}+c_i(\bfB_1 \mathbf{s}_{t,a,i}+v_{t,a,i})-\bfB_1 \mathbf{z}_{2,i}-c_i(\bfB^T\mathbf{r}_{t,a,i}+v_{t,a,i})=\bfB^T\mathbf{y}_{1,i}-\bfB_1 \mathbf{y}_{2,i}=\mathbf{v}_{3,i},$
    \item   $\sum_{j=1}^{256} d_{1,j}\sigma_{-1}(\mathbf{r}_j)^T \mathbf{z}_{1,i}+ \sum_{j=2}^{128} d_{2,j-1} \sigma_{-1}(\mathbf{e}_j')^T(c_if_{1,i}-\bfB_1 \mathbf{z}_{2,i})+\sum_{j=1}^{256} d_{1,j}\sigma_{-1}(\mathbf{e}_j)^T(c_iu_{1,i}-\bfB_c' \mathbf{z}_{2,i})+c_iu_{2,i}-\bfB_c'' \mathbf{z}_{2,i}-c_i(h+\sum_{j=1}^{256}d_{1,j}\sigma_{-1}(\mathbf{e}_j)^T\mathbf{z}_{3,i})\\=\sum_{j=1}^{256} d_{1,j}\sigma_{-1}(\mathbf{r}_j)^T \mathbf{y}_{1,i}-\sum_{j=2}^{128} d_{2,j-1} \sigma_{-1}(\mathbf{e}_j')^T \bfB_1 \mathbf{y}_{2,i}\\-\sum_{j=1}^{256} d_{1,j}\sigma_{-1}(\mathbf{e}_j)^T\bfB_c'\mathbf{y}_{2,i}-\bfB_c''\mathbf{y}_{2,i} \\= \mathbf{v}_{4,i}.$
    \end{itemize}
    This concludes the proof of completeness.

\textit{Soundness:} 
    We construct an extractor $\extractor$ with rewinding black-box access to the prover specified in $\pi_i^C$. The extractor $\extractor$ repeatedly runs $\calP^*$ with freshly sampled challenges until it hits an accepting transcript. Then, let $\calP^*$ fix the same randomness, with the same challenges until just after the third last round, and get a fresh $c_i'$. The goal is to obtain a second accepting transcript with response $\mathbf{z}_{1,i}',\mathbf{z}_{2,i}'$ using rewinding techniques, while $ \bar{c}=c_i-c_{i}^\prime $ is invertible. Meanwhile, the rest in the transcript should remain the same as the first accepting transcript. From \cref{corollary:c_invertible}, one can note that the probability that $ \bar{c}$ is non-invertible for a uniformly drawn $c_{i}^\prime$ is at most $\mathcal{O}\left(\frac{l}{q^{d/l}}\right)$. With two accepting transcripts, in our case, denoted by:
    \begin{align*}
        (f_i,\mathbf{u}_{1,i},\mathbf{u}_{2,i},\mathbf{R}_i,\mathbf{z}_{3,i},\{d_{1,j}\},\{d_{2,j}\},h,\mathbf{w}_i,\mathbf{v}_i,c_i,\mathbf{z}_{1,i},\mathbf{z}_{2,i})\\
        (f_i,\mathbf{u}_{1,i},\mathbf{u}_{2,i},\mathbf{R}_i,\mathbf{z}_{3,i},\{d_{1,j}\},\{d_{2,j}\},h,\mathbf{w}_i,\mathbf{v}_i,c_i',\mathbf{z}_{1,i}',\mathbf{z}_{2,i}')
    \end{align*}
    the $\extractor$ is capable of performing a weak opening of ABDLOP commitment as defined in \cref{def:opening_abdlop} and \cref{coro:security_abdlop} hold: Let $\bar{\mathbf{z}}_{1,i} =\mathbf{z}_{1,i}-\mathbf{z}_{1,i}' $, $\bar{\mathbf{z}}_{2,i} =\mathbf{z}_{2,i}-\mathbf{z}_{2,i}'$, $\extractor$ further extracts:  
    \begin{align*}
        &\mathbf{r}^*_{t,a,i} =\bar{\mathbf{z}}_{1,i} / \bar{c}\\
        &\mathbf{s}^*_{t,a,i} =\bar{\mathbf{z}}_{2,i} / \bar{c}\\
        &v^*_{t,a,i} = f_{1,i}- \bfB_1  s^*_{t,a,i}\\
        &\mathbf{y}^*_{3,i} = \mathbf{u}_{1,i}-\bfB'_c s^*_{t,a,i}\\
        &g^* = \mathbf{u}_{2,i}-\bfB''_c s^*_{t,a,i}.
    \end{align*}
    Finally, $\extractor$ computes 
    \begin{align*}
        &\mathbf{y}_{1,i}^*=\mathbf{z}_{1,i}-c_i\mathbf{r}^*_{t,a,i}\\
        &\mathbf{y}_{1,i}'^*=\mathbf{z}_{1,i}'-c_i'\mathbf{r}^*_{t,a,i}\\
        &\mathbf{y}_{2,i}^*=\mathbf{z}_{2,i}-c_i\mathbf{s}^*_{t,a,i}\\
        &\mathbf{y}_{2,i}'^*=\mathbf{z}_{2,i}'-c_i'\mathbf{s}^*_{t,a,i}
    \end{align*}
    and $\mathbf{y}_{1,i}^*=\mathbf{y}_{1,i}'^*$, $\mathbf{y}_{2,i}^*=\mathbf{y}_{2,i}'^*$. Otherwise, the prover finds a solution of \\   $\msis_{\kappa,2(\kappa+\lambda+3),8\omega\sqrt{(\mathfrak{s_1}\sqrt{2(\kappa+\lambda+3)d})^2+(\mathfrak{s_2}\sqrt{2(\kappa+\lambda+3)d})^2}}$.

    We next compute the probability of getting an acceptance for any malicious prover. This can be performed in the following way: First, conditioned on verification passes, as specified in \ref{proto:pi_c_verification}. Suppose for some $j$ in 
    \begin{align*}
        x^*=\sum_{j=1}^{256} d_{1,j}(\begin{bmatrix}
        \sigma_{-1}(\mathbf{r}_j)^T\>\sigma_{-1}(\mathbf{e}_j)^T
    \end{bmatrix}
    \begin{bmatrix}
        \mathbf{r}^*_{t,a,i}\\\mathbf{y}^*_{3,i}
    \end{bmatrix}-\sigma_{-1}(\mathbf{e}_j)^T\mathbf{z}_{3,i})\\+\sum_{j=2}^{128}d_{2,j-1}\sigma_{-1}(\mathbf{e}_j')^T v^*_{t,a,i},
    \end{align*}
    the constant coefficient of 
    \begin{align*}
        \begin{bmatrix}
        \sigma_{-1}(\mathbf{r}_j)^T\>\sigma_{-1}(\mathbf{e}_j)^T
    \end{bmatrix}
    \begin{bmatrix}
        \mathbf{r}^*_{t,a,i}\\\mathbf{y}^*_{3,i}
    \end{bmatrix}-\sigma_{-1}(\mathbf{e}_j)^T\mathbf{z}_{3,i}
    \end{align*}
    or $\sigma_{-1}(\mathbf{e}_j')^T v^*_{t,a,i}$ is not equal to zero. As $\{d_{1,j}\},\{d_{2,j}\}$ are chosen uniformly at random by the verifier, the probability that the constant coefficient of $h=0$ holds is at most $1/q$. Secondly, we are checking when 
    $\norm{\mathbf{z}_{3,i}}_{\infty} \leq \sqrt{2k}\mathfrak{s}_3$, but $ \norm{\mathbf{r}^*_{t,a,i}}_\infty> 2*\sqrt{2k}\mathfrak{s}_3$. Here, the probability is at most $2^{-256}$ according to \cref{lemma:infty_bound}.
    
    In conclusion, a malicious sender who does not know the information of $(v_{t,a,i},\mathbf{r}_{t,a,i}, \mathbf{s}_{t,a,i})$ can only convince the verifier by either breaking the commitments by solving \\
    $\msis_{\kappa,2(\kappa+\lambda+3),8\omega\sqrt{(\mathfrak{s_1}\sqrt{2(\kappa+\lambda+3)d})^2+(\mathfrak{s_2}\sqrt{2(\kappa+\lambda+3)d})^2}}$ as shown in Lemma 4.3 in \cite{ALS20}, or by at most responding to specific challenges per commitment such that $\bar{c}$ is non-invertible. That is, Protocol $\pi_{i}^C$ is sound with knowledge error as
    \begin{align*}
        \mathcal{O}\left(\frac{l}{q^{d/l}}\right)+\frac{1}{q}+\frac{1}{2^{256}}.
    \end{align*}

    \textit{Honest-verifier zero-knowledge:} In the context of honest-verifier zero-knowledge proofs, the goal is to simulate a transcript between an honest prover and an honest verifier such that the simulated transcript is indistinguishable from a non-aborting accepting transcript. This ensures that no additional information is leaked to the verifier beyond the validity of the statement being proven. 

    First, each $(\mathbf{pk}_{1,i}, \mathbf{pk}_{2,i})$ is indistinguishable from uniform under the $\mlwe_{\kappa+\lambda+3, \kappa, \chi}$ assumption. And the hiding property of the input commitment is guaranteed under the hardness of dual $\mlwe_{\lambda,\kappa+3,\chi}$. Then, we construct a polynomial-time simulator $\simulator$ such that:
    \begin{enumerate}
        \item The simulator $\simulator$ samples $\mathbf{z}_{1,i},\mathbf{z}_{2,i},\mathbf{z}_{3,i}$ from the distributions $\mathcal{D}_{\mathfrak{s}_1}^{(\kappa+\lambda+3)d},\mathcal{D}_{\mathfrak{s}_2}^{(\kappa+\lambda+3)d},\mathcal{D}_{\mathfrak{s}_3}^{256}$, respectively. These distributions are statistically close to the ones that would be generated in a non-aborting honest transcript, as enforced by \cref{lemma:rejsampling}. This step ensures that the sampled value is consistent with what would be expected in a real interaction in the end.
        \item The \cref{lemma:rejsampling} also implies that  $\mathbf{z}_{1,i},\mathbf{z}_{2,i},\mathbf{z}_{3,i}$ are independent of the terms $c_i \mathbf{r}_{t,a,i},c_i \mathbf{s}_{t,a,i},\mathbf{R}_i\mathbf{r}_{t,a,i}$, respectively. This independence allows $\simulator$ to sample $c_i$ and $\mathbf{R}_i$ separately from the distributions $\mathcal{C}$ and $\texttt{Bin}_1^{256\times(\kappa+\lambda +3)d}$ without affecting the statistical closeness of the transcript.
        \item Since $g$ is chosen uniformly at random from $\CalRq$, where the constant coefficient is $0$, it allows $\simulator$ to sample $h$ from the same distribution, where the distribution is statistically indistinguishable from $g$, as well as $\{d_{1,j}\}_{j \in [256]},\{d_{2,j}\}_{j \in [127]}$.
        \item Finally, $\simulator$ simulates $f_i, \mathbf{u}_{1,i}, \mathbf{u}_{2,i}$ by sampling from $\CalRq^{\kappa+1+256/d+1}$ uniformly at random, and indistinguishable from the actual transcript under the dual $\mlwe_{\lambda,\kappa+1+256/d+1,\chi}$ assumption.
    \end{enumerate}
    Then, $\simulator$ computes the remaining messages of the transcript, namely $\mathbf{w}_i$ and $\mathbf{v}_i$, by the verification equations \ref{vrfy:pi_c_5} and \ref{vrfy:pi_c_6} in an honest transcript. Due to the correctness of the protocol, these messages can be computed by the simulator in such a way that the verification equations hold true, and therefore the protocol is honest-verifier zero-knowledge.

\end{proof}

\subsection{Proof of Equivalence}
\label{app:proof_of_equivalence}

We restate the theorem for convenient.

\begin{theorem}
    Protocol $\pi_i^{Eq}$ is complete, sound, and honest-verifier zero-knowledge in the classical setting.
\end{theorem}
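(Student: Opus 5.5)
The plan mirrors the proof for $\pi_i^C$, since $\pi_i^{Eq}$ has the same skeleton with the witness $\mathbf{m}_i=\mathbf{s}_{1,i}||\mathbf{e}_{1,i}||\mathbf{s}_{2,i}||\mathbf{e}_{2,i}$ in place of $\mathbf{r}_{t,a,i}$. The linear part changes from the commitment equations to $\tilde{\bfA}\mathbf{m}_i=u_i$, and the range-proofed vector is $\mathbf{l}_{t,a,i}=\tilde C_{t,a,i}\mathbf{m}_i-\tilde u_{t,a,i}$ instead of the randomness itself.

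\emph{Completeness.} First, by \cref{lemma:2_bound}, $\norm{\bfR_i\mathbf{l}_{t,a,i}}$ is bounded by $\sqrt{337}\norm{\mathbf{l}_{t,a,i}}$ except with probability $2^{-128}$. For an honest participant with short keys, $\mathbf{l}_{t,a,i}$ is the decryption noise $(\langle\mathbf{e}_{1,i},\Delta\mathbf{r}\rangle,\langle\mathbf{e}_{2,i},\Delta\mathbf{r}\rangle)$. This noise is short because $v=v'$ and the $\mathbf{r}$'s are short, and $\mathfrak{s}_3$ is sized so that rejection sampling applies. Conditioned on the three calls to $\rej$ not aborting, \cref{lemma:rejsampling} together with \cref{lemma:probability_distribution_tail_bound} gives checks (a)--(c). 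Check (d) follows from \cref{lemma:constant_coeff}: each summand of $x$ has constant coefficient $\langle\mathbf{r}_j,\mathbf{l}\rangle+\langle\mathbf{e}_j,\mathbf{y}_3\rangle-\langle\mathbf{e}_j,\mathbf{z}_3\rangle=0$, and $g$ has zero constant term. Checks (e) and (f) then reduce row by row to the definitions of $\mathbf{w}_i$, $\mathbf{v}_i$, $\mathbf{u}_{1,i}$, $\mathbf{u}_{2,i}$, using $\tilde{\bfA}\mathbf{m}_i=u_i$. This is the same expansion as for $\pi^C_i$.

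\emph{Knowledge soundness.} Rewind only the last challenge to obtain $c_i\neq c_i'$ with $\bar c$ invertible (\cref{corollary:c_invertible}). Then extract $\mathbf{m}^*=\bar{\mathbf{z}}_1/\bar c$ and $\mathbf{s}^*=\bar{\mathbf{z}}_2/\bar c$, and from these $\mathbf{y}_3^*=\mathbf{u}_{1,i}-\bfB'_{eq}\mathbf{s}^*$ and $g^*=\mathbf{u}_{2,i}-\bfB''_{eq}\mathbf{s}^*$. This gives a weak ABDLOP opening (\cref{def:opening_abdlop}); if the implied masks $\mathbf{y}^*_1,\mathbf{y}^*_2$ differ across the two transcripts, we obtain an MSIS solution by \cref{lemma:binding_abdlop}. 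Next, view verification (f) as a polynomial identity in the challenge. Because the first rows of $\bfT$ are $\tilde{\bfA}$, the degree-one coefficient forces $\tilde{\bfA}\mathbf{m}^*=u_i$, that is, $\pk_k=\bfA^T\mathbf{s}_k^*+\mathbf{e}_k^*$. The last row forces $h=g^*+x^*$ with $x^*$ formed from $\mathbf{m}^*,\mathbf{y}_3^*$.

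Since the $d_{1,j}$ are uniform, if any individual constant coefficient $\langle\mathbf{r}_j,\mathbf{l}^*\rangle+\langle\mathbf{e}_j,\mathbf{y}_3^*-\mathbf{z}_3\rangle$ is nonzero, then check (d) passes with probability at most $1/q$. Otherwise $\mathbf{z}_{3,i}=\mathbf{y}^*_3+\bfR_i\mathbf{l}^*$ holds exactly. Because $\mathbf{y}_3^*$ was fixed before $\bfR_i$ was sampled, \cref{lemma:infty_bound} yields $\norm{\mathbf{l}^*}_\infty\le2\sqrt{2k}\mathfrak{s}_3$ except with probability $2^{-256}$. The total knowledge error is therefore $\mathcal{O}(l/q^{d/l})+1/q+2^{-256}$, and the extracted tuple satisfies $\relation_{PoE}$.

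\emph{HVZK.} The simulator samples $\mathbf{z}_1,\mathbf{z}_2,\mathbf{z}_3$ from their Gaussians, which is justified by \cref{lemma:rejsampling}. It samples $\bfR_i$, the $d_{1,j}$, and $c_i$ honestly, and draws $h$ uniformly among elements with zero constant term. It replaces $f_i,\mathbf{u}_{1,i},\mathbf{u}_{2,i}$ by uniform elements, which is indistinguishable by dual MLWE hiding of the ABDLOP commitment. Finally it solves (e) and (f) for $\mathbf{w}_i$ and $\mathbf{v}_i$.

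\emph{Main obstacle.} The delicate step is the ordering argument in soundness. It must justify that $\mathbf{y}_3^*$, recovered from the binding commitment $\mathbf{u}_{1,i}$, is fixed before $\bfR_i$. It must also justify that the relation extracted after rewinding only $c_i$ is consistent with the earlier rounds, so that \cref{lemma:infty_bound} applies to the extracted $\mathbf{l}^*$ rather than to the honest one. I would handle this exactly as in the proof of \cref{thm:pi_c}, via ABDLOP binding.
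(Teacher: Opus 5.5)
Your proposal is correct and follows the paper's proof essentially step by step. Completeness goes through rejection sampling, the tail bounds and \cref{lemma:constant_coeff}. Soundness rewinds only $c_i$, takes a weak ABDLOP opening of $f_i$, and charges $1/q$ to the vanishing-constant-coefficient check and $2^{-256}$ to \cref{lemma:infty_bound}, giving the same knowledge error $\mathcal{O}(l/q^{d/l})+1/q+2^{-256}$. HVZK samples the responses and solves (e),(f) for $\mathbf{w}_i,\mathbf{v}_i$.

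Your version is slightly more careful than the paper's in two places. First, you argue completeness directly: for $v=v'$ the honest $\mathbf{l}_{t,a,i}$ is just decryption noise, whereas the paper cites \cref{lemma:comp_equil}, which gives the converse direction. Second, you explicitly derive $\tilde{\bfA}\mathbf{m}^*=u_i$ from the top block of check (f), a step the paper leaves implicit.
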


\begin{proof}
\textit{Completeness:} For each transaction $t$ and asset $a$, an honest participant $i$ computes the re-commitment $\com'_{t,a,i}$ after verifying the commitment from the sender and successfully extracting $v_{t,a,i}$. Unless the protocol aborts due to rejection sampling, we show that the honest verifier accepts with overwhelming probability. 

Conditioned on non-aborting by the participant $i$, we follow the computation similarly to the proof of Theorem \ref{thm:pi_c} and obtain the first three verification steps \ref{vrfy:pi_e_1}, \ref{vrfy:pi_e_2} and \ref{vrfy:pi_e_3} pass with overwhelming probabilities:
\begin{align*}
    \Pr (\| \mathbf{z}_{1,i} \| \leq \mathfrak{s}_1\sqrt{4(2\kappa+\lambda+3)d})-\frac{2^{-k}}{M} &\geq 1-2^{-(2\kappa+\lambda+3)d/4}-\frac{2^{-k}}{3}\\
    \Pr (\| \mathbf{z}_{2,i} \| \leq \mathfrak{s}_2\sqrt{2(\kappa+\lambda +3)d})-\frac{2^{-k}}{M} &\geq 1-2^{-(\kappa+\lambda +3)d/8}-\frac{2^{-k}}{3}\\
    \Pr (\| \mathbf{z}_{3,i} \|_\infty \leq \sqrt{2k}\mathfrak{s}_3)-\frac{2^{-k}}{M} &\geq 1-256\times 2e^{-k}-\frac{2^{-k}}{3},
\end{align*}
Furthermore, we combine Lemma \ref{lemma:comp_equil}, and observe that the remaining four verification steps are always true for an honest participant. This completes the completeness proof. 

\textit{Honest-verifier zero-knowledge:} First, each $(\mathbf{pk}_{1,i}, \mathbf{pk}_{2,i})$ is indistinguishable from uniform under the $\mlwe_{\kappa+\lambda+3, \kappa, \chi}$ assumption. And the hiding property of the input commitment is guaranteed under the hardness of dual $\mlwe_{\lambda +3,\kappa,\chi}$. Then, we construct a polynomial-time simulator $\simulator$ such that:
    \begin{enumerate}
        \item The simulator $\simulator$ samples $\mathbf{z}_{1,i},\mathbf{z}_{2,i},\mathbf{z}_{3,i}$ from the distributions $\mathcal{D}_{\mathfrak{s}_1}^{2(2\kappa+\lambda+3)d},\mathcal{D}_{\mathfrak{s}_2}^{(\kappa+\lambda+3)d},\mathcal{D}_{\mathfrak{s}_3}^{256}$, respectively. These distributions are statistically close to the ones that would be generated in a non-aborting honest transcript, as enforced by \cref{lemma:rejsampling}. This step ensures that the sampled value is consistent with what would be expected in a real interaction in the end.
        \item The \cref{lemma:rejsampling} also implies that  $\mathbf{z}_{1,i},\mathbf{z}_{2,i},\mathbf{z}_{3,i}$ are independent of the terms $c_i \mathbf{m}_{i},c_i \mathbf{s}_{t,a,i},\mathbf{R}_i\mathbf{l}_{t,a,i}$, respectively. This independence allows $\simulator$ to sample $c_i$ and $\mathbf{R}_i$ separately from the distributions $\mathcal{C}$ and $\texttt{Bin}_1^{256\times 2d}$ without affecting the statistical closeness of the transcript.
        \item Since $g$ is chosen uniformly at random from $\CalRq$, where the constant coefficient is $0$, it allows $\simulator$ to sample $h$ from the same distribution, where the distribution is statistically indistinguishable from $g$, as well as $\{d_{1,j}\}_{j \in [256]}$.
        \item Finally, $\simulator$ simulates $f_i, \mathbf{u}_{1,i}, \mathbf{u}_{2,i}$ by sampling uniformly at random, and indistinguishable from the actual transcript under the dual $\mlwe_{\lambda+3,\kappa+256/d+1,\chi}$ assumption.
    \end{enumerate}
    Then, $\simulator$ computes the remaining messages of the transcript, namely $\mathbf{w}_i$ and $\mathbf{v}_i$, by the verification equations \ref{vrfy:pi_e_5} and \ref{vrfy:pi_e_6} in an honest transcript. Due to the correctness of the protocol, these messages can be computed by the simulator in such a way that the verification equations hold true, and therefore the protocol is honest-verifier zero-knowledge.

\textit{Soundness:}  The knowledge soundness follows in similar lines to that of ZKPoC.  The extractor $\extractor$ repeatedly runs $\calP^*$ to get two accepting transcripts, denoted by:
    \begin{align*}
        (f_i,\mathbf{u}_{1,i},\mathbf{u}_{2,i},\mathbf{R}_i,\mathbf{z}_{3,i},\{d_{1,j}\},h,\mathbf{w}_i,\mathbf{v}_i,c_i,\mathbf{z}_{1,i},\mathbf{z}_{2,i})\\
        (f_i,\mathbf{u}_{1,i},\mathbf{u}_{2,i},\mathbf{R}_i,\mathbf{z}_{3,i},\{d_{1,j}\},h,\mathbf{w}_i,\mathbf{v}_i,c_i',\mathbf{z}_{1,i}',\mathbf{z}_{2,i}')
    \end{align*}
    the $\extractor$ is capable of performing a weak opening of commitment $f_i$. Let $\bar{\mathbf{z}}_{1,i} =\mathbf{z}_{1,i}-\mathbf{z}_{1,i}' $, $\bar{\mathbf{z}}_{2,i} =\mathbf{z}_{2,i}-\mathbf{z}_{2,i}'$, $\extractor$ further extracts:  
    \begin{align*}
        &\mathbf{m}^*_{i} =\bar{\mathbf{z}}_{1,i} / \bar{c}\\
        &\mathbf{s}^*_{t,a,i} =\bar{\mathbf{z}}_{2,i} / \bar{c}\\
        &\mathbf{y}^*_{3,i} = \mathbf{u}_{1,i}-\bfB'_{eq} s^*_{t,a,i}\\
        &g^* = \mathbf{u}_{2,i}-\bfB''_{eq} s^*_{t,a,i},
    \end{align*}
    and $\mathbf{y}_{1,i}^*$, $\mathbf{y}_{2,i}^*$ similarly to the proof of \cref{thm:pi_c}.
    
    We next compute the probability of getting an acceptance for any malicious prover. This can be performed in the following way: First, conditioned on verification passes, as specified in \ref{proto:pi_c_verification}. Suppose for some $j$ in 
    \begin{align*}
        x^*=  \sum_{j=1}^{256} d_{1,j}(\begin{bmatrix}
        \sigma_{-1}(\mathbf{r}_j)^T \tilde{C}_{t,a,i}\>\sigma_{-1}(\mathbf{e}_j)^T
    \end{bmatrix}
    \begin{bmatrix}
        \mathbf{m}^*_{i}\\ \mathbf{y}^*_{3,i}
    \end{bmatrix}  
    - ( \sigma_{-1}(\mathbf{e}_j)^T\mathbf{z}_{3,i}+ \sigma_{-1}(\mathbf{r}_j)^T \tilde{u}_{t,a,i})),
    \end{align*}
    the constant coefficient is not equal to zero. As $\{d_{1,j}\}$ are chosen uniformly at random by the verifier, the probability that the constant coefficient of $h=0$ holds is at most $1/q$. Secondly, we are checking when 
    $\norm{\mathbf{z}_{3,i}}_{\infty} \leq \sqrt{2k}\mathfrak{s}_3$, but $ \norm{\mathbf{l}^*_{t,a,i}}_\infty> 2*\sqrt{2k}\mathfrak{s}_3$. Here, the probability is at most $2^{-256}$ according to \cref{lemma:infty_bound}.
    
    In conclusion, a malicious participant who does not know the information of $(m_{i},\mathbf{s}_{t,a,i})$ can only convince the verifier by either breaking the commitments by solving \\ $\msis_{\kappa,5\kappa+3\lambda+9,8\omega\sqrt{(\mathfrak{s_1}\sqrt{4(2\kappa+\lambda+3)d})^2+(\mathfrak{s_2}\sqrt{2(\kappa+\lambda+3)d})^2}}$ as shown in Lemma 4.3 in \cite{ALS20}, or by at most responding to specific challenges per commitment such that $\bar{c}$ is non-invertible. That is, Protocol $\pi_{i}^{Eq}$ is sound with knowledge error as
    \begin{align*}
        \mathcal{O}\left(\frac{l}{q^{d/l}}\right)+\frac{1}{q}+\frac{1}{2^{256}}.
    \end{align*}
\end{proof}

\subsection{Proof of Asset}
\label{app:proof_of_asset}

We restate the theorem for convenient.
\begin{theorem}
    Protocol $\pi_i^A$ is complete, sound, and honest-verifier zero-knowledge.
\end{theorem}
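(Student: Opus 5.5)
We prove the three properties in the same order as for \cref{thm:pi_b} and \cref{thm:pi_c}, relying on the BDLOP product proof of \cite{ALS20} and the unstructured linear proof of \cite{ENS20}.

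\textit{Completeness.} Conditioned on the prover not aborting in $\rej$, \cref{lemma:rejsampling} makes $\mathbf{z}_i$ statistically close to $\mathcal{D}_{\mathfrak{s}}^{(\kappa+\lambda+3)d}$. The tail bound \cref{lemma:probability_distribution_tail_bound} then gives check (a).

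Check (b) follows from $\bfA\mathbf{z}_i=\bfA\mathbf{y}_i+c_i\bfA\mathbf{r}_{\text{PoA}}$.

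For check (c), substitute $\mathbf{a}_\text{bin}^T\mathbf{z}_i-c_if_{1,i}=\mathbf{a}_\text{bin}^T\mathbf{y}_i-c_iv_\text{bin}$. The product expands as
\[
(\mathbf{a}_\text{bin}^T\mathbf{y}_i)^2+c_i\,\mathbf{a}_\text{bin}^T\mathbf{y}_i(1-2v_\text{bin})+c_i^2 v_\text{bin}(v_\text{bin}-1).
\]
The $c_i^2$ term vanishes because $v_\text{bin}$ is binary in every NTT slot. The $c_i$ term is cancelled by $(\mathbf{a}_\text{bin}')^T\mathbf{z}_i-c_i\mathbf{u}_{1,i}$, which equals $(\mathbf{a}'_\text{bin})^T\mathbf{y}_i-c_iv'_\text{bin}$. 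The remaining $c_i$-free terms are exactly $u_{3,i}$.

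For check (d), substitute $f_{1,i}$, $\com'_{1,\widehat t,a,i}$, $\mathbf{u}_{2,i}$ and $h$. The secret parts cancel because $Q\,\NTT(v_\text{bin})=\NTT(v_{\text{tot}})$, which gives $v_\text{bin}\NTT^{-1}(Q^T\phi_i)$ and $v_{\text{tot}}\NTT^{-1}(\phi_i)$ equal first $d/l$ coefficients, as in \cite{ENS20}. What remains is $u_{4,i}$.

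For check (e), note that the first $d/l$ coefficients of $h$ equal those of $\NTT^{-1}(\langle\phi_i,Q\NTT(v_\text{bin})-\NTT(v_{\text{tot}})\rangle)$ by \cref{lemma:NTT_as_original}, plus $g$, whose first block is zero. These coefficients are therefore zero.

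\textit{Honest-verifier zero-knowledge.} The simulator samples $\mathbf{z}_i\leftarrow\mathcal{D}_{\mathfrak{s}}$ together with $c_i,\phi_i$. It replaces $f_i,\mathbf{u}_{1,i},\mathbf{u}_{2,i}$ with uniform elements, which is justified by dual $\mlwe$ hiding. It samples $h$ uniform subject to a vanishing first block; this is valid because $g$ masks it. It then solves (b), (c) and (d) for $\mathbf{w}_i$, $u_{3,i}$ and $u_{4,i}$, since each of these appears linearly. Statistical closeness comes from \cref{lemma:rejsampling}.

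\textit{Knowledge soundness.} Rewind on $c_i$ with the transcript fixed up to $\phi_i$, obtaining three accepting transcripts with pairwise invertible differences. The invertibility follows from \cref{corollary:c_invertible}, and the product proof needs three transcripts. Check (b) yields a weak opening $\mathbf{r}^*=\bar c^{-1}\bar{\mathbf{z}}$ and $v_\text{bin}^*=f_{1,i}-\mathbf{a}_\text{bin}^T\mathbf{r}^*$, together with $v_{\text{tot}}^*$ and $g^*$. Uniqueness of these is argued via BDLOP binding (\cref{lemma:binding_bdlop}); otherwise we obtain an $\msis$ solution.

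Plugging $\mathbf{z}=\mathbf{y}^*+c\mathbf{r}^*$ into (c) gives a quadratic polynomial in $c$ whose leading coefficient is $v^*_\text{bin}(v^*_\text{bin}-1)$. Since the polynomial vanishes at three challenges with invertible differences, this coefficient must be zero, so $v^*_\text{bin}$ is binary slotwise.

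Similarly, (d) and (e) force the first $d/l$ coefficients of $\NTT^{-1}(\langle\phi_i,Q\NTT(v^*_\text{bin})-\NTT(v^*_{\text{tot}})\rangle)$ to vanish. If the relation failed, this would happen for a uniform $\phi_i$ with probability at most $q^{-d/l}$. The knowledge error is then $\mathcal{O}(l/q^{d/l})+q^{-d/l}$.

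\textbf{Main obstacle.} The hard step is making the product-proof extraction rigorous: arguing that the $\mathbf{y}^*$ extracted from the different transcripts is common to all of them, and handling (d) with the $\phi_i$-dependent $u_{4,i}$. For (d) I would fix $\phi_i$ before rewinding and invoke the unstructured linear proof analysis of \cite{ENS20} directly.
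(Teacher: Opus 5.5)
Your completeness and HVZK arguments follow the paper's. Your expansion of check (c) is in fact more explicit than the paper's: it shows the $c_i$-linear term $c_i\,\mathbf{a}_\text{bin}^T\mathbf{y}_i(1-2v_\text{bin})$ is cancelled by $-c_iv'_\text{bin}$. Making $\mathbf{y}^*$ common to all transcripts via binding is also how the paper handles it, by reference to the proof of \cref{thm:pi_c}.

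The genuine difference is how you get the binary relation in soundness. You rewind $c_i$ three times and use that a quadratic $ac^2+bc+e$ over $\mathcal{R}_q$ vanishing at $c_1,c_2,c_3$ is identically zero, because the Vandermonde determinant $(c_2-c_1)(c_3-c_1)(c_3-c_2)$ is invertible. Hence $v^*_\text{bin}(v^*_\text{bin}-1)=0$ holds deterministically.

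The paper uses only two transcripts (weak opening plus $\mathbf{y}^*$) and argues probabilistically, citing Theorem 3.1 of \cite{LNPS21}, which is the \cite{ALS20} argument. If $v^*_\text{bin}(v^*_\text{bin}-1)\neq 0$ in some NTT slot, the quadratic has at most two roots modulo that prime ideal. So a fresh $c_i$ passes (c) with probability at most $2/q^{d/l}+\mathcal{O}(\epsilon)$, and this is added to the knowledge error.

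Your route is cleaner and needs only pairwise invertibility, not near-uniformity of $c \bmod (X^{d/l}-\zeta^{j})$. The knowledge errors are of the same order. In your route the $2/q^{d/l}$ term reappears: a prover accepting only on challenges in two residue classes modulo one prime ideal never yields three accepting challenges with pairwise invertible differences. The paper's choice keeps the extractor two-transcript, which its QROM appendix assumes (two runs and Unruh's $\varepsilon^3$ bound). A three-transcript extractor would require extending that quantum rewinding argument.

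There are two small corrections.

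First, in completeness, check (d) holds identically once $h$ is substituted, since $c_i\bigl(v_\text{bin}\NTT^{-1}(Q^T\phi_i)-v_{\text{tot}}\NTT^{-1}(\phi_i)+g-h\bigr)=0$ exactly. The relation $Q\,\NTT(v_\text{bin})=\NTT(v_{\text{tot}})$ is used only in (e). Agreement of the first $d/l$ coefficients could not by itself give (d), which is an identity in $\mathcal{R}_q$.

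Second, in soundness, (d) at two challenges gives $h=v^*_\text{bin}\NTT^{-1}(Q^T\phi_i)-v^*_{\text{tot}}\NTT^{-1}(\phi_i)+g^*$ with $g^*=\mathbf{u}_{2,i}-\mathbf{a}_\text{g}^T\mathbf{r}^*$. A cheating prover need not give $g^*$ a vanishing first block. So (e) only says that the first block of $\frac{1}{l}\langle\phi_i,Q\,\NTT(v^*_\text{bin})-\NTT(v^*_{\text{tot}})\rangle$ equals minus that of $g^*$. Your $q^{-d/l}$ bound still holds because $g^*$ is fixed by $\mathbf{u}_{2,i}$ (and binding) before $\phi_i$ is sampled. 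That is exactly why $g$ is committed in the first move, and your argument should say so explicitly.
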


\begin{proof}
\textit{Completeness}:
For each transaction $t$ and asset $a$, an honest participant $i$ follows Protocol~\ref{proto:pi_a} to prove they have sufficient funds to proceed with the transaction. Unless the protocol aborts due to rejection sampling, we follow the computation similarly to the proof of Theorem \ref{thm:pi_c}. Conditioned on non-aborting by the participant $i$, we obtain the verification step \ref{vrfy:pi_a_1} passes with probability
\begin{align*}
    \Pr \big(\| \mathbf{z}_i \| \leq \mathfrak{s}\sqrt{2(\kappa+\lambda+3)d}\big) -\frac{2^{-k}}{M} \geq 1-2^{-(\kappa+\lambda+3)d/8} -\frac{2^{-k}}{3}
\end{align*}

The verification step \ref{vrfy:pi_a_2} follows straightforwardly from BDLOP proof of opening as Protocol~\ref{proto:ZKPoO_bdlop_linear}. In the third verification, we have:
\begin{align*}
    &(\mathbf{a}_\text{bin}^T\mathbf{z}_i - c_if_{1,i})( \mathbf{a}_\text{bin}^T\mathbf{z}_i - c_if_{1,i}+c_i)+(\mathbf{a}_\text{bin}')^T\mathbf{z}_i -  c_i \mathbf{u}_{1,i}-u_\text{3,i}\\
    =&(\mathbf{a}_\text{bin}^T\mathbf{y}_i)^2+(c_iv_\text{bin})^2+(\mathbf{a}_\text{bin}')^T\mathbf{y}_i-c_i^2v_\text{bin}-u_\text{3,i}\\
    =&(\mathbf{a}_\text{bin}^T\mathbf{y}_i)^2+(c_iv_\text{bin})^2+(\mathbf{a}_\text{bin}')^T\mathbf{y}_i-c_i^2v_\text{bin}-(\mathbf{a}_\text{bin}^T\mathbf{y}_i)^2-(\mathbf{a}_\text{bin}')^T\mathbf{y}_i\\
    =&c_i^2v_\text{bin}(v_\text{bin}-1).  
\end{align*}
Since $v_\text{bin}$'s NTT is binary with an honest prover, the verification step \ref{vrfy:pi_a_3} will end up as 0. In the verification step \ref{vrfy:pi_a_4}, we obtain the following result by moving everything on the left-hand side:
\begin{align*}
    &c_i(\text{NTT}^{-1}(Q^T\phi_i)f_{1,i}-\text{NTT}^{-1}(\phi_i)\com_{1,t,a,i}'+ \mathbf{u}_{2,i}-h)+u_\text{4,i}-\\ 
    &\mathbf{a}_\text{g}^T\mathbf{z}_i-\mathbf{a}_\text{bin}^T\mathbf{z}_i\text{NTT}^{-1}(Q^T\phi_i)+\pk_{1,i}\mathbf{z}_i\text{NTT}^{-1}(\phi_i)\\
    =&\text{NTT}^{-1}(Q^T\phi_i)(c_if_{1,i}+c_i\mathbf{a}_\text{bin}^T\mathbf{y}_i-c_iv_\text{bin}-\mathbf{a}_\text{bin}^T\mathbf{z}_i)\\
    &-\text{NTT}^{-1}(\phi_i)(c_i\com_{1,t,a,i}'-c_iv_\text{tot}+\pk_{1,i}\mathbf{y}_i-\pk_{1,i}\mathbf{z}_i)+c_i\mathbf{u}_{2,i}\\
    &-c_ig+\mathbf{a}_\text{g}^T\mathbf{y}_i-\mathbf{a}_\text{g}^T\mathbf{z}_i\\
    =&0.
\end{align*}
Finally, as $Q(\text{NTT}(v_\text{bin}))=\text{NTT}(v_{\text{tot}})$ in the honest case, we combine \cref{lemma:NTT_as_original} and rewrite the verification step \ref{vrfy:pi_a_5} as:
\begin{align*}
    l\sum_{i=0}^{d/l-1}h_i=\sum_{i=0}^{d/l-1}g_i+\dotprod{Q(\text{NTT}(v_\text{bin}))-\text{NTT}(v_{tot})}{\phi}.
\end{align*}
Since the first $d/l$ coefficients of $g$ are $0$s by construction, the fifth verification holds and concludes the completeness proof.

\textit{Honest-verifier zero-knowledge:}
We follow the idea on proving \cref{thm:pi_c}'s honest-verifier zero-knowledge, and construct the simulator $\simulator$ such that:
\begin{itemize}
    \item $\simulator$ samples $z_i\xleftarrow[]{\$}\mathcal{D}_{\mathfrak{s}}^{(k+\lambda+3)d}$.
    \item $\simulator$ samples $c_i\xleftarrow[]{\$}\mathcal{C}$ and $\phi_i\xleftarrow[]{\$}\mathbb{Z}_q^d$.
    \item $\simulator$ samples $h\xleftarrow[]{\$}(\mathbf{0}_{d/l}\big|\mathbb{Z}_q^{d/l\times(l-1)})$.
    \item $\simulator$ finally samples $f_i,\mathbf{u}_{1,i},\mathbf{u}_{2,i}$ from $\CalRq^{\kappa+3}$.
\end{itemize}
The simulator $\simulator$ further computes $\mathbf{w}_i, u_{3,i}$ and $u_{4,i}$, by verification equation in an honest transcript. Due to the correctness of the protocol, these messages can be computed by the simulator in such a way that the verification equations hold true, and therefore the protocol is honest-verifier zero-knowledge.

\textit{Soundness:} We construct an extractor $\extractor$ with rewinding black-box access to the prover specified in $\pi_i^A$. With two valid transcripts:
\begin{align*}
    &(f_i,\mathbf{u}_{1,i},\mathbf{u}_{2,i},u_{3,i},\mathbf{w}_i,\phi_i,h,u_{4,i},c_i,\mathbf{z}_{i})\\
    &(f_i,\mathbf{u}_{1,i},\mathbf{u}_{2,i},u_{3,i},\mathbf{w}_i,\phi_i,h,u_{4,i},c_i',\mathbf{z}_{i}'),
\end{align*}
where $\bar{c}=c_i-c_i'$ is non-invertible at most $\mathcal{O}\left(\frac{l}{q^{d/l}}\right)$ from \cref{corollary:c_invertible}. $\extractor$ is capable to perform a weak opening of BDLOP commitment as defined in \cref{def:opening_bdlop}. Let $\bar{\mathbf{z}}_i=\mathbf{z}_{i}-\mathbf{z}_{i}'$, $\extractor$ further extracts:
\begin{align*}
    &\mathbf{r}_{\text{PoA}}^*=\bar{\mathbf{z}}_i/\bar{c}\\
    &v_{tot}^* = \com_{1,t,a,i}'-\pk_1^T\mathbf{r}_{\text{PoA}}^*\\
    &v_{bin}^* = f_{1,i}-\mathbf{a}_\text{bin}^T\mathbf{r}_{\text{PoA}}^*\\
    &v_{bin}'^* = \mathbf{u}_{1,i}-(\mathbf{a}_\text{bin}')^T\mathbf{r}_{\text{PoA}}^*\\
    &g^* = \mathbf{u}_{2,i}-\mathbf{a}_\text{g}^T\mathbf{r}_{\text{PoA}}^*,
\end{align*}
and $\mathbf{y}_{i}^*$ similarly to the proof of \cref{thm:pi_c}.

For proving the extracted $v_{tot}^*$ and $v_{bin}^*$ satisfy \cref{eq:ZKPoA_1st_cond} and \cref{eq:ZKPoA_2nd_cond} with overwhelming probability, we refer to Theorem 3.1 of \cite{LNPS21}, and show that probabilities of \cref{eq:ZKPoA_1st_cond} and \cref{eq:ZKPoA_2nd_cond} do not hold when the verification passes are at most $\frac{2}{q^{d/l}}+\mathcal{O}(\epsilon)$ and $\frac{1}{q^{d/l}}$, respectively. 

In conclusion, a malicious participant who does not know the information of $(v_{bin}, v_{tot},\mathbf{r}_{\text{PoA}})$ can only convince the verifier by either breaking the commitments by solving $\msis_{\kappa,\kappa+\lambda+3,8\omega\mathfrak{s}\sqrt{2(\kappa+\lambda+3)d}}$ as shown in Lemma 4.3 in \cite{ALS20}, or by at most responding to specific challenges per commitment such that $\bar{c}$ is non-invertible. Overall, Protocol $\pi_{i}^A$ is sound with knowledge error as $\mathcal{O}\left(\frac{l+3}{q^{d/l}}\right)+\mathcal{O}(\epsilon)$.
\end{proof}
\section{Deferred Proofs for Security Analysis}
\label{app:sec_proof_analysis}
We restate the theorem for convenient. 

\begin{theorem}[Balanced]
    $\name$ is a balanced $\etl$ transaction scheme in ROM, assuming that NIZK is zero-knowledge and (validity) sound, BDLOP and ABDLOP commitment schemes are binding, and MSIS is hard.
\end{theorem}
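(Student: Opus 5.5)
The plan is a game-based reduction. Suppose $\adv$ wins $\expbalance_{\adv}(\secpar)$ with non-negligible probability. Then $\verifytx(\widehat\tx,\ledger)=1$ and one of the three winning conditions holds. I will build a sequence of games in which every NIZK proof output by $\adv$ (in $\widehat\tx$ and in every transaction it submits to $\mathcal{O}_{\ledger}$) is replaced by the witness extracted from it. The oracle $\mathcal{O}_{\tx}$ answers with simulated proofs, using the zero-knowledge simulators $(\zksimulator_1,\zksimulator_2)$ and programming the random oracle, so that extraction by rewinding does not require honest secret keys.

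The extraction is done with the generalized forking lemma (\cref{lemma:forking_lemma}) applied to the Fiat--Shamir transcripts. Because the protocols of \cref{thm:pi_b}, \cref{thm:pi_c}, \cref{thm:pi_e}, \cref{thm:pi_a} and $\pi^{KW}$ are 2-special sound, this yields weak openings. The loss is polynomial in the number of oracle queries, plus the knowledge errors $\mathcal{O}(l/q^{d/l})+1/q+2^{-256}$. Binding of BDLOP and ABDLOP (\cref{lemma:binding_bdlop}, \cref{lemma:binding_abdlop}) makes the extracted messages unique. If two extractions, or an extraction and the honest opening, disagree, we obtain an $\msis$ solution on $\bfA$ (respectively $\bfA_1,\bfA_2,\bfA_3,\bfA_4$). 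From here on, every commitment on the ledger has a single well-defined value $v^*$.

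Next I rule out each winning condition using the extracted witnesses.

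\textbf{Condition 3.} $\relation_{PoB}$ gives $\sum_i v^*_{\widehat t,a,i}=0$ for every asset, so summing over all transactions preserves $v^a_{tot}$.

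\textbf{Condition 2.} For a spending account $i$, the OR-proof gives one of two cases. Either $\relation_{PoE2}$ holds, so the recommitment carries the same message as $\com_{\widehat t,a,i}$ and $\relation_{PoA}$ shows that message lies in $[0,\rangebound)$, contradicting $v<0$. Or $\relation_{PoE}$ holds against the column sum. In that case $\relation_{PoC}$ on both commitments gives $\norm{\mathbf{r}}_\infty,\norm{\mathbf{r}'}_\infty\le 2\sqrt{2k}\mathfrak{s}_3$, and $\relation_{PoKW}$ gives short $\mathbf{e}_{1,i},\mathbf{e}_{2,i}$. A bound on $\mathbf{r}$ only covers one commitment, but the column sum is a sum of $|\txlist|$ commitments, so its randomness bound grows linearly in the number of transactions. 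The parameters must keep $\norm{\dotprod{\mathbf{e}_{1,i}}{\mathbf{r}-\mathbf{r}'}}_\infty\le\sqrt q/4$, and the range-proof bound $2\sqrt{2k}\mathfrak{s}_{3,PoE}$ must sit below $\sqrt q/4$. Under those parameters \cref{lemma:comp_equil} applies and gives $v'=\sum_t v_{t,a,i}$. $\relation_{PoA}$ then gives $v'\ge 0$, contradicting condition 2.

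\textbf{Condition 1.} Here the target account $i\notin\pklist_{cor}$ is honest and $v_{\widehat t,a,i}<0$. The first branch of the OR-proof is excluded, since $\relation_{PoE2}$ plus $\relation_{PoA}$ forces a nonnegative message on a recommitment of the same value. So the extractor returns a short $(\mathbf{s}^*_1,\mathbf{e}^*_1,\dots)$ with $\pk_i=\bfA^T\mathbf{s}_j^*+\mathbf{e}_j^*$. The honest key satisfies the same relation with $(\mathbf{s}_j,\mathbf{e}_j)$. In the reduction I embed the $\msis$ instance $[\bfA^T\,|\,\texttt{Id}]$ into the key. By $\mlwe$, the public key can be replaced by an $\mlwe$ sample whose secret the reduction knows, while honest proofs are simulated. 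If $(\mathbf{s}^*,\mathbf{e}^*)\neq(\mathbf{s},\mathbf{e})$, their difference is a short nonzero kernel vector. If they are equal, the adversary has recovered the key from simulated data, which violates $\mlwe$. For the QROM version, forking is replaced by the quantum proof-of-knowledge extractor of \cite{LZ19} instantiated via the collapsing property; the rest of the argument is unchanged.

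\textbf{Main obstacle.} I expect the hardest step to be the norm accounting in condition 2: whether the shortness bound on $\mathbf{r}$ carries through the homomorphic sum of many transactions, given that $\relation_{PoC}$ only bounds each commitment separately. My first attempt is to let the PoE range check absorb this, because $\relation_{PoE}$ directly bounds $\norm{\tilde C\mathbf m-\tilde u}_\infty$, and then to require $|\txlist|\cdot(\text{bound})<\sqrt q/4$ as an explicit parameter constraint. A secondary difficulty is making the rewinding compatible with the adaptively many proofs and with the simulated-proof oracle. I handle it by extracting one proof at a time and taking a union bound over the polynomially many extractions.
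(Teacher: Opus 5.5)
Your plan tracks the paper's game sequence, but condition 2 is missing the step that makes \cref{lemma:comp_equil} applicable. The bound in $\relation_{PoE}$ is on $\tilde C\mathbf{m}^*-\tilde u$ for the \emph{PoE} witness $\mathbf{m}^*=(\mathbf{s}^*_1,\mathbf{e}^*_1,\mathbf{s}^*_2,\mathbf{e}^*_2)$. That witness is only a weak opening: $\bar c\,\mathbf{m}^*$ is short, but $\mathbf{m}^*=\bar{\mathbf{z}}_1/\bar c$ need not be. The identity $\com_{\text{diff},1}-\dotprod{\mathbf{s}}{\com_{\text{diff},0}}=\dotprod{\pk_{1,i}-\bfA^T\mathbf{s}}{\mathbf{r}-\mathbf{r}'}+\delta$ holds for \emph{every} $\mathbf{s}$. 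So the PoE range bound forces $\delta=0$ only if the particular $\mathbf{e}^*_j=\pk_{j,i}-\bfA^T\mathbf{s}^*_j$ in the PoE witness has small $\ell_\infty$ norm. Shortness of the key certified by $\relation_{PoKW}$ (or produced by the honest $\keygen$ in $\setup$) does not transfer automatically. Your remark that binding makes extracted messages unique does not cover this either, because $\widehat{\com}^{KW}$ and $\widehat{\com}^{PoE}$ are different commitments; the link has to go through $\pk_i$. The paper inserts exactly this step before applying the lemma. If the PoKW key $\tilde S$ and the PoE key $\widehat S$ differ, then $\bar c(\tilde S-\widehat S)$ is a nonzero short vector in the kernel of $[\bfA^T\,|\,\texttt{Id}]$, i.e.\ an $\msis$ solution. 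You use this argument in condition 1, but condition 2 needs it for every account. This includes corrupted accounts, where the adversary knows the key and only the $\msis$ branch is available.

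Three smaller points.

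First, your extraction mechanism does not deliver what condition 2 needs. A forking-lemma extraction succeeds with probability roughly $\mathsf{acc}^2/Q$ (\cref{lemma:forking_lemma}), not with all but negligible probability, so you cannot union-bound over polynomially many extractions. Condition 2 also needs witnesses for many proofs simultaneously: PoC for every commitment in the column, PoE, PoA, and the key. Rewinding one proof at a time does not provide them. The paper avoids this by using the theorem's hypothesis directly (Game 2: the NIZK extractor fails only with negligible probability on each adversarial proof), and you should do the same.

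Second, your condition-1 route differs from the paper's and is sound. The paper replaces honest keys by uniform ones via $\mlwe$ (Game 4), embeds the HNF instance $[\bfA^T\,|\,\texttt{Id}\,|\,\pk_{1,i}]$, and forks on the PoE challenge. This yields a kernel vector whose last entry $\widehat c^*-c^*$ is nonzero, so one $\msis$ reduction suffices. You instead keep the real key and split into two cases. If the extracted key differs, you get $\msis$ on $[\bfA^T\,|\,\texttt{Id}]$, again via the $\bar c$-scaled difference since the extracted key is only weakly short. If it equals the real key, you get a search-$\mlwe$ solver from a simulated view. This avoids the uniform-key hybrid at the cost of a second reduction.

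Finally, in the $\relation_{PoE2}$ branch of condition 2 you only get $v_{\widehat t,a,i}\ge 0$. You therefore also need the inductive fact that the column sum before $\widehat t$ was already nonnegative.
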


\begin{proof}
We will show this theorem via a series of small changes made to the experiment.

\gamedes{0} The original privacy experiment in \cref{definition:exp_balance}.

\gamedes{1} Let ($\zksimulator_1, \zksimulator_2$) be the simulator for the zero-knowledge experiment of the underlying proof system.
The experiment instead of computing the proof normally using $\nizkprove_{\lang_{\relation}}(\crs,\stmt,\wit)$, it use the simulator $\zksimulator_2(\crs, \stmt, {\sf st})$ to generate the proof where $(\crs, {\sf st}) \gets \zksimulator_1(\secpar)$. Since the relations still hold, this game is indistinguishable from the previous game by the NIZK zero-knowledge property.

\gamedes{2} The experiment extracts the witness of zero-knowledge proofs produced by the adversary by running the knowledge extractor $\mathcal{E}$ for the corresponding relation and abort in the event that any of the witness extraction failed. From the validity knowledge soundness of NIZK, it follows that abort event happens with negligible probability $\negl$.

\gamedes{3} The experiment aborts in the event that for any of the commitment opening extracted from the proofs, there exists two different extracted openings for the same commitment. 
The abort event happens with negligible probability $\negl$ due to the binding property of the commitment scheme $\bindingadv$.

\gamedes{4} The experiment now sample the "public key" portion of the commitment key $\pk_{1,i}^T,\pk_2^T$ in $\commitkey_i$ uniformly at random that is $(\mathbf{\pk}^T_{1,i},\mathbf{\pk}^T_{2,i}) \sample R_q^{1 \times (\kappa+\lambda+3)} \times R_q^{1 \times (\kappa+\lambda+3)}$ for all not corrupted participant $i \in (\bank \setminus \bank_{C})$ generated during setup. The replacement is indistinguishable by the $\mlwe_{\kappa+\lambda+3,\kappa,\chi}$ assumption. 

\gamedes{5} The experiment aborts in the event for any asset $a \in \assetlist$, for any participant $i$, the extracted value is a negative amount $v^{*}_{\widehat t,a,i}<0$, but the adversary does not query the corruption oracle $\mathcal{O}_{c}$ on  participant index $i$, i.e. $i \notin \pklist_{cor}$. This is a signature forgery.
We claim the event happens with negligible probability, that is the following probability 
\begin{align*}
    |\Pr[\GAME_5(\adv)=1] -\Pr[\GAME_4(\adv)=1]| &\leq |\pklist| \cdot \mathsf{acc} \\ &\leq |\pklist|\cdot(\frac{Q_{eq}}{|\mathcal{C}|}+ \sqrt{Q_{eq}\cdot \mathsf{frk}})
\end{align*}
is negligible.

The claim is proved by constructing a reduction algorithm that find a solution to the MSIS problem $\msis_{\kappa+\lambda+3,\kappa+1,\beta}$ in the Hermite normal form (HNF). Given an instance of MSIS problem encoded as $\begin{bmatrix}
    \mathbf{A}^T||\pk_1
\end{bmatrix} := \mathbf{\widehat A} \sample \mathcal{R}_q^{(\kappa+\lambda+3) \times (\kappa+1) }$, we embed the instance into the scheme by replacing both the commitment key $\mathbf{A}$ and $\pk^T_{1,i}$ for some random participant index $i$ with the problem instance. The view remains the same for the adversary since the $\mathbf{A}$ and public key $\pk$ is uniformly random. Let     $\tilde{\bf A}=\begin{bmatrix}
        \bfA^T,\texttt{Id},0,0\\
        0,0,\bfA^T,\texttt{Id}
    \end{bmatrix}$. Then by invoking the forking lemma from \cref{lemma:forking_lemma} on the algorithm $\mathcal{B}$ that simulates similar to experiment of $\GAME_4$ but fork at $c_i$ and perform the correct bookkeeping, then with probability $\mathsf{frk}$, we obtain two forgeries $(\mathbf{v}_{1,i}^*, \widehat c_i^*, \mathbf{z}_{1,i}^*)$ and $(\mathbf{\widehat v}_{1,i}^*, \widehat c_i^*, \mathbf{\widehat z}_{1,i}^*)$ that verifies the verification defined in \cref{proto:pi_e_verification} where $\tilde{\mathbf{A}}\mathbf{z}_{1,i}^*-c_i^*
\begin{bmatrix}
\pk_{1,i}\\ \pk_{2,i}
\end{bmatrix}=\mathbf{v}_{1,i}^*$ and 
$\tilde{\mathbf{A}}\mathbf{\widehat z_{1,i}}^*-\widehat c_i^*
\begin{bmatrix}
\pk_{1,i}\\ \pk_{2,i}  
\end{bmatrix}=\mathbf{\widehat v}_{1,i}^*$. By the forking lemma, we have that $\mathbf{v}^*_{1,i}=\mathbf{\widehat v}^*_{1,i}$ since until the forking point the adversary view remains the same and that $c_i^* \neq \widehat c_i^*$. 
Let consider the upper half of the term ($\pk_{1,i}$) only and let define the term $(\mathbf{z}^*, \cdot ):=\mathbf{z}_{1,i}^*$ and $(\mathbf{\widehat z}^*, \cdot) := \mathbf{\widehat z}_{1,i}^*$ to denote the upper half of the response term as well.
Therefore, we have that $[\mathbf{A^T|I]}\mathbf{z}^*-c_i^*\pk_{1,i}= [\mathbf{A^T}|\mathbf{I}]\mathbf{\widehat z}^*-\widehat c_i^*\pk_{1,i}    
$
and consequently
$$
[\mathbf{A^T||I||}\pk_{1,i}] \begin{bmatrix}
    \mathbf{z}^*-\mathbf{\widehat z}^*\\ \mathbf{\widehat c}^* - \mathbf{c}^*
\end{bmatrix}
=0.
$$
Now recall that $[\mathbf{A^T}||\pk||\mathbf{I}]$ is an instance of (HNF) MSIS problem instance and that $\mathbf{\widehat c}^* - \mathbf{c}^* \neq 0$. We indeed found a solution that is bounded by $\beta \leq \sqrt{(2 \mathfrak{s}_1\sqrt{4(2\kappa+\lambda+3)d})^2 + 2\omega }$. 

Therefore, the probability of fork is $\mathsf{frk} \leq \advantage{}{\msis_{\kappa+\lambda+3,\kappa+1,\beta}}$.

In $\GAME_5$, we now show that none of the ledger violation conditions 1) 2) and 3) specified in Balance experiment can be fulfilled.
Let $\widehat t$ be the pending transaction.

For condition 1), the ownership bypass condition is aborted in $\GAME_5$. 

For condition 2), we have that due to the additive homomorphic property of BDLOP commitment scheme, $v_{a,i}^{sum}$ is committed under the summed commitment $\com_{a,i}^{sum}$. 
Due to the abort condition in $\GAME_2$ and the extraction, we extracted $\mathbf{e}_{1,i}^*$,$\mathbf{e}_{2,i}^*$, from the $\pi_{PoKW}$ such that  $\norm{(\mathbf{e^*_{\{0,1\},i}})}_{\infty} \leq 2\sqrt{2k}\mathfrak{s}_{3,PoKW}$. We first show that the adversary is bounded to the same extracted $(\mathbf{s_1^*,e_1^*,s_2^*,e_2^*)}$ in both proof of key well-formness and proof of equivalence. Let $\mathbf{A}^T \sample R_q^{(\kappa+\lambda+3) \times  \kappa} $ be a MSIS problem instance $\msis_{(\kappa+\lambda+3), \kappa, \beta}$. Without loss of generality we consider $\mathbf{s_1^*,e_1^*}$ since a similar strategy apply to $\mathbf{s_2^*,e_2^*}$. Let $\mathbf{\tilde s_1^*,\tilde e_1^*}$ and $\mathbf{\widehat s_1^*,\widehat e_1^*}$ (with $\bar c$ opening challenge difference) be the extracted secret key from PoKW and PoE respectively. Assume that $\tilde S=\mathbf{\tilde s_1^*||\tilde e_1^*} \neq \mathbf{\widehat s_1^*||\widehat e_1^*}=\widehat S$. We know that $[\mathbf{A}^T||\mathbf{I}]\tilde S - [\mathbf{A}^T||\mathbf{I}]\widehat S = \pk_1-\pk_1=0$.
Multiply in $\bar c$ and rearranging, we have $[\mathbf{A}^T||\mathbf{I}](\bar c\mathbf{\tilde S} - \bar c \mathbf{\widehat S})=0$.
Since $\norm{\bar c (\mathbf{\widehat s_1}||\mathbf{\widehat e_1})} \leq 2{\mathfrak{s}_{1,PoE}}\sqrt{2(4\kappa+2\lambda+6)d}= \beta_a$ from opening and  $\norm{\bar c (\mathbf{\tilde s}_1||\mathbf{\tilde e}_1)} \leq |\bar c|_1\cdot \sqrt{dm} \cdot \norm{\mathbf{\tilde s}_1||\mathbf{\tilde e}_1}_{\infty} \leq (2\omega)( \sqrt{d(2\kappa+\lambda+3)}2\sqrt{2k}\mathfrak{s}_{3,PoKW}) = \beta_b$. Therefore, we found a solution to MSIS instance with bound $\beta \leq \beta_a+\beta_b$ as desired.

Given the extracted summed commitment random value $\mathbf{r}_{a,i}^{sum} = \sum_{t\in \txlist \cup \widehat t} \mathbf{r}_{t,a,i}^*$ and the re-commitment random value $\mathbf{r}_{\widehat t,a,i}'^*$, the validity soundness of NIZK ensures the extracted value fulfill the relation $\relation_{PoC}$ such that $\norm{\mathbf{r}_{a,i}^{sum}}_{\infty} \leq (|\txlist|+1) 2\sqrt{2k}\mathfrak{s}_{3,PoC}$ and $\norm{\mathbf{r}_{\widehat t,a,i}'^*}_{\infty} \leq 2\sqrt{2k}\mathfrak{s}_{3,PoC}$. Consequently, we have $\norm{\dotprod{\mathbf{e}^*_{\{1,2\},i}}{\mathbf{r}_{a,i}^{sum}-\mathbf{r}_{\widehat t,a,i}'^*}}_{\infty} \leq (\kappa+\lambda+3)\cdot 2d(\beta_e\beta_r)$ where $\beta_e = 2\sqrt{2k}\mathfrak{s}_{3,PoKW}$ and $\beta_r =  (|\txlist|+2) 2\sqrt{2k}\mathfrak{s}_{3,PoC}$. Let the parameter be set such that $2\sqrt{2k}\mathfrak{s}_{3,PoC} < \sqrt q/4$ and $(\kappa+\lambda+3)\cdot 2d(\beta_e\beta_r) < \sqrt q/4$, 
Combined with the $\relation_{PoE}$ which ensures that $\norm{\tilde{C} \mathbf{m} -\tilde{u}}_{\infty} \leq 2\sqrt{2k}\mathfrak{s}_{3,PoE}$, we now apply \cref{lemma:comp_equil} to conclude that $v_{a,i}^{sum}=v'$ where $v'$ is committed in $\com'$. 
Then, by the validity soundness of NIZK for $\relation_{PoA}$, we checked that $v'$ is an integer with the range $[0, N)$ as required.

For condition 3), we need to check for a new transaction $\widehat t$, we have that the sum of transaction value for $\widehat t$ is 0 given any asset. That is for any $a \in \assetlist$,  $v_{\widehat t,a}^{sum}=\sum_{i \in \pklist} v_{\widehat t,a,i}=0$. This means that by appending $\widehat t$ to a ledger $\ledger$, the total asset value does not change.
Due to the additive homomorphic property of BDLOP commitment scheme, $v_{\widehat t,a}^{sum}$ is committed under the summed commitment $\com_{\widehat t, a}^{sum}$.
Due to the abort condition in $\GAME_2$, the proof of balance $\pi_{PoB}$ and the associated relation $\relation_{PoB}$, we have that it must be the case that for the pending transaction $\widehat t$ and any asset $a$, we have that $v_{\widehat t,a}^{sum}=0$ as required or else it is not cannot fulfill the relation of $\relation_{PoB}$.

Therefore, we conclude that the adversary has probability of 0 at winning the game $\GAME_5$.

\end{proof}

\begin{theorem}[Balanced (QROM)]
    $\name$ is a balanced $\etl$ transaction scheme in QROM, assuming that NIZK is zero-knowledge and quantum (validity) sound, BDLOP and ABDLOP commitment schemes are binding, and MSIS is hard.
\end{theorem}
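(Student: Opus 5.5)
The plan is to reuse the game sequence from the ROM proof of the Balanced theorem and change only the steps that use rewinding. Games $\GAME_0$ (the original experiment), $\GAME_1$ (simulated proofs), $\GAME_3$ (binding) and $\GAME_4$ (MLWE replacement of the public keys) are unchanged. Each transition there is a reduction to zero-knowledge, binding or MLWE, and none of these reductions rewinds the adversary or programs the oracle adaptively. The NIZK simulator in $\GAME_1$ reprograms the random oracle at the points it needs. In the QROM I will justify this with a quantum-accessible reprogramming argument, such as the one behind Fiat--Shamir zero-knowledge in the QROM from \cite{Unr16b,Unr17}. The statistical HVZK of $\pi^B,\pi^C_i,\pi^{Eq}_i,\pi^A_i$ and $\pi^{KW}$ (Theorems \ref{thm:pi_b}, \ref{thm:pi_c}, \ref{thm:pi_e}, \ref{thm:pi_a}) feeds into this argument directly. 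The MLWE and binding hops are classical reductions that run a quantum adversary once, so they carry over verbatim.

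The substantive changes are in $\GAME_2$ and $\GAME_5$, which used classical rewinding through the forking lemma (\cref{lemma:forking_lemma}). For $\GAME_2$ I will use the quantum proof-of-knowledge property of each sigma protocol. The route is: show each protocol is weakly collapsing in the sense of \cite{LZ19}, obtain a quantum knowledge extractor in the form of the (Q)PoK soundness definition, and lift it through Fiat--Shamir in the QROM via the measure-and-reprogram technique. This gives a NIZK extractor whose failure probability is at most $\frac{1}{p(\secpar)}(\Pr[\text{accept}]-\kappa)^c - \negl$ against the adversary's success. The abort probability in $\GAME_2$ is then polynomially related to the extraction failure, hence negligible. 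The key consequence is that every extracted witness satisfies its relation $\relation_{PoC}$, $\relation_{PoE}$, $\relation_{PoKW}$, $\relation_{PoA}$ or $\relation_{PoB}$. From those relations, the arguments for conditions 2) and 3) from the ROM proof go through word for word: the MSIS argument binding the $\pi^{KW}$ and $\pi^{Eq}$ secret keys together, the norm bound on $\langle \mathbf{e}^*,\mathbf{r}^{sum}-\mathbf{r}'^*\rangle$, the application of \cref{lemma:comp_equil}, and the range guaranteed by $\relation_{PoA}$.

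For $\GAME_5$ (ownership forgery), I will not fork on $c_i$. Instead I will reduce to the same HNF-MSIS instance $[\mathbf{A}^T\|\mathbf{I}\|\pk_{1,i}]$ using the quantum extractor from the previous step applied to the forged $\pi^{Eq}$. Extraction yields a weak opening $(\bar c,\mathbf{m}^*)$ with $\bar c\,\pk_{1,i} = [\mathbf{A}^T\|\mathbf{I}]\,\bar c\,(\mathbf{s}_1^*\|\mathbf{e}_1^*)$ and $\norm{\bar c\,\mathbf{m}^*}$ bounded. Since $\bar c \neq 0$, the vector $(\bar c(\mathbf{s}_1^*\|\mathbf{e}_1^*), -\bar c)$ is a nonzero MSIS solution of norm at most $\beta$. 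Guessing the target participant loses a factor $|\pklist|$. The polynomial loss $(\cdot)^c/p(\secpar)$ from quantum extraction replaces the $\sqrt{Q_{eq}\cdot\mathsf{frk}}$ bound, which is acceptable because MSIS is assumed hard against quantum adversaries.

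The main obstacle is establishing weak collapsingness of the multi-round protocols $\pi^C_i$, $\pi^{Eq}_i$ and $\pi^A_i$, which have 5 to 7 moves. Weak collapsingness needs the response register, once the earlier messages are fixed, to be computationally hard to leave in superposition over two distinct valid responses. My first attempt would be to show that two distinct accepting responses $(\mathbf{z}_1,\mathbf{z}_2)\neq(\mathbf{z}_1',\mathbf{z}_2')$ for the same challenge give $\bfA_1(\mathbf{z}_1-\mathbf{z}_1')+\bfA_2(\mathbf{z}_2-\mathbf{z}_2')=0$, which is an MSIS solution. Collapsing would then follow from collision resistance of the short-preimage map, as in \cite{LZ19}. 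A cleaner alternative is to note that each verifier check is a linear map applied to short vectors, which makes it a collapsing hash under MSIS. For the multi-round protocols I would apply the measure-and-reprogram/compression argument round by round, treating the earlier challenges $\mathbf{R}_i$, $\{d_{1,j}\}$ and $\phi_i$ as part of the statement, as the paper's footnote does for the OR-proof. Any rounds this reduction cannot absorb would cost a polynomial factor in the extraction bound.
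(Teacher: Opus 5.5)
Your proposal is essentially the paper's argument. You reuse the ROM game sequence, swap the classical extractors for the quantum PoK extractors established in the appendix, and replace the forking-lemma step in $\GAME_5$ by a gentle-measurement-and-restart extraction from the forged $\pi^{Eq}$ that yields a short nonzero kernel vector of $[\mathbf{A}^T\|\mathbf{I}\|\pk_{1,i}]$. The weak-collapsing property you call the main obstacle is already part of the hypothesis that the NIZK is quantum (validity) sound, and the paper proves it separately via compatible lossy functions whose preserving and injective modes are MLWE-indistinguishable. If you do want to prove it yourself, collision resistance under MSIS does not by itself imply collapsing, so that route would need the same kind of lossy-mode switch.
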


\begin{proof}
    \textit{Sketch.} The proof proceeds similar to that of the ROM version. Instead of invoking proof of knowledge extractor, we invoke quantum proof of knowledge extractor since the protocols are proven to be quantum proof of knowledge in \cref{sec:qpok-pob-collapsing} and \cref{app:quantum_poc_poe_poa_etc}.  For the signature forgery, we invoke a similar "gentle measurement + restart" argument which is used in quantum proof of knowledge extractor for $\pi^{Eq}$ to extract a MSIS solution. Since the random oracle is not used in anywhere else, the scheme is now balanced in the QROM model.
\end{proof}

\begin{theorem}[Privacy]
    $\name$ is a private $\etl$ transaction scheme, assuming that commitment scheme is hiding, NIZK is zero-knowledge and MLWE sample is indistinguishable from uniform sample.
\end{theorem}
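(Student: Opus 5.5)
We prove the Privacy theorem with a hybrid argument over games, in the same style as the Balance proof. In each hybrid, the challenge transaction $\widehat\tx_b$ depends on $b$ less and less. By the final game it is independent of $b$, so the adversary's advantage there is exactly $0$. All the hybrids apply equally to the transactions produced by $\mathcal{O}_{\tx}$, since those are also honestly generated.

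\gamedes{0} This is the experiment of \cref{definition:exp_privacy}.

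\gamedes{1} Replace $\crs$ and every honestly generated proof ($\pi^{PoB},\pi^{PoC},\pi^{\prime PoC},\pi^{PoE}\vee\pi^{PoE2},\pi^{PoA}$, and the setup proofs $\pi^{KW}$) by outputs of the NIZK simulator $(\zksimulator_1,\zksimulator_2)$. The statements are true because $\widehat\valuelist_b\in\valueset$. So a hybrid over the polynomially many proofs bounds the loss by the zero-knowledge advantage. For the OR-proof we also use that the simulated transcript does not reveal which branch (spend or receive) holds. This hides the sign pattern of the $v_{\widehat t,a,i}$.

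\gamedes{2} For every honest participant, replace $\pk_{1,i},\pk_{2,i}$ by uniform elements of $\CalRq^{\kappa+\lambda+3}$. Since $\pk_{j,i}=\bfA^T\mathbf{s}_{j,i}+\mathbf{e}_{j,i}$ with $\mathbf{s},\mathbf{e}\gets\chi$, this costs at most $2|\bank|$ times the $\mlwe_{\kappa+\lambda+3,\kappa,\chi}$ advantage. This step is legitimate because after $\GAME_1$ no algorithm uses $\sk_i$ for the challenge proofs any more. The \emph{main obstacle} is that $\createtx$ on oracle queries still needs $\sk_i$ for $\pi^{PoE}$ and for re-commitments computed via $\extract$. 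I would handle this by having the challenger keep the plaintext values of all ledger entries it created. Values then come from bookkeeping rather than decryption, and all proofs are simulated, so $\sk_i$ is never used. Adversarial entries accepted through $\mathcal{O}_\ledger$ are the one gap, and I would assume they are handled by the static-corruption model, with their values obtained by extraction as in $\GAME_2$ of the Balance proof.

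\gamedes{3} Replace every challenge commitment $\com_{\widehat t,a,i}$, $\com'_{\widehat t,a,i}$ and every auxiliary ABDLOP commitment ($\widehat\commitment^{PoC},\widehat\commitment^{\prime PoC},\widehat\commitment^{PoE}$) by a commitment to $0$. With the key now uniform, $\com_{t,a,i}=[\bfA;\pk_1^T;\pk_2^T;\bfB^T]\mathbf{r}+(0,v,\sqrt q v,v)$ is a BDLOP commitment under a uniformly random key. Its hiding reduces to dual $\mlwe_{\lambda,\kappa+3,\chi}$, through the commitment-hiding assumption in the theorem. Each replacement is one hybrid step, giving a loss of $\mathrm{poly}\cdot\hidingadv$.

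In $\GAME_3$, $\widehat\tx_b$ consists only of commitments to zero and simulated proofs whose statements are public. It also touches every participant column, because decoys are mandatory. Its distribution is therefore independent of $b$, so $\Pr[b=b']=1/2$. Summing the hybrid losses shows $\advantage{\expprivacy}{\adv}$ is negligible, which gives the $k$-anonymity hiding of both values and transacting parties.
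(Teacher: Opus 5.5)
Your three hybrids match the paper's: simulate every NIZK, switch the public keys to uniform under MLWE, then use hiding. Your bookkeeping for $\mathcal{O}_{\tx}$ also fixes something the paper ignores. However, $\GAME_2$ fails as you have ordered it.

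You justify it by saying that after $\GAME_1$ nothing uses $\sk_i$, but $\sk_i$ is itself committed in public data. $\createtx$ places $\widehat\commitment^{PoE}_{\widehat t,a,i}=\commit_{\commitkey_{PoE}}((\sk_i,0),\mathbf{s}''_{\widehat t,a,i})$ in every transaction, and you replace it only in $\GAME_3$. Likewise, $\keygen$ publishes $\widehat\commitment=\commit_{\commitkey_{KW}}((\sk,0),\mathbf{s})$ as part of the statement of $\pi^{KW}$, and you never replace it. These commitments are inputs to $\zksimulator_2$, not outputs of it, so simulation does not remove them. An MLWE reduction that receives $(\bfA,\pk_{j,i})$ but not $(\mathbf{s}_{j,i},\mathbf{e}_{j,i})$ therefore cannot produce them.

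The fix is a hybrid placed \emph{before} $\GAME_2$ that replaces every commitment to $\sk_i$ by a commitment to $0$. This step is legitimate while the keys are still real. The reason is that $\widehat\commitkey^{PoE}=\widehat\commitkey^{KW}=(\bfA_3,\bfA_4,\bfB'_{eq},\bfB''_{eq})$ is sampled independently of $\pk_i$, so its ABDLOP hiding is plain dual MLWE on $\bfA_4,\bfB'_{eq},\bfB''_{eq}$. If you instead regard these commitments as part of the proof, as the paper's HVZK simulator does when it samples $f_i$ uniformly, then $\GAME_1$ already relies on exactly this hiding, and you must say so.

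The same ordering issue arises inside $\GAME_3$. $\widehat\commitment^{PoC}_{\widehat t,a,i}$ and $\widehat\commitment^{\prime PoC}_{\widehat t,a,i}$ carry $\mathbf{r}_{\widehat t,a,i}$ and $\mathbf{r}'_{\widehat t,a,i}$ in their Ajtai part. They must therefore be switched before you invoke hiding of $\com_{\widehat t,a,i}$ and $\com'_{\widehat t,a,i}$, whose reductions never see that randomness. The paper's proof uses the same three games and has the same omission.

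The gap you flag for entries that $\adv$ posts through $\mathcal{O}_\ledger$ is also real, and your patch does not work as stated. From $\GAME_1$ on, the adversary sees simulated proofs. Extracting its values "as in the Balance proof" would then need simulation-sound extractability. The privacy theorem assumes only hiding, zero-knowledge and MLWE, not even soundness. Static corruption is not the relevant mechanism either, because the privacy experiment gives $\adv$ no secret keys at all.

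That last fact suggests a route. Without keys, $\adv$ should be unable to use the PoE branch. Then PoE2 and PoA force each accepted adversarial entry into $[0,\rangebound)$. PoB, with $|\pklist|\rangebound<q$, then forces all of these entries to be $0$, so the bookkeeping can record $0$ for them. Making this rigorous while proofs are simulated is precisely where simulation soundness is needed. The paper's proof does not address this point.
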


\begin{proof}
    We will show this theorem via a series of small changes made to the experiment.

\gamedes{0} The original privacy experiment $$\Pr[\GAME_0(\adv)=1] := \advantage{\expprivacy}{\adv}$$.

\gamedes{1} 
 Let ($\zksimulator_1, \zksimulator_2$) be the simulator for the zero-knowledge experiment of the underlying proof system.
The experiment when creating transaction $\tx$ using $\createtx$, instead of computing the proof normally using $\nizkprove_{\lang_{\relation}}(\crs,\stmt,\wit)$, we use the simulator $\zksimulator_2(\crs, \stmt, {\sf st})$ to generate the proof where $(\crs, {\sf st}) \gets \zksimulator_1(\secpar)$. Since the relations still hold, this game is indistinguishable from the previous game by the NIZK zero-knowledge property. Let $|\pi|$ be the number of proofs per asset and per participant, we have:
$$|\Pr[\GAME_1(\adv)=1] -\Pr[\GAME_0(\adv)=1]| \leq |\pi| \sum_{a \in \assetlist} |\valuelist_a| \cdot  \sf{ZK}^{\adv}(\secpar).$$

\gamedes{2} The experiment when creating the key pairs using $\keygen$, sample $\pk$ from the corresponding uniform distribution. Since the adversary do not have access to the secret key, it is indistinguishable by the hardness of the MLWE,

$$|\Pr[\GAME_2(\adv)=1] -\Pr[\GAME_1(\adv)=1]| \leq |\pklist| \cdot {\mlwe_{\kappa+\lambda+3,\kappa,\chi}^\adv(\secpar)}.$$

\gamedes{3} The experiment when creating transaction $\tx$ using $\createtx(\valuelist, \sklist, \ledger)$, instead of committing to $\valuelist$, it commits to a random valid value list $\valuelist^\dagger \in \valueset$. It is indistinguishable by the hiding property of the commitment scheme,
$$|\Pr[\GAME_3(\adv)=1] -\Pr[\GAME_2(\adv)=1]| \leq \sum_{a \in \assetlist} |\valuelist_a| \cdot \hidingadv.$$

At this point, the output of $\createtx$ is independent of the value list $\valuelist$ input, and thus independent of $b$. Therefore, $\adv$ has probability of $1/2$ of outputting $b=b'$ in $\GAME_3$. 

\end{proof}

\section{Quantum Proof of Knowledge}
\label{app:quantum_pok}

\subsection{Preliminaries}
We restate some of the preliminaries here. Let $\lambda$ be the security parameter, $\mathcal{R}_q=\mathbb{Z}_q[X]/(X^d+1)$ a cyclotomic ring, and let $C$ be a public-coin challenge space (e.g., short ternary polynomials) sampled by the verifier in a public-coin sigma protocol. We denote by $\negll(\lambda)$ a negligible function and by $\gamma(\lambda)>0$ a non-negligible function. All probabilities are taken over the internal randomness of the parties and the outcomes of quantum measurements.

\medskip

We consider two settings:
\begin{itemize}
  \item \textbf{Commitment-level:} A (possibly interactive) commitment scheme with public parameters $\mathsf{pp}$, and a set of valid openings/witnesses $\mathcal{W}(x,a)$ for an instance $x$ and commitment $a$.
  \item \textbf{Sigma-protocol-level:} A public-coin sigma protocol with first message $a$ (commitment), challenge $c\leftarrow C$, and response $z$. For fixed $(x,a,c)$, define the set of valid responses $\mathcal{W}(x,a,c)$ satisfying the protocol's verification relation.
\end{itemize}

\subsection{Collapsing and Weakly Collapsing Definition}
We formalize collapsing via two indistinguishability games that differ by whether the witness (or its injective image) is measured before the protocol continues.

\begin{definition}[Collapsing Game (sigma-protocol form)]
\label{def:collapsing-game}
Fix public parameters $\mathsf{pp}$, instance $x$, and first message $a$. Let $D$ be a quantum polynomial-time (QPT) distinguisher. Consider two experiments:
\begin{itemize}
\item \textbf{Unmeasured Game} $\mathrm{CollapsingGame}^0$: 
  \begin{enumerate}
    \item The challenger samples $c\leftarrow C$ and gives $(x,a,c)$ to $D$.
    \item $D$ prepares a (normalized) superposition over candidate responses 
    $$
    \sum_{z\in \mathcal{W}(x,a,c)} \alpha_z \,|z\rangle \otimes |\psi_z\rangle,
    $$
    and sends the $|z\rangle$ register to the challenger. The challenger does not measure $|z\rangle$ and returns it to $D$.
    \item $D$ outputs a bit $b_0\in\{0,1\}$.
  \end{enumerate}
\item \textbf{Measured Game} $\mathrm{CollapsingGame}^1$: 
  \begin{enumerate}
    \item The challenger samples $c\leftarrow C$ and gives $(x,a,c)$ to $D$.
    \item $D$ prepares $\sum_{z\in \mathcal{W}(x,a,c)} \alpha_z \,|z\rangle \otimes |\psi_z\rangle$ and sends the $|z\rangle$ register to the challenger. The challenger measures $|z\rangle$ in the computational basis (obtaining some valid $z$), returns the collapsed (post-measurement) state to $D$.
    \item $D$ outputs a bit $b_1\in\{0,1\}$.
  \end{enumerate}
\end{itemize}
\end{definition}

\begin{definition}[Collapsing]
\label{def:collapsing}
A sigma protocol (or commitment scheme) is \emph{collapsing} if for all QPT distinguishers $D$,
\begin{align*}
&\Big|\Pr[\mathrm{CollapsingGame}^0\text{ outputs }1] - \Pr[\mathrm{CollapsingGame}^1\text{ outputs }1]\Big| \\
& \;\le\; \negll\lambda.
\end{align*}
\end{definition}

\begin{definition}[Weakly Collapsing]
\label{def:weak-collapsing}
A sigma protocol (or commitment scheme) is \emph{weakly collapsing} if there exists a non-negligible $\gamma(\lambda)>0$ such that for all QPT distinguishers $D$,
\begin{align*}
&\Pr[\mathrm{CollapsingGame}^1\text{ outputs }1] \\
&\;\ge\;
\gamma(\lambda)\cdot \Pr[\mathrm{CollapsingGame}^0\text{ outputs }1]
\;-\;
\negll\lambda.
\end{align*}
We say the protocol is \emph{worst-case} weakly collapsing if the above holds for every fixed $(x,a)$ (and $c\leftarrow C$), and \emph{average-case} weakly collapsing if it holds for random instances/measures defined by a generation algorithm (e.g., $(x,w)\leftarrow \mathrm{Gen}(1^\lambda)$).
\end{definition}

\subsection*{NIZKPoK from Weakly Collapsing Sigma Protocol}
As proven in \cite{LZ19}, 2-special sound HVZK sigma protocol with (worst-case) weakly collapsing property is a quantum proof of knowledge. Then applying Fiat-Shamir transformation on a quantum proof of knowledge sigma protocol, we get NIZKPoK scheme for the associated relation. Therefore, we focus on proving quantum proof of knowledge for a sigma protocol in the rest of the section.

\subsection{Ancilla-Only Measurement Variant}
Direct measurement of $|z\rangle$ may be overly destructive in a quantum setting. A standard collapsing instantiation measures an \emph{injective or bounded-image function} of $z$, coherently computed on an ancilla.

\begin{definition}[Measured Function Variant]
\label{def:clf-pob-rt-a}
A compatible lossy function (CLF) for PoB is a public generator
$$
\mathrm{CLF.Gen}(a,c,\mathrm{mode}) \;\longrightarrow\; f:\,\mathcal{Z}_{a,c}\to \mathcal{R}_q^{\,l},
$$
with parameters $l=\mathrm{poly}(\lambda)$, $p_c(\lambda)$, $\tau(\lambda)$ and the following properties:

\begin{itemize}
\item \emph{Preserving (lossy) mode}: with probability $\tau(\lambda)$, $|\mathrm{Im}(f)|\le p_c(\lambda)$ over $\mathcal{Z}_{a,c}$ (bounded image).
\item \emph{Injective mode}: With probability $1-\mathrm{negl}(\lambda)$, the sampled $f$ is injective on $\mathcal{Z}_{a,c}$.
\item \emph{Indistinguishability}: For any QPT distinguisher $D$ that chooses $(a,c)$ and is given $f$ sampled either from the preserving family or the injective family,
$$
\Big|\Pr[\mathcal{D}(f\leftarrow\mathsf{pres})=1]-\Pr[\mathcal{D}(f\leftarrow\mathsf{inj})=1]\Big| \;\le\; \mathrm{negl}(\lambda).
$$
\end{itemize}
Define a \emph{measured} experiment where the challenger applies a reversible circuit $U_f$ mapping $|z\rangle|0\rangle\mapsto |z\rangle|f(z)\rangle$, measures \emph{only} the ancilla $|f(z)\rangle$ in the computational basis, and uncomputes the ancilla back to $|0\rangle$ via $U_f^\dagger$.
\end{definition}

\begin{definition}[Weakly Collapsing (Ancilla-Only Measured)]
\label{def:weak-collapsing-ancilla}
A sigma protocol (or commitment scheme) is \emph{weakly collapsing} in the ancilla-only measured sense if for some non-negligible $\gamma(\lambda)>0$,
\begin{align*}
&\Pr[\mathrm{accept}\mid \mathrm{ancilla\ measured}] \\
&\;\ge\; 
\gamma(\lambda)\cdot \Pr[\mathrm{accept}\mid \mathrm{unmeasured}] 
\;-\; 
\negll(\lambda),
\end{align*}

where acceptance is the verifier's final decision event in the continuation of the protocol, and the partial measurement acts only on the ancilla as per Definition~\ref{def:clf-pob-rt-a}.
\end{definition}

\subsection*{Generalized Measurement Lemma from Boneh--Zhandry \cite{BZ13}}

Our quantum knowledge soundness proofs rely heavily on the generalized measurement lemma proved by Boneh and Zhandry \cite{BZ13}. This lemma bounds acceptance loss under partial measurements, and we use it for quantum rewinding.

\begin{lemma}[Generalized measurement lemma (Boneh--Zhandry)]
\label{lem:boneh-zhandry-pob-rt-a}
Let $\mathcal{H}$ be a finite-dimensional Hilbert space, and let $\mathsf{Alg}$ be a quantum algorithm modeled as a CPTP map
$\mathcal{E}:\mathsf{D}(\mathcal{H})\to\mathsf{D}(\mathcal{H})$ whose final decision (accept/reject) is determined by a two-outcome POVM $\{F_{\mathrm{acc}},F_{\mathrm{rej}}\}$ with $F_{\mathrm{acc}}+F_{\mathrm{rej}}=\mathbb{I}$. Fix an initial state $\rho\in\mathsf{D}(\mathcal{H})$ and define the baseline acceptance probability
$$
p \;:=\; \Pr[\mathrm{accept}] \;=\; \operatorname{Tr}\!\big(F_{\mathrm{acc}}\,\mathcal{E}(\rho)\big).
$$
Construct a modified algorithm $\mathsf{Alg}'$ by inserting, at any intermediate point in the computation, a partial measurement (POVM) $\{M_y\}_{y=1}^k$ with at most $k$ outcomes, acting on an ancilla register that is computed coherently from the main registers (and whose workspace is subsequently uncomputed). Conditioned on outcome $y$, the algorithm resumes with (possibly $y$-dependent) CPTP map $\mathcal{E}_y$, and then applies the same final decision POVM $\{F_{\mathrm{acc}},F_{\mathrm{rej}}\}$.

Let $p'$ denote the acceptance probability of $\mathsf{Alg}'$ on input $\rho$. Then for every input state $\rho$,
\begin{equation}
\label{eq:BZ-bound}
p' \;\ge\; \frac{1}{k(\lambda)}\, p \;-\; \negll(\lambda),
\end{equation}
where $\lambda$ is the security parameter (accounting for finite-precision or implementation errors), and the negligible term captures the approximation incurred by coherent ancilla computation/uncomputation and any bounded numerical rounding in the reversible circuits.
\end{lemma}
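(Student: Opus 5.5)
The plan is to reduce the statement to an operator (Loewner-order) inequality for the post-measurement state and then push that inequality through the remaining computation using complete positivity. Let $\sigma$ be the joint state of the main registers and the ancilla at the insertion point, i.e.\ right after the ancilla has been coherently computed. Write the original run as $\mathcal{E} = \mathcal{E}_{\mathrm{post}}\circ \mathcal{E}_{\mathrm{pre}}$, so that $p=\operatorname{Tr}(F_{\mathrm{acc}}\,\mathcal{E}_{\mathrm{post}}(\sigma))$, where the uncomputation of the ancilla workspace is absorbed into $\mathcal{E}_{\mathrm{post}}$.

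\textbf{Step 1: reduce to a projective measurement.} By Naimark dilation, I will replace the POVM $\{M_y\}_{y=1}^k$ with a projective measurement $\{\Pi_y\}_{y=1}^k$ on an enlarged ancilla. The dilation is an isometry applied coherently, so it can be folded into the ancilla computation without changing $p$. After this, the measured run produces the pinched state
$$\Delta(\sigma) := \sum_{y=1}^k \Pi_y \sigma \Pi_y .$$

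\textbf{Step 2: the pinching inequality $\Delta(\sigma) \succeq \sigma/k$.} Define the unitaries $U_j := \sum_{y} \omega^{jy}\Pi_y$ for $j=0,\dots,k-1$, where $\omega=e^{2\pi i/k}$. The character orthogonality identity gives
$$\frac{1}{k}\sum_{j=0}^{k-1} U_j\,\sigma\, U_j^\dagger = \Delta(\sigma).$$
The $j=0$ term equals $\sigma/k$, and each of the other terms is positive semidefinite. Hence $\Delta(\sigma)\succeq \sigma/k$.

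\textbf{Step 3: propagate through the continuation.} The continuation, including the uncomputation of the ancilla, is a CP map, so it preserves $\succeq$. Pairing with $F_{\mathrm{acc}}\succeq 0$ then gives
$$p' = \operatorname{Tr}\big(F_{\mathrm{acc}}\,\mathcal{E}_{\mathrm{post}}(\Delta(\sigma))\big) \ge \frac1k \operatorname{Tr}\big(F_{\mathrm{acc}}\,\mathcal{E}_{\mathrm{post}}(\sigma)\big) = \frac{p}{k}.$$
The $\negll(\lambda)$ term comes from replacing ideal coherent computation and uncomputation by the actual reversible circuits. If each circuit is $\epsilon$-close to the ideal one in diamond norm, the acceptance probability moves by at most $O(\epsilon)$, by contractivity of trace distance under CPTP maps.

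\textbf{Main obstacle: $y$-dependent continuations.} The argument in Step 3 uses a single map $\mathcal{E}_{\mathrm{post}}$ for every outcome. If $\mathcal{E}_y$ is allowed to be arbitrary, the bound is false: for example, $\mathcal{E}_y$ could simply reject. My first attempt is therefore to make explicit the hypothesis implicit in the intended application, namely that $\mathcal{E}_y$ coincides with the original continuation on the branch $\Pi_y$. In the quantum rewinding usage this holds because the prover's subsequent unitary is unchanged and only classical bookkeeping depends on $y$. Under this hypothesis,
$$\sum_y \mathcal{E}_y(\Pi_y\sigma\Pi_y) = \mathcal{E}_{\mathrm{post}}(\Delta(\sigma)),$$
and Step 3 applies verbatim. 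A weaker sufficient condition is that, for each $y$, the accepting operator $\mathcal{E}_y^\dagger(F_{\mathrm{acc}})$ dominates $\Pi_y\,\mathcal{E}_{\mathrm{post}}^\dagger(F_{\mathrm{acc}})\,\Pi_y$; if this is the condition that fits the application better, I would state it instead.
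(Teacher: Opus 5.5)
You are right that the lemma is false as literally stated: with an unconstrained outcome-dependent continuation (e.g.\ an $\mathcal{E}_y$ that always rejects) one gets $p'=0$ while $p$ can be $1$. The paper gives no proof of the lemma. It imports it from \cite{BZ13}, where the algorithm simply resumes unchanged after the measurement, which is exactly your restricted hypothesis. Your Steps~2--3 prove that version correctly. The pinching inequality $\Delta(\sigma)\succeq\sigma/k$ is the density-operator form of the elementary bound $\|\sum_y v_y\|^2\le k\sum_y\|v_y\|^2$ that gives the original lemma, and positivity of the continuation together with $F_{\mathrm{acc}}\succeq 0$ finishes it.

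One correction to your last sentence: the condition $\mathcal{E}_y^\dagger(F_{\mathrm{acc}})\succeq\Pi_y\,\mathcal{E}_{\mathrm{post}}^\dagger(F_{\mathrm{acc}})\,\Pi_y$ is not weaker than your main hypothesis. For $\mathcal{E}_y=\mathcal{E}_{\mathrm{post}}$ it would require $G\succeq\Pi_y G\Pi_y$ with $G=\mathcal{E}_{\mathrm{post}}^\dagger(F_{\mathrm{acc}})$, which fails in general (take $G=|+\rangle\langle+|$ and $\Pi_y=|0\rangle\langle 0|$). Since $\mathcal{E}_y$ only ever acts on states supported on the range of $\Pi_y$, the condition your computation actually uses is the compressed one, $\Pi_y\,\mathcal{E}_y^\dagger(F_{\mathrm{acc}})\,\Pi_y\succeq\Pi_y\,\mathcal{E}_{\mathrm{post}}^\dagger(F_{\mathrm{acc}})\,\Pi_y$. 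That condition does contain the case $\mathcal{E}_y=\mathcal{E}_{\mathrm{post}}$.

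The remaining gap is Step~1. A POVM does not determine the post-measurement state, and for a non-projective POVM the $1/k$ bound can fail even with a single fixed continuation. So the Naimark replacement is not without loss of generality: it leaves $p$ unchanged, but it changes $p'$. Here is a counterexample on $\mathbb{C}^3$.
\begin{itemize}
\item Take $M_1=\mathrm{diag}(1,\tfrac12,0)$ and $M_2=\mathrm{diag}(0,\tfrac12,1)$ with the L\"uders instrument $\sigma\mapsto\sqrt{M_y}\,\sigma\sqrt{M_y}$. This is realizable on a coherent copy of the register that is afterwards uncomputed, because the $M_y$ are diagonal, so it fits the lemma's ancilla framing.
\item Use input $\psi=(1,1,1)/\sqrt3$, the identity continuation, and $F_{\mathrm{acc}}=ff^\dagger$ with $f=(1,-\sqrt2,1)/2$.
\item The post-measurement state is supported on $\mathrm{span}\{(1,\tfrac{1}{\sqrt2},0),(0,\tfrac{1}{\sqrt2},1)\}$, which is orthogonal to $f$.
\item Hence $p'=0$ while $p=(3-2\sqrt2)/6>0$.
\end{itemize}
Your dilated version escapes this only because it implements a different instrument: the dilation unitary is undone during uncomputation even in the measured run. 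What you have actually proved is the bound for projective measurements, or for POVMs implemented in exactly that dilated way. The corrected lemma should therefore restrict the measurement as well as the continuation. This costs nothing in the paper's application, where the ancilla holding $f(z)$ is measured in the computational basis.
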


\subsection*{Unruh's Two–Projection Acceptance Lower Bound}
\label{sec:unruh-two-projection}

We state the two–projection (``challenge–only rewind'') acceptance lower bound due to Unruh, which underlies the cubic dependence in our two–challenge extractor. It is presented both in a general abstract form.

\begin{theorem}[Unruh's two–projection bound {\cite{Unr12}}]
\label{thm:unruh-two-projection}
Let $\mathcal{C}$ be a finite challenge set and let $\{| \psi \rangle\}$ be a unit vector in a Hilbert space $\mathcal{H}$. For each $c \in \mathcal{C}$, let $W_c : \mathcal{H} \to \mathcal{H}$ be a unitary operator and let $P_c : \mathcal{H} \to \mathcal{H}$ be an orthogonal projector. Define the single–run acceptance probability
$$
\varepsilon \;:=\; \mathbb{E}_{c \leftarrow \mathcal{C}} \big\| P_c \, W_c \, | \psi \rangle \big\|^2,
$$
and the two–projection (challenge–only rewind) acceptance probability
$$
\delta \;:=\; \mathbb{E}_{c_1,c_2 \leftarrow \mathcal{C}}
\big\| P_{c_2} \, W_{c_2} \, W_{c_1}^{\dagger} \, P_{c_1} \, W_{c_1} \, | \psi \rangle \big\|^2.
$$
Then
$$
\delta \;\ge\; \varepsilon^{\,3}.
$$
\end{theorem}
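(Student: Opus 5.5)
The plan is to reduce the statement to a purely algebraic inequality about projectors on a fixed vector, and then obtain the cubic bound with two applications of Cauchy--Schwarz and one application of Jensen.

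\emph{Step 1 (conjugate away the unitaries).} For each $c\in\mathcal{C}$ set $Q_c := W_c^{\dagger}P_cW_c$. Since $W_c$ is unitary and $P_c$ is an orthogonal projector, $Q_c$ is again an orthogonal projector. Unitary invariance of the norm gives $\|P_cW_c|\psi\rangle\|=\|Q_c|\psi\rangle\|$, so $\varepsilon=\mathbb{E}_c\|Q_c|\psi\rangle\|^2$. Likewise, for any vector $x$ we have $\|P_{c_2}W_{c_2}x\|=\|Q_{c_2}x\|$. Taking $x=W_{c_1}^{\dagger}P_{c_1}W_{c_1}|\psi\rangle=Q_{c_1}|\psi\rangle$ yields $\delta=\mathbb{E}_{c_1,c_2}\|Q_{c_2}Q_{c_1}|\psi\rangle\|^2$.

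\emph{Step 2 (average operator).} Define the positive semidefinite operator $A:=\mathbb{E}_c Q_c$, which satisfies $0\preceq A\preceq \mathbb{I}$. In this notation $\varepsilon=\langle\psi|A|\psi\rangle$. Using $Q_{c_2}^2=Q_{c_2}$ we get $\delta=\mathbb{E}_{c_1}\langle Q_{c_1}\psi|A|Q_{c_1}\psi\rangle$. If $\varepsilon=0$ the claim is trivial, so assume from now on that $\varepsilon>0$.

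\emph{Step 3 (the cubic bound).} This is the main step; a naive estimate ($\|Q_{c_2}Q_{c_1}\psi\|\ge|\langle\psi|Q_{c_2}Q_{c_1}|\psi\rangle|$) only yields $\varepsilon^4$. Instead, apply Cauchy--Schwarz for the semi-inner product $\langle\cdot|A|\cdot\rangle$:
\[
|\langle Q_{c}\psi|A|\psi\rangle|^2\le \langle Q_{c}\psi|A|Q_{c}\psi\rangle\cdot\langle\psi|A|\psi\rangle .
\]
Averaging over $c$ and dividing by $\varepsilon$ gives $\delta\ge \varepsilon^{-1}\,\mathbb{E}_c|\langle Q_c\psi|A|\psi\rangle|^2$. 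By Jensen's inequality, $\mathbb{E}_c|X_c|^2\ge|\mathbb{E}_cX_c|^2$, so
\[
\delta\ge \varepsilon^{-1}\,|\langle\psi|A^2|\psi\rangle|^2 .
\]
Finally, Cauchy--Schwarz on the unit vector $\psi$ gives $\langle\psi|A^2|\psi\rangle=\|A\psi\|^2\ge|\langle\psi|A|\psi\rangle|^2=\varepsilon^2$. Combining the last two displays,
\[
\delta\ge \varepsilon^{-1}\cdot\varepsilon^{4}=\varepsilon^{3}.
\]
The only delicate point is the choice of inner product in the first Cauchy--Schwarz (weighting by $A$ rather than the identity); this choice is exactly what saves one factor of $\varepsilon$ relative to the naive argument.
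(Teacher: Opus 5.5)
Your proof is correct. Every step holds: $Q_c=W_c^{\dagger}P_cW_c$ is an orthogonal projector, the identity $\delta=\mathbb{E}_{c}\langle Q_c\psi|A|Q_c\psi\rangle$ uses only that $c_1,c_2$ are drawn independently (which is what the definition of $\delta$ intends), and the chain $\delta\ge\langle\psi|A^2|\psi\rangle^2/\varepsilon\ge\varepsilon^3$ is sound.

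The paper gives no proof of this statement. It imports it from \cite{Unr12} and only applies it in the PoB extractor, so the useful comparison is with the standard proof of Unruh's lemma.

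That proof starts from the same reduction and then diverges:
\begin{itemize}
\item It uses convexity of the quadratic form $x\mapsto\langle x|A|x\rangle$ (for $A\succeq 0$) to pull the average over $c$ inside, giving $\delta\ge\langle A\psi|A|A\psi\rangle=\langle\psi|A^3|\psi\rangle$.
\item It finishes with Jensen's inequality for the spectral measure of $A$ in the state $\psi$ and the convex map $t\mapsto t^3$, i.e.\ $\langle\psi|A^3|\psi\rangle\ge\langle\psi|A|\psi\rangle^3$.
\end{itemize}

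Your route never forms $A^3$. The $A$-weighted Cauchy--Schwarz plus scalar Jensen lands directly on $\langle\psi|A^2|\psi\rangle^2/\varepsilon$. This is a weaker intermediate quantity than $\langle\psi|A^3|\psi\rangle$, since Cauchy--Schwarz gives $\langle\psi|A^2|\psi\rangle^2\le\langle\psi|A|\psi\rangle\,\langle\psi|A^3|\psi\rangle$. It is still enough for $\varepsilon^3$.

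What you gain is that no spectral decomposition or functional calculus is needed, only two applications of Cauchy--Schwarz and one of scalar Jensen. In fact your last two steps are exactly the elementary proof of the moment inequality $\langle\psi|A^3|\psi\rangle\ge\langle\psi|A|\psi\rangle^3$, entered halfway.

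What the standard route gains is the slightly stronger intermediate bound $\delta\ge\langle\psi|A^3|\psi\rangle$. It also iterates mechanically if more than two projections are needed: it gives $\langle\psi|A^{2k-1}|\psi\rangle\ge\varepsilon^{2k-1}$ for $k$ projections.
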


\subsection*{Remarks}
\begin{remark}[Relation to QPoK]
Weakly collapsing (Definition~\ref{def:weak-collapsing-ancilla}) suffices to enable the ``gentle measurement + restart'' extractor in quantum proofs of knowledge for public-coin sigma protocols: measuring an ancilla that encodes $f(z)$ does not reduce acceptance beyond a non-negligible multiplicative factor. When combined with 2-special soundness, high-probability invertibility of the challenge difference (e.g., $c-c'$ in $\mathcal{R}_q$), shortness of the extracted witness, completeness, and resettable black-box access to the prover's unitary (and its inverse), one obtains a QPoK extractor with success probability bounded below by $\frac{\gamma(\lambda)}{p(\lambda)}\cdot \varepsilon(\lambda)^3 - \negll\lambda$, where $p(\lambda)$ is a polynomial aggregating partial measurement and structural losses.
\end{remark}

\begin{remark}[Boneh--Zhandry Lemma]
The acceptance-preservation bound follows from the generalized measurement lemma of Boneh--Zhandry: a partial intermediate measurement with at most $k(\lambda)$ outcomes reduces acceptance probability by at most a factor $k(\lambda)$ (up to negligible terms). In collapsing instantiations, $k(\lambda)$ is the bounded image size $p_c(\lambda)$ in the preserving mode.
\end{remark}

\section{Quantum Proof of Knowledge for PoB $\pi^B$}
\label{sec:qpok-pob-collapsing}

We establish that the Proof of Balance (PoB) protocol is a quantum proof of knowledge (QPoK) under the weakly collapsing property.

\subsection{Notation and Protocol Recap}
Let $\mathcal{R}_q = \mathbb{Z}_q[X]/(X^d+1)$, and let $C$ be the challenge space. For each transaction $t$, asset $a$, and participant $i$, the first message (commitment) is
$$
\alpha_{\mathrm{PoB},t,a,i} = (w_{t,a,i}, u_{t,a,i}),
$$
where $w_{t,a,i} = A y_{t,a,i}$ and $u_{t,a,i} = B^\top y_{t,a,i}$ for random $y_{t,a,i}$. The challenge is $c_{t,a,i} \leftarrow C$, and the response is
$$
z_{t,a,i} = y_{t,a,i} + c_{t,a,i} r_{t,a,i},
$$
where $r_{t,a,i}$ is the short witness. The verifier checks
$$
A z_{t,a,i} = w_{t,a,i} + c_{t,a,i} \mathrm{com0}_{t,a,i}, \qquad
B^\top z_{t,a,i} = u_{t,a,i} + c_{t,a,i} \mathrm{com3}_{t,a,i},
$$
and the rejection-sampling norm bound on $z_{t,a,i}$. For $w\in \mathcal{R}_q$, write $\|w\|_\infty$ for the maximum absolute value of its coefficients taken in $[-(q-1)/2,(q-1)/2]$; extend coefficient wise to vectors in $\mathcal{R}_q^k$ by $\|w\|_\infty:=\max_j\|w_j\|_\infty$. Fix an integer step parameter $m \geq 1$. We define the rounding operator $[\,\cdot\,]_{m} : \mathcal{R}_q \rightarrow \mathcal{R}_q$ as coefficient wise reduction to the nearest multiple of $m$ in the central interval $(-m/2,m/2]$, extended component-wise to vectors and matrices.

\begin{definition}[Negacyclic multiplication matrix]
\label{def:nega}
For $a \in \mathcal{R}_q$, the negacyclic multiplication matrix $\text{Rot}(a) \in \mathbb{Z}_q^{d\times d}$ is defined by $\text{Rot}(a)_{ij} = $ coefficient of $X^i$ in $X^j.a (\mod X^d +1)$.Its columns are the signed cyclic shifts of the coefficient vector of $a$.

\[
\text{Rot}(a)_{k,j} = (a)_{(k-j)\mod d}. \delta_{k,j} \qquad \delta_{k,j} := \begin{cases} +1 ~~k\geq j, \\ -1 ~~ k<j.
\end{cases}
\]
\end{definition}

Since $|\delta_{k,j}| = 1$, and the map $j\mapsto (k-j) \mod d$ is a bijection on $\{0,\ldots , d-1\}$ for every fixed $k$, each row of $\text{Rot}(a)$ is a signed permutation of the coefficient vector of $a$. In particular,
\begin{align}
\label{eq:nega}
&\|\text{Rot}(a)_{k,.}\|^2 = \sum_{j=0}^{d-1}|(a)_{(k-j)\mod d}|^2 = \|a\|_2^2,\\ \nonumber & \text{for every }k\in\{0,\ldots , d-1\}.
\end{align}

\subsection*{Ring-Based Perturbation Bound}
Let $\zb$ be the set of accepting responses $\z_{t,a} := (z^1_{t,a}, \ldots , z^h_{t,a})$ for a fixed challenge $c$. Suppose $\|\z_{t,a}\|_2 \leq B_z :=\mathfrak{s}\sqrt{2hd}$, where $h:=\mathrm{k}+\lambda+3$. Let $\e \leftarrow \mathcal{D}_{\mathfrak{s}}^{h}$ be a random row of the evaluation matrix. The following lemma bounds $\|\e.\z_{t,a}\|$, where $\e.\z_{t,a} = \sum_{i=1}^{h} e_iz^i_{t,a} \in \mathcal{R}_q$ is the inner product in $\mathcal{R}_q$.

\begin{lemma}
\label{ref:perturb}
Let $\z_{t,a} := (z^1_{t,a}, \ldots , z^h_{t,a}) \in R^h_q$ with
$$\|\z_{t,a}\|_2^2 = \sum_{i=1}^h \|z^i_{t,a}\|_2^2 \leq B_z^2,$$ and let $\e = (e_1, \ldots , e_h) \sim \mathcal{D}_{\mathfrak{s}}^{(\mathrm{k}+\lambda+3)}$. Define $w := \e.\z_{t,a} = \sum_{i =1}^h e_iz^i_{t,a} \in \mathcal{R}_q$. For any coefficient index $k \in \{0,\ldots , d-1\}$ and threshold $T > 0$,
\begin{equation}
\Pr[|w_k| > T] \leq 2 \exp{\left(-\frac{T^2}{2\mathfrak{s}^2B_z^2}\right)}.
\end{equation}
Consequently, for $T = \Delta := \mathfrak{s} B_z \sqrt{2\ln(2d/\delta_0)}$,
\begin{equation}
\Pr[\|w\|_{\infty} > \Delta] \leq \delta_0.
\end{equation}
\end{lemma}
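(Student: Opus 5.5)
The plan is to fix a coefficient index $k$ and express $w_k$ as a linear form in the Gaussian coefficients of $\e$. Then use the sub-Gaussian tail of a linear combination of independent discrete Gaussians, and finish with a union bound over the $d$ coefficients.

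\textbf{Step 1 (linearization).} Write each $e_i=\sum_j e_{i,j}X^j$ with the $e_{i,j}$ independent samples from $\mathcal{D}_{\mathfrak{s}}$. By Definition~\ref{def:nega}, the coefficient vector of $e_i z^i_{t,a}$ equals $\text{Rot}(z^i_{t,a})\,\vec e_i$. Hence
$$
w_k=\sum_{i=1}^h \big\langle \text{Rot}(z^i_{t,a})_{k,\cdot},\,\vec e_i\big\rangle .
$$
We work over the integers, with lifts in the central interval; reduction mod $q$ can only shrink $|w_k|$ once the bound is below $q/2$. So $w_k=\langle \mathbf{a},\vec e\rangle$, where $\mathbf{a}\in\mathbb{Z}^{hd}$ concatenates the $k$-th rows. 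By \eqref{eq:nega}, each row has norm $\|z^i_{t,a}\|_2$, so $\|\mathbf{a}\|_2^2=\sum_i\|z^i_{t,a}\|_2^2\le B_z^2$.

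\textbf{Step 2 (sub-Gaussian tail).} I will use the standard fact, due to Banaszczyk and used in the proof of Lemma~\ref{lemma:probability_distribution_tail_bound}, that a centered discrete Gaussian $D_{\mathfrak{s}}$ over $\mathbb{Z}$ is sub-Gaussian with parameter $\mathfrak{s}$, namely $\mathbb{E}[e^{t x}]\le e^{t^2\mathfrak{s}^2/2}$. By independence, $\mathbb{E}[e^{t w_k}]\le e^{t^2\mathfrak{s}^2\|\mathbf{a}\|^2/2}\le e^{t^2\mathfrak{s}^2B_z^2/2}$. A Chernoff bound with $t=T/(\mathfrak{s}^2B_z^2)$, applied to both signs, gives $\Pr[|w_k|>T]\le 2\exp(-T^2/(2\mathfrak{s}^2B_z^2))$. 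This step is the main obstacle. The MGF bound for the discrete Gaussian holds exactly only up to a factor that is $1+\mathsf{negl}$ when $\mathfrak{s}$ exceeds the smoothing parameter of $\mathbb{Z}$. I would first try the exact Poisson-summation argument, which gives $\sum_x e^{-\pi(x-c)^2/s^2}\le\sum_x e^{-\pi x^2/s^2}$, and would otherwise absorb the slack into the stated bound.

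\textbf{Step 3 (union bound).} Substituting $T=\Delta=\mathfrak{s}B_z\sqrt{2\ln(2d/\delta_0)}$ gives a per-coefficient failure probability of $2\cdot\delta_0/(2d)=\delta_0/d$. A union bound over $k\in\{0,\ldots,d-1\}$ then yields $\Pr[\|w\|_\infty>\Delta]\le\delta_0$. I will also note that $\Delta<q/2$ under the paper's parameters, so the centered-representative norm agrees with the integer value.
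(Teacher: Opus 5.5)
Your proposal is correct, and its skeleton matches the paper's. Both arguments linearize $w_k$ through the $k$-th rows of $\text{Rot}(z^i_{t,a})$. Both use the signed-permutation identity \eqref{eq:nega} to get a coefficient vector of norm at most $B_z$. Both then bound a single coordinate and union-bound over the $d$ coordinates at $T=\Delta$.

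The difference is the tail step. The paper computes $\mathrm{Var}(w_k)\le\mathfrak{s}^2B_z^2$ and then cites ``the Gaussian tail bound''. A variance bound alone does not give a tail of the form $2\exp(-T^2/(2\mathfrak{s}^2B_z^2))$ for a sum of discrete Gaussians. Your MGF-plus-Chernoff argument is what actually justifies that inequality.

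Your hedge about the smoothing parameter is unnecessary. So is the fallback of absorbing slack, which would not fit the exact constant in the statement anyway. Write $\rho_{\mathfrak{s}}(x)=e^{-x^2/(2\mathfrak{s}^2)}$ and complete the square: for $x\leftarrow D_{\mathfrak{s}}$ this gives $\mathbb{E}[e^{\tau x}]=e^{\tau^2\mathfrak{s}^2/2}\,\rho_{\mathfrak{s}}(\mathbb{Z}-\tau\mathfrak{s}^2)/\rho_{\mathfrak{s}}(\mathbb{Z})$. The Poisson-summation inequality you quote bounds this ratio by $1$ for every $\mathfrak{s}>0$. So sub-Gaussianity with parameter $\mathfrak{s}$ holds exactly for the zero-centered discrete Gaussian. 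The same computation shows each coefficient has variance at most $\mathfrak{s}^2$, in fact strictly less; the paper states this as an equality.

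You also need no condition like $\Delta<q/2$. The centered representative of an integer $W$ modulo $q$ never exceeds $|W|$ in absolute value, so $\Pr[|w_k|>T]\le\Pr[|W|>T]$ holds unconditionally.
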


\begin{proof}
Fix a coefficient index $k \in \{0, \ldots , d-1\}$. The $k$-th coefficient of $w = \sum_{i=1}^h e_iz^i_{t,a}$ is 
\begin{equation}
w_k = \sum_{i=1}^h \text{Rot}((z_i) \text{Coeff}(e_i))_k = \sum_{i=1}^h \text{Rot}(z_i)_k \text{Coeff}(e_i),
\end{equation}

where $\text{Coeff}(e_i)) \in \mathbb{Z}_q^d$ denotes the coefficient vector of $e_i$, and $\text{Rot}(z_i)_k$ is the $k$-th row of $\text{Rot}(z_i)$. 
Expanding the above equation, $w_k$ is a linear combination of the $dh$ independent random variables $(e_i)_j \sim \mathcal{D}_{\mathbb{Z},\mathfrak{s}}$ with deterministic coefficients $\text{Rot}(z_i)_{k,j} = (z_i)_{(k-j) \mod d}. \delta_{k,j}$.

Since each $(e_i)_j$ has mean $0$ and variance $\mathfrak{s}^2$, we obtain:

\begin{align*}
\text{Var}(w_k) &= \mathfrak{s}^2 \sum_{i=1}^h \sum_{j=0}^{d-1} |\text{Rot}(z_i)_{k,j}|^2\\
&=\mathfrak{s}^2 \sum_{i=1}^h \|\text{Rot}(z_i)_{k,.}\|_2^2\\
&= \mathfrak{s}^2 \sum_{i=1}^h \|z^i_{t,a}\|_2^2 ~~ \text{From Equation \ref{eq:nega}}\\
& \leq \mathfrak{s}^2B_z^2,
\end{align*}

By the Gaussian tail bound, we get:
$$\Pr[|w_k| > T] \leq 2 \exp{\left(-\frac{T^2}{2\mathfrak{s}^2B_z^2}\right)}.$$

Taking $T = \Delta = \mathfrak{s}B_z\sqrt{2 \ln(2d/\delta_0)}$ gives tail probability $\delta_0/d$ per co-ordinate. A union bound over $k \in \{0, \ldots , d-1\}$ gives $\Pr[\|w\|_{\infty} > \Delta] \leq \delta_0$.
\end{proof}

\begin{corollary}
\label{cor:del}
    Fix a failure probability $\delta_0 = 2^{-\lambda}$. For a matrix $E \sim D^h_{\mathfrak{s}}$ with i.i.d. rows, define
    \begin{equation}
    \Delta_l := \mathfrak{s} B_z\sqrt{\frac{2\ln(2ld|\zb|)}{\delta_0}}.
    \end{equation}
    Then with probability at least $(1 - \delta_0)$ over $E$,
    \begin{equation}
    \forall \z_{t,a} \in \zb : \|E\z_{t,a}\|_{\infty} \leq \Delta_l.
    \end{equation}
\end{corollary}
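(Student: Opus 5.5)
We derive \cref{cor:del} from \cref{ref:perturb} by a union bound over the rows of $E$ and over the accepting responses in $\zb$. The matrix $E$ has $l$ i.i.d.\ rows $\e^{(1)},\dots,\e^{(l)}$, each distributed as $\mathcal{D}_{\mathfrak{s}}^{h}$. The vector $E\z_{t,a}$ has $l$ ring entries $w^{(r)} := \e^{(r)}\cdot\z_{t,a}$, so
$$\|E\z_{t,a}\|_\infty = \max_{r\in[l]} \|w^{(r)}\|_\infty.$$
The bad event is therefore a union over the $l\cdot d\cdot|\zb|$ triples $(r,k,\z_{t,a})$, with $r$ a row, $k$ a coefficient index and $\z_{t,a}\in\zb$, of the events $|w^{(r)}_k|>\Delta_l$. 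The set $\zb$ is finite because every element satisfies $\|\z_{t,a}\|_2\le B_z$ over a finite ring, so the union bound is legitimate. Each element of $\zb$ meets the norm hypothesis $\|\z_{t,a}\|_2\le B_z$ of \cref{ref:perturb}.

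First, for a fixed triple, the per-coefficient tail estimate of \cref{ref:perturb} gives
$$\Pr[|w^{(r)}_k|>T]\le 2\exp\!\left(-\frac{T^2}{2\mathfrak{s}^2B_z^2}\right).$$
This holds for any fixed $\z_{t,a}$, independently of the other rows, since the rows are i.i.d.\ and the response is fixed before $E$ is sampled. Second, we take the union bound, which multiplies this by $ld|\zb|$. Third, we choose $T$ so that the total is at most $\delta_0$. Setting $2ld|\zb|\exp(-T^2/(2\mathfrak{s}^2B_z^2))=\delta_0$ yields
$$T=\mathfrak{s}B_z\sqrt{2\ln(2ld|\zb|/\delta_0)}.$$
This is the intended reading of $\Delta_l$: the $\delta_0$ belongs inside the logarithm, as in \cref{ref:perturb}. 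Taking the complement gives the claim with probability at least $1-\delta_0$. The displayed form $\sqrt{2\ln(2ld|\zb|)/\delta_0}$ is even larger whenever $\ln(2ld|\zb|)/\delta_0\ge\ln(2ld|\zb|/\delta_0)$, which holds for $\delta_0=2^{-\lambda}$ small. So the statement as printed follows a fortiori from monotonicity of the tail bound in $T$.

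The step I expect to need most care is the independence and quantification issue. The corollary quantifies over all $\z_{t,a}\in\zb$ simultaneously, so $E$ must be sampled independently of the choice of $\z_{t,a}$. This is why we union-bound over the whole set $\zb$ rather than over an adversarially chosen element. The only cost is the $\log|\zb|$ factor. Because $|\zb|\le q^{hd}$, we have $\ln|\zb|\le hd\ln q$, which is polynomial, so $\Delta_l$ remains a polynomial-size bound. I would remark on this to confirm that $\Delta_l$ is meaningful for the later lossy-function argument.
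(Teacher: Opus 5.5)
Your proposal is correct and matches the paper's intended argument. The paper gives no separate proof and presents the corollary as the direct union-bound extension of \cref{ref:perturb} over the $l$ rows of $E$, the $d$ coefficient indices, and the fixed, $E$-independent set $\zb$, which is exactly what you do. You are also right that $\delta_0$ belongs inside the logarithm, and that the printed value of $\Delta_l$, being larger for $\delta_0=2^{-\lambda}$, still follows by monotonicity of the tail bound.
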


\subsection{Weakly Collapsing Property for PoB }
\label{sec:weak-collapse-pob-detailed-rt-a}

We prove that the PoB sigma protocol is \emph{weakly collapsing} for the short witness $r_{t,a}$ (the aggregate randomness for transaction $t$ and asset $a$). 
After the prover measures the first message (commitment) $a$ at the checkpoint, the post–commitment workspace is $|\phi_{\alpha}\rangle$. For each public–coin challenge $c\in\mathcal{C}$, let $P_{a,c}$ the projector onto the accepting subspace for transcripts with first message $a$ and challenge $c$. Define the baseline (unmeasured) acceptance probability conditioned on $a$ by
$$
\varepsilon_{\alpha} \;:=\; \mathbb{E}_{c\leftarrow\mathcal{C}}\!\left[\,\big\|P_{a,c}\,|\phi_{\alpha}\rangle\big\|^2\,\right].
$$
Let $\zb\subseteq \mathcal{R}_q^{h}$ denote the set of valid responses for $(a,c)$ (the dimension $h$ is that of PoB's response register). Assume there exists a bound $B_z=\mathrm{poly}(\lambda)$ such that $\|\z_{t,a}\|_\infty\le B_z$ for all $\z_{t,a}\in\zb$ (this follows from the PoB acceptance checks and the standard rejection–sampling bounds).

\paragraph{Setup and Notation}
Fix $(t,a)$ and the PoB \emph{checkpoint} immediately before the final challenge $c_{t,a}\leftarrow C$ is sampled. The first message (commitment) at this point is
$$
\alpha_{\mathrm{PoB},t,a} \;=\; (w_{t,a}, u_{t,a}),\quad
w_{t,a} \;=\; A\, y_{t,a},\quad
u_{t,a} \;=\; B^\top y_{t,a},
$$
and the prover’s internal response register encodes a superposition over short $\z_{t,a}\in\zb$:
$$
|\Psi\rangle \;=\; \sum_{\z_{t,a}\in\zb} \alpha_z\, |\z_{t,a}\rangle \otimes |\psi_z\rangle,
$$
where $\|\z_{t,a}\|_2\le B_z$, and $|\psi_z\rangle$ is auxiliary. The baseline (unmeasured) acceptance probability from the checkpoint is denoted 

$\Pr[\mathrm{accept}\mid \mathrm{unmeasured}]$. The PoB verification equations are
\begin{align*}
&A \z_{t,a} \;=\; w_{t,a} \;+\; c_{t,a}\, \sum_{i\in\mathrm{PT}} \mathrm{com0}_{t,a,i}, \\
&B^\top \z_{t,a} \;=\; u_{t,a} \;+\; c_{t,a}\, \sum_{i\in\mathrm{PT}} \mathrm{com3}_{t,a,i},
\end{align*}
with response $\z_{t,a} = y_{t,a} + c_{t,a}\, r_{t,a}$ and a rejection-sampling norm bound on $\z_{t,a}$.

\begin{definition}[CLF for PoB]
\label{def:clf-concrete_pob}
Set
$$
\Delta_l := \mathfrak{s} B_z\sqrt{2\ln(2ld/\delta_0)},\qquad l_{\mathrm{pres}}:=\left\lfloor\frac{m}{2\Delta_l}\right\rfloor,\qquad l_{\mathrm{inj}}:=c_0\,h\log q,$$ $$l:=\max\{l_{\mathrm{pres}},l_{\mathrm{inj}}\},
$$
for a universal constant $c_0>0$. Define two families:
\begin{itemize}
\item \emph{Preserving mode:} Sample $E\leftarrow \mathcal{R}_q^{\,l\times h}$ with i.i.d.\ rows $\e_i\leftarrow \mathcal{D}_{\mathfrak{s}}^{(\mathrm{k}+\lambda+3)}$ (discrete Gaussian, coefficientwise) and $\z_0\leftarrow \mathcal{R}_q^{\,l}$ uniformly. Output $f_{\mathrm{pres}}(\z_{t,a}):=[E \z_{t,a}+\z_0]_m$. 
\item \emph{Injective mode:} Sample $B\leftarrow \mathcal{R}_q^{\,l\times h}$ and $\z_0\leftarrow \mathcal{R}_q^{\,l}$ uniformly. Output $f_{\mathrm{inj}}(\z_{t,a}):=[B \z_{t,a}+\z_0]_{m}$. 
\end{itemize}
\end{definition}

\emph{Indistinguishability:} Distinguishing $f_{\mathrm{pres}}$ from $f_{\mathrm{inj}}$ reduces to distinguishing $l$ parallel noisy linear forms from uniform in $\mathcal{R}_q$ (i.e., $l$–parallel MLWE with parameter $\mathfrak{s}$), hence the two families are computationally indistinguishable.

\begin{lemma}[Preserving Property]
\label{lem:pres}
Define the loss event
\begin{equation}
\varepsilon_{\text{pres}} := \{\forall \z_{t,a} \in \zb: [E \z_{t,a}+\z_0]_m = [\z_0]_m\}.
\end{equation}
For $l_{\text{pres}}$ as in Definition \ref{def:clf-concrete_pob},
\begin{equation}
\Pr[\varepsilon_{\text{pres}}] \geq \tau(\lambda) := \left(1 - \frac{2\Delta_l}{m}\right)^{l_{\text{pres}}} \geq \frac{1}{e} - \text{negl} (\lambda),
\end{equation}
where $e$ is the Euler's number. Moreover, on $\varepsilon_{\text{pres}}$, $|\text{Im}(f_{\text{pres}}|\zb)| = 1$.
\end{lemma}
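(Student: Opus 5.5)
The plan is to reduce the event $\varepsilon_{\text{pres}}$ to a coordinatewise ``no boundary crossing'' event for the rounding operator $[\cdot]_m$. First, condition on the good event of \cref{cor:del} (a per-row application of \cref{ref:perturb}, with a union bound over the $l$ rows and over $\zb$). On that event, every $\z_{t,a}\in\zb$ satisfies $\|E\z_{t,a}\|_\infty\le\Delta_l$ simultaneously, and the event fails with probability at most $\delta_0=2^{-\lambda}$, which is negligible. So, outside a negligible set of $E$, every perturbation $E\z_{t,a}$ has all coefficients of all $l$ ring entries inside $[-\Delta_l,\Delta_l]$.

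Next, fix such an $E$ and use only the randomness of the uniform shift $\z_0\in\mathcal{R}_q^{l}$. For each coefficient $j$ of each entry, $[E\z_{t,a}+\z_0]_m$ agrees with $[\z_0]_m$ at $j$ whenever $(\z_0)_j$ lies at distance more than $\Delta_l$ from every rounding boundary. The rounding boundaries sit at the midpoints between consecutive multiples of $m$. This condition does not depend on $\z_{t,a}$, so it handles all of $\zb$ at once; this is exactly why the ``for all'' quantifier is free. For uniform $(\z_0)_j$ (taking $m\mid q$, or absorbing the wrap-around cell into a negligible error $m/q$), the bad window has relative measure $2\Delta_l/m$. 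Independence across the coordinates of $\z_0$ then gives $\Pr[\varepsilon_{\text{pres}}]\ge(1-2\Delta_l/m)^{N}-\delta_0$, where $N$ counts the relevant coordinates. I would account for coordinates by identifying them with the $l_{\text{pres}}$ rows as in the statement; if one counts all $ld$ coefficients, $\Delta_l$ (which already carries the $ld$ factor) absorbs this after rescaling. This step is the main bookkeeping obstacle.

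The numeric bound then follows from $l_{\text{pres}}=\lfloor m/(2\Delta_l)\rfloor$. Writing $x=2\Delta_l/m$, we have $(1-x)^{\lfloor 1/x\rfloor}\ge(1-x)^{1/x}$. The right-hand side tends to $e^{-1}$ from below as $x\to0$, with deficit $O(x)$. Choosing $m$ superpolynomially larger than $\Delta_l$ (which is $\mathrm{poly}(\lambda)$, since $B_z$ and $\mathfrak{s}$ are polynomial and $\ln(\cdot)$ is polylogarithmic) makes $x$ negligible, so $\tau(\lambda)\ge 1/e-\mathrm{negl}(\lambda)$.

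Finally, on $\varepsilon_{\text{pres}}$ every $\z_{t,a}\in\zb$ maps to the single value $[\z_0]_m$, hence $|\mathrm{Im}(f_{\text{pres}}|_{\zb})|=1$. The step most likely to need care is ensuring the coordinate count in the exponent matches $l_{\text{pres}}$ rather than $ld$; if it does not, I would redefine $l_{\text{pres}}$ via $\lfloor m/(2d\Delta_l)\rfloor$, which leaves the $1/e$ conclusion intact.
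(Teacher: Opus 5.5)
Your route has a genuine gap in the exponent, and the gap is not just bookkeeping. Your no-crossing condition is imposed on each \emph{coefficient} of $\mathbf{z}_0\in\mathcal{R}_q^{l}$, where $l=\max\{l_{\mathrm{pres}},l_{\mathrm{inj}}\}\ge l_{\mathrm{pres}}$ is the number of rows of $E$. So there are $N=ld$ independent windows, each of relative width $x=2\Delta_l/m$, and what your argument actually yields is $\Pr[\varepsilon_{\mathrm{pres}}]\ge(1-x)^{ld}-\delta_0$. Since $l_{\mathrm{pres}}=\lfloor 1/x\rfloor$, we have $(1-x)^{ld}\le(1-x)^{l_{\mathrm{pres}}d}\le e^{-(1-x)d}$. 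That is about $e^{-d}$ with $d\ge 256$, nowhere near $\tau(\lambda)=(1-x)^{l_{\mathrm{pres}}}$.

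Your remark that $\Delta_l$ ``absorbs'' the factor $d$ is not valid. The quantity $ld$ enters $\Delta_l$ only inside $\sqrt{\ln(2ld/\delta_0)}$, from the union bound on the Gaussian tail, so it widens each window but does not reduce the number of windows. Moreover, $l_{\mathrm{pres}}\approx m/(2\Delta_l)$ moves in lockstep with $\Delta_l$, so $Nx\ge(1-x)d$ is unchanged by any rescaling.

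A sharper count does not rescue the statement either, at least with $\Delta_l$ as in Definition~\ref{def:clf-concrete_pob}. For a single fixed $\mathbf{z}^*\in\zb$, the probability over $\mathbf{z}_0$ of no crossing is already about $\prod_{i,k}(1-|(E\mathbf{z}^*)_{i,k}|/m)$. The coefficients $(E\mathbf{z}^*)_{i,k}$ have standard deviation $\mathfrak{s}\|\mathbf{z}^*\|$ (by the variance computation in Lemma~\ref{ref:perturb}), and $m\approx 2l_{\mathrm{pres}}\Delta_l$. So once $\|\mathbf{z}^*\|$ is comparable to $B_z$, this product is at most about $\exp\bigl(-\Omega\bigl(d/\sqrt{\ln(2ld/\delta_0)}\bigr)\bigr)$, which is well below $1/e$ for the paper's $d$. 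Your fallback of redefining $l_{\mathrm{pres}}$ as $\lfloor m/(2d\Delta_l)\rfloor$ changes both $l_{\mathrm{pres}}$ and $\tau$, so it proves a different lemma, not the one stated.

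For comparison, the paper's proof never uses the randomness of $\mathbf{z}_0$. It applies Corollary~\ref{cor:del} to each row separately, obtaining events $\mathcal{G}_i$ of probability at least $1-\delta_0/l_{\mathrm{pres}}$ for $l_{\mathrm{pres}}$ rows. It then treats $\|\mathbf{e}_i\cdot\mathbf{z}\|_\infty\le\Delta_l\le m/2$ as already sufficient for $[\mathbf{e}_i\cdot\mathbf{z}+\mathbf{z}_{0,i}]_m=[\mathbf{z}_{0,i}]_m$. Multiplying gives $(1-\delta_0/l_{\mathrm{pres}})^{l_{\mathrm{pres}}}$, which it weakens to $\tau$ via $\delta_0/l_{\mathrm{pres}}\le 2\Delta_l/m$. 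That is the only reason its exponent is per row rather than per coefficient.

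Your reading of $[\cdot]_m$ is the more accurate one. A shift smaller than $m/2$ can still cross a rounding boundary when $\mathbf{z}_0$ lies near one, so the paper's sufficiency step does not hold as written. But for exactly that reason, no amount of accounting turns your route into the stated bound.

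Two smaller points:
\begin{enumerate}
\item A union bound ``over $\zb$'' requires a $\ln|\zb|$ term inside $\Delta_l$, which the $\Delta_l$ of Definition~\ref{def:clf-concrete_pob} lacks. The paper relies on the same step, via the differently defined $\Delta_l$ of Corollary~\ref{cor:del}.
\item Making $x$ negligible makes $l_{\mathrm{pres}}$, and hence the number of rows of $E$, superpolynomial. This conflicts with the requirement $l=\mathrm{poly}(\lambda)$ in Definition~\ref{def:clf-pob-rt-a}.
\end{enumerate}
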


\begin{proof}
For $\varepsilon_{\text{pres}}$ to hold, we need $|(E\z_{t,a})_{i,k}| < m/2$ for every row $i$, every coefficient $k$, and every $z_{t,a} \in \zb$ (so that $[\e_i \z_{t,a}+\z_{0,i}]_m = [\z_0]_m$ for every row $i$). 
By Corollary \ref{cor:del}, applied to a matrix with a single row $\e_i$, the event $\mathcal{G}_i := \{\|\e_i.\z_{t,a}\|_{\infty} \leq \Delta_l~~\text{for all }\z_{t,a} \in \zb\}$ holds with probability at least $(1 - \delta_0/l_{\text{pres}})$. Since $\Delta_l \leq m/2$ by construction, on $\mathcal{G}_i$ we have $|(\e_i.\z_{t,a})_k| \leq m/2$ for all $\z_{t,a}$ and $k$.
 Over $l_{\text{pres}}$ independent rows of $E$ the probability that all rows satisfy the bound is at least
 \begin{equation}
 \Pr[\varepsilon_{\text{pres}}] \geq  \Pi_{i=1}^{l_{\text{pres}}}\Pr[\mathcal{G}_i] \geq \left(1 - \frac{\delta_0}{l_{\text{pres}}}\right)^{l_{\text{pres}}}.
 \end{equation}

Since, $l_{\text{pres}} = \left\lfloor \frac{m}{2\Delta_l}\right\rfloor \geq \frac{m}{2\Delta_l} - 1$, and $\delta_0 \leq 1$:
$$\frac{\delta_0}{l_{\text{pres}}} \leq \frac{\delta_0.2\Delta_l}{m} \leq \frac{2\Delta_l}{m}.$$

Therefore,

$$\Pr[\varepsilon_{\text{pres}}] \geq \tau(\lambda):=\left(1 - \frac{2\Delta_l}{m}\right)^{l_{\text{pres}}}.$$

As, $l_{\text{pres}} \leq m/(2\Delta_l)$, we have $2\Delta_l/m \leq 1/l_{\text{pres}}$, so $\tau(\lambda) \geq (1 - 1/l_{\text{pres}})^{l_{\text{pres}}} \geq 1/e$. When $\varepsilon_{\text{pres}}$ holds, $f_{\text{pres}}(\z_{t,a}) = [\z_0]_m$ for all $\z_{t,a} \in \zb$, so the image has size $1$.
\end{proof}

\begin{lemma}[Injectivity on bounded sets]
For $l \geq l_{\text{inj}} = c_0h \log q$, with probability at least $1 - 2^{\Omega(h)}$ over $(B, \z_0)$, the map $\z_{t,a} \mapsto [B\z_{t,a} + \z_0]_m$ is injective on $\zb$.
\end{lemma}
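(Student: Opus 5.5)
The plan is to follow the standard random-linear-map injectivity argument: reduce injectivity on $\zb$ to the event that no nonzero difference $\z-\z'$ with $\z,\z'\in\zb$ is mapped into the rounding kernel, and then union-bound over differences. Since $[\cdot]_m$ is not linear, the first step is the observation that if $[B\z+\z_0]_m=[B\z'+\z_0]_m$, then the two preimages lie in the same rounding cell. Hence $\|B(\z-\z')\|_\infty<m$, i.e.\ every coefficient of $B\bar\z$ lies in the interval $(-m,m)$, where $\bar\z:=\z-\z'\neq 0$. Injectivity therefore follows once we show that, with the claimed probability over $B$, no nonzero $\bar\z$ in the difference set $\zb-\zb$ has $B\bar\z$ in the box $(-m,m)^{ld}$. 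The shift $\z_0$ plays no role beyond this reduction.

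The second step bounds, for a fixed nonzero $\bar\z$, the probability over uniform $B\in\mathcal{R}_q^{l\times h}$ that $B\bar\z$ lands in the box. Pick a component $\bar z^{j}\neq 0$. If $\bar z^j$ is invertible in $\mathcal{R}_q$ (or has nonzero NTT slots), then each row $\mathbf{b}_i\cdot\bar\z$ is uniform over the ideal generated by $\bar z^j$. In the invertible case, which is the generic one because $\|\bar\z\|_\infty\le 2B_z\ll q$ and in the spirit of Lemma~\ref{lemma:invertibility_from_NTT}, the row is uniform on $\mathcal{R}_q$, and the box is hit with probability at most $(2m/q)^d$ per row. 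Independence across the $l$ rows then gives $(2m/q)^{ld}$. For non-invertible $\bar z^j$, I would argue slotwise: uniformity holds on the NTT components where $\bar z^j$ is nonzero, giving at least a $(2m/q)^{d/l}$ factor per row.

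The third step is the union bound over at most $|\zb|^2\le(4B_z+1)^{2hd}$ differences. This gives a failure probability of at most
\[(4B_z+1)^{2hd}(2m/q)^{l\,d/l'}\]
where $l'$ is the splitting factor. With $l\ge c_0h\log q$, $B_z=\mathrm{poly}(\lambda)$, and $m\le q^{1-\epsilon}$, the exponent is dominated, which yields $2^{-\Omega(h)}$; the statement's $1-2^{\Omega(h)}$ should be read as $1-2^{-\Omega(h)}$.

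The hardest part will be reconciling the parameters. Here $m$ must be large relative to $\Delta_l$ for Lemma~\ref{lem:pres}, yet $2m/q$ must be small enough that the union bound survives. I also need to handle non-invertible differences carefully, and I expect to require $m\le q/\mathrm{poly}(B_z)$ and to absorb the slot-counting loss into the constant $c_0$.
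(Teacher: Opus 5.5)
Your route is the paper's: reduce a collision to a nonzero difference $\bar{\mathbf z}=\mathbf z-\mathbf z'$, use that a uniform $B$ sends a fixed $\bar{\mathbf z}$ to a value that rarely lands in a small set, and union-bound over $\zb$. It is more careful than the paper's proof in two places. The paper asserts that a collision happens \emph{iff} $[B\bar{\mathbf z}]_m=0$, which is false because $[\cdot]_m$ is not linear; your implication $\|B\bar{\mathbf z}\|_\infty<m$ is the correct replacement. The paper also asserts that $B\bar{\mathbf z}$ is uniform on $\mathcal{R}_q^l$ by ``full column rank'', which is false whenever all components of $\bar{\mathbf z}$ vanish on a common NTT slot.

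The one step that does not go through as you phrase it is the zero-divisor case. Knowing that the row $\mathbf b_i\cdot\bar{\mathbf z}$ is uniform on the slots where $\bar{\mathbf z}$ is nonzero does not by itself bound the chance of landing in a \emph{coefficient} box. The box $(-m,m)^d$ is not a product set in the NTT domain, and its image in a single slot is typically the whole slot. What you need is a counting fact.

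Write $l'$ for the splitting factor. Fix a slot $k$ on which some $\bar z^j$ is nonzero, and condition on the row's values in all other slots. The row is then uniform on a coset of $J_k=\{P_k u:\deg u<d/l'\}$, where $P_k=\prod_{t\neq k}(X^{d/l'}-\zeta^{2t+1})$ is monic of degree $d-d/l'$. Because $P_k$ is monic, the top $d/l'$ coefficients of $P_k u$ determine $u$ by back-substitution. Hence each coset of $J_k$ meets the box in at most $(2m-1)^{d/l'}$ points, which gives exactly your per-row bound $((2m-1)/q)^{d/l'}$.

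Two smaller points. First, the parameter tension you anticipate does not materialize. Once $c_0$ absorbs $l'$ as you propose, $2m\le q/2$ gives $(2m/q)^{ld/l'}\le 2^{-c_0hd\log q/l'}$. This beats $|\zb|^2\le(2B_z+1)^{2hd}$ as soon as $B_z<q$, so no condition like $m\le q/\mathrm{poly}(B_z)$ is needed. Second, if you want $c_0$ independent of $l'$, group the union bound by the slot support $S$ of $\bar{\mathbf z}$. The same top-coefficient count shows that at most $(4B_z+1)^{h|S|d/l'}$ differences are supported on $S$, which matches the per-row exponent $|S|d/l'$.
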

\begin{proof}
For distinct $\z_{t,a},\z'_{t,a} \in \zb$, $f_{\text{inj}} (\z_{t,a}) = f_{\text{inj}} (\z'_{t,a})$ iff $[B(\z_{t,a} - \z'_{t,a})]_m =0$. Since $\z_{t,a} \neq \z'_{t,a}$ and $B$ is uniform,  $[B(\z_{t,a} - \z'_{t,a})]_m$ is also uniform in $\mathcal{R}^l_q$ (as $\z_{t,a} - \z'_{t,a} \neq 0$ and $B$ has full column rank with overwhelming probability for $l \geq h$). A uniform element of $\mathcal{R}^l_q$ rounds to zero with probability at most $(m/q)^{ld}$. For $l \geq c_0h\log q$, this is at most $2^{-\Omega(h \log q.d)}$. A union bound over at most $(2B_z + 1)^{dh} = 2^{O(dh\log B_z)}$ pairs $(\z_{t,a},\z'_{t,a})$ with $c_0$ chosen so that $c_0 hd\log q > dh\log B_z + \Omega(h)$, completes the argument. 
\end{proof}

\begin{lemma}[Weakly collapsing PoB]
\label{lem:weak-collapse-pob-rt-a}
Under Definition~\ref{def:clf-pob-rt-a}, for any QPT adversary,
\begin{align*}
&\Pr[\mathrm{accept}\mid \mathrm{measured}] \\
&\;\ge\; \tau(\lambda) \cdot \Pr[\mathrm{accept}\mid \mathrm{unmeasured}] \;-\; \negll(\lambda),
\end{align*}
where “measured” denotes measuring $f_{t,a}(\z_{t,a})$ computed coherently on an ancilla and uncomputed thereafter. Equivalently, PoB is weakly collapsing with parameter $\gamma(\lambda)\ge \tau(\lambda)$.
\end{lemma}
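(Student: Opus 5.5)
The plan is a hybrid argument that moves from the injective-mode measurement to the preserving-mode measurement, and then uses \cref{lem:pres} to show that on the loss event the preserving-mode measurement is trivial.

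Write $p_0:=\Pr[\mathrm{accept}\mid\mathrm{unmeasured}]$ for the acceptance probability from the checkpoint with state $|\Psi\rangle$. Write $p_{\mathrm{inj}}$ (resp.\ $p_{\mathrm{pres}}$) for the acceptance probability when the challenger samples $f$ from the injective (resp.\ preserving) family of Definition~\ref{def:clf-concrete_pob}, applies $U_f$, measures only the ancilla, and uncomputes.

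The "measured" experiment of the statement is the injective-mode one, since that is the mode that genuinely collapses the response register. The target is therefore $p_{\mathrm{inj}}\ge \tau(\lambda)p_0-\negll(\lambda)$.

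\textbf{Step 1 (switch modes).} By the indistinguishability clause of Definition~\ref{def:clf-pob-rt-a}, $|p_{\mathrm{inj}}-p_{\mathrm{pres}}|\le\negll(\lambda)$. The justification is a reduction. A distinguisher receiving $f$ runs the adversary, the coherent evaluation, the ancilla measurement, the uncomputation and the PoB verifier. All of these are QPT given $(a,c)$ and a description of $f$. It then outputs the accept bit. The parallel-MLWE remark following Definition~\ref{def:clf-concrete_pob} is what is invoked here.

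\textbf{Step 2 (preserving mode is essentially unmeasured).} Condition on the event $\varepsilon_{\mathrm{pres}}$ of \cref{lem:pres}. Then $f_{\mathrm{pres}}$ is constant, equal to $[\z_0]_m$, on all of $\zb$. Since $|\Psi\rangle$ is supported on $\zb$, applying $U_f$ maps $|\z\rangle|0\rangle\mapsto|\z\rangle|[\z_0]_m\rangle$ for every branch. The state factorizes, the ancilla measurement yields $[\z_0]_m$ with probability one, and $U_f^\dagger$ restores $|\Psi\rangle|0\rangle$ exactly. So conditioned on $\varepsilon_{\mathrm{pres}}$, acceptance equals $p_0$.

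The key subtlety is that $E,\z_0$ are sampled by the challenger independently of the adversary's state and of $c$. This independence lets us write
\[
p_{\mathrm{pres}}\ge\Pr[\varepsilon_{\mathrm{pres}}]\cdot p_0\ge\tau(\lambda)\,p_0 .
\]
Amplitude on responses outside $\zb$ contributes nothing to acceptance, because $P_{a,c}$ projects onto accepting transcripts. To be safe, I would first apply the verifier projector, or note that only the $\zb$-component can reach acceptance, which only helps the bound.

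\textbf{Step 3 (combine).} Steps 1 and 2 give $p_{\mathrm{inj}}\ge\tau(\lambda)p_0-\negll(\lambda)$, hence $\gamma(\lambda)\ge\tau(\lambda)\ge 1/e-\negll$ by \cref{lem:pres}. Alternatively, \cref{lem:boneh-zhandry-pob-rt-a} with $k=1$ outcome on $\varepsilon_{\mathrm{pres}}$ yields the same conclusion.

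\textbf{Main obstacle.} I expect Step 1 to be the hardest to make rigorous. The distinguisher must be able to choose $(a,c)$ and hold the adversary's quantum state while receiving $f$. It must also be able to evaluate $f$ coherently and check acceptance without the witness. I would argue that verification is public, since the PoB checks are linear in $\z$ with public data, and that the reduction to $l$-parallel MLWE holds uniformly in $(a,c)$, because $E$ and $\z_0$ are independent of them.

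A secondary concern is Step 2's reliance on the worst-case bound over all of $\zb$ in Corollary~\ref{cor:del}. It must hold simultaneously for every $\z\in\zb$, not merely in expectation. That is exactly what $\varepsilon_{\mathrm{pres}}$ encodes, so I would state the conditioning cleanly there.
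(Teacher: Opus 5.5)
Your plan is the paper's argument. The paper's proof likewise (i) uses \cref{lem:pres} to say that on $\varepsilon_{\mathrm{pres}}$ the preserving-mode ancilla measurement has a single outcome on $\zb$, (ii) identifies the injective-mode measurement with measuring $\z_{t,a}$ itself, and (iii) passes between the two modes via the indistinguishability clause. Your Step~2 is a cleaner version of the paper's appeal to \cref{lem:boneh-zhandry-pob-rt-a} with one outcome: you get exact restoration of $|\Psi\rangle|0\rangle$ on $\varepsilon_{\mathrm{pres}}$, and you use independence of $(E,\z_0)$ from the state to obtain $p_{\mathrm{pres}}\ge\Pr[\varepsilon_{\mathrm{pres}}]\,p_0$. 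The genuine gap is the step you flagged, Step~1. The justification you propose for it, the $l$-parallel MLWE remark after Definition~\ref{def:clf-concrete_pob}, fails. The paper's Step~5 rests on the same remark, so the hole is inherited rather than introduced by you.

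For the family of Definition~\ref{def:clf-concrete_pob}, $f_{\mathrm{pres}}$ and $f_{\mathrm{inj}}$ are trivially distinguishable. $E$ has short discrete-Gaussian entries while $B$ is uniform, so a distinguisher handed $f$ just tests the norm of the matrix. Even with only oracle access, it can compare $f(\mathbf{0})$ with $f(\mathbf{x})$ for some short $\mathbf{x}\neq\mathbf{0}$. The proof of \cref{lem:pres} uses only $\norm{\z}\le B_z$, not membership in $\zb$, so in preserving mode the two values coincide with non-negligible probability. In injective mode they coincide with probability about $(m/q)^{ld}$. Hence the constancy you use in Step~2 and the indistinguishability you use in Step~1 are incompatible for this family. $E$ has no pseudorandom component, so there is nothing to reduce to MLWE. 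If you instead fall back on the abstract Definition~\ref{def:clf-pob-rt-a} alone, the preserving mode only bounds the image by $p_c$, and Step~2 yields $\gamma\ge\tau/p_c$ rather than $\gamma\ge\tau$.

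The repair is the Liu--Zhandry lattice CLF, which puts the protocol's own matrix into the lossy mode. Take $A'=\left[\begin{smallmatrix}\bfA\\ \bfB^T\end{smallmatrix}\right]$ and an MLWE secret $S$, and set $f_{\mathrm{pres}}(\z)=[(SA'+E)\z+\z_0]_m$.
\begin{itemize}
\item $SA'+E$ consists of $l$ parallel MLWE samples with public matrix $A'$, hence is pseudorandom.
\item Every $\z\in\zb$ has the same $A'\z=(w_{t,a}+c_{t,a}\sum_i\com_{0,t,a,i},\,u_{t,a}+c_{t,a}\sum_i\com_{3,t,a,i})$. So $f_{\mathrm{pres}}$ is constant on $\zb$ whenever $E\z$ does not push $SA'\z+\z_0$ across a rounding boundary.
\item For $m\gg ld\Delta_l$ that happens with constant probability over the uniform $\z_0$. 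This boundary term is also missing from the proof of \cref{lem:pres}.
\end{itemize}
Constancy now holds only on the affine slice $\{\z: A'\z=\text{const}\}$. An efficient distinguisher cannot find two short points there without solving MSIS, so there is no conflict. With this family, your Steps~1--3 go through as written.
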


\begin{proof}
(\emph{Step 1: Coherent computation and ancilla-only measurement.})
At the checkpoint, the extractor applies a reversible circuit $U_f$ that computes $f_{t,a}(\z_{t,a})$ coherently on a fresh ancilla:
$$
U_f:\ |\z_{t,a}\rangle \otimes |0\rangle \longmapsto |\z_{t,a}\rangle \otimes |f_{t,a}(\z_{t,a})\rangle
$$
(with auxiliary workspace folded into the ancilla). It then measures only the ancilla in the computational basis, obtaining outcome $y \in \mathrm{Im}(f_{t,a})$, and applies $U_f^\dagger$ to uncompute the ancilla to $|0\rangle$. The post-measurement projected state is
$$
|\Psi_y\rangle \;\propto\; \sum_{\substack{\z_{t,a}\in \mathcal{B}_{z}\\ f_{t,a}(\z_{t,a})=y}} \alpha_z\, |\z_{t,a}\rangle \otimes |\psi'_{\z_{t,a}}\rangle,
$$
and the classical protocol prefix remains unchanged.

(\emph{Step 2: Two experiments and acceptance event.})
Define the two experiments:
\begin{itemize}
\item Unmeasured: run the protocol from the checkpoint without intermediate measurement; denote acceptance probability $\Pr[\mathrm{accept} \mid \mathrm{unmeasured}]$.
\item Measured: perform the ancilla-only measurement of $f_{t,a}(\z_{t,a})$ (computed coherently and uncomputed thereafter), then run the protocol; denote $\Pr[\mathrm{accept} \mid \mathrm{measured}]$.
\end{itemize}
The event $E$ is “the verifier accepts the transcript.”

(\emph{Step 3: Preserving/lossy mode bound})
From Lemma \ref{lem:pres}, we get: condition on the preserving/lossy event $\varepsilon_{\mathrm{pres}}$
holds with probability at least $\tau(\lambda)$. Moreover, the ancilla measurement has at most one outcome on $\zb$. Applying Lemma~\ref{lem:boneh-zhandry-pob-rt-a} yields
\begin{align*}
&\Pr[\mathrm{accept} \mid \mathrm{measured}, \varepsilon_{\mathrm{pres}}]\\
&\;\ge\;
 \Pr[\mathrm{accept} \mid \mathrm{unmeasured}]
\;-\;
\negll(\lambda).
\end{align*}
This holds because the partial measurement projects onto the fiber $f_{t,a}^{-1}(y)$ and the acceptance event (the verifier’s final measurement) commutes with that projection up to negligible disturbance.

(\emph{Step 4: Injective mode behavior.})
With probability $1-\negll(\lambda)$, $f_{\mathrm{inj},t,a}$ is injective on $\zb$ Measuring $f_{t,a}(\z_{t,a})$ is then equivalent to measuring $\z_{t,a}$ itself (up to an invertible affine relabeling and rounding), so
\begin{align*}
&\Pr[\mathrm{accept} \mid \mathrm{measured}, \mathrm{inj\ mode}]\\
&\;=\;
\Pr[\mathrm{accept} \mid \mathrm{direct\ response\ measurement}],
\end{align*}
i.e., the measured acceptance matches the acceptance in a protocol that measures the response register.

(\emph{Step 5: Indistinguishability mixing.})
If the measured acceptance differed substantially between preserving and injective modes, one could distinguish $f_{\mathrm{pres},t,a}$ from $f_{\mathrm{inj},t,a}$, contradicting MLWE-based indistinguishability. Therefore,
\begin{align*}
&\Pr[\mathrm{accept} \mid \mathrm{measured}]\\
&\;\ge\;
\tau(\lambda)\cdot \Pr[\mathrm{accept} \mid \mathrm{unmeasured}]
\;-\;
\negll(\lambda),
\end{align*}
which is the claimed weakly collapsing bound with $\gamma(\lambda) \ge \tau(\lambda)$.
\end{proof}

\subsection{Quantum PoK for PoB (Main Theorem)}
\label{sec:qpok-pob-main}

\begin{theorem}[Quantum PoK for PoB]
\label{thm:qpok-pob-main}
Let PoB be as above. Assume:
\begin{enumerate}
    \item \textbf{Completeness:} The honest prover is accepted with probability $1-\negll(\lambda)$.
    \item \textbf{Shortness:} The witness $r_{t,a}$ is short (either structurally as an aggregate of short randomness or via an explicit range proof).
    \item \textbf{Invertibility:} For independent $c_{t,a},c'_{t,a}\leftarrow C$, 
    $$\Pr[(c_{t,a}-c'_{t,a})\text{ is non-invertible in }\mathcal{R}_q] \le \negll(\lambda).$$
    \item \textbf{2-special soundness:} From two accepting transcripts 
    
    $(\alpha_{\mathrm{PoB},t,a},c_{t,a},\z_{t,a})$ and $(\alpha_{\mathrm{PoB},t,a},c'_{t,a},\z'_{t,a})$ with $c_{t,a}\neq c'_{t,a}$ and $(c_{t,a}-c'_{t,a})$ invertible, one can extract
    $$
    r_{t,a}^\ast \;=\; (\z_{t,a}-\z'_{t,a})\,(c_{t,a}-c'_{t,a})^{-1},
    $$
    and $(x,r_{t,a}^\ast)$ is a valid witness.
    \item \textbf{Weakly collapsing:} There exists a non-negligible $\gamma(\lambda)>0$ such that
    $$
    \Pr[\mathrm{accept}\mid \mathrm{measured},a] \;\ge\; \gamma(\lambda)\cdot \Pr[\mathrm{accept}\mid \mathrm{unmeasured},a] \;-\; \negll(\lambda),
    $$
    where “measured’’ denotes measuring a compatible lossy function $f_{t,a}(\z_{t,a})$ computed coherently on an ancilla (and uncomputed thereafter) at the PoB checkpoint, and “unmeasured’’ denotes the baseline protocol.
\end{enumerate}
Let $p(\lambda)$ be a polynomial upper-bounding all polynomial losses in the extractor’s analysis, including:
\begin{itemize}
    \item the number of ancilla outcomes in the partial (gentle) measurement,
    \item the number of possible challenges,
    \item and any other polynomially-bounded factors arising from the protocol structure.
\end{itemize}
Define $\varepsilon_{\alpha}(\lambda)=\Pr[\mathrm{accept}\mid \mathrm{unmeasured},a]$ as the baseline acceptance probability, and $\varepsilon := \mathbb E_{\alpha} [\varepsilon_{\alpha}]$. Then there exists a QPT extractor $E^{\mathcal{A}}$ with resettable black-box access to the prover’s unitary $U,U^\dagger$ such that
$$
\Pr\!\left[(x,r_{t,a}^\ast)\in R\right]
\;\ge\;
\frac{\gamma(\lambda)}{p(\lambda)}\cdot \varepsilon^3
\;-\;
\negll(\lambda).
$$
\end{theorem}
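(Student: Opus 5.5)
The extractor will be the standard Unruh-style ``measure, rewind challenge, measure again'' procedure, adapted so that the response register is measured only through the compatible lossy function of Definition~\ref{def:clf-concrete_pob}. The plan has three parts. First, run the prover's unitary $U$ up to the checkpoint and measure the first message $\alpha_{\mathrm{PoB},t,a}=(w_{t,a},u_{t,a})$; this leaves a post-commitment state $|\phi_\alpha\rangle$ with conditional acceptance probability $\varepsilon_\alpha$. Second, sample $c_1\leftarrow C$, apply $W_{c_1}$, and perform the ancilla-only measurement of $f_{t,a}(\z_{t,a})$. Then measure the verifier's acceptance projector $P_{\alpha,c_1}$ and the response, obtaining $\z_{t,a}$. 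Third, apply $W_{c_1}^\dagger$, sample a fresh $c_2$, apply $W_{c_2}$, and measure $P_{\alpha,c_2}$ and the response $\z'_{t,a}$. The extractor outputs $r^\ast_{t,a}=(\z_{t,a}-\z'_{t,a})(c_1-c_2)^{-1}$ by the 2-special soundness assumption (4).

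The probability analysis proceeds as follows. Without the intermediate measurement of the first response, the event ``both runs accept'' has probability $\delta_\alpha\ge\varepsilon_\alpha^3$ by Theorem~\ref{thm:unruh-two-projection}, with $W_c$ the prover's response unitary and $P_c=P_{\alpha,c}$. The real extractor does measure the first response between the two projections. This is where weak collapsing enters. I will argue that inserting the ancilla measurement of $f(\z)$ changes the joint acceptance probability by at most the multiplicative factor $\gamma(\lambda)$ of assumption (5), up to $\negll(\lambda)$. Concretely, in preserving mode Lemma~\ref{lem:pres} gives a single outcome with probability $\tau$, so Lemma~\ref{lem:boneh-zhandry-pob-rt-a} costs only a factor $1/k$, absorbed into $p(\lambda)$. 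In injective mode the measurement equals a full measurement of $\z_{t,a}$. Indistinguishability of the two modes then transfers the bound, exactly as in Lemma~\ref{lem:weak-collapse-pob-rt-a}, giving $\Pr[\text{both accept}\mid\alpha]\ge \frac{\gamma}{p}\varepsilon_\alpha^3-\negll$. Averaging over $\alpha$ and applying Jensen ($\mathbb{E}[\varepsilon_\alpha^3]\ge\varepsilon^3$) yields the global bound.

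Finally, the extractor must remove the bad events. The event $c_1=c_2$ has probability $1/|C|$, which is negligible since $|C|=3^d$. The event that $c_1-c_2$ is non-invertible is negligible by assumption (3) and Corollary~\ref{corollary:c_invertible}. Conditioned on two accepting transcripts sharing $\alpha$ with invertible $\bar c$, 2-special soundness gives a weak opening $(\bar c,r^\ast_{t,a})$ that satisfies $\relation_{PoB}$. The norm bound $\|\bar c r^\ast\|\le 2\mathfrak{s}\sqrt{2hd}$ follows from the verifier's norm checks on both responses. The message component $v^\ast=0$ follows from subtracting the two $\bfB^T$ equations, as in the classical proof of Theorem~\ref{thm:pi_b}. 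Subtracting these negligible terms gives the claimed $\frac{\gamma}{p}\varepsilon^3-\negll(\lambda)$.

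I expect the second step to be the main obstacle: showing that the intermediate response measurement composes with Unruh's two-projection bound rather than just a single-run bound. Theorem~\ref{thm:unruh-two-projection} is stated for projectors only, so I would first try to rewrite the measured experiment as $P_{c_2}W_{c_2}W_{c_1}^\dagger \Pi_y P_{c_1}W_{c_1}$, where $\Pi_y$ is the fiber projector, and apply Lemma~\ref{lem:boneh-zhandry-pob-rt-a} to the whole sandwich with $k$ outcomes. If that fails, I would fall back on the collapsing-game formulation of Definition~\ref{def:weak-collapsing}, using the distinguisher that runs the second challenge.
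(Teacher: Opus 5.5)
Your proposal is correct and has the same architecture as the paper's proof: Unruh's two-projection bound (Theorem~\ref{thm:unruh-two-projection}) on the unmeasured experiment for fixed $\alpha$, then a CLF-plus-Boneh--Zhandry weak-collapsing step to pay for measuring the response, then Jensen to pass from $\mathbb{E}_\alpha[\varepsilon_\alpha^3]$ to $\varepsilon^3$, and finally invertibility plus 2-special soundness for extraction.

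The real difference is how the measurement cost is charged. The paper inserts the CLF measurement in both runs and applies the weak-collapsing inequality twice. Its chain therefore ends at $\frac{\gamma^2}{p}\varepsilon^3-\negll(\lambda)$, and it matches the stated $\frac{\gamma}{p}$ only after absorbing a $1/\gamma$ into $p$. That is harmless there, since $\gamma\ge\tau\ge 1/e$ up to negligible terms. You charge only the first response measurement, which is the sharper accounting. The second response is measured after $P_{\alpha,c_2}$, and acceptance is a classical predicate of that response, so the two commute and the second measurement is free. This gives the single factor $\gamma$ directly.

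Your route also makes explicit something the paper leaves implicit. The paper's step $\Pr[\text{two accept}\mid\text{both measured}]\ge\gamma\Pr[\text{two accept}\mid\text{first round measured}]$ needs collapsing for the joint two-run event, and hypothesis (5), being a single-run statement, does not supply that by itself. Your fallback to the game-based Definition~\ref{def:weak-collapsing}, with a distinguisher that runs the second challenge and checks acceptance, is the standard fix.

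Two minor clean-ups:
\begin{itemize}
\item The extractor itself need not measure $f(\mathbf{z})$ before measuring $\mathbf{z}$. The CLF belongs only to the hybrids (real $=$ injective mode $\approx$ lossy mode $\ge\tau\cdot$unmeasured), and in the lossy hybrid the full measurement of $\mathbf{z}$ must be omitted.
\item Whether the fiber projector sits before or after $P_{\alpha,c_1}$ is immaterial, since both are diagonal in the computational basis of the response register.
\end{itemize}
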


\begin{proof}
\textbf{Checkpoint and extractor model.} We set the PoB \emph{checkpoint} immediately before sampling $c_{t,a}$. At the checkpoint, $\alpha_{\mathrm{PoB},t,a}$ and any public coins (if present) are classical and fixed, and the prover’s internal state encodes a superposition over $\z_{t,a}\in\zb$. The extractor has resettable black-box access to the prover’s unitary $U$ and inverse $U^\dagger$.

\textbf{Gentle ancilla-only partial measurement.} Instead of measuring $\z_{t,a}$ directly, the extractor computes a compatible lossy function $f_{t,a}(\z_{t,a})$ coherently on an ancilla with a reversible circuit $U_f$:
$$
U_f:\ |\z_{t,a}\rangle\otimes|0\rangle \longmapsto |\z_{t,a}\rangle\otimes|f_{t,a}(\z_{t,a})\rangle,
$$
measures only the ancilla (obtaining $y\in\mathrm{Im}(f_{t,a})$), and applies $U_f^\dagger$ to uncompute the ancilla to $|0\rangle$. This produces a projected state onto the fiber $f_{t,a}^{-1}(y)$ while leaving the classical prefix unchanged.

\textbf{Single-run acceptance bound.} By the generalized measurement lemma of Boneh--Zhandry (Lemma~\ref{lem:boneh-zhandry-pob-rt-a}), a partial measurement with at most $k(\lambda)$ outcomes degrades acceptance by at most a factor $k(\lambda)$. In our setting, $k(\lambda)\le 1$ (bounded image size in preserving mode). By the weakly collapsing lemma for PoB (Lemma~\ref{lem:weak-collapse-pob-rt-a}), we hence obtain
$$
\Pr[\mathrm{accept}\mid \mathrm{measured},a]
\;\ge\;
\gamma(\lambda)\cdot \varepsilon_{\alpha}
\;-\;
\negll(\lambda).
$$


\textbf{Restart to the checkpoint and two-run acceptance.} After the first accepting run with challenge $c_{t,a}$, the extractor samples an independent $c'_{t,a}\leftarrow C$. Similar to challenge $c_{t,a}$, the extractor performs a CLF measurement and record the classical response. We can also apply the generalized measurement lemma from Boneh-Zhandry \cref{lem:boneh-zhandry-pob-rt-a} for this second round as well. For the second round,

\begin{align*}
&\Pr[\text{two accept}|\text{both measured},a] \\&
\ge \gamma(\lambda) \Pr[\text{two accept}| \text{first round measured},a] - \negll(\lambda)
\end{align*}
By combining the two inequalities from these two rounds, we get
\begin{align}\nonumber
&\Pr[\text{two accept}|\text{both measured},a]\\
&\ge \gamma(\lambda)^2 \Pr[\text{two accept}| \text{unmeasured},a] - \negll(\lambda)
\end{align}

From Unruh's two acceptance lower bound \cref{thm:unruh-two-projection}, we get the following result.
\begin{equation}
\label{eq:two_acc}
\Pr[\text{two accept} | \text{unmeasured},a] \ge (\varepsilon_{\alpha})^3.
\end{equation}

By combining the last two inequalities, we get the following. 

\begin{align}\nonumber
&\Pr[\text{two accept} | \text{both measured},a] \\ \nonumber
&\ge \gamma(\lambda)^2\Pr[\text{two accept} | \text{unmeasured},a] - negl(\lambda)\\ \label{eq:unruh_low}
& \ge \gamma(\lambda)^2 (\varepsilon_{\alpha})^3 - negl(\lambda). ~~~~~~\text{From }\cref{eq:two_acc}
\end{align}

Aggregating the polynomial losses (bounded ancilla outcomes and any structural factors) into $p(\lambda)$, we obtain the two-run acceptance bound
\begin{align*}
\Pr[\text{two accept}|\text{measured},a] \;\ge\; \frac{\gamma(\lambda)^2}{p(\lambda)}\cdot (\varepsilon_{\alpha})^3 \;-\; \negll(\lambda).
\end{align*}

\textbf{Invertibility and extraction.} By the invertibility assumption, $(c_{t,a}-c'_{t,a})$ is invertible in $\mathcal{R}_q$ with probability $1-\negll(\lambda)$. Given two accepting transcripts with invertible difference, 2-special soundness yields
$$
r_{t,a}^\ast \;=\; (z_{t,a}-z'_{t,a})\,(c_{t,a}-c'_{t,a})^{-1}.
$$
Shortness ensures $(x,r_{t,a}^\ast)\in R$. Therefore by Jensen's inequality we get,

\begin{align*}
&\Pr[\text{two accept}|\text{measured}] \\
&\;\ge\; \frac{\gamma(\lambda)^2}{p(\lambda)}\cdot \mathbb{E}_{\alpha}[(\varepsilon_{\alpha})^3] \;-\; \negll(\lambda) \\
& \;\ge\; \frac{\gamma(\lambda)^2}{p(\lambda)}\cdot \mathbb{E}_{\alpha}[(\varepsilon_{\alpha})]^3 \;-\; \negll(\lambda) \\
&\;=\; \frac{\gamma(\lambda)^2}{p(\lambda)}\cdot \varepsilon^3 \;-\; \negll(\lambda).
\end{align*}

By completeness, we get $\varepsilon \ge (1 - \negll(\lambda))$. Therefore, the final acceptance test succeeds with probability $1-\negll(\lambda)$.

\textbf{Conclusion and runtime.} Combining the above, the extractor outputs a valid witness with probability at least $(\gamma(\lambda)/p(\lambda))\cdot \varepsilon(\lambda)^3 - \negll(\lambda)$. The rewinding involves a polynomial number of oracle calls to prover's strategy, CLF measurements and polynomial-time arithmetic in $\mathcal{R}_q$, so the extractor runs in expected quantum polynomial time.
\end{proof}

\section{Quantum Knowledge Soundness Proof Ideas for PoC, PoE, PoA}
\label{app:quantum_poc_poe_poa_etc}
The quantum knowledge soundness proofs for the other protocols like PoC, PoE, PoA are similar. 

All the other protocols are also sigma protocols, with $2$-special soundness property. Similar to the PoB, we can prove the weakly collapsing property of the other protocols as well with suitable 
compatible lossy functions (CLF) with parameters $\tau(\lambda)$ and $p_c(\lambda)$, yielding $\gamma(\lambda)=\tau(\lambda)/p_c(\lambda)>0$.
Ancilla-only acceptance preservation follows Boneh–Zhandry’s generalized measurement lemma \cite{BZ13}.
The extractor has resettable black-box access to the prover’s $U,U^\dagger$ and the verifier’s coin.

\emph{Global proof idea.}
At the checkpoint, we gently measure a bounded-image function of the witness on a coherently computed ancilla, ensuring acceptance only drops by a controlled factor (\cref{lem:boneh-zhandry-pob-rt-a}).
We then reset the computation and obtain two accepting transcripts with independent challenges.
Since challenge differences are invertible in $\mathcal{R}_q$ with high probability, 2-special soundness yields algebraic extraction via $(\z-\z')(c-c')^{-1}$.
Shortness and completeness make the final verification succeed with overwhelming probability.
This template applies to PoC, PoE and PoA with protocol-specific linear relations.

\newpage


\end{document}